\newlength{\mylength}
\newenvironment{frameqn}%
{\setlength{\fboxsep}{6pt}
\setlength{\mylength}{\linewidth}%
\addtolength{\mylength}{-2\fboxsep}%
\addtolength{\mylength}{-2\fboxrule}%
\Sbox
\minipage{\mylength}%
\setlength{\abovedisplayskip}{0pt}%
\setlength{\belowdisplayskip}{0pt}%
$$}%
{$$\endminipage\endSbox
\[\fbox{\TheSbox}\]}
\newenvironment{frametxt}%
{\setlength{\fboxsep}{5pt}
\setlength{\mylength}{\linewidth}%
\addtolength{\mylength}{-2\fboxsep}%
\addtolength{\mylength}{-2\fboxrule}%
\Sbox
\minipage{\mylength}%
\setlength{\abovedisplayskip}{0pt}%
\setlength{\belowdisplayskip}{0pt}%
}%
{\endminipage\endSbox
\[\fbox{\TheSbox}\]}
\definecolor{apricot}{rgb}{1.0,.95,.8}
\definecolor{plum}{rgb}{1.0,.8,.95}
\newcommand\greymath[1]{\colorbox{plum}{\ensuremath{#1}}}
\long\def\greybox#1{%
    \newbox\contentbox%
    \newbox\bkgdbox%
    \setbox\contentbox\hbox to \hsize{%
        \vtop{
            \kern\columnsep
            \hbox to \hsize{%
                \kern\columnsep%
                \advance\hsize by -2\columnsep%
                \setlength{\textwidth}{\hsize}%
                \vbox{
                    \parskip=\baselineskip
                    \parindent=0bp
                    #1
                }%
                \kern\columnsep%
            }%
            \kern\columnsep%
        }%
    }%
    \setbox\bkgdbox\vbox{
        \pdfliteral{0.85 0.85 0.85 rg}
        \hrule width  \wd\contentbox %
               height \ht\contentbox %
               depth  \dp\contentbox
        \pdfliteral{0 0 0 rg}
    }%
    \wd\bkgdbox=0bp%
    \vbox{\hbox to \hsize{\box\bkgdbox\box\contentbox}}%
    \vskip\baselineskip%
}
 \renewenvironment{thebibliography}[1]{%
   \begin{oldthebibliography}{#1}%
     \setlength{\parskip}{0ex}%
     \setlength{\itemsep}{0ex}%
     \fontsize{9.5}{11.0}
     \selectfont
 }%
 {%
   \end{oldthebibliography}%
 }
\definecolor{highlight}{rgb}{0.1, 0.0, 0.1}
\newcommand\shs{\mathrel{\Delta}}
\newcommand\qinv{{{}^{\text{-}E}}}
\newcommand\liff{\Leftrightarrow}
\newcommand{\abeq}{\mathrel{=_\tinys{\alpha\beta}}}
\newcommand{\interdelta}[1]{\llbracket #1 \rrbracket_\Delta}
\newcommand{\inter}[1]{\llbracket #1 \rrbracket}
\newcommand\simpto[1]{\stackrel{\scriptstyle #1}{\Longrightarrow}}
\newcommand\simptostar[1]{\stackrel{\scriptstyle #1}{{\Longrightarrow}^{\hspace{-.25em}{{{}_*}}}}}
\newcommand\mmath[1]{\ensuremath{#1}\xspace}
\newcommand\smallss[2]{{{}^{{\scalebox{.65}{\hspace{-.1em}$#1$}}}_{{\scalebox{.65}{\hspace{-.1em}$#2$}}}}}
\newcommand\tinys[1]{{{}_{{\scalebox{.5}{$#1$}}}}}
\newcommand\smalls[1]{{{}_{{\scalebox{.65}{\hspace{-.1em}$#1$}}}}}
\newcommand\rhoinc{{\rho\smallss{\mathcal V}{\f{Inc}}}}
\newcommand\Vinc{{{\mathcal V'}\smallss{\mathcal V}{\f{Inc}}}}
\newtheorem{thrm}{Theorem}[section]
\newtheorem{lemm}[thrm]{Lemma}
\newtheorem{prop}[thrm]{Proposition}
\newtheorem{corr}[thrm]{Corollary}
\newtheorem{_nttn}[thrm]{Notation}
\newtheorem{_defn}[thrm]{Definition}
\newenvironment{defn}{\begin{_defn}\normalfont}{\end{_defn}}
\newtheorem{_rmrk}[thrm]{Remark}
\newenvironment{rmrk}{\begin{_rmrk}\normalfont}{\end{_rmrk}}
\def\qedhere{}
\newenvironment{calcenv}
  {\begin{displaymath}\begin{array}{l@{\ }c@{\ }l@{\quad}l}}
  {\end{array}\end{displaymath}}
\newenvironment{squishitem}
{\begin{list}
  {$\bullet$\hspace{.2em}}
  {\setlength{\parsep}{0pt}
   \setlength{\itemsep}{.5pt}
   \setlength{\topsep}{.5pt}
   \setlength{\partopsep}{0pt}
   \setlength{\leftmargin}{2em}
   \setlength{\labelwidth}{3em}
   \setlength{\labelsep}{.5em}
  }
}
{\end{list}}
\newenvironment{squishlist}
  {\begin{list}
    {$-$\hspace{.2em}}
    {\setlength{\parsep}{0pt}
     \setlength{\itemsep}{0pt}
     \setlength{\topsep}{0pt}
     \setlength{\partopsep}{0pt}
     \setlength{\leftmargin}{0.5em}
     \setlength{\labelwidth}{0em}
     \setlength{\labelsep}{0em}
    }
  }
  {\end{list}}
\newcounter{edefinitioncount}
\newcommand{\f}[1]{\ensuremath{\mathit{#1}}}
\newcommand{\fcomp}{\ensuremath{%
  \hspace{0.08em}\mbox{\small\ensuremath{\circ}}\hspace{0.08em}}}
\newcommand\lam[1]{\lambda #1.}
\newlength{\insidewd}
\newlength{\stackht}
\newcommand{\deffont}[1]{\textbf{#1}}
\newcommand{\rulefont}[1]{\mmath{(\mathbf{#1})}}
\newcommand{\act}{\cdot}
\newcommand{\id}{\f{id}}
\newcommand{\Id}{\f{id}}
\newcommand{\cent}{\vdash}
\newcommand{\sm}{\ssm}
\newcommand{\ssm}{{{:}{=}}}
\newcommand{\aeq}{\mathrel{{=}\smalls{\alpha}}}
\newcommand{\ueq}{\stackrel{\hspace{-0.025em}\scalebox{.55}{?}}{=}}
\newcommand{\tf}[1]{\mathsf{#1}}
\newcommand{\model}[1]{\ensuremath{\llbracket #1 \rrbracket}}
\newcommand{\mone}{{\text{-}1}}
\newcommand\Exists[1]{\exists #1.}
\journal{Version of record published in the Logic Journal of the IGPL, Volume 18, Issue 6, December 2010, Pages 769–822, \href{https://doi.org/10.1093/jigpal/jzq006}{DOI: 10.1093/jigpal/jzq006}}
\begin{document}

\title{Permissive nominal terms and their unification: \\ an infinite, co-infinite approach to nominal techniques\tnoteref{t1}\tnoteref{vor}}
\tnotetext[t1]{Thanks to the anonymous referees.}
\tnotetext[vor]{This is the authors' accepted manuscript, shared as per the journal's \href{https://academic.oup.com/pages/self_archiving_policy_b}{\emph{self-archiving policy}} (\href{https://web.archive.org/web/20231125172128/https://academic.oup.com/pages/self_archiving_policy_b}{permalink}).  Details of the version of record are below.}

\author[]{\href{http://www.lix.polytechnique.fr/~dowek}{Gilles Dowek}}
\author[]{\href{http://www.gabbay.org.uk}{Murdoch J. Gabbay}}
\author[]{\href{http://www.macs.hw.ac.uk/~dpm8}{Dominic P. Mulligan}}

\begin{keyword}
Nominal unification \sep higher-order pattern unification \sep permissive nominal techniques
\end{keyword}

\begin{abstract}
Nominal terms extend first-order terms with binding.
They lack some properties of first- and higher-order terms: 
Terms must be reasoned about in a context of `freshness assumptions';
it is not always possible to `choose a fresh variable symbol' for a nominal term;
it is not always possible to `$\alpha$-convert a bound variable symbol' or to `quotient by $\alpha$-equivalence'; the notion of unifier is not based just on substitution. 

\emph{Permissive} nominal terms closely resemble nominal terms but they recover these properties, and in particular the `always fresh' and `always rename' properties.
In the permissive world, freshness contexts are elided, equality is fixed, and the notion of unifier is based on substitution alone rather than on nominal terms' notion of unification based on substitution plus extra freshness conditions. 

We prove that expressivity is not lost moving to the permissive case and provide an injection of nominal terms unification problems and their solutions into permissive nominal terms problems and their solutions.

We investigate the relation between permissive nominal unification and higher-order pattern unification.
We show how to translate permissive nominal unification problems and solutions in a sound, complete, and optimal manner, in suitable senses which we make formal.
\end{abstract}
\maketitle


\tableofcontents

\section{Introduction}

\subsection{About nominal terms and permissive nominal terms}

Many formal languages feature names and binding: examples include quantification, $\lambda$-abstraction, sets comprehension $\{x\mid \phi(x)\}$, and process-calculi name-hiding.
Binding is ubiquitous, because variables are there to be bound or substituted.

In contrast, variables cannot be bound in first-order terms: In first-order logic, variables are bound in propositions by quantifiers and not at all in terms;
first-order rewriting does not allow binding as it is based on first-order terms;
and, many programming languages and proof systems allow datatypes of terms, but only of first-order terms.

This motivates logics where variables can be bound by any function or predicate symbol \cite{dowek:binl}, extensions of rewriting on terms with binders \cite{KlopJW:comrsis,NipkowT:higors,gabbay:nomr-jv}, and programming languages and proof systems allowing datatypes with binders \cite{Pfenning:hoas,miller:logpll,gabbay:frepbm,harper:fradl,paulson:isanst} and more generally, definitions of a notion of term where variables may be bound.

In particular, this motivates \emph{nominal terms} \cite{gabbay:nomu-jv}.
They are designed to directly represent informal schematic specifications.
For example:
\\[1.5ex]
\noindent
\begin{tabular}{ll}
Informal equality:
& 
If $y \not\in \f{fv}(t)$ then $\lam{x}t$ is $\alpha$-equivalent to $\lam{y}[y/x]t$.
\\
Nominal terms: 
&
$b\#X\cent \lambda [a]X = \lambda [b](b\ a)\act X$
\\
Permissive:
&
$\lambda [a]X^S = \lambda [b](b\ a)\act X^S$, where $b\not\in S$
\\[1ex]
Informal unification:
&
Which $t$ and $u$ make $\lam{x}\lam{y}(y\,t)$ equal to $\lam{x}\lam{x}(x\,u)$?
\\
Nominal unification:
&
$\varnothing\cent\lambda [a]\lambda [b] (b\,X) \ueq \lambda [a]\lambda [a] (a\,Y)$
\\
Permissive:
&
$\lambda [a]\lambda [b] (b\,X^S) \ueq \lambda [a]\lambda [a] (a\,Y^S)$ (here, $a\in S$, $b\in S$) 
\\
\end{tabular}
\\[1.5ex]
The first example is part of the usual specification of $\alpha$-equivalence.
The second example is from \cite{gabbay:nomu-jv}.
Definitions of nominal and permissive nominal terms will follow.

Nominal terms have been explored in logic-programming \cite{cheney:alppl}, rewriting \cite{gabbay:nomr-jv}, logic \cite{gabbay:oneaah-jv,gabbay:nomuae}, and elsewhere.

An intuition of the translation from informal specification to (permissive) nominal terms is as follows:
\begin{itemize}
\item
Object-level variable symbols $x$ and $y$ correspond to \emph{atoms} $a$ and $b$ (Definition~\ref{defn.atoms}).
\item
Meta-level variables $t$ and $u$ correspond to \emph{unknowns} $X$.

Substitution of unknowns is capturing, which models the effect of writing ``set $t$ to be $x$ in $\lam{x}t$; we get $\lam{x}x$''. 
Similarly, $(\lambda [a]X)[X\ssm a]\equiv [a]a$.

Atoms and unknowns are two distinct \emph{levels} of variable, just as object- and meta-level are two levels.
\item
Conditions like $x\not\in\f{fv}(t)$ correspond to \emph{freshness} conditions $b\#X$ in nominal terms.

$b\#X$ is a restriction, which is enforced on the notion of substitution, on what $X$ may be instantiated to.
$b\#X$ is a promise to instantiate $X$ only to terms for which $b$ is fresh (see the conditions `$\nabla'\cent \theta(\nabla)$' in Lemma 2.14, and `$\nabla\cent a\#\theta(t)$' in Definition 3.1 of \cite{gabbay:nomu-jv}).

\item
Object-level renaming $[y/x]$ is modelled by \emph{swapping} $(b\ a)$ (which maps $b$ to $a$, $a$ to $b$, and all other atoms $c$ to themselves; Definition~\ref{def.permutation}).

Swappings are bijective on atoms and invertible, unlike renamings, and for this reason they are \emph{naturally capture-avoiding}.\footnote{For example, $(b\ a)\act [a](b\,a) \equiv [b]a$ (Definition~\ref{defn.nominal.permutation}) whereas $(\lam{x}(y\,x))[x/y]$ is equal to $\lam{x'}(x\,x')$ where $x'$ is chosen fresh; with the permutation, the capture-avoidance is automatic, with the renaming we have to deliberately rename.  This nice turn of phrase is due to Cheney, as far as we know.}
This gives them attractive mathematical properties which can be exploited in the mathematical theory.
Using them to manage renaming in $\alpha$-equivalence is a conspicuous feature of nominal techniques and follows \cite{gabbay:newaas-jv,gabbay:nomu-jv}.

The reader can safely think of $[y/x]$ as corresponding to $(b\ a)$ so long as $b$ ($y$) is sufficiently fresh.
It is notable that freshness side-conditions in informal practice seem to appear exactly in those places in which they would be necessary for this to work. 
\item
$\lambda$ and application are term-formers (function-symbols).
In the examples above we write application infix, as is standard.
\end{itemize}

Permissive nominal terms differ in two ways from nominal terms:
\begin{itemize}
\item
\emph{Difference 1.}\quad
Freshness information is fixed once and for all, and associated to unknowns in the manner of a sorting or typing annotation.

For example: $b\#X\cent X$ is a nominal term-in-freshness context $X$ whose freshness context insists that $b$ not occur free in $X$.  
A corresponding permissive nominal term is $X^S$ where $S$ is a set of atoms we call the \emph{permission set} of $X$, and $b\not\in S$.
\item
\emph{Difference 2.}\quad
Permission sets are sets of atoms that are both infinite and co-infinite (see Remark~\ref{rmrk.co-infinite}; $S\subseteq \mathbb A$ is co-infinite  when $\mathbb A\setminus S$ is infinite).
\end{itemize}
These combine with consequences which we now discuss. 

\subsection{Difference 1: about fixing freshness contexts}

Difference~1 on its own is mostly a matter of presentation.
It is mentioned already in \cite[Remark~2.6]{gabbay:nomu-jv} (there, we would obtain `permission sets' $A$ such that $A$ is infinite and $\mathbb A\setminus A$ is finite).
It is still worth reflecting briefly on what that difference in presentation means, for some example results.
Part~2 of Lemma~2.14 from \cite{gabbay:nomu-jv} reads as follows:
\begin{quote}
If $\nabla'\cent \sigma(\nabla)$ and $\nabla\cent a\#t$, then $\nabla\cent a\#\sigma(t)$.
\end{quote}
The corresponding result in this paper (Lemma~\ref{lemm.substitution}) reads as follows:
\begin{quote}
$\f{fa}(r\theta)\subseteq\f{fa}(r)$.
\end{quote}
Parts~1 to 3 of Lemma~2.7 from \cite{gabbay:nomu-jv} read as follows:
\begin{quote}
\begin{enumerate}
\item
If $\nabla\cent a\#\pi\act t$ then $\nabla\cent \pi^\mone\act a\#t$.
\item
If $\nabla\cent \pi\act a\#t$ then $\nabla\cent a\#\pi^\mone\act t$.
\item
If $\nabla\cent a\#t$ then $\nabla\cent \pi\act a\#\pi\act t$.
\end{enumerate}
\end{quote}
These three statements correspond to one result in this paper (Lemma~\ref{lemm.pi.ftma}) which reads as follows:
\begin{quote}
$\f{fa}(\pi\act r)=\pi\act \f{fa}(r)$.
\end{quote}
To us, the `permissive' versions seem shorter and clearer.

In fact there is a little more to Difference~1 than presentation.
If freshness contexts are fixed then some `consistency' properties of the same term across different freshness contexts become irrelevant. 
For instance consider the \emph{weakening} and \emph{strengthening} results for freshness contexts like those in Subsection~3.2 of \cite{gabbay:curhif-jv} or Subsection~3.3.2 of \cite{gabbay:nomuae};
if we had used permissive nominal terms, then these results would have been irrelevant and could have been omitted. 

\subsection{Difference 2: about choosing fresh atoms}

The reader used to nominal terms can think of Difference~2 as allowing and requiring infinite freshness contexts.\footnote{An similar idea is behind the notion of freshness contexts with \emph{sufficient freshnesses} in \cite[Subsection~3.1]{gabbay:newcc}.}
Together with Difference~1, this gives us two key properties which are absent in \cite{gabbay:nomu-jv}: 
\begin{itemize}
\item
An infinite supply of fresh atoms is guaranteed for every term.  There is no need to change a freshness context to obtain fresh atoms, because they are already there. 
\item
Terms may be directly quotiented by $\alpha$-equivalence. 
They also become susceptible to the nominal inductive reasoning principles of \cite{gabbay:newaas-jv}.
\end{itemize}
This matters.
The nominal terms as presented in \cite{gabbay:nomu-jv} cannot be quotiented by $\alpha$-equivalence, and cannot be subjected to the nominal inductive principles of nominal abstract syntax \cite{gabbay:newaas-jv}. 
In this respect, they are less tractable than first-order or higher-order terms, which can be.

True enough, it is always possible to `add a fresh atom to the freshness context' in nominal terms and rename, but this causes similar difficulties to those encountered in $\alpha$-renaming on name-carrying syntax. 

This is potentially a problem for the viability of any reasoning on, programming on, or extensions of nominal terms.
We have experienced this ourselves in \cite{gabbay:curhif-jv,gabbay:twollc}.

It becomes possible to argue against nominal terms because they appear to be harder to manipulate than the language they describe, and in particular, because they reintroduce the problem of $\alpha$-equivalence which nominal abstract syntax was originally designed to solve.

The existence of permissive nominal terms demonstrates that in fact, the apparent problem is just an artefact of the way matters were set up in \cite{gabbay:nomu-jv}.
We \emph{can} quotient permissive nominal terms by $\alpha$-equivalence.
In fact we can also use nominal inductive reasoning principles to reason on permissive nominal terms \cite{gabbay:semnt-ea,gabbay:pernas}.

Turning to the issue of fresh atoms; in the world of finite freshness contexts based on \cite{gabbay:nomu-jv}, we may always extend freshness contexts if we want more fresh atoms. 
This is made formal for example by \rulefont{fr} in \cite[Figure~2]{gabbay:nomuae} and \rulefont{Tfr} in \cite[Figure~1]{gabbay:curhif-jv}.

We can observe the effects of this solution to the problem of generating fresh atoms, by considering the mathematics in \cite{gabbay:curhif-jv}.
\rulefont{Tfr} is not a syntax-directed rule and this complicates case-analysis on derivations.  Subsection~5.1 of \cite{gabbay:curhif-jv} is devoted to controlling this issue.
Further, since freshness contexts may change during a proof, it is necessary to account for these changes in statements and proofs of results; the results in Subsection~5.3 of \cite{gabbay:curhif-jv} therefore contain an existential quantification over `sufficiently freshened' freshness contexts.

In fact we should recognise what happens in \cite{gabbay:curhif-jv} because it is typical of something we have seen before.
A `state' of `generated atoms' is being carried by the freshness context and threaded through all the proofs.
This is a situation familiar from implementing languages with binding; we may need to keep track of the `generated fresh atoms' so that when we generate a new one we can easily pick an `even fresher atom'.
We thread through the program a `state' of `generated atoms' (see for instance the `fresh monad' in Cheney's FreshLib~\cite{cheney:scrynf}).
This is undesirable if it can be avoided, because it interferes with the state-free style of functional programming, forces us to program more sequentially, and it complicates code.\footnote{Another layer of complexity arises. Some natural definitions rely on a sufficiently large supply of fresh atoms being available.  Even if we can always extend the freshness context, for a \emph{fixed} freshness context there may be inputs for which the function cannot be defined, and so is partial.  This happens for example to the \emph{canonical form} function in Subsection~5.3 of \cite{gabbay:curhif-jv}.}
This is what motivated FreshML \cite{gabbay:metpbn,gabbay:frepbm}.

The paper \cite{gabbay:curhif-jv} is not an isolated case.
Similar issues have arisen in proofs on the two-level lambda-calculus \cite{gabbay:twollc} and in other work \cite{gabbay:oneaah-jv,gabbay:capasn-jv,gabbay:nomalc}.
Furthermore, the `fresh atom' issue is not restricted to proofs:

A prototype implementation of permissive nominal terms and their unification by the third author~\cite{mulligan:imppnt} has a direct mechanism for generating a fresh name for any permissive term.
Even in a pure functional language like Haskell, no monadic programming is needed just to generate fresh names.

Calv\`{e}s recently wrote the Haskell Nominal Toolkit (HNT)~\cite{calves:hasnt}, which provides efficient implementations of several algorithms on nominal terms.
The HNT provides an elegant programming interface for manipulating nominal terms, and any efficient implementation of permissive nominal terms would also likely expose a similar API to the programmer.

However, Calv\`{e}s' underlying model is the nominal term and this is reflected in the types of programs written with the HNT.
Whereas core functions in the implementation of permissive nominal terms (e.g. unification and alpha-equivalence checks) are pure, the types of their HNT counterparts are heavily monadic, and explicit freshness contexts must also be passed around.

As an example, we compare the types (1) of the alpha-equivalence check function from the HNT and (2) of the implementation of permissive nominal terms:\footnote{See the documentation for module \texttt{Nominal.Matching} at \url{http://www.dcs.kcl.ac.uk/pg/calves/hnt/doc/Nominal-Matching.html} and module \texttt{Terms.Terms} in the permissive nominal terms implementation source code, available at \url{http://www.macs.hw.ac.uk/~dpm8/permissive/}}

\begin{displaymath}
\begin{array}{c@{\qquad}lcl}
(1) & \mathtt{alpha'check} & :: & (\mathtt{Show}\ t,\ \mathtt{Eq}\ t,\ \mathtt{Ord}\ a,\ \mathtt{Ord} v) \Rightarrow \mathtt{FrsCtxt}\ v\ a \rightarrow \\
&             &    & \mathtt{Term}\ a\ t\ v \rightarrow \mathtt{Term}\ a\ t\ v \rightarrow \\
&             &    & \mathtt{CS}\ r\ (\mathtt{ExtB}\ l\ e\ (\mathtt{ErrorT}\ [\mathtt{Char}]))\ m\ () \\[1.5ex]
(2) & \mathtt{aeq} & :: & (\mathtt{Eq}\ a, \mathtt{Permissive}\ b) \Rightarrow \mathtt{Term}\ a\ b \rightarrow \mathtt{Term}\ a\ b\ \rightarrow \mathtt{Bool}
\end{array}
\end{displaymath}

To us, the `permissive' type seems shorter and clearer.
An implementation of the HNT, based on permissive nominal terms, would likely present less daunting types to the programmer, and removing the burden of handling freshness contexts would likely make code shorter and neater.

\subsection{Permissive nominal terms in this paper}

Permissive nominal terms are designed to deal with these issues.  
Their theory of $\alpha$-equivalence and `fresh atom of' restore the good features of nominal abstract syntax without losing any expressivity or computational properties.

In fact, permissive nominal terms are `best possible' in a certain sense:
in \cite{gabbay:semnt-ea,gabbay:pernas} we demonstrate how the atoms-abstraction construction from \cite{gabbay:newaas-jv} can be applied directly to permissive nominal terms syntax.
That is, in \cite{gabbay:semnt-ea,gabbay:pernas} we show that it is possible to take $[a]r$ to be \emph{literally} the Gabbay-Pitts atoms-abstraction of $a$ in the set that is the abstract syntax tree $r$, so that $\alpha$-equivalence is \emph{literal} identity, and the definition of permissive nominal terms syntax (Definition~\ref{defn.terms}) becomes a nominal abstract syntax style inductive datatype of syntax-with-binding. 

In this paper we introduce permissive nominal terms.
We study their theory of unification.
Given any new syntax, it is important to connect it to existing denotations.
Therefore, we relate permissive nominal terms and their unification to nominal unification, and to unification of higher-order patterns.
Denotations in nominal sets will be the topic of a separate manuscript \cite{gabbay:pernas} (some elements are also in \cite{gabbay:semnt-ea}).

The contributions are as follows:
\begin{squishitem}
\item
We introduce permissive nominal terms Definitions~\ref{defn.terms}, along with theories of $\alpha$-equivalence and freshness with the `always fresh' and `always rename' properties discussed above (Corollaries~\ref{corr.always.fresh} and~\ref{corr.always.rename}).
\item
It may look like permissive nominal terms are infinite, because the $S$ in $X^S$ in Definition~\ref{defn.terms} is an infinite set.
In Remark~\ref{rmrk.harmless} we mention that they are just as `finite', and computable-upon, as nominal terms.
\item
We develop a notion of unification (Definition~\ref{defn.unification.algorithm}).
We use a simplified notion of unifier (Definition~\ref{defn.unif.sol}) which is more like the notion of unifier from first- and higher-order unification in that it is based just on a substitution, as compared to the notion of unifier used in \cite{gabbay:nomu-jv}, which is not.
\item
We make precise the connection between nominal terms and permissive nominal terms by translating nominal terms, nominal unification problems, and their solutions, to the permissive context (Definition~\ref{defn.interpretation.nominal}). 
We verify that no expressivity is lost (Theorem~\ref{thrm.no.missing}).
\item
We connect permissive nominal terms and higher order patterns \cite{miller:uniump,miller:logpll} (Definition~\ref{defn.nominal.trans.g}) by translating permissive nominal terms, unification problems, and their solutions, to a very general notion of untyped pattern unification problems and their solutions (Definition~\ref{defn.thetaC}).
We also prove that this translation is `best possible' and `complete' in senses which we make formal (Theorems~\ref{thrm.at.least} and Theorems~\ref{thrm.sound} and~\ref{thrm.strong}).

\end{squishitem}

\subsection{Map of the paper}

The paper is organised as follows:

We introduce permissive nominal terms in Section~\ref{sect.pernt}.
Notable results are the `always fresh' and `always rename' properties for terms (Corollaries~\ref{corr.always.fresh} and~\ref{corr.always.rename} respectively).

In Section~\ref{sec.substitutions} we introduce technical definitions and results which will prove useful in the development of the unification algorithm, including permissive nominal terms substitution (Definition~\ref{defn.subst}), unification problems and their solutions (Definition~\ref{defn.problems}).

We clarify the relationship between nominal and permissive nominal terms in Section~\ref{sec.relation.to.nominal.terms}, by injecting nominal terms into permissive nominal terms (Definition~\ref{defn.interpretation.nominal} onwards).
We also elucidate the relationship between solutions of nominal unification problems, based on subsititution+freshness, and solutions of permissive nominal unification problems, based on substitution alone (Definition~\ref{defn.interpretation.solutions}).

In Sections~\ref{sect.Inc} and~\ref{sect.pernu} we present an algorithm to compute most general solutions to permissive nominal unification problems (Definition~\ref{defn.unification.algorithm}).
We prove it correct in Theorem~\ref{thrm.algorithm.correctness}.

We define $\lambda$-terms syntax (Definition~\ref{defn.terms.g}) in Section~\ref{sect.lambda.calculus}, and higher-order patterns (Definition~\ref{defn.phi.patterns}). 
Section~\ref{sect.nominal.terms.to.lambda} translates from permissive nominal term syntax to $\lambda$-term syntax (Definition~\ref{defn.nominal.trans.g} onwards). 
We prove that this translation is `best possible', in a suitable sense which we make formal (Theorem~\ref{thrm.at.least}).

Section~\ref{sect.translating.solutions} relates the solutions of higher-order pattern unification with solutions of permissive nominal unification (Definition~\ref{defn.thetaC}).
We show that the instantiation ordering, hence the property of solutions being `more general', is preserved by the translation (Corollary~\ref{corr.instant}).
We prove a soundness and weak completeness result (Theorem~\ref{thrm.sound}). 
Finally, we refine this to a more complex but more powerful completeness result (Theorem~\ref{thrm.strong}). 

In Section~\ref{sect.conclusion}, we conclude and suggest ideas for future work.
 
\section{Permissive nominal terms}
\label{sect.pernt}

We set up the syntax of permissive nominal terms (Definition~\ref{defn.terms}) and their notion of $\alpha$-equivalence $\aeq$, which does not require a freshness context (Definition~\ref{defn.aeq}).
We prove that permissive nominal terms have the `always fresh' and `always rename' properties (Corollaries~\ref{corr.always.fresh} and~\ref{corr.always.rename}).
Finally, we verify that $\aeq$ is an equivalence relation; this mirrors the result for nominal terms \cite[Theorem 2.11]{gabbay:nomu-jv}, though the proof-method is based on \cite[Subsection 3.2]{gabbay:nomr-jv}.  Where we omit proofs they are routine (or see a technical report \cite{gabbay:perntu-tr}, or \cite{gabbay:nomr-jv}). 

Note the \emph{two} notions of `free variables of'; $\f{fa}(r)$ is the free atoms in $r$, and $\f{fV}(r)$ is the free unknowns in $r$ (Definitions~\ref{defn.fa} and~\ref{defn.fV}).  This reflects the two-level structure of nominal terms familiar from previous work \cite{gabbay:nomu-jv}. 

\begin{defn}
\label{defn.atoms}
Fix two disjoint countably infinite sets $\mathbb A^<$ and $\mathbb A^>$ of \deffont{atoms} and write 
$$
\mathbb A=\mathbb A^<\uplus \mathbb A^> .
$$
(Here $\uplus$ denotes disjoint set union)
$a,b,c,\ldots$ will range over \emph{distinct} elements of $\mathbb A$ (we call this the \deffont{permutative convention}). 
\end{defn}

\begin{defn}
\label{defn.the.comb}
Define $\mathcal P$ by 
$$
\mathcal P = \{(\mathbb A^< \setminus A)\cup B\mid A\subseteq\mathbb A^<,\ B\subseteq\mathbb A^>,\ A,B\text{ finite}\} .
$$
Call elements of $\mathcal P$ \deffont{permission sets}.
$S,S',T$ will range over permission sets.
\end{defn}

Call $S\subseteq\mathbb A$ \deffont{co-infinite} when $\mathbb A\setminus S$ is infinite.
$\mathcal P$ is a set of infinite, co-infinite sets of atoms.\footnote{Other versions of $\mathcal P$ are possible.

We believe that a sufficient property for $\mathcal P$ is that it be: \emph{co-infinitely down-closed} (so if $S\in\mathcal P$ and $S'\subseteq S$ is co-infinite, then $S'\in\mathcal P$); and such that for any finite $\{S_1,\ldots,S_n\}\subseteq\mathcal P$, the union $S_1\cup\ldots\cup S_n$ is co-infinite (it does not have to be in $\mathcal P$).

The first property is sufficient to build the permission set in \eqref{eq.S'}.  The second property is sufficient to guarantee Corollary~\ref{corr.always.fresh}.
This is similar in spirit to the notion of \emph{support ideal} from \cite[Definition~4.1]{cheney:comhtn}.

}

\begin{defn}
\label{defn.unknowns}
For each permission set $S$ fix a disjoint countably infinite set of \deffont{unknowns} of sort $S$.
\ $X^S$, $Y^S$, $Z^S$, will range over distinct unknowns of sort $S$.
\ If $S\neq S'$ then there is no particular connection between $X^S$ and $X^{S'}$.
\ $\mathcal V$ will range over finite sets of unknowns (we use this from Section~\ref{sect.Inc} onwards).
\end{defn}

\begin{defn}
\label{defn.dom.bijection}
Suppose $f$ is a function from atoms to atoms.
Define $\f{nontriv}(f)$ by:
\begin{displaymath}
\f{nontriv}(f) = \{ a \mid f(a) \neq a\}
\end{displaymath}
\end{defn}
This has also been called the \emph{support} of $\pi$ \cite{gabbay:newaas-jv}.

\begin{defn}
\label{def.permutation}
A (finite) \deffont{permutation} is a bijection on atoms such that $\f{nontriv}(\pi)$ is finite.
$\pi$ and $\pi'$ will range over finite permutations.

Write $\pi \fcomp \pi'$ for the \deffont{composition} of $\pi$ and $\pi'$ (so $(\pi\circ\pi')(a)=\pi(\pi'(a))$).
Write $\Id$ for the \deffont{identity} permutation (so $\Id(a)=a$ always).
Write $(a\ b)$ for the \deffont{swapping} permutation that swaps $a$ and $b$. 
\end{defn}

\begin{defn} 
\label{defn.terms}
Fix a set of \deffont{term-formers}.
$\tf f,\tf g,\tf h$ will range over distinct term-formers.

Define \deffont{(permissive nominal) terms} by:
\begin{frameqn}
r, s, t, \ldots\ ::=\ a \mid \tf f(r,\ldots,r) \mid [a]r \mid \pi \act X^S 
\end{frameqn}

We write $\equiv$ for syntactic identity; $r\equiv s$ when $r$ and $s$ denote identical terms.
Note that $X^S$ (the unknown) is not a term, however $\pi\act X^S$ is a term and in particular $\Id\act X^S$ is a term, which we may write as $X^S$.
\end{defn}

\begin{rmrk}
\label{rmrk.harmless}
Permissive-nominal terms are finite. 
They are finitely branching finitely deep trees.

Equality of permissive-nominal terms may be calculated in finite time.
Permission sets trivially admit a finite representation as the pair of finite sets $A$ and $B$ in Definition~\ref{defn.the.comb}.

In an implementation of the algorithms in this paper by Mulligan \cite{mulligan:imppnt}, atoms are implemented concretely as numbers. 
$\mathbb A^< $ is identified with the even numbers.
Permission sets are represented finitely as their finite deviation from $\mathbb A^< $.
 
Another, quite elegant, presentation is possible using \emph{exclusive or}; see
Remark~\ref{rmrk.co-infinite}. 
\end{rmrk}

\begin{rmrk}
See Section~\ref{sec.relation.to.nominal.terms} for a comparison between permissive nominal terms of Definition~\ref{defn.terms} and `ordinary' nominal terms \cite{gabbay:nomu-jv}. 
Atoms represent \emph{variable symbols}; term-formers \emph{functions}; unknowns \emph{meta-variables}; abstraction $[a]r$ \emph{binding}; and $\pi\act X^S$ a meta-variable with a suspended substitution, like `$t[y/x]$'.
For example, suppose term-formers $\tf{app}$ and $\lambda$:
\begin{squishlist}
\item
$\tf{app}(a,b)$ can represent `$xy$' ($x$ applied to $y$). 
\item
$\tf{app}(\tf{lam}([a]a),b)$ can represent `$(\lambda x.x)y$' (identity applied to $y$).
\item
$\lambda([a]X^S)$ can represent `$\lambda x.t$' if $a\in S$, and `$\lambda x.t$, where $x\not\in\f{fv}(t)$' if $a\not\in S$.
\end{squishlist}
\end{rmrk}

\begin{defn}
\label{defn.perm}
Define a \deffont{permutation action} by: 
\begin{frameqn}
\begin{array}{r@{\ }l@{\qquad}r@{\ }l}
\pi \act a\equiv& \pi(a) 
&
\pi \act (\tf f(r_1,\ldots,r_n)) \equiv& \tf f(\pi \act r_1,\ldots,\pi\act r_n)
\\
\pi \act [a]r \equiv& [\pi(a)](\pi \act r)
&
\pi \act (\pi' \act X^S) \equiv& (\pi {\fcomp} \pi') \act X^S
\end{array}
\end{frameqn}
\end{defn}

\begin{defn}
\label{defn.pointwise.action}
If $S \subseteq \mathbb{A}$, define the \deffont{pointwise} action by:
\begin{frameqn}
\pi \act S = \{ \pi(a) \mid a \in S \}
\end{frameqn}
\end{defn}

\begin{defn}
\label{defn.fa}
Define \deffont{free atoms} $\f{fa}(r)$ by: 
\begin{frameqn}
\begin{array}{r@{\ }l@{\qquad}r@{\ }l}
\f{fa}(a) =& \{a\}
&
\f{fa}(\tf f(r_1,\ldots,r_n)) =& \bigcup_{1\leq i\leq n} \f{fa}(r_i) 
\\
\f{fa}([a]r) =& \f{fa}(r){\setminus}\{a\} 
&
\f{fa}(\pi {\act} X^S) =& \pi{\act} S
\end{array}
\end{frameqn}
\end{defn}
Note that $\f{fa}(\pi\act X^S)=\pi\act S$.
  Thus, an intuition for $\f{fa}(r)$ is `the free atoms we can have after instantiation'.

\begin{defn}
\label{defn.fV}
Define \deffont{free unknowns} $\f{fV}(r)$ by:
\begin{frameqn}
\begin{array}{r@{\ }l@{\qquad}r@{\ }l}
\f{fV}(a)  =& \varnothing
&
\f{fV}(\tf f(r_1,\ldots,r_n)) =& \f{fV}(r_1)\cup\cdots\cup\f{fV}(r_n) 
\\
\f{fV}([a]r) =&\f{fV}(r)
&
\f{fV}(\pi{\act} X^S) =& \{X^S\}
\end{array}
\end{frameqn}
\end{defn}

\begin{defn}
\label{defn.aeq}
If $A\subseteq\mathbb A$, define $\pi|_A$,\ \  \deffont{$\pi$ restricted to $A$}, to be:
\begin{frameqn}
\begin{array}{r@{\ }l@{\qquad}l}
\pi|_A(a)=&\pi(a) & \text{when}\ \ a\in A
\\
\pi|_A(a)\mathrel{\phantom{=}} &\text{undefined}&\text{when}\ \ a\in \mathbb A\setminus A
\end{array}
\end{frameqn}
Define \deffont{$\alpha$-equivalence} $\aeq$ inductively by the rules in Figure~\ref{fig.permissive.aeq}. 
\begin{figure*}
\begin{gather*}
\begin{prooftree}
\justifies
a\aeq a
\using\rulefont{{\aeq}aa}
\end{prooftree}
\qquad
\begin{prooftree}
  r_1\aeq s_1\quad\cdots\quad r_n\aeq s_n
  \justifies
  \tf f(r_1,\ldots,r_n) \aeq \tf f(s_1,\ldots,s_n)
  \using \rulefont{{\aeq}\tf f}
\end{prooftree}
\qquad
\begin{prooftree}
  r \aeq s
  \justifies
  [a]r \aeq [a]s
  \using \rulefont{{\aeq}[a]}
\end{prooftree}
\\[2ex]
\begin{prooftree}
  (b\ a)\act r \aeq s\qquad (b\not\in\f{fa}(r))
  \justifies
  [a]r \aeq [b]s
  \using \rulefont{{\aeq}[b]}
\end{prooftree}
\qquad
\begin{prooftree}
(\pi|_S  = \pi'|_S)
\justifies
\pi\act X^S \aeq \pi'\act X^S
\using\rulefont{{\aeq}X}
\end{prooftree}
\end{gather*}
\caption{Derivable $\alpha$-equivalence on permissive nominal terms.}
\label{fig.permissive.aeq}
\end{figure*}
\end{defn}

Corollaries~\ref{corr.always.fresh} and~\ref{corr.always.rename} are properties that `ordinary syntax' has, that nominal terms do not have, and that permissive nominal terms recover;
we can always choose a fresh variable,
and we can always $\alpha$-rename with it.

\begin{corr}
\label{corr.always.fresh}
For any $r_1$, \ldots , $r_n$  there exist infinitely many $b$ such that $b \not\in \bigcup\{\f{fa}(r_i)\mid 1\leq i\leq n\}$.   
\end{corr}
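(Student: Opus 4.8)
The plan is to reduce everything to the single observation that a finite union of permission sets is co-infinite, exploiting the split $\mathbb A = \mathbb A^<\uplus\mathbb A^>$ as a reservoir of fresh atoms. The key supporting lemma, which I would prove by structural induction on $r$, is that $\f{fa}(r)$ is always contained in a finite union of permission sets.

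In the atom case $\f{fa}(a)=\{a\}$ is contained in any permission set containing $a$ (for instance $\mathbb A^<$ if $a\in\mathbb A^<$, or $\mathbb A^<\cup\{a\}$ if $a\in\mathbb A^>$). The term-former case is immediate, since $\f{fa}(\tf f(r_1,\ldots,r_n))=\bigcup_i\f{fa}(r_i)$ and a union of finitely many finite unions of permission sets is again one. The abstraction case is immediate too, since $\f{fa}([a]r)=\f{fa}(r)\setminus\{a\}$ is a subset of $\f{fa}(r)$, and being contained in a finite union of permission sets is preserved under passing to subsets. The only case with real content is the unknown case: here $\f{fa}(\pi\act X^S)=\pi\act S$, and I must check that $\pi\act S\in\mathcal P$, i.e. that permission sets are stable under the permutation action.

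For this closure I would use the defining shape $S=(\mathbb A^<\setminus A)\cup B$ with $A,B$ finite, together with the fact that $\f{nontriv}(\pi)$ is finite. Since $\pi$ disturbs only finitely many atoms, $\pi\act S$ differs from $S$ in only finitely many places, so it is again of the form $(\mathbb A^<\setminus A')\cup B'$ with $A',B'$ finite, hence a permission set; explicitly, $\pi\act\mathbb A^<$ is $\mathbb A^<$ with finitely many atoms deleted and finitely many added, while $\pi\act A$ and $\pi\act B$ stay finite. This completes the induction.

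Granting the lemma, $\bigcup_i\f{fa}(r_i)$ is contained in a finite union of permission sets $S_1\cup\cdots\cup S_k$, and it only remains to see that such a union is co-infinite. Writing $S_j=(\mathbb A^<\setminus A_j)\cup B_j$, each $S_j$ meets $\mathbb A^>$ only in the finite set $B_j$, so $(S_1\cup\cdots\cup S_k)\cap\mathbb A^>=B_1\cup\cdots\cup B_k$ is finite. As $\mathbb A^>$ is infinite, infinitely many $b\in\mathbb A^>$ avoid this union, and hence avoid $\bigcup_i\f{fa}(r_i)$; these are the required atoms. The main obstacle is thus the unknown case of the induction, namely verifying stability of $\mathcal P$ under the finite permutation action, and it is precisely there that the finiteness of $\f{nontriv}(\pi)$ is essential.
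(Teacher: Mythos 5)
Your proof is correct, but it routes through a different key lemma than the paper's. The paper defines a finite set $\f{atoms}(r)$ of atoms occurring syntactically in $r$ (with $\f{atoms}(\pi\act X^S)=\f{nontriv}(\pi)$) and proves by induction that $\f{fa}(r)\subseteq\f{atoms}(r)\cup\bigcup\{S\mid X^S\in\f{fV}(r)\}$; the conclusion follows because a finite union of permission sets, together with a finite set of atoms, is co-infinite. The crucial point is the unknown case, where the paper only needs the inclusion $\pi\act S\subseteq\f{nontriv}(\pi)\cup S$, so it never has to know that $\pi\act S$ is itself a permission set. You instead prove that $\f{fa}(r)$ is contained in a finite union of permission sets \emph{alone}, which forces you to establish two closure properties of $\mathcal P$: that every atom lies inside some permission set, and that $\mathcal P$ is stable under the action of finite permutations. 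Both hold for the concrete $\mathcal P$ of Definition~\ref{defn.the.comb}, and your verification (via finiteness of $\f{nontriv}(\pi)$, equivalently via the exclusive-or characterisation of Remark~\ref{rmrk.co-infinite}) is sound, so your proof goes through. What the paper's route buys is abstractness: it works under the weaker conditions on $\mathcal P$ identified in the footnote to Definition~\ref{defn.the.comb}, where permutation-stability is not assumed. What your route buys is the closure fact $\pi\act S\in\mathcal P$ itself, which is of independent interest --- the paper implicitly relies on it later, for instance in rule \rulefont{{\sqsubseteq}X} of Figure~\ref{fig.simp}, where $\pi^\mone\act T$ occupies the position of a permission set in a support inclusion.
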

\begin{proof}
Define $\f{atoms}(r)$ inductively by:
$$
\begin{array}{r@{\ }l@{\qquad}r@{\ }l}
\f{atoms}(a)=&\{a\}
&
\f{atoms}(\tf f(r_1,\ldots,r_n))=&\f{atoms}(r_1)\cup\ldots\cup\f{atoms}(r_n)
\\
\f{atoms}([a]r)=&\f{atoms}(r)\cup\{a\}
&
\f{atoms}(\pi\act X^S)=&\f{nontriv}(\pi)
\end{array}
$$
It is not hard to prove 
by induction on term syntax that 
$\f{fa}(r_i)\subseteq\f{atoms}(r_i)\cup\bigcup\{S\mid X^S\in\f{fV}(r_i)\}$ for $1\leq i\leq n$.
The syntax of $r_i$ is finite so $\f{atoms}(r_i)$ is finite, and also $\f{fV}(r_i)$ is finite.
It follows 
that $\bigcup\{S\mid X^S\in\f{fV}(r_i)\text{ for some }i\}$ is co-infinite. The result follows.
\end{proof}

Later on, we will often need to say `choose an atom fresh' (see for example Definition~\ref{defn.Pr.translation}).
When we do this, we are using Corollary~\ref{corr.always.fresh}.

Corollary~\ref{corr.always.rename} expresses that we can always $\alpha$-rename:
\begin{corr}
\label{corr.always.rename}
For any $r$ and $a$ there exists infinitely many fresh $b$ (so $b\not\in\f{fa}(r)$) such that for some $s$,\ $[a]r\aeq [b]s$.
\end{corr}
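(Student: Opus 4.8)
The plan is to read the witnesses straight off the abstraction-renaming rule $\rulefont{{\aeq}[b]}$ of Figure~\ref{fig.permissive.aeq}. That rule says $[a]r \aeq [b]s$ holds whenever $b\not\in\f{fa}(r)$ and $(b\ a)\act r \aeq s$. So for each candidate atom $b$ with $b\not\in\f{fa}(r)$, the natural move is to take the witness $s \equiv (b\ a)\act r$ (a term, by Definition~\ref{defn.perm}), which reduces the premise of the rule to the reflexivity instance $(b\ a)\act r \aeq (b\ a)\act r$.

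First I would establish reflexivity of $\aeq$, i.e. $t\aeq t$ for every term $t$. This is a routine induction on the structure of $t$: the cases $a$, $\tf f(r_1,\ldots,r_n)$ and $[a]r$ use $\rulefont{{\aeq}aa}$, $\rulefont{{\aeq}\tf f}$ and $\rulefont{{\aeq}[a]}$ respectively (appealing to the inductive hypotheses on subterms), and the moderated-unknown case $\pi\act X^S$ follows from $\rulefont{{\aeq}X}$ since trivially $\pi|_S=\pi|_S$. If the fact that $\aeq$ is an equivalence relation is taken as already in hand, this step is simply a citation.

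Next I would invoke Corollary~\ref{corr.always.fresh}, instantiated at the single term $r$, to obtain infinitely many atoms $b$ with $b\not\in\f{fa}(r)$. Discarding at most the one value $b=a$ (so that the permutative convention governing $\rulefont{{\aeq}[b]}$, which keeps $a$ and $b$ distinct, applies) still leaves infinitely many such $b$. For each of them, feeding the side-condition $b\not\in\f{fa}(r)$ together with the reflexivity instance $(b\ a)\act r \aeq (b\ a)\act r$ into $\rulefont{{\aeq}[b]}$ yields $[a]r \aeq [b]s$ with $s\equiv(b\ a)\act r$, which is exactly what is required for each of the infinitely many $b$.

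There is essentially no hard part: the entire content of the corollary is carried by the supply of infinitely many fresh $b$ (Corollary~\ref{corr.always.fresh}) and by the shape of $\rulefont{{\aeq}[b]}$. The only point needing any care is the dependence on reflexivity of $\aeq$; if, in the order of development, that property is not yet available when this corollary is proved, the short induction sketched above supplies it without circularity.
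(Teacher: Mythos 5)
Your proof is correct and takes essentially the same route as the paper, whose entire proof reads: ``Immediate, by Corollary~\ref{corr.always.fresh} and \rulefont{{\aeq}[b]}.'' The details you supply --- the explicit witness $s\equiv(b\ a)\act r$, the reflexivity of $\aeq$ needed to discharge the premise of \rulefont{{\aeq}[b]} (a property the paper only records later, in Proposition~\ref{prop.aeq.transitive}, which your non-circular induction correctly anticipates), and the permutative-convention point about $b\neq a$ --- merely make explicit what the paper leaves implicit.
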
 
\begin{proof}
Immediate, by Corollary~\ref{corr.always.fresh} and \rulefont{{\aeq}[b]}.
\end{proof}

Our changes do not affect basic results about nominal terms \cite{gabbay:nomu-jv}; the proofs of the following lemmas are by routine inductions (see \cite{gabbay:perntu-tr} for details):
\begin{lemm}
\label{lemm.permutation.comp}
\begin{enumerate}
\item
$\id\act r\equiv r$
\item
$\pi' \act (\pi \act r) \equiv (\pi' \fcomp \pi) \act r$
\end{enumerate}
\end{lemm}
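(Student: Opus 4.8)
The plan is to prove both parts by a routine induction on the structure of $r$, following the four clauses of Definition~\ref{defn.terms} and unfolding the permutation action of Definition~\ref{defn.perm} in each case. The two statements share the same case analysis, so I would dispatch them in turn.

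For part~1, in the base case $r\equiv a$ we have $\id\act a\equiv\id(a)\equiv a$, since $\id$ fixes every atom. The term-former case $r\equiv\tf{f}(r_1,\ldots,r_n)$ and the abstraction case $r\equiv[a]r'$ follow by unfolding the action and applying the inductive hypothesis to the immediate subterms (in the abstraction case also using $\id(a)\equiv a$ for the binder). The final case $r\equiv\pi\act X^S$ unfolds to $\id\act(\pi\act X^S)\equiv(\id\fcomp\pi)\act X^S$, and since $\id\fcomp\pi=\pi$ this is $\pi\act X^S$, as required.

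For part~2, the atom case reduces to $\pi'\act(\pi\act a)\equiv\pi'(\pi(a))\equiv(\pi'\fcomp\pi)(a)\equiv(\pi'\fcomp\pi)\act a$ directly from the definition of composition. Term-formers go through componentwise by the inductive hypothesis. For $r\equiv[a]r'$ the left side unfolds to $[\pi'(\pi(a))]\bigl(\pi'\act(\pi\act r')\bigr)$ and the right to $[(\pi'\fcomp\pi)(a)]\bigl((\pi'\fcomp\pi)\act r'\bigr)$; these agree by the inductive hypothesis on $r'$ together with $\pi'(\pi(a))\equiv(\pi'\fcomp\pi)(a)$. Finally $r\equiv\pi''\act X^S$ unfolds on the left to $(\pi'\fcomp(\pi\fcomp\pi''))\act X^S$ and on the right to $((\pi'\fcomp\pi)\fcomp\pi'')\act X^S$, which coincide by associativity of composition.

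The only facts beyond mechanical unfolding are the three standard properties of finite permutations invoked at the leaves---that $\id$ is a two-sided unit for $\fcomp$, that $(\pi'\fcomp\pi)(a)=\pi'(\pi(a))$, and that $\fcomp$ is associative---each immediate from Definition~\ref{def.permutation}. I do not expect a genuine obstacle; the one place to be careful is the $\pi\act X^S$ case, where the suspended permutation already carries a composition, so the orientation of the $\fcomp$-rewriting must be tracked exactly to land on the stated right-hand side.
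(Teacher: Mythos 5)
Your proof is correct and is exactly the routine structural induction the paper intends: the paper omits the proof of Lemma~\ref{lemm.permutation.comp}, noting only that it follows ``by routine inductions'' (deferring details to the technical report), and your case analysis on the four term forms, with the unit, composition, and associativity facts for permutations at the leaves, is precisely that argument. Note that the $\pi\act X^S$ case is a leaf of the induction (no inductive hypothesis is needed there), which your write-up handles correctly.
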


\begin{lemm}
\label{lemm.pi.ftma}
$\pi\act \f{fa}(r)=\f{fa}(\pi\act r)$.
\end{lemm}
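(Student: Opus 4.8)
The plan is to prove the equation $\pi\act\f{fa}(r)=\f{fa}(\pi\act r)$ by induction on the structure of the term $r$, following Definitions~\ref{defn.perm}, \ref{defn.pointwise.action}, and~\ref{defn.fa}. Both $\f{fa}(-)$ and the permutation action are defined by the same clause-by-clause recursion over the four term-forming cases, so the induction splits naturally into exactly those four cases, and in each I simply unfold both definitions and compare.

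The two base cases are direct computations. For $r\equiv a$ I have $\pi\act\f{fa}(a)=\pi\act\{a\}=\{\pi(a)\}=\f{fa}(\pi(a))=\f{fa}(\pi\act a)$. For $r\equiv\pi'\act X^S$ I compute $\f{fa}(\pi\act(\pi'\act X^S))=\f{fa}((\pi\fcomp\pi')\act X^S)=(\pi\fcomp\pi')\act S$, while on the other side $\pi\act\f{fa}(\pi'\act X^S)=\pi\act(\pi'\act S)$; these agree once I record the small auxiliary fact that the pointwise action is functorial, i.e. $\pi\act(\pi'\act S)=(\pi\fcomp\pi')\act S$, which is immediate from Definitions~\ref{defn.pointwise.action} and~\ref{def.permutation}. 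For the term-former case $r\equiv\tf f(r_1,\ldots,r_n)$ the argument is routine: the pointwise action commutes with finite unions, so $\pi\act\bigcup_i\f{fa}(r_i)=\bigcup_i(\pi\act\f{fa}(r_i))$, and the inductive hypotheses then rewrite each $\pi\act\f{fa}(r_i)$ as $\f{fa}(\pi\act r_i)$, giving exactly $\f{fa}(\pi\act\tf f(r_1,\ldots,r_n))$.

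The abstraction case $r\equiv[a]s$ is the one real point of the proof, and the only place I expect a genuine obstacle. Here $\f{fa}([a]s)=\f{fa}(s)\setminus\{a\}$ and $\pi\act[a]s\equiv[\pi(a)](\pi\act s)$, so I need the set identity $\pi\act(\f{fa}(s)\setminus\{a\})=(\pi\act\f{fa}(s))\setminus\{\pi(a)\}$; combined with the inductive hypothesis $\pi\act\f{fa}(s)=\f{fa}(\pi\act s)$ this yields the claim. This displayed equation is precisely the statement that applying a bijection commutes with deleting a single atom, and it is exactly here that I must use that $\pi$ is a bijection (Definition~\ref{def.permutation}): injectivity guarantees $\pi(b)\neq\pi(a)$ whenever $b\neq a$, which is what prevents any atom other than $\pi(a)$ from being spuriously removed, while surjectivity is what lets me conclude the reverse inclusion. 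I would isolate this as a one-line sublemma, namely that for a bijection $\pi$ and any $T\subseteq\mathbb A$ we have $\pi\act(T\setminus\{a\})=(\pi\act T)\setminus\{\pi(a)\}$, after which the abstraction case closes immediately. Everything else is bookkeeping against the definitions.
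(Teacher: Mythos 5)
Your proof is correct and is exactly the routine structural induction the paper has in mind: the paper itself omits the argument, stating that the proof is "by routine inductions" (with details relegated to the technical report), and your four-case unfolding of Definitions~\ref{defn.perm}, \ref{defn.pointwise.action} and~\ref{defn.fa}, with the functoriality fact $\pi\act(\pi'\act S)=(\pi\fcomp\pi')\act S$ and the set identity $\pi\act(T\setminus\{a\})=(\pi\act T)\setminus\{\pi(a)\}$, is precisely that routine induction. One tiny nitpick: in the abstraction case only injectivity of $\pi$ is needed --- the reverse inclusion already follows from the definition of the pointwise action, since any element of $\pi\act T$ is of the form $\pi(b)$ with $b\in T$ --- so the appeal to surjectivity is superfluous, though this does not affect correctness.
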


\begin{lemm}
\label{lemm.equality.permutation}
If $r \aeq s$ then $\pi \act r \aeq \pi \act s$.
\end{lemm}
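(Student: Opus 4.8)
The plan is to proceed by induction on the derivation of $r \aeq s$ using the five rules of Figure~\ref{fig.permissive.aeq}. For each rule I would show that applying $\pi \act (-)$ to the conclusion yields another instance of $\aeq$, by invoking the induction hypothesis on the premises and then reapplying the same rule. The permutation action on the term constructors (Definition~\ref{defn.perm}) distributes exactly as required: $\pi \act \tf f(r_1,\ldots,r_n) \equiv \tf f(\pi\act r_1,\ldots,\pi\act r_n)$ and $\pi\act [a]r \equiv [\pi(a)](\pi\act r)$. Consequently the cases for $\rulefont{{\aeq}aa}$, $\rulefont{{\aeq}\tf f}$, and $\rulefont{{\aeq}[a]}$ are immediate: e.g. from $r\aeq s$ the induction hypothesis gives $\pi\act r\aeq\pi\act s$, and $\rulefont{{\aeq}[a]}$ reapplied gives $[\pi(a)](\pi\act r)\aeq[\pi(a)](\pi\act s)$, which is $\pi\act[a]r\aeq\pi\act[a]s$.

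For the unknown rule $\rulefont{{\aeq}X}$, the premise is $\pi'|_S = \pi''|_S$ and, by Definition~\ref{defn.perm}, the goal after the action is $(\pi\fcomp\pi')\act X^S \aeq (\pi\fcomp\pi'')\act X^S$. I would discharge the side-condition $(\pi\fcomp\pi')|_S = (\pi\fcomp\pi'')|_S$ pointwise: for $a\in S$ we have $\pi'(a)=\pi''(a)$, hence $\pi(\pi'(a))=\pi(\pi''(a))$, so $\rulefont{{\aeq}X}$ applies directly (this case uses no induction hypothesis).

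The main obstacle is the $\alpha$-renaming rule $\rulefont{{\aeq}[b]}$, with premises $(b\ a)\act r \aeq s$ and $b\notin\f{fa}(r)$ and conclusion $[a]r\aeq [b]s$. After the action the goal is $[\pi(a)](\pi\act r)\aeq [\pi(b)](\pi\act s)$; since $\pi$ is a bijection and $a\ne b$ (permutative convention), $\pi(a)\ne\pi(b)$, so $\rulefont{{\aeq}[b]}$ is again the rule to apply, now requiring two facts. First, the freshness side-condition $\pi(b)\notin\f{fa}(\pi\act r)$: by Lemma~\ref{lemm.pi.ftma} $\f{fa}(\pi\act r)=\pi\act\f{fa}(r)$, and since $b\notin\f{fa}(r)$ with $\pi$ injective, $\pi(b)\notin\pi\act\f{fa}(r)$. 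Second, the premise $(\pi(b)\ \pi(a))\act(\pi\act r)\aeq \pi\act s$. Here I would use the conjugation identity $(\pi(b)\ \pi(a))\fcomp\pi = \pi\fcomp(b\ a)$, verified by evaluating both sides on $a$, on $b$, and on an arbitrary atom $c\ne a,b$. Combining this identity with Lemma~\ref{lemm.permutation.comp}(2) gives $(\pi(b)\ \pi(a))\act(\pi\act r)\equiv \pi\act((b\ a)\act r)$, and the induction hypothesis applied to the premise $(b\ a)\act r\aeq s$ yields $\pi\act((b\ a)\act r)\aeq \pi\act s$. Chaining these two facts establishes the required premise and closes the case.

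Once all five rules are handled the induction is complete. The only non-routine ingredients are Lemma~\ref{lemm.pi.ftma} for the freshness condition and the conjugation identity $(\pi(b)\ \pi(a))\fcomp\pi=\pi\fcomp(b\ a)$ for rewriting the swapping; every other case is a direct reapplication of the corresponding rule after distributing $\pi$ through the constructors.
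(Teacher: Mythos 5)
Your proof is correct and takes exactly the route the paper intends: the paper states this lemma among those whose "proofs \ldots are by routine inductions" (deferring details to the technical report), and your induction on the derivation of $r\aeq s$ is that routine argument, correctly filled in — in particular the only delicate case, \rulefont{{\aeq}[b]}, is handled with the right two ingredients (Lemma~\ref{lemm.pi.ftma} for the freshness side-condition and the conjugation identity $(\pi(b)\ \pi(a))\fcomp\pi = \pi\fcomp(b\ a)$ together with Lemma~\ref{lemm.permutation.comp} for the premise). You also avoid circularity by chaining via syntactic identity $\equiv$ rather than invoking transitivity of $\aeq$, which is only proved later.
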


\begin{lemm}
\label{lemm.aeq.ftma.pres}
If $r \aeq s$ then $\f{fa}(r) = \f{fa}(s)$.
\end{lemm}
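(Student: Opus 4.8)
The plan is to argue by induction on the derivation of $r \aeq s$, following the rules in Figure~\ref{fig.permissive.aeq}. Each rule fixes the outermost shape of both $r$ and $s$, so there are five cases, and in each I must derive $\f{fa}(r) = \f{fa}(s)$ from the inductive hypotheses attached to the premises, reading off $\f{fa}$ directly from Definition~\ref{defn.fa}.

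Four of the five cases are immediate. For $\rulefont{{\aeq}aa}$ we have $r \equiv a \equiv s$ and $\f{fa}(a) = \{a\}$ on both sides. For $\rulefont{{\aeq}\tf f}$ the premises give $\f{fa}(r_i) = \f{fa}(s_i)$ for each $i$ by induction, and taking the union over $i$ yields the result. For $\rulefont{{\aeq}[a]}$ the premise gives $\f{fa}(r') = \f{fa}(s')$, and removing $a$ from each side preserves the equality. In the variable case $\rulefont{{\aeq}X}$, where $r \equiv \pi\act X^S$ and $s \equiv \pi'\act X^S$, the side-condition $\pi|_S = \pi'|_S$ is exactly what is needed: since $\pi$ and $\pi'$ agree on every atom of $S$, their pointwise images coincide, $\pi\act S = \pi'\act S$, which is precisely $\f{fa}(\pi\act X^S) = \f{fa}(\pi'\act X^S)$.

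The only case requiring real care — and the one I expect to be the main obstacle — is $\rulefont{{\aeq}[b]}$, where $r \equiv [a]r'$, $s \equiv [b]s'$, the premise is $(b\ a)\act r' \aeq s'$, and the side-condition is $b \not\in \f{fa}(r')$. Here I would first apply the inductive hypothesis to get $\f{fa}((b\ a)\act r') = \f{fa}(s')$, then rewrite the left-hand side using Lemma~\ref{lemm.pi.ftma} to obtain $\f{fa}(s') = (b\ a)\act \f{fa}(r')$. The goal then reduces to the set equation $\f{fa}(r') \setminus \{a\} = ((b\ a)\act \f{fa}(r')) \setminus \{b\}$, and it is precisely here that the freshness side-condition $b \not\in \f{fa}(r')$ is indispensable. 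Because $b$ does not occur in $\f{fa}(r')$, the swapping $(b\ a)$ acts on $\f{fa}(r')$ simply by replacing $a$ with $b$ (when $a$ is present) and fixing every other atom; deleting $b$ afterwards then returns exactly $\f{fa}(r') \setminus \{a\}$. I would confirm this by a short case split on whether $a \in \f{fa}(r')$, with both branches evaluating to $\f{fa}(r') \setminus \{a\}$. This settles the case and completes the induction.
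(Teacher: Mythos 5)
Your proof is correct and is exactly the ``routine induction'' the paper alludes to for this lemma (the paper defers the details to its technical report): induction on the derivation of $r \aeq s$, with all cases immediate except \rulefont{{\aeq}[b]}, which is discharged via Lemma~\ref{lemm.pi.ftma} and the freshness side-condition $b \not\in \f{fa}(r)$. Your case split on whether $a \in \f{fa}(r')$ correctly verifies the needed set equation, so there is nothing to add.
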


\begin{lemm}
\label{lemm.ds.pi}
If $\pi|_{\f{fa}(r)} = \pi'|_{\f{fa}(r)}$ then $\pi \act r \aeq \pi' \act r$.
\end{lemm}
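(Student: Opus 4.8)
The plan is to prove the statement by induction on the structure of $r$, using the rules of Figure~\ref{fig.permissive.aeq}. The hypothesis $\pi|_{\f{fa}(r)}=\pi'|_{\f{fa}(r)}$ says that $\pi$ and $\pi'$ agree on every free atom of $r$, and in each case I want to read off the premises of exactly one $\aeq$-rule.

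The base and non-binding inductive cases are routine. If $r\equiv a$ then $\f{fa}(r)=\{a\}$, so $\pi(a)=\pi'(a)$ and $\pi\act a\equiv\pi'\act a$, whence the conclusion by $\rulefont{{\aeq}aa}$. If $r\equiv\tf f(r_1,\ldots,r_n)$ then each $\f{fa}(r_i)\subseteq\f{fa}(r)$, so $\pi$ and $\pi'$ agree on $\f{fa}(r_i)$ and the inductive hypothesis yields $\pi\act r_i\aeq\pi'\act r_i$; rule $\rulefont{{\aeq}\tf f}$ concludes. If $r\equiv\pi''\act X^S$ then $\f{fa}(r)=\pi''\act S$, and for each $a\in S$ we have $\pi''(a)\in\pi''\act S$, so $\pi(\pi''(a))=\pi'(\pi''(a))$; hence $(\pi\fcomp\pi'')|_S=(\pi'\fcomp\pi'')|_S$ and $\rulefont{{\aeq}X}$ gives the result.

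The main obstacle is the abstraction case $r\equiv[a]r'$, where $\f{fa}(r)=\f{fa}(r')\setminus\{a\}$, so the hypothesis constrains $\pi$ and $\pi'$ only off $a$; in particular $\pi(a)$ and $\pi'(a)$ may differ, and I would split on this. If $\pi(a)=\pi'(a)$, then $\pi$ and $\pi'$ in fact agree on all of $\f{fa}(r')$, the inductive hypothesis gives $\pi\act r'\aeq\pi'\act r'$, and $\rulefont{{\aeq}[a]}$ concludes. If $\pi(a)\neq\pi'(a)$, I would instead apply $\rulefont{{\aeq}[b]}$ to derive $[\pi(a)](\pi\act r')\aeq[\pi'(a)](\pi'\act r')$, which amounts to discharging two obligations.

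The first obligation is the freshness side-condition $\pi'(a)\not\in\f{fa}(\pi\act r')$. By Lemma~\ref{lemm.pi.ftma} this set equals $\pi\act\f{fa}(r')$, so I must show $\pi'(a)\neq\pi(c)$ for every $c\in\f{fa}(r')$: the case $c=a$ contradicts $\pi(a)\neq\pi'(a)$, while for $c\neq a$ we have $\pi(c)=\pi'(c)$, so $\pi(c)=\pi'(a)$ would force $c=a$ by injectivity of $\pi'$. The second obligation is $(\pi'(a)\ \pi(a))\act(\pi\act r')\aeq\pi'\act r'$; rewriting the left side by Lemma~\ref{lemm.permutation.comp} as $\sigma\act r'$ with $\sigma=(\pi'(a)\ \pi(a))\fcomp\pi$, I would close it by the inductive hypothesis after checking $\sigma|_{\f{fa}(r')}=\pi'|_{\f{fa}(r')}$: at $c=a$ one computes $\sigma(a)=\pi'(a)$, and for $c\in\f{fa}(r')\setminus\{a\}$ the value $\pi(c)$ avoids both $\pi(a)$ and $\pi'(a)$ (by the computation just made), so the swapping fixes it and $\sigma(c)=\pi(c)=\pi'(c)$. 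I expect this abstraction case, and specifically verifying that the swapping leaves $\pi\act\f{fa}(r')$ untouched, to be the only delicate part; the remaining cases are routine.
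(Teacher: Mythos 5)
Your proof is correct and is exactly the ``routine induction'' the paper appeals to for this lemma (the paper itself omits the details, deferring them to the technical report \cite{gabbay:perntu-tr}): structural induction on $r$, with the only non-trivial work in the abstraction case, split on whether $\pi(a)=\pi'(a)$. Your handling of that case --- discharging the freshness side-condition of \rulefont{{\aeq}[b]} via Lemma~\ref{lemm.pi.ftma} and injectivity, then closing the premise $(\pi'(a)\ \pi(a))\act(\pi\act r')\aeq \pi'\act r'$ by Lemma~\ref{lemm.permutation.comp} and the inductive hypothesis applied to $(\pi'(a)\ \pi(a))\fcomp\pi$ and $\pi'$ --- is sound and complete, with no gaps.
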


\begin{prop}
\label{prop.aeq.transitive}
$\aeq$ is transitive, reflexive, and symmetric.
\end{prop}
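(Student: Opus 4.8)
The plan is to establish the three properties separately by induction, leaning on the permutation and freshness lemmas already available (Lemmas~\ref{lemm.permutation.comp}, \ref{lemm.pi.ftma}, \ref{lemm.equality.permutation}, \ref{lemm.aeq.ftma.pres}, and~\ref{lemm.ds.pi}). \emph{Reflexivity} ($r \aeq r$) is a routine induction on the structure of $r$: the atom, term-former, and abstraction cases apply \rulefont{{\aeq}aa}, \rulefont{{\aeq}\tf f}, and \rulefont{{\aeq}[a]} to the inductive hypotheses, and for $\pi\act X^S$ one invokes \rulefont{{\aeq}X} with the trivial equation $\pi|_S = \pi|_S$.

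For \emph{symmetry} I would induct on the derivation of $r \aeq s$ and case on the last rule. Every rule except \rulefont{{\aeq}[b]} is visibly symmetric once the inductive hypothesis is applied, so the only case of substance is $[a]r' \aeq [b]s'$ (with $a \neq b$) derived from $(b\ a)\act r' \aeq s'$ and $b \notin \f{fa}(r')$. To conclude $[b]s' \aeq [a]r'$ via \rulefont{{\aeq}[b]} I must supply $(b\ a)\act s' \aeq r'$ and $a \notin \f{fa}(s')$. The first follows from the inductive hypothesis $s' \aeq (b\ a)\act r'$ by transporting along the swapping $(b\ a)$ (Lemma~\ref{lemm.equality.permutation}) and simplifying $(b\ a)\fcomp(b\ a) = \id$ (Lemma~\ref{lemm.permutation.comp}). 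The freshness condition holds because $\f{fa}(s') = (b\ a)\act\f{fa}(r')$ by Lemmas~\ref{lemm.aeq.ftma.pres} and~\ref{lemm.pi.ftma}, and $a \in (b\ a)\act\f{fa}(r')$ would force $b \in \f{fa}(r')$, contradicting the hypothesis.

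\emph{Transitivity} is the crux. I would show that $r \aeq s$ and $s \aeq t$ imply $r \aeq t$ by strong induction on the \emph{size of the middle term $s$}, with a case analysis on the last rules of both derivations. Since each rule forces $r$, $s$, $t$ to share a top-level shape, the atom, term-former, and moderated-unknown cases are immediate (for the last, by composing the equations $\pi|_S = \pi'|_S$ and $\pi'|_S = \pi''|_S$). The work lies in the abstraction case $r = [a]r'$, $s = [b]s'$, $t = [d]t'$, where each derivation may end in \rulefont{{\aeq}[a]} or \rulefont{{\aeq}[b]}. In every subcase I reduce the goal to an instance of transitivity at a term of the same size as $r'$ or $s'$ — hence strictly smaller than $s$, so the inductive hypothesis applies — after first transporting the available equivalences by a swapping (Lemma~\ref{lemm.equality.permutation}) and, where all three bound atoms are distinct, rewriting a composite permutation such as $(d\ b)\fcomp(b\ a)$ to the single swapping $(d\ a)$ \emph{on $\f{fa}(r')$} via Lemma~\ref{lemm.ds.pi}. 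The freshness side-conditions required to reapply \rulefont{{\aeq}[b]} are discharged using Lemmas~\ref{lemm.aeq.ftma.pres} and~\ref{lemm.pi.ftma}, exactly as in the symmetry argument.

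The main obstacle I anticipate is the final abstraction subcase, in which both derivations conclude by \rulefont{{\aeq}[b]} with $a$, $b$, $d$ pairwise distinct: here one must verify that the three-cycle obtained from composing the two swappings agrees with $(d\ a)$ on $\f{fa}(r')$ — which holds precisely because $b, d \notin \f{fa}(r')$ — and must take care that every appeal to the inductive hypothesis lands at a strictly smaller middle term, so that the induction is well-founded. The remaining work (checking the permutation identities and tracking the scope of each freshness fact) is routine bookkeeping.
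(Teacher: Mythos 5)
Your proposal is correct and follows essentially the same route as the paper's own proof: reflexivity by structural induction, symmetry by induction on derivations, and transitivity by induction on term size, with the abstraction cases handled by transporting along swappings (Lemma~\ref{lemm.equality.permutation}), collapsing composite permutations via Lemmas~\ref{lemm.permutation.comp} and~\ref{lemm.ds.pi}, and discharging freshness side-conditions with Lemmas~\ref{lemm.aeq.ftma.pres} and~\ref{lemm.pi.ftma}. The only (cosmetic) difference is in the symmetry case, where you apply the inductive hypothesis to the subderivation before transporting by $(b\ a)$, whereas the paper transports first and then invokes the hypothesis, justified by the observation that permutation preserves size.
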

\begin{proof}
See Appendix~\ref{sect.omitted.proofs}.
We use Lemmas~\ref{lemm.permutation.comp}, \ref{lemm.pi.ftma}, \ref{lemm.equality.permutation}, \ref{lemm.aeq.ftma.pres} and~\ref{lemm.ds.pi}.
\end{proof}

\subsection{Foundations of permission sets}

\begin{rmrk}
\label{rmrk.fm.not.enough}
The Fraenkel-Mostowski sets model used in the work which introduced nominal techniques \cite{gabbay:newaas-jv} famously does not admit sets like $S$ and $T$, because they do not have finite support.
It is not an issue in this paper because we are not concerned with representing permissive nominal syntax-up-to-binding.
Permissive nominal syntax-up-to-binding can be constructed though; see \cite{gabbay:semnt-ea,gabbay:pernas}. 
See also generalisations of nominal sets by the second author \cite{gabbay:fmhotn,gabbay:genmn} or by Cheney (\cite[Section 3]{cheney:thesis}, or \cite{cheney:comhtn}).
\end{rmrk} 

\begin{rmrk}
\label{rmrk.co-infinite}
Define $S\shs T$ (the \deffont{exclusive or}) by
$$
S\shs T = (S\setminus T) \cup (T\setminus S) .
$$
It is a fact that for every $S\in\mathcal P$,\ the set $S\shs\mathbb A^< $ is a finite set of atoms, and it is the unique finite set such that $\mathbb A^< \shs(S\shs\mathbb A^< )=S$.
This gives a nice finite representation of permission sets, alternative to the two mentioned in Remark~\ref{rmrk.harmless}. 
\end{rmrk}

\section{Substitutions, problems, and solutions}
\label{sec.substitutions}

\subsection{Substitutions}

The purpose of an unknown $X^S$ is to represent an `unknown term/unknown element'.
We therefore define a substitution action for unknowns.
Consistent with nominal terms, substitution for unknowns is capturing for abstraction by atoms.

\begin{defn}
\label{defn.subst}
A \deffont{substitution} $\theta$ is a function from unknowns to terms such that 
$\f{fa}(\theta(X^S))\subseteq S$ always (so $S$ in $X^S$ describes the `permission' we have to instantiate $X^S$, namely to terms with free atoms in $S$).
$\theta$, $\theta'$, $\theta_1$, $\theta_2$, will range over substitutions. 

Write $\id$ for the \deffont{identity} substitution mapping $X^S$ to $\id\act X^S$ always. 
It will always be clear whether $\Id$ means the identity substitution or permutation.
Suppose $\f{fa}(t)\subseteq S$.
Write $[X^S\sm t]$ for the substitution such that
\begin{frameqn}
\begin{aligned}
& [X^S \sm t](X^S) \equiv\ t
& \text{and} &&
[X^S\sm t](Y^T)\equiv\ & \id\act Y^T 
&\ &\text{for all other}\  Y^T.
\end{aligned}
\end{frameqn}
\end{defn}
`$\f{fa}(\theta(X^S))\subseteq S$' looks absent in nominal terms theory (\cite[Definition 2.13]{gabbay:nomu-jv}, \cite[Definition 4]{gabbay:nomr-jv}), yet it is there: see the conditions `$\nabla'\cent \theta(\nabla)$' in Lemma 2.14, and `$\nabla\cent a\#\theta(t)$' in Definition 3.1 of \cite{gabbay:nomu-jv}.
More on this in Section~\ref{sec.relation.to.nominal.terms}.

\begin{defn}
\label{defn.subst.action}
Define a \deffont{substitution action} on terms by:
\begin{frameqn}
\begin{array}{r@{\ }l@{\qquad}r@{\ }l}
a\theta \equiv& a
&
\tf f(r_1,\ldots,r_n)\theta \equiv& \tf f (r_1\theta,\ldots,r_n\theta)
\\
([a]r)\theta \equiv& [a](r\theta)
&
(\pi{\act} X^S)\theta \equiv& \pi{\act} \theta(X^S)
\end{array}
\end{frameqn}
\end{defn}
Note that $X^S\theta$ means `$\theta$ acting on $\id\act X^S$'; $\theta(X^S)$ means `the value of function $\theta$ at $X^S$'.

\begin{lemm}
\label{lemm.substitution}
$\f{fa}(r\theta)\subseteq\f{fa}(r)$. 
\end{lemm}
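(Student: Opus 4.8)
The plan is to induct on the structure of $r$, following the grammar of Definition~\ref{defn.terms}. The atom, term-former, and abstraction cases are routine unfoldings of the substitution action (Definition~\ref{defn.subst.action}) together with the inductive hypothesis; the only case that uses the defining property of a substitution is the unknown case $\pi\act X^S$, and that is where I expect the real content to lie.

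First I would dispatch the base case $r\equiv a$: since $a\theta\equiv a$, we get $\f{fa}(a\theta)=\{a\}=\f{fa}(a)$, so the inclusion holds (as an equality). For the term-former case $r\equiv \tf f(r_1,\ldots,r_n)$, the action pushes $\theta$ under $\tf f$, giving $\f{fa}(\tf f(r_1,\ldots,r_n)\theta)=\bigcup_i\f{fa}(r_i\theta)$; applying the inductive hypothesis to each $r_i$ and taking the union of the inclusions closes this case. The abstraction case $r\equiv [a]r'$ is analogous: from $([a]r')\theta\equiv [a](r'\theta)$ we obtain $\f{fa}(([a]r')\theta)=\f{fa}(r'\theta)\setminus\{a\}$, and the inductive hypothesis $\f{fa}(r'\theta)\subseteq\f{fa}(r')$ together with monotonicity of set difference in its first argument yields $\f{fa}(r'\theta)\setminus\{a\}\subseteq\f{fa}(r')\setminus\{a\}=\f{fa}([a]r')$.

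The main case, and the only one requiring more than definitional unfolding, is $r\equiv \pi\act X^S$. Here $(\pi\act X^S)\theta\equiv \pi\act\theta(X^S)$, so by Lemma~\ref{lemm.pi.ftma} we have $\f{fa}((\pi\act X^S)\theta)=\pi\act\f{fa}(\theta(X^S))$. The crux is the defining condition on substitutions from Definition~\ref{defn.subst}, namely $\f{fa}(\theta(X^S))\subseteq S$; combining this with the observation that the pointwise action of the bijection $\pi$ is monotone with respect to $\subseteq$ gives $\pi\act\f{fa}(\theta(X^S))\subseteq\pi\act S$. Since $\f{fa}(\pi\act X^S)=\pi\act S$ by Definition~\ref{defn.fa}, this is exactly the desired inclusion.

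The hard part, such as it is, is precisely this unknown case: it is the one place where the restriction that a substitution respect permission sets ($\f{fa}(\theta(X^S))\subseteq S$) is indispensable, since without it the free atoms introduced by $\theta(X^S)$ could escape $S=\f{fa}(X^S)$ and the inclusion would fail. The auxiliary fact I lean on—monotonicity of the pointwise action $A\mapsto\pi\act A$—is immediate from $\pi$ being injective, so no separate lemma is needed; it can be stated in a single line as part of the unknown case.
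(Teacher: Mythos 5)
Your proof is correct and follows essentially the same route as the paper's: induction on $r$, with the abstraction case closed by monotonicity of set difference and the unknown case closed by combining Definition~\ref{defn.subst} ($\f{fa}(\theta(X^S))\subseteq S$) with Lemma~\ref{lemm.pi.ftma} and monotonicity of the pointwise permutation action. Your identification of the unknown case as the only place where the permission-set condition on substitutions is genuinely needed matches the paper's treatment exactly.
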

\begin{proof}
See Appendix~\ref{sect.omitted.proofs}.
We use Lemma~\ref{lemm.pi.ftma}.
\end{proof}

\begin{lemm}
\label{lemm.sub.perm}
$\pi\act (r\theta)\equiv (\pi\act r)\theta$.
\end{lemm}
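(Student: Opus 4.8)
The plan is to prove the identity by structural induction on $r$, following the four clauses of Definition~\ref{defn.terms}. In each case the statement reduces to unfolding the definitions of the permutation action (Definition~\ref{defn.perm}) and the substitution action (Definition~\ref{defn.subst.action}) on each side and checking that the results coincide syntactically; the inductive hypothesis supplies the commutation for proper subterms.

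The atom case $r\equiv a$ is immediate: the left-hand side is $\pi\act(a\theta)\equiv\pi\act a\equiv\pi(a)$, while the right-hand side is $(\pi\act a)\theta\equiv\pi(a)\theta\equiv\pi(a)$, since atoms are fixed by any substitution. For $r\equiv\tf f(r_1,\ldots,r_n)$ both actions distribute over the term-former, so each side becomes $\tf f$ applied componentwise and the two agree by the inductive hypothesis on each $r_i$. For the abstraction case $r\equiv[a]s$ the key point is that substitution for unknowns is \emph{capturing}, so $([a]s)\theta\equiv[a](s\theta)$ with no renaming of the bound atom, while the permutation action renames it to $\pi(a)$; thus the left-hand side is $[\pi(a)](\pi\act(s\theta))$ and the right-hand side is $[\pi(a)]((\pi\act s)\theta)$, and these agree by the inductive hypothesis on $s$. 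Because substitution passes straight under the abstraction, the two operations never interfere here, which is exactly why this case goes through without a side-condition on $a$.

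The only case requiring any care is the unknown case $r\equiv\pi'\act X^S$, and even this is a short computation. On the left, applying the substitution action gives $\pi\act\bigl((\pi'\act X^S)\theta\bigr)\equiv\pi\act(\pi'\act\theta(X^S))$, which by Lemma~\ref{lemm.permutation.comp}(2) equals $(\pi\fcomp\pi')\act\theta(X^S)$. On the right, the permutation action first composes the permutations, $\pi\act(\pi'\act X^S)\equiv(\pi\fcomp\pi')\act X^S$, and then the substitution action yields $(\pi\fcomp\pi')\act\theta(X^S)$. The two sides therefore coincide, which closes the induction. I expect no real obstacle: the whole argument is a routine induction, and the only moving part is invoking Lemma~\ref{lemm.permutation.comp}(2) to merge the two suspended permutations in the unknown case.
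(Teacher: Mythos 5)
Your proof is correct and matches the paper's approach exactly: the paper proves Lemma~\ref{lemm.sub.perm} by the same routine structural induction on $r$ (its proof is literally stated as ``By induction on $r$'' with no further detail). Your case analysis, including the use of Lemma~\ref{lemm.permutation.comp} to merge the suspended permutations in the unknown case, fills in precisely the details the paper leaves implicit.
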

\begin{proof}
By induction on $r$.
\end{proof}

\begin{lemm}
\label{lemm.eq.X.eq.always}
If $\theta_1(X^S)\aeq \theta_2(X^S)$ for all $X^S\in \f{fV}(r)$, then $r\theta_1\aeq r\theta_2$. 
\end{lemm}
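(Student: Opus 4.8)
The plan is to proceed by structural induction on $r$, following the grammar of Definition~\ref{defn.terms}, and in each case reduce $r\theta_1 \aeq r\theta_2$ to the induction hypotheses using the corresponding congruence rule from Figure~\ref{fig.permissive.aeq}. The hypothesis `$\theta_1(X^S)\aeq\theta_2(X^S)$ for all $X^S\in\f{fV}(r)$' will be propagated to subterms using the clauses of Definition~\ref{defn.fV}, which show that $\f{fV}$ of a subterm is always contained in $\f{fV}(r)$.

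First I would handle the atom case $r\equiv a$: here $a\theta_1\equiv a\equiv a\theta_2$ by Definition~\ref{defn.subst.action}, and $a\aeq a$ holds by \rulefont{{\aeq}aa} (the hypothesis is vacuous since $\f{fV}(a)=\varnothing$). The term-former case $r\equiv\tf f(r_1,\ldots,r_n)$ and the abstraction case $r\equiv [a]r'$ are pure congruences: since $\f{fV}(r_i)\subseteq\f{fV}(r)$ and $\f{fV}(r')=\f{fV}([a]r')$, the hypothesis passes to each subterm, the induction hypothesis gives $r_i\theta_1\aeq r_i\theta_2$ (resp.\ $r'\theta_1\aeq r'\theta_2$), and then \rulefont{{\aeq}\tf f} (resp.\ \rulefont{{\aeq}[a]}) concludes, using that substitution commutes with the term-former and is capturing for abstraction, i.e.\ $([a]r')\theta\equiv[a](r'\theta)$. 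Note that the abstraction case needs only the `same atom' rule \rulefont{{\aeq}[a]}, since the bound atom $a$ is literally unchanged on both sides.

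The one case with genuine content is the unknown case $r\equiv\pi\act X^S$. Here $\f{fV}(r)=\{X^S\}$, so the hypothesis specialises to exactly $\theta_1(X^S)\aeq\theta_2(X^S)$. By Definition~\ref{defn.subst.action} we have $(\pi\act X^S)\theta_1\equiv\pi\act\theta_1(X^S)$ and $(\pi\act X^S)\theta_2\equiv\pi\act\theta_2(X^S)$, so it remains to see that applying the same permutation $\pi$ to $\alpha$-equivalent terms yields $\alpha$-equivalent terms. This is precisely Lemma~\ref{lemm.equality.permutation}, which I would invoke to obtain $\pi\act\theta_1(X^S)\aeq\pi\act\theta_2(X^S)$ and hence $r\theta_1\aeq r\theta_2$.

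I do not expect any real obstacle: the result is a routine structural induction and the only step that is not a direct congruence is the unknown case, whose difficulty is entirely absorbed by Lemma~\ref{lemm.equality.permutation}. The mild point worth flagging is simply the bookkeeping that $\f{fV}$ of every subterm is contained in $\f{fV}(r)$, so that the hypothesis on free unknowns is available at each inductive step.
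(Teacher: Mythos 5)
Your proof is correct and follows essentially the same route as the paper's: structural induction on $r$, congruence rules \rulefont{{\aeq}\tf f} and \rulefont{{\aeq}[a]} for the term-former and abstraction cases, and Lemma~\ref{lemm.equality.permutation} applied to $\theta_1(X^S)\aeq\theta_2(X^S)$ for the unknown case $\pi\act X^S$. No gaps; the bookkeeping about $\f{fV}$ of subterms that you flag is exactly what the paper's proof uses implicitly.
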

\begin{proof}
By induction on $r$.
\begin{squishitem}
\item
The cases $a$ and $\tf{f}(r_1, \ldots, r_n)$ are straightforward.
\item
The case $[a]r$.\quad
Suppose $\theta_1(X^S) \aeq \theta_2(X^S)$ for every $X^S \in \f{fV}([a]r)$.
$\f{fV}([a]r)=\f{fV}(r)$ so by inductive hypothesis $r\theta_1 \aeq r\theta_2$. 
By \rulefont{{\aeq}[a]} also $[a](r\theta_1) \aeq [a](r\theta_2)$. 
The result follows by Definition~\ref{defn.subst.action}.
\item
The case $\pi \act X^S$.\quad
By assumption, $\theta_1(X^S) \aeq \theta_2(X^S)$.
Using Lemma~\ref{lemm.equality.permutation}, $\pi \act (\theta_1(X^S)) \aeq \pi \act (\theta_2(X^S))$.
By Definition~\ref{defn.subst.action} $(\pi \act X^S)\theta_1 \aeq (\pi \act X^S)\theta_2$, as required.
\end{squishitem}
\end{proof}

\begin{lemm}
\label{lemm.aeq.subst}
If $r \aeq s$ then $r\theta \aeq s\theta$.
\end{lemm}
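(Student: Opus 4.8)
The plan is to proceed by induction on the derivation of $r\aeq s$, doing a case analysis on the last rule applied from Figure~\ref{fig.permissive.aeq}. In each case the goal is to exhibit a derivation of $r\theta\aeq s\theta$, where the syntactic shapes of $r\theta$ and $s\theta$ are computed from Definition~\ref{defn.subst.action}. The rules \rulefont{{\aeq}aa}, \rulefont{{\aeq}\tf f}, and \rulefont{{\aeq}[a]} are immediate: for the term-former and abstraction cases the inductive hypothesis supplies $\aeq$ on the immediate subterms (note $([a]r)\theta\equiv[a](r\theta)$ by Definition~\ref{defn.subst.action}), and reapplying the same rule closes the case.

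The two interesting cases are \rulefont{{\aeq}[b]} and \rulefont{{\aeq}X}, since these carry side-conditions that must be re-established after substitution. For \rulefont{{\aeq}[b]}, we have $r\equiv[a]r'$ and $s\equiv[b]s'$ with premises $(b\ a)\act r'\aeq s'$ and $b\not\in\f{fa}(r')$, and we must derive $[a](r'\theta)\aeq[b](s'\theta)$. First I would apply the inductive hypothesis to the premise, obtaining $((b\ a)\act r')\theta\aeq s'\theta$, and then rewrite the left-hand side using Lemma~\ref{lemm.sub.perm} to get $(b\ a)\act(r'\theta)\aeq s'\theta$, which is exactly the premise needed for \rulefont{{\aeq}[b]}. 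To re-establish the freshness side-condition, I would use Lemma~\ref{lemm.substitution}: since $\f{fa}(r'\theta)\subseteq\f{fa}(r')$ and $b\not\in\f{fa}(r')$, we get $b\not\in\f{fa}(r'\theta)$. Applying \rulefont{{\aeq}[b]} then yields the result.

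For \rulefont{{\aeq}X}, we have $r\equiv\pi\act X^S$ and $s\equiv\pi'\act X^S$ with $\pi|_S=\pi'|_S$, and the goal reduces (by Definition~\ref{defn.subst.action}) to $\pi\act\theta(X^S)\aeq\pi'\act\theta(X^S)$. Here I would invoke Lemma~\ref{lemm.ds.pi} applied to the term $\theta(X^S)$: by the defining condition on substitutions (Definition~\ref{defn.subst}) we have $\f{fa}(\theta(X^S))\subseteq S$, so $\pi|_S=\pi'|_S$ gives $\pi|_{\f{fa}(\theta(X^S))}=\pi'|_{\f{fa}(\theta(X^S))}$, and Lemma~\ref{lemm.ds.pi} delivers the required $\aeq$. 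The main obstacle, such as it is, lies entirely in this case: one must notice that the restriction hypothesis $\pi|_S=\pi'|_S$ from the unknown case cannot be used directly but must first be relayed through the constraint $\f{fa}(\theta(X^S))\subseteq S$ so that Lemma~\ref{lemm.ds.pi} becomes applicable. Everything else is routine recombination of the inductive hypotheses with Lemmas~\ref{lemm.sub.perm} and~\ref{lemm.substitution}.
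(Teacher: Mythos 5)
Your proof is correct and follows essentially the same route as the paper: induction on the derivation of $r \aeq s$, with the \rulefont{{\aeq}[b]} case handled exactly as in the paper (inductive hypothesis, then Lemma~\ref{lemm.sub.perm} to commute the substitution past the swapping, then Lemma~\ref{lemm.substitution} to re-establish $b\not\in\f{fa}(r'\theta)$). Your explicit treatment of the \rulefont{{\aeq}X} case via Lemma~\ref{lemm.ds.pi} and the constraint $\f{fa}(\theta(X^S))\subseteq S$ from Definition~\ref{defn.subst} is the right way to discharge the case the paper leaves to the reader.
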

\begin{proof}
By induction on the derivation of $r \aeq s$.
We consider one case:
\begin{squishitem}
\item
The case \rulefont{{\aeq}[b]}.\quad
Suppose $(b\ a) \act r \aeq s$ and $b \not\in \f{fa}(r)$.
Then $((b\ a) \act r)\theta \aeq s\theta$ by assumption.
By Lemma~\ref{lemm.sub.perm}, $(b\ a) \act (r\theta) \aeq s\theta$.
By Lemma~\ref{lemm.substitution}, $b \not\in \f{fa}(r\theta)$, therefore $[a](r\theta) \aeq [b](s\theta)$ by \rulefont{{\aeq}[b]}.
By Definition~\ref{defn.subst}, $[a](r\theta) \equiv ([a]r)\theta$, and the result follows.
\end{squishitem}
\end{proof}

\begin{defn}
\label{defn.sigma.com}
Define \deffont{composition} $\theta_1 \fcomp \theta_2$ by $(\theta_1{\fcomp}\theta_2)(X^S) \equiv (\theta_1(X^S))\theta_2$. 
\end{defn} 

\begin{lemm}
\label{lemm.sigma.sigma'.circ}
$(r\theta)\theta' \equiv r(\theta \fcomp \theta')$.
\end{lemm}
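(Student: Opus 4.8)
The plan is to prove $(r\theta)\theta' \equiv r(\theta \fcomp \theta')$ by induction on the structure of the term $r$, following the grammar of Definition~\ref{defn.terms}. The base case and the two structural recursion cases will be immediate once the substitution action (Definition~\ref{defn.subst.action}) and composition (Definition~\ref{defn.sigma.com}) are unfolded; the only case requiring an external lemma is the unknown case.

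First, for $r\equiv a$ both sides reduce to $a$, since $a\theta\equiv a$ for every substitution. For $r\equiv \tf f(r_1,\ldots,r_n)$ I would unfold the substitution action to push $\theta$ and then $\theta'$ inside each argument, apply the inductive hypothesis to each $r_i$ to rewrite $(r_i\theta)\theta'$ as $r_i(\theta\fcomp\theta')$, and then re-fold under $\tf f$. The case $r\equiv [a]r'$ is identical in spirit: the abstraction simply commutes with the substitution action by Definition~\ref{defn.subst.action}, so the two applications of $\theta$ and $\theta'$ slide inside the abstraction, the inductive hypothesis applies to $r'$, and everything re-folds.

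The one case with genuine content is $r\equiv \pi\act X^S$. Here the left-hand side is $((\pi\act X^S)\theta)\theta'\equiv (\pi\act\theta(X^S))\theta'$, so the suspended permutation $\pi$ now sits \emph{outside} the term $\theta(X^S)$, and I must commute the remaining substitution $\theta'$ past it. This is exactly Lemma~\ref{lemm.sub.perm} (instantiated with the term $\theta(X^S)$ and the substitution $\theta'$), which gives $(\pi\act\theta(X^S))\theta'\equiv\pi\act(\theta(X^S)\theta')$. On the right-hand side, unfolding the substitution action for $\pi\act X^S$ and then the definition of composition yields $(\pi\act X^S)(\theta\fcomp\theta')\equiv\pi\act((\theta\fcomp\theta')(X^S))\equiv\pi\act(\theta(X^S)\theta')$, so the two sides coincide.

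The main (and essentially only) obstacle is therefore the bookkeeping in the unknown case: one must remember that $\theta$ replaces $X^S$ by the term $\theta(X^S)$ while leaving the permutation $\pi$ in front, so that a genuine commutation of permutation past substitution — not a mere unfolding — is needed. That commutation is supplied by Lemma~\ref{lemm.sub.perm}, after which the case closes by matching against the definition of composition.
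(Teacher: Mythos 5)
Your proof is correct and takes essentially the same route as the paper's: induction on the structure of $r$, with the atom, term-former, and abstraction cases handled by unfolding Definition~\ref{defn.subst.action}, and the $\pi\act X^S$ case closed by commuting the outer substitution past the suspended permutation via Lemma~\ref{lemm.sub.perm} and then matching Definition~\ref{defn.sigma.com}. The paper presents the same chain of identities (read from the right-hand side), so there is no substantive difference.
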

\begin{proof}
By induction on $r$.
We consider one case:
\begin{squishitem}
\item
The case $\pi \act X^S$
\begin{calcenv}
(\pi\act X^S)(\theta \fcomp \theta') & \equiv & \pi \act (\theta \fcomp \theta')(X^S) & \text{Definition~\ref{defn.subst.action}} \\
                                     & \equiv & \pi \act (\theta(X^S)\theta') & \text{Definition~\ref{defn.sigma.com}} \\
                                     & \equiv & (\pi \act \theta(X^S))\theta' & \text{Lemma~\ref{lemm.sub.perm}} \\
                                     & \equiv & ((\pi \act X^S)\theta)\theta' & \text{Lemma~\ref{lemm.sub.perm}}
\end{calcenv}
\qedhere
\end{squishitem}
\end{proof}

\subsection{Unification problems, and solutions}
\label{subsect.unif}

This is a brief subsection, but it is useful:
A solution to $Pr$ `makes the equalities valid', as for first- and higher-order unification.
This simplifies the nominal unification notion of solution (Definition~\ref{defn.substitution.interp} or \cite[Definition~3.1]{gabbay:nomu-jv}) based on `a substitution + a freshness context'.
We prove results about these definitions in Sections~\ref{sec.relation.to.nominal.terms} (connection with nominal unification) and~\ref{sect.pernu} (unification algorithm).

\begin{defn}
\label{defn.problems}
\label{defn.unif.sol}
An \deffont{equality} is a pair $r\ueq s$.
A \deffont{problem} $Pr$ is a finite multiset of equalities. 
Define $Pr\theta$ by:
\begin{displaymath}
Pr\theta = \{r\theta\ueq s\theta \mid r\ueq s\in Pr\}
\end{displaymath}
Say that $\theta$ \deffont{solves} $Pr$ when 
$r\ueq s\in Pr$ implies $r\theta\aeq s\theta$.
Write $\f{Sol}(Pr)$ for the set of solutions to $Pr$.
Call $Pr$ \deffont{solvable} when $\f{Sol}(Pr)$ is non-empty.
\end{defn}

\section{Relation to nominal terms}
\label{sec.relation.to.nominal.terms}

In Subsection~\ref{subsect.unif} we stated what a permissive nominal unification problem is, and what a solution to it is.
We now make precise a mathematical sense in which nominal terms and their unification can be considered a subsystem of permissive nominal terms and their unification. 

We recall the notions of nominal term and nominal unification problem (Definitions~\ref{defn.nominal.terms} and~\ref{defn.substitution.interp}).
We translate from the `nominal world' to the `permissive nominal world' (Definition~\ref{defn.interpretation.nominal}).
Theorem~\ref{thrm.interpretation.injective} expresses how this translation is sound and complete for respective notions of $\alpha$-equivalence.
Theorem~\ref{thrm.no.missing} then shows that furthermore, solutions to unification problems are preserved 1-1 across the translation.

Recall from the Introduction that in nominal terms we often to enrich the freshness context.
An interesting feature of Definition~\ref{defn.interpretation.nominal} is how it maps nominal terms to permissive nominal terms with free atoms in $\mathbb A^< $, which is an infinite, co-infinite set of atoms.
One way to view the interpretation of Definition~\ref{defn.interpretation.nominal} is therefore this: $\mathbb A^< $ is `the atoms we had available so far' (any other permission set would do as well) and $\mathbb A^> $ is `the atoms with which we will extend the freshness context, in the future'.
Both these sets are infinite, and syntax is finite, so it is not absolutely necessary to explicitly separate them: permissive nominal terms do this, for each fixed permission set $S$; nominal terms do not.

\subsection{Alpha-equivalence between nominal and permissive nominal terms}

\begin{defn}
\label{defn.translation}
Fix a countably infinite set of \deffont{nominal atoms}, $\dot{\mathbb A}$.
$\dot a,\dot b,\dot c,\ldots$ will range over distinct nominal atoms.

Fix a bijection $\iota$ between $\dot{\mathbb A}$ and any permission set.
For concreteness we will suppose it is $\mathbb A^< $ from Definition~\ref{defn.the.comb} but any permission set will do as well.

Fix a countably infinite set of \deffont{nominal unknowns}.
$\dot X, \dot Y, \dot Z, \ldots$ will range over distinct nominal unknowns.
A \deffont{nominal permutation} is a bijection $\dot\pi$ on $\dot{\mathbb A}$ such that $\f{nontriv}(\dot\pi)$ is finite.
$\dot\pi, \dot\pi', \dot\pi'', \ldots$ will range over permutations.

Write $\dot\pi^\mone$ for the inverse of $\dot\pi$, $\dot\id$ for the identity permutation, and $\dot\pi\fcomp\dot\pi'$ for function composition, as is standard.
For example, $(\dot\pi \fcomp \dot\pi')(\dot a) = \dot\pi(\dot\pi'(\dot a))$
\end{defn}

\begin{defn}
\label{defn.nominal.terms}
Define \deffont{nominal terms} by: 
\begin{frameqn}
\dot r,\dot s,\dot t ::= \dot a \mid \dot\pi \act \dot X \mid [\dot a]\dot r \mid \tf f(\dot r, \ldots, \dot r)
\end{frameqn}
\end{defn}

\begin{defn}
\label{defn.nominal.permutation}
Define a \deffont{permutation action} on nominal terms by: 
\begin{frameqn}
\begin{array}{r@{\ }l@{\qquad}r@{\ }l}
\dot\pi \act \dot a \equiv& \dot\pi(\dot a)
&
\dot \pi \act \tf f(\dot r_1, \ldots, r_n) \equiv& \tf f(\dot \pi \act \dot r_1, \ldots, \dot \pi \act r_n)
\\
\dot\pi \act {[\dot a] \dot r} \equiv& [\dot\pi(\dot a)](\dot\pi \act \dot r)
&
\dot\pi \act (\dot\pi' \act \dot X) \equiv& (\dot\pi \fcomp \dot\pi') \act \dot X
\end{array}
\end{frameqn}
\end{defn}
Write $\equiv$ for syntactic identity. 
$\tf f$ ranges over term-formers (Definition~\ref{defn.atoms}).

\begin{figure}
\begin{gather*}
\begin{prooftree}
  \justifies
  \Delta \cent \dot{a} \# \dot{b}
  \using\rulefont{\#\dot b}
\end{prooftree}
\qquad
\begin{prooftree}
  \Delta \cent \dot{a} \# \dot{r}_i \quad (1 \leq i \leq n)
  \justifies
  \Delta \cent \dot{a} \# \tf f(\dot r_1,\ldots,\dot r_n)
  \using\rulefont{\#\tf f}
\end{prooftree}
\qquad
\begin{prooftree}
  \justifies
  \Delta \cent \dot{a} \# [\dot{a}]\dot{r}
  \using\rulefont{\#[\dot a]}
\end{prooftree}
\\[1.25ex]
\begin{prooftree}
  \Delta \cent \dot{a} \# \dot{r}
  \justifies
  \Delta \cent \dot{a} \# [\dot{b}]\dot{r}
  \using\rulefont{\#[\dot b]}
\end{prooftree}
\qquad
\begin{prooftree}
  ({\dot{\pi}^\mone(\dot a)} \# \dot{X} \in \Delta)
  \justifies
  \Delta \cent \dot{a} \# \dot{\pi}\act\dot{X}
  \using\rulefont{\#\dot X}
\end{prooftree}
\end{gather*}
\caption{Derivable freshness on nominal terms}
\label{fig.nominal.freshness}
\end{figure}

\begin{defn}
\label{defn.nominal.freshness}
A \deffont{freshness} is a pair $\dot a\#\dot r$.
A \deffont{freshness context} is a finite set of freshnesses of the form $\dot a\#\dot X$.
Define \deffont{derivable freshness} on nominal terms by the rules in Figure~\ref{fig.nominal.freshness}.
\end{defn}

\begin{figure}
\begin{gather*}
\begin{prooftree}
  \justifies
  \Delta \cent \dot{a} = \dot{a}
  \using\rulefont{{=}\dot a}
\end{prooftree}
\qquad
\begin{prooftree}
  \Delta \cent \dot{r}_i = \dot{s}_i \quad (1 \leq i \leq n)
  \justifies
  \Delta \cent \tf f(\dot r_1,\ldots,\dot r_n) = \tf f(\dot s_1,\ldots,\dot s_n)
  \using\rulefont{{=}\tf f}
\end{prooftree}
\qquad
\begin{prooftree}
  \Delta \cent \dot{r} = \dot{s}
  \justifies
  \Delta \cent [\dot{a}]\dot{r} = [\dot{a}]\dot{s}
  \using\rulefont{{=}[\dot a]}
\end{prooftree}
\\[1.25ex]
\begin{prooftree}
  \Delta \cent (\dot{b}\ \dot{a})\act\dot{r} = \dot{s} \quad \Delta \cent \dot{b} \# \dot{r}
  \justifies
  \Delta \cent [\dot{a}]\dot{r} = [\dot{b}]\dot{s}
  \using\rulefont{{=}[\dot b]}
\end{prooftree}
\qquad
\begin{prooftree}
  (\dot{a} \# \dot{X}\in\Delta \text{ for every } \dot\pi(\dot{a}) \neq \dot\pi'(\dot{a}))
  \justifies
  \Delta \cent \dot{\pi}\act\dot{X} = \dot{\pi}'\act\dot{X}
  \using\rulefont{{=}\dot X}
\end{prooftree}
\end{gather*}
\caption{Derivable equality on nominal terms}
\label{fig.nominal.equality}
\end{figure}

Definition~\ref{defn.nominal.equality} repeats \cite[Figure 2]{gabbay:nomu-jv}, up to differences in presentation:
\begin{defn}
\label{defn.nominal.equality}
An \deffont{equality} is a pair $\dot r=\dot s$.
Define \deffont{derivable equality} on nominal terms by the rules in Figure~\ref{fig.nominal.equality}.
\end{defn}

\begin{defn}
\label{defn.interpretation.nominal}
Define a mapping $\inter{\dot \pi}$ from nominal permutations to permissive nominal permutations by: 
\begin{frameqn}
\begin{array}{r@{\ }l@{\quad}l}
\inter{\dot \pi}(\iota(\dot a))=&\iota(\dot \pi(\dot a))&
\\
\inter{\dot\pi}(c)=&c&\text{all }c\in \mathbb A^> 
\end{array}
\end{frameqn}
Define an \deffont{interpretation} $\interdelta{\dot r}$ by:
\begin{frameqn}
\begin{array}{r@{\ }l@{}l}
\interdelta{\dot{a}} \equiv& \iota(\dot{a}) 
\\
\interdelta{\tf f(\dot r_1, \ldots, \dot r_n)} \equiv& \tf{f}(\interdelta{\dot r_1}, \ldots, \interdelta{\dot r_n})
\\
\interdelta{[\dot{a}]\dot{r}} \equiv& [\iota(\dot a)]\interdelta{\dot r}
\\
\interdelta{\dot\pi \act \dot X} \equiv& \inter{\dot\pi} \act X^S & \text{where }S = \mathbb A^<  \setminus \{ \iota(\dot{a}) \mid \dot{a} \# \dot{X} \in \Delta \}
\end{array}
\end{frameqn}
Here, we make a fixed but arbitrary choice of $X^S$ for each $\dot X$, injectively so that $\interdelta{\dot X}$ and $\interdelta{\dot Y}$ are always distinct.
\end{defn}

$\interdelta{\dot r}$ commutes with permutation and it preserves and reflects freshness:

\begin{lemm}
\label{lem.nominal.pi.interpretation}
$\inter{\dot{\pi}} \act \interdelta{\dot{r}} \equiv \interdelta{\dot{\pi}\act\dot{r}}$
\end{lemm}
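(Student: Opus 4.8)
The plan is to proceed by structural induction on the nominal term $\dot r$, following the grammar of Definition~\ref{defn.nominal.terms}. This is a commutation lemma relating the translation $\interdelta{\cdot}$ with the two permutation actions (Definitions~\ref{defn.nominal.permutation} and~\ref{defn.perm}), so in each case I would unfold both actions and the translation and then apply the inductive hypothesis.

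For the atom case, both sides reduce to $\iota(\dot\pi(\dot a))$ directly from the defining clause $\inter{\dot\pi}(\iota(\dot a)) = \iota(\dot\pi(\dot a))$ of Definition~\ref{defn.interpretation.nominal}. The term-former case $\tf f(\dot r_1,\ldots,\dot r_n)$ is immediate: both actions and the translation distribute over $\tf f$, so the claim follows by applying the inductive hypothesis componentwise. For the abstraction case $[\dot a]\dot r$, I would compute $\inter{\dot\pi}\act[\iota(\dot a)]\interdelta{\dot r} \equiv [\inter{\dot\pi}(\iota(\dot a))](\inter{\dot\pi}\act\interdelta{\dot r})$, rewrite the binder atom to $\iota(\dot\pi(\dot a))$ using the defining clause again, and apply the inductive hypothesis to the body; the right-hand side unfolds to the same term via $\dot\pi\act[\dot a]\dot r \equiv [\dot\pi(\dot a)](\dot\pi\act\dot r)$.

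The one case requiring more than routine unfolding is the unknown $\dot\pi'\act\dot X$. Here the left-hand side is $\inter{\dot\pi}\act(\inter{\dot\pi'}\act X^S) \equiv (\inter{\dot\pi}\fcomp\inter{\dot\pi'})\act X^S$ and the right-hand side is $\inter{\dot\pi\fcomp\dot\pi'}\act X^S$, so the proof reduces to the sublemma that $\inter{\cdot}$ is a homomorphism of permutations, namely $\inter{\dot\pi}\fcomp\inter{\dot\pi'} = \inter{\dot\pi\fcomp\dot\pi'}$. I would establish this by checking the two kinds of atom separately: on $\iota(\dot a)\in\mathbb A^<$ both sides give $\iota((\dot\pi\fcomp\dot\pi')(\dot a))$, and on $c\in\mathbb A^>$ both sides act as the identity; since $\mathbb A=\mathbb A^<\uplus\mathbb A^>$, the two permutations agree everywhere and are thus equal. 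I would also record that the permission set $S=\mathbb A^<\setminus\{\iota(\dot a)\mid \dot a\#\dot X\in\Delta\}$ is fixed by $\dot X$ and the ambient context $\Delta$ alone, so it is literally identical on both sides and no reasoning about $S$ is needed. This sublemma is the main (though mild) obstacle: it is the only point where the specific definition of $\inter{\dot\pi}$ is used, and the remaining cases are direct computation.
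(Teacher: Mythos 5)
Your proof is correct and takes the same route as the paper, which simply states ``By induction on $\dot r$'' and leaves all details implicit. The details you supply---in particular the homomorphism sublemma $\inter{\dot\pi}\fcomp\inter{\dot\pi'} = \inter{\dot\pi\fcomp\dot\pi'}$, verified separately on $\mathbb A^<$ (via $\iota$) and on $\mathbb A^>$, which settles the moderated-unknown case---are exactly the content the paper's one-line proof elides.
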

\begin{proof}
By induction on $\dot{r}$.
\end{proof}

\begin{lemm}
\label{lem.interpretation.fresh.preserve.iff}
$\iota(\dot{a}) \not\in \f{fa}(\interdelta{\dot{r}})$ if and only if $\Delta \cent \dot{a} \# \dot{r}$.
\end{lemm}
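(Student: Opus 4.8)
The plan is to prove the biconditional by induction on the structure of the nominal term $\dot r$, establishing both directions at once in each case. The argument rests on the observation that the freshness rules of Figure~\ref{fig.nominal.freshness} are syntax-directed: the shape of $\dot r$ (together, in two cases, with whether an atom equals $\dot a$) determines uniquely which rule can conclude $\Delta\cent\dot a\#\dot r$, so each rule may be read as an equivalence. Unfolding $\interdelta{\dot r}$ (Definition~\ref{defn.interpretation.nominal}) and $\f{fa}$ (Definition~\ref{defn.fa}) in parallel with these rules then matches the two sides of the stated iff.

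First I would dispatch the three structural cases. For an atom $\dot b$, $\f{fa}(\interdelta{\dot b})=\{\iota(\dot b)\}$, and since $\iota$ is injective, $\iota(\dot a)\notin\{\iota(\dot b)\}$ iff $\dot a\neq\dot b$, which is exactly the side-condition of $\rulefont{\#\dot b}$ (and when $\dot a=\dot b$ both sides fail). For $\tf f(\dot r_1,\ldots,\dot r_n)$, free atoms distribute over the arguments, and $\rulefont{\#\tf f}$ requires freshness in every argument, so the case closes by the inductive hypothesis applied to each $\dot r_i$. For an abstraction I split on the abstracted atom: if it is $\dot a$ then $\iota(\dot a)$ is removed by $\f{fa}$ of $[\iota(\dot a)]$-abstraction, so the left-hand side holds unconditionally, matching $\rulefont{\#[\dot a]}$; if it is some $\dot b\neq\dot a$ then, as $\iota(\dot a)\neq\iota(\dot b)$, removing $\iota(\dot b)$ does not affect membership of $\iota(\dot a)$, and the case reduces by the inductive hypothesis to $\rulefont{\#[\dot b]}$.

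The main work, and the step I expect to be the real obstacle, is the suspension case $\dot r=\dot\pi\act\dot X$. Here $\interdelta{\dot\pi\act\dot X}\equiv\inter{\dot\pi}\act X^S$ with $S=\mathbb A^<\setminus\{\iota(\dot b)\mid\dot b\#\dot X\in\Delta\}$, so $\f{fa}(\interdelta{\dot r})=\inter{\dot\pi}\act S$. Since $\inter{\dot\pi}$ is a bijection on atoms, $\iota(\dot a)\in\inter{\dot\pi}\act S$ iff $\inter{\dot\pi}^{\mone}(\iota(\dot a))\in S$; and from the definition of $\inter{\dot\pi}$ one reads off $\inter{\dot\pi}^{\mone}(\iota(\dot a))=\iota(\dot\pi^{\mone}(\dot a))$, which lies in $\mathbb A^<$. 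Hence, using injectivity of $\iota$, $\iota(\dot a)\in\f{fa}(\interdelta{\dot r})$ iff $\iota(\dot\pi^{\mone}(\dot a))\notin\{\iota(\dot b)\mid\dot b\#\dot X\in\Delta\}$, i.e.\ iff $\dot\pi^{\mone}(\dot a)\#\dot X\notin\Delta$. Negating, $\iota(\dot a)\notin\f{fa}(\interdelta{\dot r})$ iff $\dot\pi^{\mone}(\dot a)\#\dot X\in\Delta$, which is precisely the premise of $\rulefont{\#\dot X}$. The care needed is purely in keeping the permutation bookkeeping straight: the pointwise action on $S$, the inverse of $\inter{\dot\pi}$, and the bijectivity of $\iota$ must all compose to turn the membership test into the freshness-context lookup of $\rulefont{\#\dot X}$.
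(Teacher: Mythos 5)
Your proof is correct, and its substance---unfolding $\interdelta{\dot r}$ and $\f{fa}$ case by case, with the key permutation bookkeeping $\inter{\dot\pi}^{\mone}(\iota(\dot a))=\iota(\dot\pi^{\mone}(\dot a))$ and injectivity of $\iota$ in the suspension case---coincides with what the paper does. The organization, however, differs: the paper proves the two implications separately, the direction $\iota(\dot a)\not\in\f{fa}(\interdelta{\dot r})\Rightarrow\Delta\cent\dot a\#\dot r$ by structural induction on $\dot r$, and the converse by induction on the derivation of $\Delta\cent\dot a\#\dot r$; you instead run a single structural induction establishing the biconditional at every node, justified by your observation that the rules of Figure~\ref{fig.nominal.freshness} are syntax-directed and hence invertible. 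Your route is more economical---the case analysis is done once rather than twice, and reading each rule as an equivalence makes the correspondence with the clauses of $\f{fa}$ transparent---but it carries the obligation of actually verifying syntax-directedness (e.g.\ that $\Delta\cent\dot a\#\dot a$ has no derivation, so both sides fail together in the atom case, and that exactly one of \rulefont{\#[\dot a]}, \rulefont{\#[\dot b]} applies depending on whether the abstracted atom is $\dot a$), a check you do carry out. The paper's derivation induction gets this for free, since any derivation must end in some rule, and that style would survive even in a rule system that was not syntax-directed; for this particular system the two organizations are interchangeable, and both rest on the same facts about $\iota$ and about $\inter{\dot\pi}$ being a bijection that is the identity on $\mathbb A^>$.
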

\begin{proof}
See Appendix~\ref{sect.omitted.proofs}.
\end{proof}

Theorem~\ref{thrm.interpretation.injective} states that $\alpha$-equivalent nominal-terms-in-context map precisely to $\alpha$-equivalent permissive nominal terms: 

\begin{thrm}
\label{thrm.interpretation.injective}
\label{thm.interpretation.equality.preserve}
$\interdelta{\dot{r}} \aeq \interdelta{\dot{s}}$ if and only if $\Delta \cent \dot{r} = \dot{s}$.
\end{thrm}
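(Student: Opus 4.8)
The plan is to prove the two directions separately, each by induction, using Lemmas~\ref{lem.nominal.pi.interpretation} and~\ref{lem.interpretation.fresh.preserve.iff} as the bridge between the two worlds, together with one small observation that I would record first: $\interdelta{-}$ sends a nominal swapping to the corresponding permissive swapping, i.e. $\inter{(\dot b\ \dot a)} = (\iota(\dot b)\ \iota(\dot a))$. This is immediate from the definition of $\inter{\dot\pi}$, since $\iota$ maps into $\mathbb A^<$ and $\inter{\dot\pi}$ fixes $\mathbb A^>$, so $\inter{(\dot b\ \dot a)}(\iota(\dot c)) = \iota((\dot b\ \dot a)(\dot c))$ and both sides fix all other atoms. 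I would also note that $\interdelta{-}$ preserves outermost shape (atom, term-former, abstraction, moderated unknown), a fact used repeatedly below for inversion.

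For the \emph{soundness} direction ($\Delta \cent \dot r = \dot s$ implies $\interdelta{\dot r} \aeq \interdelta{\dot s}$) I would induct on the derivation of $\Delta \cent \dot r = \dot s$. The rules $\rulefont{{=}\dot a}$, $\rulefont{{=}\tf f}$, and $\rulefont{{=}[\dot a]}$ map directly onto $\rulefont{{\aeq}aa}$, $\rulefont{{\aeq}\tf f}$, and $\rulefont{{\aeq}[a]}$ after unfolding $\interdelta{-}$. For $\rulefont{{=}[\dot b]}$, the premises $\Delta \cent (\dot b\ \dot a)\act \dot r = \dot s$ and $\Delta \cent \dot b \# \dot r$ give, by the inductive hypothesis with Lemma~\ref{lem.nominal.pi.interpretation} and the swapping identity, $(\iota(\dot b)\ \iota(\dot a))\act \interdelta{\dot r} \aeq \interdelta{\dot s}$, and by Lemma~\ref{lem.interpretation.fresh.preserve.iff}, $\iota(\dot b) \not\in \f{fa}(\interdelta{\dot r})$; these are exactly the premises of $\rulefont{{\aeq}[b]}$. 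For $\rulefont{{=}\dot X}$ I would unfold to $\inter{\dot\pi}\act X^S \aeq \inter{\dot\pi'}\act X^S$ with $S = \mathbb A^< \setminus\{\iota(\dot a) \mid \dot a \# \dot X \in \Delta\}$ and verify the premise $\inter{\dot\pi}|_S = \inter{\dot\pi'}|_S$ of $\rulefont{{\aeq}X}$: each $c \in S$ is $\iota(\dot c)$ with $\dot c \# \dot X \not\in \Delta$, so by the contrapositive of the side-condition of $\rulefont{{=}\dot X}$ we have $\dot\pi(\dot c) = \dot\pi'(\dot c)$, whence $\inter{\dot\pi}(c) = \iota(\dot\pi(\dot c)) = \iota(\dot\pi'(\dot c)) = \inter{\dot\pi'}(c)$.

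For the \emph{completeness} direction ($\interdelta{\dot r} \aeq \interdelta{\dot s}$ implies $\Delta \cent \dot r = \dot s$) I would induct on the \emph{size} of $\dot r$ (number of nodes, a measure invariant under nominal permutation) and do case analysis on the outermost shape of $\dot r$; since $\interdelta{-}$ preserves shape and $\aeq$ relates only terms of matching shape, the shape of $\dot r$ forces that of $\dot s$, and inverting the unique applicable $\aeq$-rule recovers premises on the immediate subterms. The atom and term-former cases use injectivity of $\iota$ and the congruence rules $\rulefont{{=}\dot a}$, $\rulefont{{=}\tf f}$. For $\dot r = [\dot a]\dot r'$, $\dot s = [\dot b]\dot s'$: if $\dot a = \dot b$, inversion of $\rulefont{{\aeq}[a]}$ plus the inductive hypothesis gives $\rulefont{{=}[\dot a]}$; if $\dot a \neq \dot b$, inversion of $\rulefont{{\aeq}[b]}$ yields $(\iota(\dot b)\ \iota(\dot a))\act \interdelta{\dot r'} \aeq \interdelta{\dot s'}$ and $\iota(\dot b)\not\in\f{fa}(\interdelta{\dot r'})$; rewriting the former as $\interdelta{(\dot b\ \dot a)\act \dot r'} \aeq \interdelta{\dot s'}$ lets the inductive hypothesis (legitimate since $(\dot b\ \dot a)\act\dot r'$ has the size of $\dot r'$, which is smaller than $\dot r$) give $\Delta\cent (\dot b\ \dot a)\act\dot r'=\dot s'$, while Lemma~\ref{lem.interpretation.fresh.preserve.iff} gives $\Delta\cent \dot b\#\dot r'$, feeding $\rulefont{{=}[\dot b]}$. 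For $\dot r = \dot\pi\act\dot X$, inversion of $\rulefont{{\aeq}X}$ with injectivity of the chosen assignment $\dot X \mapsto X^S$ forces $\dot s = \dot\pi'\act\dot X$ and $\inter{\dot\pi}|_S = \inter{\dot\pi'}|_S$; running the side-condition argument of the soundness case in reverse (if $\dot\pi(\dot a)\neq\dot\pi'(\dot a)$ then $\inter{\dot\pi}$ and $\inter{\dot\pi'}$ differ at $\iota(\dot a)$, so $\iota(\dot a)\not\in S$, i.e. $\dot a\#\dot X\in\Delta$) discharges the premise of $\rulefont{{=}\dot X}$.

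I expect the completeness direction to be the main obstacle, and specifically making the inductions and inversions line up. One must justify that an $\aeq$ between interpreted terms can only have been derived by the rule matching their common shape, so that inversion is sound; the delicate point is the abstraction case, where $\rulefont{{\aeq}[a]}$ and $\rulefont{{\aeq}[b]}$ compete, resolved by the permutative convention forcing distinct bound atoms in $\rulefont{{\aeq}[b]}$. One must also use the size-of-$\dot r$ measure rather than naive structural recursion, since the inductive hypothesis in the $\rulefont{{\aeq}[b]}$ subcase is applied to the permuted term $(\dot b\ \dot a)\act\dot r'$. Everything else is routine bookkeeping once the swapping identity $\inter{(\dot b\ \dot a)} = (\iota(\dot b)\ \iota(\dot a))$ and Lemmas~\ref{lem.nominal.pi.interpretation} and~\ref{lem.interpretation.fresh.preserve.iff} are in hand.
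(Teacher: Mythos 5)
Your proposal is correct and takes essentially the same approach as the paper: the same two-direction decomposition, with Lemmas~\ref{lem.nominal.pi.interpretation} and~\ref{lem.interpretation.fresh.preserve.iff} bridging the abstraction/freshness cases and the same side-condition argument (via injectivity of $\iota$) for the unknown cases, soundness being by induction on the derivation of $\Delta \cent \dot r = \dot s$ in both. The only difference is bookkeeping in the completeness direction: the paper inducts directly on the derivation of $\interdelta{\dot r} \aeq \interdelta{\dot s}$ (so no size measure or explicit inversion/shape-preservation argument is needed, since Lemma~\ref{lem.nominal.pi.interpretation} is a syntactic identity and the permuted premise is again of the form $\interdelta{(\dot b\ \dot a)\act\dot r} \aeq \interdelta{\dot s}$), whereas you induct on the size of $\dot r$ and invert the last $\aeq$-rule --- a harmless reformulation.
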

\begin{proof}
See Appendix~\ref{sect.omitted.proofs}.
We use Lemmas~\ref{lem.nominal.pi.interpretation} and~\ref{lem.interpretation.fresh.preserve.iff}.
\end{proof}

\subsection{Substitutions and solutions between nominal and permissive nominal unification problems}

\begin{defn}
\label{defn.nominal.substitution.action}
A \deffont{substitution} $\dot\theta$ is a function from nominal unknowns to nominal terms such that $\{\dot X\mid \dot\theta(\dot X)\not\equiv\dot\id\act\dot X\}$ is finite.
$\dot\theta, \dot\theta', \dot\theta'', \ldots$ will range over nominal substitutions.

Write $\dot\id$ for the \deffont{identity}, mapping $\dot X$ to $\dot\id\act \dot X$. 

Define a \deffont{substitution action} on nominal terms by:
\begin{frameqn}
\dot a\dot\theta \equiv \dot a
\quad
\tf f(\dot r_1, \ldots, \dot r_n)\dot\theta \equiv \tf f(\dot r_1\dot\theta, \ldots, \dot r_n\dot\theta)
\quad
([\dot a]\dot r)\dot\theta \equiv [\dot a](\dot r\dot\theta)
\quad
(\dot\pi {\act} \dot X)\dot \theta \equiv \dot\pi {\act} \dot\theta(\dot X)
\end{frameqn}
\end{defn}

\begin{defn}
\label{defn.substitution.interp}
A \deffont{unification problem} $\dot{Pr}$ is a finite multiset of freshnesses and equalities.
A \deffont{solution} to $\dot{Pr}$ is a pair $(\Delta,\dot{\theta})$ such that $\Delta\cent \dot a\#\dot r\dot\theta$ for every $\dot a\#\dot r\in\dot{Pr}$, and $\Delta\cent \dot r\theta =\dot s\theta$ for every $\dot r=\dot s\in\dot{Pr}$.  
This follows \cite[Definition 3.1]{gabbay:nomu-jv}.
\end{defn}

\begin{defn}
\label{defn.interpretation.solutions}
We extend the interpretation of Definition~\ref{defn.interpretation.nominal} to solutions of nominal unification problems by: 
\begin{frameqn}
\inter{(\Delta,\dot\theta)}(X^S)\equiv\interdelta{\dot\theta(X)}\text{ if }\id{\act} X^S{\equiv}\interdelta{X}
\qquad
\inter{(\Delta,\dot\theta)}(Y^T)\equiv\id{\act} Y^T\text{ otherwise}
\end{frameqn}
\end{defn}

\begin{lemm}
\label{lemm.substitution.commute.interp}
$\interdelta{\dot r}\inter{(\Delta,\dot\theta)} \equiv \interdelta{\dot r\dot\theta}$.
\end{lemm}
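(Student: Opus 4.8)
The plan is to prove the identity by induction on the structure of the nominal term $\dot r$, since the interpretation $\interdelta{\cdot}$ (Definition~\ref{defn.interpretation.nominal}), the nominal substitution action (Definition~\ref{defn.nominal.substitution.action}), and the permissive substitution action (Definition~\ref{defn.subst.action}) are all defined by recursion on term structure. In each case I would unfold the interpretation of $\dot r$, push the permissive substitution $\inter{(\Delta,\dot\theta)}$ inward, and compare the result against the interpretation of $\dot r\dot\theta$ obtained by unfolding the nominal substitution action.

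The atom case is immediate: $\interdelta{\dot a} \equiv \iota(\dot a)$ is an atom, the permissive substitution fixes atoms, and $\dot a\dot\theta \equiv \dot a$. The term-former case $\tf f(\dot r_1,\ldots,\dot r_n)$ follows at once from the inductive hypotheses on each $\dot r_i$, since interpretation and both substitution actions distribute over $\tf f$. The abstraction case $[\dot a]\dot r$ is similar: $\interdelta{[\dot a]\dot r} \equiv [\iota(\dot a)]\interdelta{\dot r}$; the permissive substitution commutes with abstraction; the inductive hypothesis rewrites $\interdelta{\dot r}\inter{(\Delta,\dot\theta)} \equiv \interdelta{\dot r\dot\theta}$; and finally $[\iota(\dot a)]\interdelta{\dot r\dot\theta} \equiv \interdelta{([\dot a]\dot r)\dot\theta}$.

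The only case needing real work is $\dot\pi \act \dot X$. Here I would unfold $\interdelta{\dot\pi \act \dot X} \equiv \inter{\dot\pi} \act X^S$, where $X^S$ is the chosen unknown satisfying $\id \act X^S \equiv \interdelta{\dot X}$ with $S = \mathbb{A}^< \setminus \{\iota(\dot a)\mid \dot a \# \dot X \in \Delta\}$. Applying the permissive substitution action gives $(\inter{\dot\pi} \act X^S)\inter{(\Delta,\dot\theta)} \equiv \inter{\dot\pi} \act \inter{(\Delta,\dot\theta)}(X^S)$, and since $\id \act X^S \equiv \interdelta{\dot X}$ the first clause of Definition~\ref{defn.interpretation.solutions} fires, so $\inter{(\Delta,\dot\theta)}(X^S) \equiv \interdelta{\dot\theta(\dot X)}$. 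It then remains to move $\inter{\dot\pi}$ across the interpretation: Lemma~\ref{lem.nominal.pi.interpretation} gives $\inter{\dot\pi} \act \interdelta{\dot\theta(\dot X)} \equiv \interdelta{\dot\pi \act \dot\theta(\dot X)}$, and by Definition~\ref{defn.nominal.substitution.action} the latter is $\interdelta{(\dot\pi \act \dot X)\dot\theta}$, which closes the case.

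I expect the main obstacle to be the bookkeeping around the ``fixed but arbitrary injective choice'' of $X^S$ for each $\dot X$ made in Definition~\ref{defn.interpretation.nominal}: I must verify that the $X^S$ occurring inside $\interdelta{\dot\pi \act \dot X}$ is precisely the one for which $\id \act X^S \equiv \interdelta{\dot X}$, so that the correct branch of Definition~\ref{defn.interpretation.solutions} is selected. A secondary point I would confirm in passing is well-definedness, namely that $\inter{(\Delta,\dot\theta)}$ is a genuine permissive substitution in the sense of Definition~\ref{defn.subst} (so that the left-hand substitution action is even legal); this amounts to $\f{fa}(\interdelta{\dot\theta(\dot X)}) \subseteq S$, which follows from Lemma~\ref{lem.interpretation.fresh.preserve.iff} and the freshnesses recorded in $\Delta$. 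Once these are in place, the induction is entirely routine.
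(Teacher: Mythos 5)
Your proof is correct and takes essentially the same route as the paper's: a structural induction on $\dot r$ in which the atom, term-former and abstraction cases are routine, and the case $\dot\pi\act\dot X$ is closed by selecting the clause $\inter{(\Delta,\dot\theta)}(X^S)\equiv\interdelta{\dot\theta(\dot X)}$ of Definition~\ref{defn.interpretation.solutions} and commuting $\inter{\dot\pi}$ across the interpretation via Lemma~\ref{lem.nominal.pi.interpretation}. The only cosmetic difference is direction — the paper unfolds $\interdelta{\dot r\dot\theta}$ and arrives at the left-hand side, while you unfold $\interdelta{\dot r}\inter{(\Delta,\dot\theta)}$ — and your extra well-definedness check is a reasonable piece of care that the paper itself silently omits.
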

\begin{proof}
See Appendix~\ref{sect.omitted.proofs}.
\end{proof}

\begin{defn}
\label{defn.Pr.translation}
Define $\interdelta{\dot{Pr}}$ by mapping $\dot r=\dot s$ to $\interdelta{\dot r}\ueq\interdelta{\dot s}$ and mapping $\dot a\#\dot r$ to $(b\ \iota(\dot a))\act \interdelta{\dot r}\ueq \interdelta{\dot r}$, for some choice of fresh $b$ (so $b\not\in\f{fa}(\interdelta{\dot r})$; in fact, it suffices to choose some $b\not\in\mathbb A^< $).  
\end{defn}

\begin{lemm}
\label{lemm.basic}
Suppose $b\not\in\f{fa}(r)$.
Then $a\not\in\f{fa}(r)$ if and only if $(b\ a)\act r\aeq r$.
\end{lemm}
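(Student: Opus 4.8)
The plan is to prove the two implications separately, exploiting the fact that the hypothesis $b\not\in\f{fa}(r)$ does double duty: it forces the swapping $(b\ a)$ to behave trivially on $\f{fa}(r)$ in one direction, and it supplies the contradicting atom in the other. The three ingredients I expect to need are Lemma~\ref{lemm.ds.pi} (permutations agreeing on $\f{fa}(r)$ act $\alpha$-equivalently), Lemma~\ref{lemm.aeq.ftma.pres} ($\alpha$-equivalent terms have equal free atoms), and Lemma~\ref{lemm.pi.ftma} (free atoms commute with the permutation action).

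For the left-to-right direction I would assume $a\not\in\f{fa}(r)$. Combined with $b\not\in\f{fa}(r)$, neither atom swapped by $(b\ a)$ lies in $\f{fa}(r)$, so every $c\in\f{fa}(r)$ satisfies $(b\ a)(c)=c$; that is, $(b\ a)|_{\f{fa}(r)}=\id|_{\f{fa}(r)}$. Applying Lemma~\ref{lemm.ds.pi} with $\pi\equiv(b\ a)$ and $\pi'\equiv\id$ yields $(b\ a)\act r\aeq\id\act r$, and since $\id\act r\equiv r$ by Lemma~\ref{lemm.permutation.comp}(1), we conclude $(b\ a)\act r\aeq r$.

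For the right-to-left direction I would assume $(b\ a)\act r\aeq r$ and argue by contradiction. Lemma~\ref{lemm.aeq.ftma.pres} gives $\f{fa}((b\ a)\act r)=\f{fa}(r)$, and Lemma~\ref{lemm.pi.ftma} rewrites the left-hand side as $(b\ a)\act\f{fa}(r)$, so $(b\ a)\act\f{fa}(r)=\f{fa}(r)$. Now suppose $a\in\f{fa}(r)$. Then $b=(b\ a)(a)\in(b\ a)\act\f{fa}(r)=\f{fa}(r)$, contradicting the hypothesis $b\not\in\f{fa}(r)$. Hence $a\not\in\f{fa}(r)$.

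I do not anticipate a genuine obstacle here; the statement is essentially a repackaging of the earlier freshness/permutation lemmas. The only point requiring a little care is ensuring the hypothesis $b\not\in\f{fa}(r)$ is actually invoked in \emph{both} directions — in the forward direction to make the swapping act as the identity on $\f{fa}(r)$, and in the backward direction to derive the contradiction — rather than silently carried along.
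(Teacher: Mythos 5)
Your proof is correct, and it takes a genuinely different route from the paper's. The paper proves Lemma~\ref{lemm.basic} in the appendix by structural induction on $r$, in \emph{both} directions, with explicit case analysis over atoms, term-formers, the three abstraction cases $[a]r$, $[b]r$, $[c]r$, and moderated unknowns $\pi\act X^S$ (using Lemmas~\ref{lemm.permutation.comp} and~\ref{lemm.pi.ftma} and Proposition~\ref{prop.aeq.transitive} inside the cases). You avoid induction entirely: the forward direction is an immediate application of Lemma~\ref{lemm.ds.pi} with $\pi=(b\ a)$ and $\pi'=\id$ (both swapped atoms lie outside $\f{fa}(r)$, so the permutations agree there), followed by $\id\act r\equiv r$; the backward direction combines Lemma~\ref{lemm.aeq.ftma.pres} with Lemma~\ref{lemm.pi.ftma} to get $(b\ a)\act\f{fa}(r)=\f{fa}(r)$ and then derives a contradiction from $a\in\f{fa}(r)$ forcing $b\in\f{fa}(r)$. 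There is no circularity, since Lemmas~\ref{lemm.pi.ftma}, \ref{lemm.aeq.ftma.pres} and~\ref{lemm.ds.pi} are all established (by routine inductions) well before Section~\ref{sec.relation.to.nominal.terms}. What each approach buys: yours exposes the lemma as a near-trivial corollary of the general permutation/freshness machinery and is much shorter; the paper's induction is self-contained at the level of the syntax and re-verifies the interaction of swapping with each term constructor, at the cost of considerably more case-work. Your argument is the cleaner one here.
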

\begin{proof}
See Appendix~\ref{sect.omitted.proofs}.
We use Lemmas~\ref{lemm.permutation.comp} and~\ref{lemm.pi.ftma}, and Proposition~\ref{prop.aeq.transitive}.
\end{proof}

No solutions go missing, moving from the nominal to the permissive nominal world:
\begin{thrm}
\label{thrm.no.missing}
$(\Delta,\dot\theta)$ solves $\dot{Pr}$ if and only if $\inter{(\Delta,\dot\theta)}$ solves $\interdelta{\dot{Pr}}$.
\end{thrm}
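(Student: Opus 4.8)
The plan is to prove the biconditional constraint-by-constraint. A pair $(\Delta,\dot\theta)$ solves $\dot{Pr}$ exactly when it satisfies every freshness and every equality of $\dot{Pr}$ (Definition~\ref{defn.substitution.interp}), and $\inter{(\Delta,\dot\theta)}$ solves $\interdelta{\dot{Pr}}$ exactly when it satisfies every translated constraint (Definition~\ref{defn.problems}). Since Definition~\ref{defn.Pr.translation} translates equalities and freshnesses one at a time, it suffices to show for each individual constraint that $(\Delta,\dot\theta)$ satisfies it iff $\inter{(\Delta,\dot\theta)}$ satisfies its translation. First I would dispatch the equality case.

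For an equality $\dot r=\dot s\in\dot{Pr}$ the nominal condition is $\Delta\cent \dot r\dot\theta=\dot s\dot\theta$, which by Theorem~\ref{thrm.interpretation.injective} is equivalent to $\interdelta{\dot r\dot\theta}\aeq\interdelta{\dot s\dot\theta}$. By Lemma~\ref{lemm.substitution.commute.interp} these two terms are respectively $\interdelta{\dot r}\inter{(\Delta,\dot\theta)}$ and $\interdelta{\dot s}\inter{(\Delta,\dot\theta)}$, so the condition reads $\interdelta{\dot r}\inter{(\Delta,\dot\theta)}\aeq\interdelta{\dot s}\inter{(\Delta,\dot\theta)}$, which is precisely what it means for $\inter{(\Delta,\dot\theta)}$ to solve the translated equality $\interdelta{\dot r}\ueq\interdelta{\dot s}$.

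For a freshness $\dot a\#\dot r\in\dot{Pr}$, translated to $(b\ \iota(\dot a))\act\interdelta{\dot r}\ueq\interdelta{\dot r}$ for the chosen fresh $b$, the permissive solution condition is $((b\ \iota(\dot a))\act\interdelta{\dot r})\inter{(\Delta,\dot\theta)}\aeq\interdelta{\dot r}\inter{(\Delta,\dot\theta)}$. Pushing the substitution through the permutation with Lemma~\ref{lemm.sub.perm} and rewriting both sides by Lemma~\ref{lemm.substitution.commute.interp}, this becomes $(b\ \iota(\dot a))\act\interdelta{\dot r\dot\theta}\aeq\interdelta{\dot r\dot\theta}$. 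Lemma~\ref{lemm.basic} then converts this, given that $b$ is fresh for $\interdelta{\dot r\dot\theta}$, into $\iota(\dot a)\not\in\f{fa}(\interdelta{\dot r\dot\theta})$, which by Lemma~\ref{lem.interpretation.fresh.preserve.iff} applied to the nominal term $\dot r\dot\theta$ is exactly $\Delta\cent\dot a\#\dot r\dot\theta$, the nominal freshness condition.

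The one point requiring care, and the main obstacle, is the side-condition of Lemma~\ref{lemm.basic}, which needs $b\not\in\f{fa}(\interdelta{\dot r\dot\theta})$, whereas Definition~\ref{defn.Pr.translation} only guarantees $b\not\in\f{fa}(\interdelta{\dot r})$. I would resolve this by observing that the interpretation lands in $\mathbb A^<$: every free atom of $\interdelta{\dot r}$ lies in $\mathbb A^<$ (atoms go to $\iota(\dot a)\in\mathbb A^<$, and each $\interdelta{\dot\pi\act\dot X}$ has free atoms $\inter{\dot\pi}\act S$ with $S\subseteq\mathbb A^<$ and $\inter{\dot\pi}$ preserving $\mathbb A^<$), and by Lemma~\ref{lemm.substitution} together with Lemma~\ref{lemm.substitution.commute.interp} we get $\f{fa}(\interdelta{\dot r\dot\theta})=\f{fa}(\interdelta{\dot r}\inter{(\Delta,\dot\theta)})\subseteq\f{fa}(\interdelta{\dot r})\subseteq\mathbb A^<$. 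Hence any $b\not\in\mathbb A^<$, which Definition~\ref{defn.Pr.translation} explicitly permits, is fresh for $\interdelta{\dot r\dot\theta}$ as well, and Lemma~\ref{lemm.basic} applies. Assembling the equality and freshness equivalences over all constraints of $\dot{Pr}$ yields the theorem.
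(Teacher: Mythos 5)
Your proof is correct and follows essentially the same route as the paper's: both argue constraint-by-constraint, using Theorem~\ref{thrm.interpretation.injective} with Lemma~\ref{lemm.substitution.commute.interp} for equalities, and Lemmas~\ref{lemm.sub.perm}, \ref{lemm.substitution.commute.interp}, \ref{lemm.basic} and~\ref{lem.interpretation.fresh.preserve.iff} for freshnesses. Your explicit verification that the chosen $b$ remains fresh for $\interdelta{\dot r\dot\theta}$ (via $\f{fa}(r\theta)\subseteq\f{fa}(r)$, Lemma~\ref{lemm.substitution}) spells out a side-condition the paper's proof leaves implicit in its appeal to Definition~\ref{defn.Pr.translation}.
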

\begin{proof}
\emph{If $(\Delta,\dot\theta)$ solves $\dot{Pr}$ then $\inter{(\Delta,\dot\theta)}$ solves $\interdelta{\dot{Pr}}$.}\quad
Suppose $\Delta\cent \dot r\dot\theta=\dot s\dot\theta$.
Using Lemma~\ref{lemm.substitution.commute.interp} and Theorem~\ref{thm.interpretation.equality.preserve}, $\interdelta{\dot r}\inter{(\Delta,\dot\theta)}\aeq\interdelta{\dot s}\inter{(\Delta,\dot\theta)}$.
 
Suppose $\Delta\cent a\#\dot r\dot\theta$.
Using Lemma~\ref{lem.interpretation.fresh.preserve.iff}, $\iota(\dot a) \not\in \f{fa}(\interdelta{\dot r\dot\theta})$.
By Lemma~\ref{lemm.substitution.commute.interp}, $\iota(\dot a) \not\in \f{fa}(\interdelta{\dot r}\inter{(\Delta, \dot\theta)})$.
By Lemma~\ref{lemm.basic}, $(b\ \iota(\dot a))\act \interdelta{\dot r}\inter{(\Delta,\dot\theta)} \aeq \interdelta{\dot r}\inter{(\Delta,\dot\theta)}$, where $b$ is fresh (see Definition~\ref{defn.Pr.translation}).
By Lemma~\ref{lemm.sub.perm}, $((b\ \iota(\dot a))\act \interdelta{\dot r})\inter{(\Delta,\dot\theta)} \aeq \interdelta{\dot r}\inter{(\Delta,\dot\theta)}$.
The result follows.

\emph{If $\inter{(\Delta,\dot\theta)}$ solves $\interdelta{\dot{Pr}}$ then $(\Delta,\dot\theta)$ solves $\dot{Pr}$.}\quad
Suppose that $\interdelta{\dot r}\inter{(\Delta,\dot\theta)} \aeq \interdelta{\dot s}\inter{(\Delta, \dot\theta)}$.
By Theorem~\ref{thrm.interpretation.injective}, $\Delta \cent r\theta = s\theta$.
 
Suppose $((b\ \iota(\dot a))\act \interdelta{\dot r})\inter{(\Delta,\dot\theta)} \aeq \interdelta{\dot r}\inter{(\Delta,\dot\theta)}$.
By Lemma~\ref{lemm.sub.perm}, $(b\ \iota(\dot a))\act \interdelta{\dot r}\inter{(\Delta,\dot\theta)} \aeq \interdelta{\dot r}\inter{(\Delta,\dot\theta)}$.
By Lemma~\ref{lemm.basic}, $\iota(\dot a) \not\in \f{fa}(\interdelta{\dot r}\inter{(\Delta, \dot\theta)})$.
Using Lemma~\ref{lemm.substitution.commute.interp}, $\iota(\dot a) \not\in \f{fa}(\interdelta{\dot r\dot\theta})$.
By Lemma~\ref{lem.interpretation.fresh.preserve.iff}, $\Delta\cent a\#\dot r\dot\theta$, and the result follows.
\end{proof}

\section{Support inclusion problems}
\label{sect.Inc}

The freshness symbol $a\#r$ used in \cite{gabbay:newaas-jv} and \cite{gabbay:nomu-jv} is ambiguous.
Do we mean 
\begin{squishlist}
\item
`$a$ is not free in the syntax of $r$' or 
\item
`$a$ is not in the support of the denotation of $r$'?
\end{squishlist}
The first option can be called \emph{intensional} or \emph{syntactic} freshness; the second option can be called \emph{extensional} or \emph{semantic} freshness.
 
In nominal terms this question is slightly obscured because a `free atoms of' function on terms is hard to define.
In permissive nominal terms we can easily define a `free atoms of' function; see Definition~\ref{defn.fa}, $\f{fa}(r)$.

We have seen in Section~\ref{sec.relation.to.nominal.terms} how nominal terms' notion of freshness `$a\#r$' translates to a syntactic freshness `$a\not\in \f{fa}(r)$'.
 
Nominal terms unification solves equality and freshness problems within a single rewrite system. 
When we designed the permissive nominal terms unification algorithm, we solve these separately --- one algorithm is described in this section, the other (for equalities) is described in Section~\ref{sect.pernu}.
This is simply a design choice, but it is informed by the translation of $\#$ to a \emph{syntactic} judgement, described above.
For, in future it may be useful to consider unification modulo equational theories (and note that semantic freshness can be easily captured using equations; see \cite[Theorem~5.5]{gabbay:forcie} or \cite[Theorem~4.52]{gabbay:nomuae}).
In the presence of equational theories, because we have factored out fragment of the computation that checks `$a\not\in\f{fa}(r)$', that fragment will remain modular and unaffected by the imposition of equality axioms.

Recall from Definition~\ref{defn.subst} that $\f{fa}(\theta(X^S)) \subseteq \f{fa}(X^S) = S$, and from Lemma~\ref{lemm.substitution} that instantiation must reduce the set of free atoms.
We will exhibit an algorithm which, intuitively, solves the problem ``please make $\f{fa}(r\theta)\subseteq T$ true'' (Definition~\ref{defn.rho} and Lemma~\ref{lemm.consistent.inc.form}).
In fact the algorithm calculates solutions that are most general, in a sense made formal in Theorem~\ref{thrm.sigma.rho.sigma}.  

Next, in Section~\ref{sect.pernu}, we construct an algorithm to solve equality problems.

\subsection{Simplification reduction and normal forms}

\begin{defn}
\label{defn.solves.Inc}
A \deffont{support inclusion} is a pair $r\sqsubseteq T$ of a term and a permissions set. 
A \deffont{support inclusion problem} is a finite multiset of support inclusions;
$\f{Inc}$ will range over support inclusion problems.
Call $\theta$ a \deffont{solution} to $\f{Inc}$ when $\f{fa}(r\theta)\subseteq T$ for every $r\sqsubseteq T\in\f{Inc}$.
Write $\f{Sol}(\f{Inc})$ for the solutions of $\f{Inc}$.
Call $\f{Inc}$ \deffont{solvable} when $\f{Sol}(\f{Inc})\neq\varnothing$, and \deffont{non-trivial} when $\f{nf}(\f{Inc})\neq\varnothing$.
\end{defn}

\begin{defn}
\label{defn.supp.inc}
Define a \deffont{simplification} rewrite relation by the rules in Figure~\ref{fig.simp}.
\begin{figure*}
\begin{displaymath}
\begin{array}{l@{}r@{\ }c@{\ }l@{\quad}l}
\rulefont{{\sqsubseteq}a} & a \sqsubseteq T, \f{Inc} & \simpto{} & \f{Inc} & (a\in T)  \\
\rulefont{{\sqsubseteq}\tf f} & \tf f(r_1, \ldots, r_n) \sqsubseteq T, \f{Inc} & \simpto{} & r_1 \sqsubseteq T, \ldots, r_n \sqsubseteq T, \f{Inc} \\
\rulefont{{\sqsubseteq}[]} & [a]r \sqsubseteq T, \f{Inc} & \simpto{} & r\sqsubseteq T\cup\{a\},\ \f{Inc} \\
\rulefont{{\sqsubseteq}X} & \pi\act X^S \sqsubseteq T, \f{Inc} & \simpto{} & X^S\sqsubseteq \pi^\mone\act T,\ \f{Inc} & (S\not\subseteq\pi^\mone\act T,\ \pi\neq\id) \\
\rulefont{{\sqsubseteq}X'} & \pi\act X^S \sqsubseteq T, \f{Inc} & \simpto{} & \f{Inc} & (S\subseteq\pi^\mone\act T) 
\end{array}
\end{displaymath}
\caption{Simplification of support inclusion problems}
\label{fig.simp}
\end{figure*}
\end{defn}

Definition~\ref{defn.supp.inc} can easily be expressed as a type I conditional term rewrite system, according to the classification scheme of \cite[Definition~7.1.1]{nipkow:terraa}.
This becomes evident if we bear in mind that we can represent atoms as numbers, $\mathbb A^< $ as the even numbers, $S$ as the finite set $S\shs\mathbb A^< $ (Remark~\ref{rmrk.co-infinite}), and $T$ as the finite set $T\shs\mathbb A^< $.

\begin{thrm}
\label{thrm.simpto.sol.inc}
If $\f{Inc}\simpto{}\f{Inc}'$ then $\f{Sol}(\f{Inc})=\f{Sol}(\f{Inc}')$.
\end{thrm}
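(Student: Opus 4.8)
The plan is to argue by case analysis on the simplification rule used to derive $\f{Inc}\simpto{}\f{Inc}'$. In every rule the trailing problem $\f{Inc}$ is carried through unchanged on both sides, and a substitution solves a support-inclusion problem exactly when it solves each of its members; so it suffices, in each case, to show that the \emph{head} constraint being rewritten is solution-equivalent to whatever replaces it. I would record at the outset the two facts that do all the work: first, the defining clauses of $\f{fa}$ together with the substitution action (Definitions~\ref{defn.fa} and~\ref{defn.subst.action}) let me compute $\f{fa}(r\theta)$ for each syntactic shape of $r$; second, Lemma~\ref{lemm.pi.ftma} gives $\f{fa}(\pi\act t)=\pi\act\f{fa}(t)$, and since $\pi$ is a bijection this yields the pivotal equivalence $\pi\act A\subseteq T \iff A\subseteq\pi^\mone\act T$.

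The structural rules are then immediate. For $\rulefont{{\sqsubseteq}a}$ we have $\f{fa}(a\theta)=\{a\}$, so the constraint $a\sqsubseteq T$ holds for every $\theta$ precisely because its side-condition asserts $a\in T$; hence deleting it changes no solution. For $\rulefont{{\sqsubseteq}\tf f}$, $\f{fa}(\tf f(r_1,\ldots,r_n)\theta)=\bigcup_i\f{fa}(r_i\theta)$, which is contained in $T$ iff each $\f{fa}(r_i\theta)\subseteq T$, exactly matching the split into $r_1\sqsubseteq T,\ldots,r_n\sqsubseteq T$. For $\rulefont{{\sqsubseteq}[]}$, $\f{fa}(([a]r)\theta)=\f{fa}(r\theta)\setminus\{a\}$, and $\f{fa}(r\theta)\setminus\{a\}\subseteq T$ iff $\f{fa}(r\theta)\subseteq T\cup\{a\}$, which is the replacement constraint. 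For $\rulefont{{\sqsubseteq}X}$, using $X^S\equiv\id\act X^S$ and Lemma~\ref{lemm.pi.ftma} I get $\f{fa}((\pi\act X^S)\theta)=\pi\act\f{fa}(\theta(X^S))$ and $\f{fa}(X^S\theta)=\f{fa}(\theta(X^S))$; the equivalence $\pi\act A\subseteq T\iff A\subseteq\pi^\mone\act T$ noted above then shows that $\pi\act X^S\sqsubseteq T$ and $X^S\sqsubseteq\pi^\mone\act T$ have identical solution sets (the side-conditions $\pi\neq\id$ and $S\not\subseteq\pi^\mone\act T$ only control \emph{when} the rule fires, not the equivalence).

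The one place that is not purely a computation is $\rulefont{{\sqsubseteq}X'}$, where the constraint $\pi\act X^S\sqsubseteq T$ is discarded outright under the side-condition $S\subseteq\pi^\mone\act T$. Here I would lean on the global invariant built into Definition~\ref{defn.subst}, namely $\f{fa}(\theta(X^S))\subseteq S$ for \emph{every} substitution $\theta$. Combining this with the side-condition gives $\f{fa}(\theta(X^S))\subseteq S\subseteq\pi^\mone\act T$, equivalently $\pi\act\f{fa}(\theta(X^S))\subseteq T$, i.e. $\f{fa}((\pi\act X^S)\theta)\subseteq T$; so the discarded constraint is satisfied by every substitution, and its removal preserves the solution set. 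This is the main (though still small) obstacle, as it is the only step depending on the sorting discipline of substitutions rather than on a direct $\f{fa}$-calculation. Collecting the five cases establishes $\f{Sol}(\f{Inc})=\f{Sol}(\f{Inc}')$.
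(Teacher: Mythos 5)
Your proposal is correct and follows essentially the same route as the paper's proof: a case analysis on the five rules, with per-constructor computations of $\f{fa}(r\theta)$ (the paper packages these as Claims 1--5), Lemma~\ref{lemm.pi.ftma} driving the \rulefont{{\sqsubseteq}X} case, and the substitution invariant $\f{fa}(\theta(X^S))\subseteq S$ from Definition~\ref{defn.subst} doing the work for \rulefont{{\sqsubseteq}X'}. If anything, your direct computation $\f{fa}((\pi\act X^S)\theta)=\pi\act\f{fa}(\theta(X^S))$ for arbitrary $\theta$ is cleaner than the paper's Claim~4, which only examines substitutions of the form $[X^S\sm t]$.
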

\begin{proof}
First, we make the following claims:
\begin{squishitem}
\item[\textbf{Claim 1}:]
\label{claim.sol.inc.1}
If $a\in T$ then $\f{fa}(a\theta) \subseteq T$ always.\quad
Since $\f{fa}(a\theta) = \f{fa}(a) = \{ a \}$.
\item[\textbf{Claim 2}:]
\label{claim.sol.inc.2}
$\f{fa}(\tf f(r_1,\ldots,r_n)\theta)\subseteq T$ if and only if $\f{fa}(r_i\theta)\subseteq T$ for $1\leq i\leq n$.\quad 
Since $\f{fa}(\tf{f}(r_1, \ldots, r_n)) = \f{fa}(r_1) \cup \ldots \cup \f{fa}(r_n)$, and $\tf{f}(r_1, \ldots, r_n)\theta \equiv \tf{f}(r_1\theta, \ldots, r_n\theta)$.
\item[\textbf{Claim 3}:]
\label{claim.sol.inc.3}
$\f{fa}(([a]s)\theta)\subseteq T$ if and only if $\f{fa}(s\theta)\subseteq T\cup\{a\}$.\quad
Suppose $\f{fa}(([a]s)\theta) \subseteq T$, therefore $\f{fa}([a]s\theta) \subseteq T$.
Then $\f{fa}(s\theta) \setminus \{ a \} \subseteq T$, therefore $\f{fa}(s\theta) \subseteq T \cup \{ a \}$ and the result follows.
The reverse direction is similar.
\item[\textbf{Claim 4}:]
\label{claim.sol.inc.4}
$\f{fa}((\pi\act X^S)\theta)\subseteq T$ if and only if $\f{fa}(X^S\theta)\subseteq\pi^\mone\act T$.\quad
We consider only one case.
Suppose $\theta = [X^S\sm t]$ and $\f{fa}(t) \subseteq S$, therefore $\f{fa}((\pi\act X^S)[X^S\sm t]) = \f{fa}(\pi \act t)$ hence $\f{fa}(\pi \act t) \subseteq T$ by assumption.
By Lemma~\ref{lemm.pi.ftma}, $\pi \act \f{fa}(t) \subseteq T$, and by Lemma~\ref{lemm.permutation.comp} and Lemma~\ref{lemm.pi.ftma}, $(\pi^\mone \fcomp \pi) \act \f{fa}(t) \subseteq \pi^\mone \act T$.
As $\pi^\mone \fcomp \pi = \id$, we have $\f{fa}(X^S[X^S\sm t]) \subseteq \pi^\mone \act T$, and the result follows.

Alternatively, suppose $\f{fa}(t) \not\subseteq S$.
Then $\f{fa}((\pi\act X^S)[X^S\sm t]) = \f{fa}(\pi \act X^S)$ and $\pi \act \f{fa}(X^S) \subseteq T$ by Lemma~\ref{lemm.pi.ftma}.
By Lemmas~\ref{lemm.permutation.comp} and~\ref{lemm.pi.ftma}, $\f{fa}(X^S[X^S\sm t]) \subseteq \pi^\mone \act T$, and the result follows.
The reverse implication is no harder.
\item[\textbf{Claim 5}:]
\label{claim.sol.inc.5}
By Definition~\ref{defn.subst.action} we have $\f{fa}((\pi \act X^S)\theta) = \f{fa}(\pi \act (X^S\theta))$.
If $S\subseteq\pi^\mone\act T$ then $\f{fa}(\pi\act (X^S\theta))\subseteq T$ always.\quad
Note, $S \subseteq \pi^\mone \act T$ if and only if $\pi \act S \subseteq T$ and $\f{fa}(\pi \act X^S) = \pi \act S$.
Using Definition~\ref{defn.subst.action} and Lemma~\ref{lemm.pi.ftma}, $\f{fa}(\pi \act (X^S\theta)) = \pi \act \f{fa}(\theta(X^S)) \subseteq \pi \act S$.
Therefore, $\f{fa}((\pi \act X^S)\theta) \subseteq T$, and the result follows.
\end{squishitem}
We now proceed by case analysis on $\f{Inc} \simpto{} \f{Inc'}$ (Definition~\ref{defn.supp.inc}):
\begin{squishitem}
\item
The case \rulefont{{\sqsubseteq}a}.\quad
Suppose $a \in T$.
If $\theta \in \f{Sol}(a \sqsubseteq T, \f{Inc'})$ then $\theta \in \f{Sol}(\f{Inc'})$ and the result follows immediately.
Conversely, suppose $\theta \in \f{Sol}(\f{Inc'})$.
Using Claim 1, $\f{fa}(a\theta) \subseteq T$, and the result follows.
\item
The case \rulefont{{\sqsubseteq}\tf f}.\quad
From Claim 2.
\item
The case \rulefont{{\sqsubseteq}[]}.\quad
If $\theta \in \f{Sol}(r \sqsubseteq T \cup \{ a \}, \f{Inc'})$ then $\f{fa}(r\theta) \subseteq T \cup \{ a \}$.
By Claim 3, $\f{fa}([a](r\theta)) \subseteq T$.
As $\f{fa}([a](r\theta)) = \f{fa}(([a]r)\theta)$ and $\theta \in \f{Sol}(Inc')$, the result follows.
The reverse implication is similar.
\item
The case \rulefont{{\sqsubseteq}X}.\quad
Suppose $S \not\subseteq \pi^\mone \act T$, $\pi \not= \id$ and $\theta \in \f{Sol}(\pi \act X^S \sqsubseteq T, \f{Inc})$, so $\f{fa}((\pi \act X^S)\theta) \subseteq T$.
By Claim 4, $\f{fa}(X^S\theta) \subseteq \pi^\mone \act T$, and as $\theta \in \f{Sol}(\f{Inc'})$, the result follows.
The reverse implication is similar.
\item
The case \rulefont{{\sqsubseteq}X'}.\quad
Suppose $S \subseteq \pi^\mone \act T$.
If $\theta \in \f{Sol}(\pi \act X^S, \f{Inc'})$ then $\theta \in \f{Sol}(\f{Inc'})$ and the result follows.
Conversely, suppose $\theta \in \f{Sol}(\f{Inc'})$.
By Claim 5, $\f{fa}((\pi \act X^S)\theta) \subseteq T$.
The result follows.
\end{squishitem}
\end{proof}

\begin{prop}
\label{prop.supp.reduct.strong.normalisation}
Support inclusion problem simplification is strongly normalising.
\end{prop}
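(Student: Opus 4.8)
The plan is to exhibit a single natural-number weight on support inclusion problems that strictly decreases under every simplification step; since $\nat$ carries no infinite strictly-descending chain, this immediately yields strong normalisation.

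First I would define a weight $m$ on terms by recursion on syntax: $m(a)=1$, $m(\tf f(r_1,\ldots,r_n))=1+\sum_{i=1}^n m(r_i)$, $m([a]r)=1+m(r)$, and finally $m(\pi\act X^S)=2$ when $\pi\neq\id$ while $m(\Id\act X^S)=1$. The crucial feature is the last clause: a suspended unknown is charged an extra unit precisely when its permutation is nontrivial. I would then set $\mu(\f{Inc})=\sum_{(r\sqsubseteq T)\in\f{Inc}}m(r)$, a natural number obtained by summing the weights of the left-hand sides of all the support inclusions (the permission-set component $T$ plays no role in the weight).

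Next I would check, by a routine case analysis on the rule applied in $\f{Inc}\simpto{}\f{Inc}'$, that $\mu(\f{Inc}')<\mu(\f{Inc})$ in every case. For $\rulefont{{\sqsubseteq}a}$ and $\rulefont{{\sqsubseteq}X'}$ a support inclusion is discarded whose left-hand side has weight at least $1$, so $\mu$ drops. For $\rulefont{{\sqsubseteq}\tf f}$ the weight $1+\sum_i m(r_i)$ of the redex is replaced by $\sum_i m(r_i)$ (this also covers the degenerate case $n=0$), and for $\rulefont{{\sqsubseteq}[]}$ the weight $1+m(r)$ is replaced by $m(r)$; in both cases $\mu$ decreases by exactly $1$, even though the number of inclusions may grow.

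The one step that does not shrink a term in the ordinary syntactic sense is $\rulefont{{\sqsubseteq}X}$, and this is where the design of $m$ earns its keep and is the only subtle point: here $\pi\act X^S\sqsubseteq T$ is rewritten to $X^S\sqsubseteq\pi^\mone\act T$, so the underlying unknown $X^S$ survives, but the side condition $\pi\neq\id$ guarantees the redex has weight $2$ whereas the reduct $\Id\act X^S$ has weight $1$, a strict decrease. Having verified that $\mu$ strictly decreases at each step, the proposition follows, since an infinite reduction sequence would induce an infinite strictly-descending sequence in $\nat$, which is impossible.
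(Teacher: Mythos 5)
Your proof is correct, and it takes a genuinely different route from the paper's. The paper (Definition~\ref{defn.supp.inc.size} and the proof of Proposition~\ref{prop.supp.reduct.strong.normalisation} in Appendix~\ref{sect.omitted.proofs}) uses a lexicographically ordered tuple $(T,A,P,S)$ counting term-formers, abstractions, non-identity permutations, and inclusions, and checks that each rule strictly decreases it: \rulefont{{\sqsubseteq}\tf f} decreases $T$ while allowing $S$ to grow, \rulefont{{\sqsubseteq}[]} decreases $A$, \rulefont{{\sqsubseteq}X} decreases $P$, and \rulefont{{\sqsubseteq}a} and \rulefont{{\sqsubseteq}X'} decrease $S$. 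You instead fold all of this into a single weight valued in $\nat$, the key device being to charge $2$ for a suspended unknown with non-identity permutation and $1$ otherwise; this makes every rule decrease the total weight by at least $1$, including \rulefont{{\sqsubseteq}\tf f} (where the number of inclusions grows) and \rulefont{{\sqsubseteq}X} (where the unknown survives but its suspended permutation collapses to $\id$ --- your reliance on the side condition $\pi\neq\id$ there is exactly right, and plays the same role that the component $P$ plays in the paper's tuple). What your approach buys: it avoids lexicographic ordering altogether, and it yields an explicit complexity bound the paper's argument does not state, namely that any reduction sequence out of $\f{Inc}$ has length at most $\mu(\f{Inc})$. What the paper's approach buys: the tuple makes visible which syntactic resource each rule consumes, and such layered measures extend more readily if a rule is later added that increases a low-priority quantity while consuming a higher-priority one. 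Both arguments are complete case analyses over the five rules of Figure~\ref{fig.simp}, and both are sound.
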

\begin{proof}
See Appendix~\ref{sect.omitted.proofs}.
\end{proof}

We conclude with a few useful observations:

\begin{defn}
\label{defn.nf}
For every $\f{Inc}$ make a fixed but arbitrary choice of normal form $\f{nf}(\f{Inc})$, guaranteed to exist by Proposition~\ref{prop.supp.reduct.strong.normalisation}.\footnote{In fact, support inclusion simplification is confluent so $\f{nf}(\f{Inc})$ is also unique.
A proof of confluence is in a technical report~\cite{gabbay:perntu-tr}. For our purposes in this paper, it suffices to know that a normal form exists.}
\end{defn}

\begin{defn}
\label{defn.nf.inc}
Call $\f{Inc}$ \deffont{consistent} when $a\sqsubseteq T\not\in \f{nf}(\f{Inc})$ for all atoms $a$ and permission sets $T$. 
\end{defn}

\begin{lemm}
\label{lemm.consistent.inc.form}
If $\f{Inc}$ is consistent then all $\f{inc}\in \f{nf}(\f{Inc})$ have the form $X^S\sqsubseteq T$ where $S\not\subseteq T$.
\end{lemm}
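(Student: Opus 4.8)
The plan is to exploit the fact that, by Definition~\ref{defn.nf}, $\f{nf}(\f{Inc})$ is a normal form for the simplification relation of Figure~\ref{fig.simp}, so no rule applies to it. Each simplification rule rewrites a single selected inclusion, and the side-condition of each rule depends only on that inclusion; hence a support inclusion problem is irreducible if and only if none of the five rules fires on any one of its elements. It therefore suffices to determine, for a single inclusion $r\sqsubseteq T$, exactly when it is irreducible, and then to discard the atom case using consistency (Definition~\ref{defn.nf.inc}).

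I would proceed by case analysis on the shape of $r$, following the grammar of Definition~\ref{defn.terms}. If $r$ is an atom $a$, then rule $\rulefont{{\sqsubseteq}a}$ fires whenever $a\in T$, so irreducibility forces $a\notin T$; but then $a\sqsubseteq T\in\f{nf}(\f{Inc})$, contradicting consistency, so this case does not arise. If $r\equiv\tf f(r_1,\ldots,r_n)$, then rule $\rulefont{{\sqsubseteq}\tf f}$ always applies (including the nullary case, where it simply deletes the inclusion), so this case is impossible. If $r\equiv[a]r$, then rule $\rulefont{{\sqsubseteq}[]}$ always applies, so this case too is impossible.

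The remaining case, $r\equiv\pi\act X^S$, is where the side-conditions interact. If $S\subseteq\pi^\mone\act T$ then $\rulefont{{\sqsubseteq}X'}$ fires; if instead $S\not\subseteq\pi^\mone\act T$ and $\pi\neq\id$ then $\rulefont{{\sqsubseteq}X}$ fires. Irreducibility therefore requires both $\pi=\id$ and $S\not\subseteq\pi^\mone\act T$. Since $\pi=\id$ gives $\pi^\mone\act T=T$, this says precisely that the inclusion is $X^S\sqsubseteq T$ with $S\not\subseteq T$, which is the desired form. Combining the cases, every element of $\f{nf}(\f{Inc})$ is of this form.

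There is no real obstacle here: the argument is a finite case analysis driven entirely by when the rules of Figure~\ref{fig.simp} can fire. The one point deserving care is the moderated-unknown case, where I must check that the conditions of $\rulefont{{\sqsubseteq}X}$ and $\rulefont{{\sqsubseteq}X'}$ jointly leave exactly the irreducible possibility $\pi=\id$ and $S\not\subseteq T$; in particular, that the substituted permission set $\pi^\mone\act T$ collapses to $T$ once $\pi=\id$, so that the residual side-condition matches the statement. Consistency is then used only to rule out the atom inclusions $a\sqsubseteq T$ with $a\notin T$, which are the only other irreducible shape.
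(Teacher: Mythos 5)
Your proof is correct, and it is exactly the routine argument the paper intends: the paper itself omits the proof of Lemma~\ref{lemm.consistent.inc.form} (deferring such routine proofs to the technical report), and the only sensible argument is precisely your case analysis of which rules of Figure~\ref{fig.simp} can fire on an irreducible inclusion, with consistency (Definition~\ref{defn.nf.inc}) eliminating the atom case and the interaction of \rulefont{{\sqsubseteq}X} and \rulefont{{\sqsubseteq}X'} forcing $\pi=\id$ and $S\not\subseteq T$ in the unknown case. No gaps; your attention to the nullary term-former case and to the fact that the side-conditions depend only on the selected inclusion are exactly the right points of care.
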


\subsection{Building solutions for support inclusion problems}
\label{subsect.building.solutions}

Our main results are Theorems~\ref{thrm.when.Inc.consistent} and~\ref{thrm.sigma.rho.sigma}.

\begin{defn}
\label{defn.inc.fv}
Define $\f{fV}(\f{Inc})$ by $\f{fV}(\f{Inc})=\bigcup\{\f{fV}(r)\mid \Exists{T}r\sqsubseteq T\in\f{Inc}\}$.

In words, $\f{fV}(\f{Inc})$ is ``the unknowns appearing in terms appearing in $\f{Inc}$''.
\end{defn}

Recall from Definition~\ref{defn.unknowns} that $\mathcal V$ ranges over finite sets of unknowns. 
\begin{defn}
\label{defn.rho}
Let $\mathcal V$ range over finite sets of unknowns.

Suppose $\f{Inc}$ is consistent.
For every $X^S\in\mathcal V$ make a fixed but arbitrary choice of ${X'}^{S'}$ such that 
${X'}^{S'}\not\in\mathcal V$ 
and 
\begin{equation}
\label{eq.S'}
S'=S\cap\bigcap \{T \mid X^S \sqsubseteq T\in\f{nf}(\f{Inc})\}.
\end{equation}
We make our choice injectively; for distinct $X^S\in\f{fV}(\f{Inc})$ and $Y^T\in\f{fV}(\f{Inc})$, we choose ${X'}^{S'}$ and ${Y'}^{T'}$ distinct.
It will be convenient to write $\Vinc$ for the set of our choices $\{{X'}^{S'}\mid X^S\in\mathcal V\}$.

Define a substitution $\rhoinc$ by:
\begin{frameqn}
\begin{array}{r@{\ }l@{\quad}l}
\rhoinc(X^S)\equiv&\id\act {X'}^{S'}&\text{if }X^S\in\mathcal V
\\
\rhoinc(Y^T)\equiv&\id\act Y^T&\text{otherwise}
\end{array}
\end{frameqn}
\end{defn}

\begin{rmrk}
For example, take $a, b, c \in \mathbb A^< $,
\ $S = \mathbb A^<  \setminus \{ c \}$, 
$T = \mathbb A^<  \setminus \{ a \}$, 
$U = \mathbb A^<  \setminus \{ b \}$,
and $\mathcal{V} = \{ X^S \}$.

It is easy to see that the support reduction problem $\{ X^S \sqsubseteq T,\ X^S \sqsubseteq U \}$ is in normal form.
Let ${X'}^{S'}$ where $S' = \mathbb A^<  \setminus \{ a, b, c \}$ be the fixed but arbitrary choice of fresh variable made in Definition~\ref{defn.rho}.
Then:
$$
\begin{array}{r@{\ }l@{\quad}l}
\rhoinc(X^S) =& \id \act {X'}^{S'} 
&\text{and}
\\
\rhoinc(Y^T) =& \id \act Y^T &\text{for all other $Y^T$}
\end{array}
$$
\end{rmrk}

\begin{rmrk}
$\rhoinc$ is a substitution that ``makes $\f{Inc}$ true on $\mathcal V$''.
Nominal unification does not have this notion because nominal terms unknowns are not permanently labelled with freshness information --- instead, nominal terms unification emits `fresh' freshness conditions.

It is easy to verify that 
$$\f{fa}(\rhoinc(X^S))\subseteq S\quad\text{for all}\quad X^S\in\mathcal V.
$$
In fact, $\rhoinc$ is the most general solution with property; intuitively, all other solutions must factor through $\rhoinc$ on $\mathcal V$.
This is made formal in Theorem~\ref{thrm.sigma.rho.sigma}.
\end{rmrk}

\begin{lemm}
\label{lemm.sigma.solves.Inc}
If $\f{Inc}$ is consistent then $\rhoinc\in\f{Sol}(\f{Inc})$.  (`$\rhoinc$ solves $\f{Inc}$.')
\end{lemm}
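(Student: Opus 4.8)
The plan is to push the claim through to the normal form $\f{nf}(\f{Inc})$, where the shape of the inclusions is controlled by Lemma~\ref{lemm.consistent.inc.form}, and then to read the required inclusion straight off the definition of $S'$ in~\eqref{eq.S'}.

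First I would lift Theorem~\ref{thrm.simpto.sol.inc} from a single simplification step to a whole reduction sequence: given a reduction $\f{Inc}\simptostar{}\f{nf}(\f{Inc})$ (which exists by Proposition~\ref{prop.supp.reduct.strong.normalisation}), a straightforward induction on its length using Theorem~\ref{thrm.simpto.sol.inc} gives $\f{Sol}(\f{Inc})=\f{Sol}(\f{nf}(\f{Inc}))$. It therefore suffices to show $\rhoinc\in\f{Sol}(\f{nf}(\f{Inc}))$, that is, $\f{fa}(r\rhoinc)\subseteq T$ for every $r\sqsubseteq T\in\f{nf}(\f{Inc})$.

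Now since $\f{Inc}$ is consistent, Lemma~\ref{lemm.consistent.inc.form} tells me that every such $r\sqsubseteq T$ has the form $X^S\sqsubseteq T$ (that is, $\id\act X^S\sqsubseteq T$), so there is a single case to handle. Inspecting the rules of Figure~\ref{fig.simp} shows that simplification never introduces a new unknown, so $X^S\in\f{fV}(\f{nf}(\f{Inc}))\subseteq\f{fV}(\f{Inc})\subseteq\mathcal V$; consequently $\rhoinc(X^S)\equiv\id\act {X'}^{S'}$. Then by Definition~\ref{defn.subst.action} and Definition~\ref{defn.fa}, $\f{fa}((\id\act X^S)\rhoinc)=\f{fa}(\id\act {X'}^{S'})=S'$. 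Finally, because $X^S\sqsubseteq T$ is one of the inclusions whose right-hand sides are intersected in~\eqref{eq.S'}, I get $S'=S\cap\bigcap\{T'\mid X^S\sqsubseteq T'\in\f{nf}(\f{Inc})\}\subseteq T$, and hence $\f{fa}((X^S)\rhoinc)\subseteq T$, as required.

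The computation at the normal form is essentially immediate; the substance of the proof is the reduction to the normal form, so the main obstacle is really the first step — justifying $\f{Sol}(\f{Inc})=\f{Sol}(\f{nf}(\f{Inc}))$ by iterating Theorem~\ref{thrm.simpto.sol.inc} — together with the bookkeeping point that $\rhoinc$ is defined relative to $\f{nf}(\f{Inc})$ and only renames unknowns in $\mathcal V$. This is why the argument needs $\f{fV}(\f{Inc})\subseteq\mathcal V$: if some $X^S\in\f{fV}(\f{Inc})$ lay outside $\mathcal V$ then $\rhoinc$ would fix it, giving $\f{fa}((X^S)\rhoinc)=S$, which for a genuine inclusion $X^S\sqsubseteq T$ (where $S\not\subseteq T$ by Lemma~\ref{lemm.consistent.inc.form}) would violate the bound. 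Under that hypothesis, no difficulties remain.
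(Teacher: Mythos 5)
Your proof is correct and takes essentially the same route as the paper's: reduce to the normal form via Theorem~\ref{thrm.simpto.sol.inc} (so $\f{Sol}(\f{Inc})=\f{Sol}(\f{nf}(\f{Inc}))$), then read $S'\subseteq T$ directly off the construction of $\rhoinc$ in Definition~\ref{defn.rho}. Your closing observation that one tacitly needs $\f{fV}(\f{Inc})\subseteq\mathcal V$ is a fair point; the paper's proof makes the same unstated assumption when it asserts $\rhoinc(X^S)\equiv\id\act{X'}^{S'}$ for every $X^S\sqsubseteq T\in\f{nf}(\f{Inc})$.
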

\begin{proof}
Suppose $\f{Inc}$ is a $\simpto{}$-normal form.
If $X^S \sqsubseteq T \in \f{Inc}$ then $\rhoinc(X) = \id \act {X'}^{S'}$ for an $S'$ which satisfies $S'\subseteq T$.
The result follows.

More generally, if $\f{Inc}$ is not a $\simpto{}$-normal form, by Theorem~\ref{thrm.simpto.sol.inc} $\f{Sol}(\f{Inc})=\f{Sol}(\f{nf}(\f{Inc}))$, and we use the previous paragraph.
\end{proof}

\begin{thrm}
\label{thrm.when.Inc.consistent}
$\f{Inc}$ is consistent (Definition~\ref{defn.nf.inc}) if and only if $\f{Inc}$ is solvable (Definition~\ref{defn.solves.Inc}).
\end{thrm}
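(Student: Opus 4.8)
The plan is to prove the two implications separately, relying almost entirely on the results already established in this subsection; no genuinely new construction is needed.

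For the \emph{consistent implies solvable} direction I would simply invoke Lemma~\ref{lemm.sigma.solves.Inc}. Taking $\mathcal V = \f{fV}(\f{Inc})$ in Definition~\ref{defn.rho} produces the substitution $\rhoinc$, and Lemma~\ref{lemm.sigma.solves.Inc} states that $\rhoinc\in\f{Sol}(\f{Inc})$ whenever $\f{Inc}$ is consistent. Hence $\f{Sol}(\f{Inc})\neq\varnothing$, which is exactly solvability. This half requires no further work.

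For the converse I would argue by contraposition: assume $\f{Inc}$ is \emph{not} consistent and deduce that it is not solvable. By Definition~\ref{defn.nf.inc}, non-consistency yields an inclusion $a\sqsubseteq T\in\f{nf}(\f{Inc})$. The key observation is that, since $\f{nf}(\f{Inc})$ is a $\simpto{}$-normal form, rule $\rulefont{{\sqsubseteq}a}$ cannot apply to this inclusion; as the side-condition of that rule is $a\in T$, it follows that $a\notin T$. For any substitution $\theta$ we have $a\theta\equiv a$, so $\f{fa}(a\theta)=\{a\}$, and $\f{fa}(a\theta)\subseteq T$ would force $a\in T$, contradicting $a\notin T$. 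Therefore no $\theta$ solves $a\sqsubseteq T$, and so $\f{Sol}(\f{nf}(\f{Inc}))=\varnothing$.

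Finally I would transport this conclusion back to $\f{Inc}$ itself. A reduction sequence from $\f{Inc}$ to $\f{nf}(\f{Inc})$ exists by Proposition~\ref{prop.supp.reduct.strong.normalisation}, and iterating Theorem~\ref{thrm.simpto.sol.inc} along it gives $\f{Sol}(\f{Inc})=\f{Sol}(\f{nf}(\f{Inc}))=\varnothing$. Hence $\f{Inc}$ is not solvable, completing the contrapositive. I do not anticipate any real obstacle; the only point that needs care is the reading of the normal-form condition for rule $\rulefont{{\sqsubseteq}a}$, namely that the presence of $a\sqsubseteq T$ in a normal form forces $a\notin T$, which is precisely what makes such an inclusion unsolvable and drives the whole backward direction.
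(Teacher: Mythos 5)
Your proposal is correct and follows essentially the same route as the paper's proof: the forward direction is Lemma~\ref{lemm.sigma.solves.Inc}, and the backward direction extracts an inclusion $a\sqsubseteq T$ with $a\not\in T$ from $\f{nf}(\f{Inc})$ and observes that $a\theta\equiv a$ rules out any solution, transporting the conclusion along Theorem~\ref{thrm.simpto.sol.inc}. Your only addition is to spell out why $a\not\in T$ (normality of $\f{nf}(\f{Inc})$ blocks rule \rulefont{{\sqsubseteq}a}), a point the paper asserts without comment.
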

\begin{proof}
By Theorem~\ref{thrm.simpto.sol.inc} $\f{Sol}(\f{Inc}) = \f{Sol}(\f{nf}(\f{Inc}))$, so it suffices to show the result for the case when $\f{Inc}$ is a $\simpto{}$-normal form.

Suppose $\f{Inc}$ is inconsistent, so $\f{nf}(Inc)$ contains a support inclusions of the form $a \sqsubseteq T$ where $a\not\in T$. 
Then $a\theta \equiv a$ always, so there is no substitution $\theta$ such that $a\theta \subseteq T$.
Conversely, if $\f{Inc}$ is consistent, the result follows by Lemma~\ref{lemm.sigma.solves.Inc}.
\end{proof}

\begin{defn}
\label{defn.sigma-rho}
Suppose that $\f{Inc}$ is consistent,\ $\f{fV}(\f{Inc})\subseteq\mathcal V$,\ and $\theta\in \f{Sol}(\f{Inc})$.
Define a substitution $\theta{-}\rhoinc$ by:
\begin{frametxt}
\begin{squishitem}
\item
$(\theta{-}\rhoinc)({X'}^{S'})\equiv \theta(X^S)$ if $X^S\in\mathcal V$ and $\rhoinc(X^S)\equiv\id\act {X'}^{S'}$.
\item
$(\theta{-}\rhoinc)(X^S)\equiv \theta(X^S)$ if $X^S\not\in\mathcal V$.
\end{squishitem}
\end{frametxt}
\end{defn}

We check that Definition~\ref{defn.sigma-rho} is well-defined:

\begin{lemm}
\label{lemm.theta.minus.rhoinc.well.defined}
If $\theta{-}\rhoinc$ exists then it is well-defined.
\end{lemm}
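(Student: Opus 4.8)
The plan is to read ``well-defined'' as the conjunction of two obligations: that the two defining clauses of Definition~\ref{defn.sigma-rho} together specify a single value for each unknown, so that $\theta{-}\rhoinc$ is a genuine function from unknowns to terms; and that each value is a legal image for its sort, i.e.\ that $\f{fa}\bigl((\theta{-}\rhoinc)(Z)\bigr)$ is contained in the permission set decorating $Z$, so that $\theta{-}\rhoinc$ is a substitution in the sense of Definition~\ref{defn.subst}. I would dispatch the first obligation quickly and spend the real effort on the second.

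For single-valuedness, recall from Definition~\ref{defn.rho} that the assignment $X^S\mapsto {X'}^{S'}$ is made injectively, so that given ${X'}^{S'}\in\Vinc$ the pre-image $X^S\in\mathcal V$ with $\rhoinc(X^S)\equiv\id\act {X'}^{S'}$ is unique, and the first clause names the single term $\theta(X^S)$. The two clauses do overlap --- since each ${X'}^{S'}$ lies outside $\mathcal V$, the second clause as written also fires on $\Vinc$ --- and I would resolve this by letting the first clause take precedence on $\Vinc$; this is the only reading compatible with the factorisation recorded in Theorem~\ref{thrm.sigma.rho.sigma}, since $\rhoinc$ sends $X^S\in\mathcal V$ to ${X'}^{S'}$ and we need $(\theta{-}\rhoinc)({X'}^{S'})\equiv\theta(X^S)$. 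On the unknowns of $\mathcal V$ themselves the value is immaterial, as they are erased by $\rhoinc$ before $\theta{-}\rhoinc$ is ever applied, so we may default to the identity there to make the function total.

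The substantive step is the sort constraint, and the only non-trivial case is an input ${X'}^{S'}\in\Vinc$, whose sort is $S'$; here I must show $\f{fa}(\theta(X^S))\subseteq S'$. Unwinding \eqref{eq.S'}, $S'=S\cap\bigcap\{T\mid X^S\sqsubseteq T\in\f{nf}(\f{Inc})\}$, so it suffices to check $\f{fa}(\theta(X^S))\subseteq S$ and $\f{fa}(\theta(X^S))\subseteq T$ for every $X^S\sqsubseteq T\in\f{nf}(\f{Inc})$. The first containment is immediate from Definition~\ref{defn.subst}, since $\theta$ is a substitution. For the second, the hypothesis $\theta\in\f{Sol}(\f{Inc})$ gives $\theta\in\f{Sol}(\f{nf}(\f{Inc}))$ by Theorem~\ref{thrm.simpto.sol.inc}, hence $\f{fa}(X^S\theta)\subseteq T$ for each such inclusion; since $X^S\theta\equiv\id\act\theta(X^S)$ and so $\f{fa}(X^S\theta)=\f{fa}(\theta(X^S))$, this is exactly what is needed. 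For the second clause ($X^S\notin\mathcal V$, sort $S$) the constraint $\f{fa}(\theta(X^S))\subseteq S$ is again immediate from Definition~\ref{defn.subst}.

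I expect the main obstacle to be interpretive and bookkeeping rather than deep: stating the interplay of the two clauses together with the disjointness $\Vinc\cap\mathcal V=\varnothing$ cleanly, so that the overlap on $\Vinc$ is unambiguously resolved by precedence, and then threading the solution hypothesis through Theorem~\ref{thrm.simpto.sol.inc} to land $\theta(X^S)$ inside each relevant $T$, hence inside $S'$. The $S'$ computation is where all three standing hypotheses ($\f{Inc}$ consistent, $\f{fV}(\f{Inc})\subseteq\mathcal V$, $\theta\in\f{Sol}(\f{Inc})$) are actually used, so that is the crux of the argument.
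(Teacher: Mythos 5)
Your proof is correct, but it carves up the work differently from the paper. The paper reads ``well-defined'' narrowly: its proof of this lemma is just the single-valuedness check, carried out as a pairwise case analysis on distinct unknowns using exactly the ingredients in your second paragraph (injectivity of the choice $X^S\mapsto{X'}^{S'}$ in Definition~\ref{defn.rho}, the fact that every ${X'}^{S'}$ lies outside $\mathcal V$, and functionality of $\theta$). The sort constraint $\f{fa}\bigl((\theta{-}\rhoinc)({X'}^{S'})\bigr)\subseteq S'$, where you spend your real effort, is not part of this lemma in the paper at all; it is the separate Lemma~\ref{lemm.theta.minus.rhoinc.subst} (``$\theta{-}\rhoinc$ is a substitution''), and your third paragraph is essentially that lemma's proof: the paper splits on whether $X^S\in\f{fV}(\f{nf}(\f{Inc}))$ and, in the nontrivial sub-case, uses Theorem~\ref{thrm.simpto.sol.inc} to get $\f{fa}(\theta(X^S))\subseteq T$ for every $X^S\sqsubseteq T\in\f{nf}(\f{Inc})$, hence containment in $S'$ via \eqref{eq.S'}; your version handles both sub-cases uniformly, since an empty family of inclusions imposes no constraint beyond $\f{fa}(\theta(X^S))\subseteq S$. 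So your proposal establishes this lemma and Lemma~\ref{lemm.theta.minus.rhoinc.subst} in one pass. What the paper's decomposition buys is that the present lemma stays a pure formality, making no use of the hypothesis $\theta\in\f{Sol}(\f{Inc})$; what your reading buys is one genuine point of extra care: you notice and resolve (by giving the first clause precedence) the overlap between the two clauses of Definition~\ref{defn.sigma-rho} --- since ${X'}^{S'}\notin\mathcal V$, the second clause as written also fires on $\Vinc$ --- a potential collision that the paper's case analysis, which only compares values at distinct unknowns, never confronts.
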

\begin{proof}
Suppose $\theta{-}\rhoinc$ exists.
Then:
\begin{squishitem}
\item
Suppose $X^S \not= Y^T$, $X^S \not\in \mathcal V$ and $Y^T \not\in \mathcal V$.\quad
By Definition~\ref{defn.sigma-rho}, $(\theta{-}\rhoinc)(X^S) \equiv \theta(X^S)$ and $(\theta{-}\rhoinc)(Y^T) \equiv \theta(Y^T)$.
The result follows, as substitutions are well-defined.
\item
Suppose $X'^{S'} \not= Y'^{T'}$, $\rhoinc(X^S) \equiv \id \act X'^{S'}$, $\rhoinc(Y^T) \equiv \id \act Y'^{T'}$ and $X^S, Y^T \not\in \mathcal V$.\quad
Then $(\theta{-}\rhoinc)(X'^{S'}) \equiv \theta(X^S)$ and $(\theta{-}\rhoinc)(Y'^{T'}) \equiv \theta(Y^T)$.
Since $X^S \not= Y^T$, the result follows as substitutions are well-defined.
\item
Suppose $X'^{S'} \not= Y^T$, $\rhoinc(X^S) \equiv \id \act X'^{S'}$, $X^S \not\in \mathcal V$ and $Y^T \in \mathcal V$.\quad
Then $(\theta{-}\rhoinc)(Y^T) \equiv \theta(Y^T)$ and $(\theta{-}\rhoinc)(X'^{S'}) \equiv \theta(X^S)$.
By Definition~\ref{defn.rho}, $\rhoinc(X^S) \not\equiv \id \act Y^T$ as $Y^T \in \mathcal V$.
The result follows as substitutions are well-defined.
\end{squishitem}
The case $Y'^{T'} \not= X^S$, $\rhoinc(Y^T) \equiv \id \act Y'^{T'}$, $Y^T \not\in \mathcal V$ and $X^S \in \mathcal V$ is similar to the case for $X'^{S'} \not= Y^T$, $\rhoinc(X^S) \equiv \id \act X'^{S'}$, $X^S \not\in \mathcal V$ and $Y^T \in \mathcal V$.
\end{proof}

\begin{lemm}
\label{lemm.rhoinc.exists}
If $\theta\in \f{Sol}(\f{Inc})$ then $\rhoinc$ exists.
\end{lemm}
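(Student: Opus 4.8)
The plan is to reduce the claim to two facts: that the standing hypothesis of Definition~\ref{defn.rho} (consistency of $\f{Inc}$) holds, and that the construction it prescribes can actually be carried out. For the first, I would argue that since $\theta\in\f{Sol}(\f{Inc})$ the set $\f{Sol}(\f{Inc})$ is non-empty, so $\f{Inc}$ is solvable; Theorem~\ref{thrm.when.Inc.consistent} then yields that $\f{Inc}$ is consistent, which is precisely what Definition~\ref{defn.rho} assumes. This already discharges the part of ``$\rhoinc$ exists'' that concerns the mere applicability of the definition.

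The substance of the proof lies in checking that the set $S'$ defined by~\eqref{eq.S'} is a genuine permission set, i.e.\ $S'\in\mathcal P$, for only then does an unknown ${X'}^{S'}$ of sort $S'$ exist (Definition~\ref{defn.unknowns}). First I would note that $\f{nf}(\f{Inc})$ is finite (each simplification step sends a finite multiset to a finite multiset), so $\{T\mid X^S\sqsubseteq T\in\f{nf}(\f{Inc})\}$ is a finite collection of permission sets and $S'$ is the intersection of $S$ with finitely many permission sets. The key step, and the only real obstacle, is the closure property: a finite intersection of permission sets is again a permission set. I would verify this directly from the shape $(\mathbb A^<\setminus A)\cup B$ of Definition~\ref{defn.the.comb}, computing $((\mathbb A^<\setminus A_1)\cup B_1)\cap((\mathbb A^<\setminus A_2)\cup B_2)=(\mathbb A^<\setminus(A_1\cup A_2))\cup(B_1\cap B_2)$, which is again of the required form with $A_1\cup A_2$ and $B_1\cap B_2$ finite; an induction then handles any finite intersection, while the degenerate case of no inclusions gives $S'=S$. (Each such $T$ is indeed a permission set, since the simplification rules of Figure~\ref{fig.simp} send permission sets to permission sets. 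For a general choice of $\mathcal P$ the same conclusion would instead follow from the co-infinite down-closure property recorded in the footnote to Definition~\ref{defn.the.comb}, since $S'\subseteq S$ and $S$, hence $S'$, is co-infinite.)

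Finally I would dispose of the routine bookkeeping. Since $S'=S\cap\cdots\subseteq S$, the candidate substitution satisfies $\f{fa}(\rhoinc(X^S))=S'\subseteq S$, so it meets the side-condition of Definition~\ref{defn.subst}; and since for each permission set there are countably infinitely many unknowns of that sort while $\mathcal V$ is finite, the finitely many unknowns ${X'}^{S'}$ can be chosen outside $\mathcal V$ and pairwise distinct, forming $\Vinc$. Assembling these observations shows that $\rhoinc$ is a well-defined substitution, i.e.\ that it exists. I expect the intersection-closure computation to be the crux; everything else is an application of Theorem~\ref{thrm.when.Inc.consistent} or a finiteness/cardinality remark.
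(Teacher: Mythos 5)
Your proposal is correct and takes essentially the same route as the paper: the paper's entire proof is the three-step chain $\theta\in\f{Sol}(\f{Inc})$ implies $\f{Inc}$ solvable, hence consistent by Theorem~\ref{thrm.when.Inc.consistent}, hence $\rhoinc$ exists by Definition~\ref{defn.rho}. The extra work you do --- verifying that the $S'$ of \eqref{eq.S'} is a permission set via closure of $\mathcal P$ under finite intersection, and the bookkeeping on injective fresh choices --- is material the paper leaves implicit in Definition~\ref{defn.rho}, in the footnote to Definition~\ref{defn.the.comb}, and in Lemma~\ref{lemm.rhoinc.well.defined}, so it is a sound elaboration rather than a different argument.
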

\begin{proof}
By assumption, $\f{Inc}$ is solvable.
By Theorem~\ref{thrm.when.Inc.consistent}, $\f{Inc}$ is consistent.
Using Definition~\ref{defn.rho}, $\rhoinc$ exists.
\end{proof}

\begin{lemm}
\label{lemm.rhoinc.well.defined}
If $\rhoinc$ exists, then it is well-defined.
\end{lemm}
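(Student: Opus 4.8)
The plan is to verify that $\rhoinc$ meets the two requirements that make something a substitution in the sense of Definition~\ref{defn.subst}: it must be a single-valued total function from unknowns to terms, and it must satisfy the constraint $\f{fa}(\rhoinc(Z^U))\subseteq U$ for every unknown $Z^U$. The hypothesis that $\rhoinc$ \emph{exists} already supplies, through Definition~\ref{defn.rho} and Lemma~\ref{lemm.rhoinc.exists}, that $\f{Inc}$ is consistent and that for each $X^S\in\mathcal V$ a legitimate ${X'}^{S'}$ with sort $S'$ given by \eqref{eq.S'} has been fixed. So the content of this lemma is merely to check these two requirements; the genuine work of guaranteeing that the choices can be made has already been discharged by the existence statement.

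For single-valuedness, I would note that the two defining clauses of $\rhoinc$ are guarded by the conditions ``$\in\mathcal V$'' and ``otherwise'', which partition the collection of all unknowns. Thus any unknown falls under exactly one clause and so is assigned exactly one image; and in the first clause that image $\id\act{X'}^{S'}$ is determined by the fixed choice of ${X'}^{S'}$ attached to $X^S$ in Definition~\ref{defn.rho}. Hence $\rhoinc$ is a well-defined total function into terms. Note that the \emph{injectivity} of the choice of ${X'}^{S'}$ plays no role here; it is needed only for later results such as Lemma~\ref{lemm.theta.minus.rhoinc.well.defined}.

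For the free-atoms constraint I would split on the same two clauses, using $\f{fa}(\pi\act X^S)=\pi\act S$ from Definition~\ref{defn.fa} together with $\id\act S=S$. If $Y^T\not\in\mathcal V$ then $\rhoinc(Y^T)\equiv\id\act Y^T$, so $\f{fa}(\rhoinc(Y^T))=T\subseteq T$. If $X^S\in\mathcal V$ then $\rhoinc(X^S)\equiv\id\act{X'}^{S'}$, so $\f{fa}(\rhoinc(X^S))=S'$; and since \eqref{eq.S'} sets $S'=S\cap\bigcap\{T\mid X^S\sqsubseteq T\in\f{nf}(\f{Inc})\}$, we have $S'\subseteq S$, as required.

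I do not anticipate any real obstacle: the argument is a direct unwinding of the definitions. The only step that needs a moment's attention is confirming $S'\subseteq S$, and this is immediate from the intersection with $S$ in \eqref{eq.S'}. All the substance is carried by the existence hypothesis, which ensures the clauses of Definition~\ref{defn.rho} are instantiable; given that, well-definedness follows with no further machinery.
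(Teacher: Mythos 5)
You have proved a different statement from the one the paper intends, and the difference is exactly the property you explicitly set aside. Under your reading, ``well-defined'' means ``is a legitimate substitution'': single-valued, total, and satisfying $\f{fa}(\rhoinc(X^S))\subseteq S$. But on that reading the lemma is essentially vacuous --- the two clauses of Definition~\ref{defn.rho} are guarded by disjoint conditions, so single-valuedness is automatic --- and the inclusion $S'\subseteq S$ you verify is not this lemma's content at all: it is the remark stated (without proof) immediately after Definition~\ref{defn.rho}. The paper's own proof shows what is really meant: it takes distinct unknowns $X^S\not=Y^T$ and checks, by cases on membership in $\mathcal V$, that $\rhoinc(X^S)$ and $\rhoinc(Y^T)$ are distinct terms, i.e.\ that $\rhoinc$ does not identify distinct unknowns. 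When both are in $\mathcal V$ this holds because the choices ${X'}^{S'}$ and ${Y'}^{T'}$ in Definition~\ref{defn.rho} are made injectively; when both are outside $\mathcal V$ the images are $\id\act X^S$ and $\id\act Y^T$; in the mixed case it appeals to the freshness of ${X'}^{S'}$.

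This injectivity is not an optional extra; it is what the surrounding development consumes. Definition~\ref{defn.sigma-rho} defines $(\theta{-}\rhoinc)({X'}^{S'})\equiv\theta(X^S)$ by recovering $X^S$ from its $\rhoinc$-image, and Theorem~\ref{thrm.sigma.rho.sigma} relies on that assignment being unambiguous; this only makes sense when distinct unknowns in $\mathcal V$ have distinct images and those images do not collide with the unknowns handled by the identity clause. Ironically, your proposal concedes the point: you remark that injectivity ``is needed only for later results such as Lemma~\ref{lemm.theta.minus.rhoinc.well.defined}'' --- but recording that injectivity is precisely the job of the present lemma, which is why the paper's proof is an injectivity check and yours is not. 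So the gap is not a faulty step (your two checks are correct as far as they go) but a wrong target: the substantive claim, distinctness of images, is absent from your argument.
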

\begin{proof}
Suppose $\rhoinc$ exists and $X^S \not= Y^T$.
Then:
\begin{squishitem}
\item
$X^S \in \mathcal V$ and $Y^T \in \mathcal V$.\quad
Then $\rhoinc(X^S) = \id \act {X'}^{S'}$ and $\rhoinc(Y^T) = \id \act {Y'}^{T'}$.
By Definition~\ref{defn.rho}, ${X'}^{S'}$ and ${Y'}^{T'}$ are chosen so ${X'}^{S'} \not= {Y'}^{T'}$.
The result follows.
\item
$X^S \not\in \mathcal V$ and $Y^T \not\in \mathcal V$.\quad
Then $\rhoinc(X^S) = \id \act X^S$ and $\rhoinc(Y^T) = \id \act Y^T$.
The result follows.
\item
$X^S \in \mathcal V$ and $Y^T \not\in \mathcal V$.\quad
Then $\rhoinc(Y^T) = \id \act Y^T$ and $\rhoinc(X^S) = \id \act {X'}^{S'}$ with ${X'}^{S'} \not\in \mathcal V$.
The result follows.
\item
The case $X^S \not\in \mathcal V$ and $Y^T \in \mathcal V$ is similar to the case for $X^S \in \mathcal V$ and $Y^T \not\in \mathcal V$.
\end{squishitem}
\end{proof}

\begin{lemm}
\label{lemm.supp.reduct.fv.reduce}
If $\f{Inc} \simpto{} \f{Inc'}$ then $\f{fV}(Inc') \subseteq \f{fV}(\f{Inc})$.
\end{lemm}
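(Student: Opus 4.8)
The plan is to proceed by case analysis on the rewrite rule from Figure~\ref{fig.simp} that justifies $\f{Inc} \simpto{} \f{Inc'}$, unfolding the definitions of $\f{fV}$ on terms (Definition~\ref{defn.fV}) and on problems (Definition~\ref{defn.inc.fv}). Since $\f{fV}(\f{Inc})$ is just the union of $\f{fV}(r)$ taken over all $r \sqsubseteq T \in \f{Inc}$, and the context $\f{Inc}$ is carried through unchanged by every rule, it suffices in each case to compare the free unknowns of the inclusion(s) removed from the redex with those of the inclusion(s) introduced.

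First I would handle the two rules that merely delete an inclusion. For $\rulefont{{\sqsubseteq}a}$ the deleted inclusion is $a \sqsubseteq T$, and $\f{fV}(a) = \varnothing$, so no unknowns are lost and in fact $\f{fV}(\f{Inc'}) = \f{fV}(\f{Inc})$. For $\rulefont{{\sqsubseteq}X'}$ the deleted inclusion is $\pi \act X^S \sqsubseteq T$ with $\f{fV}(\pi \act X^S) = \{X^S\}$; here $X^S$ may genuinely disappear from the problem, so we only obtain $\f{fV}(\f{Inc'}) \subseteq \f{fV}(\f{Inc})$ (possibly strict), which is exactly the inclusion the statement claims.

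Next I would treat the three rules that rewrite rather than delete. For $\rulefont{{\sqsubseteq}\tf f}$ the inclusion $\tf f(r_1,\ldots,r_n) \sqsubseteq T$ is replaced by $r_1 \sqsubseteq T, \ldots, r_n \sqsubseteq T$, and by Definition~\ref{defn.fV} we have $\f{fV}(\tf f(r_1,\ldots,r_n)) = \f{fV}(r_1) \cup \cdots \cup \f{fV}(r_n)$, so the union of free unknowns is preserved exactly. For $\rulefont{{\sqsubseteq}[]}$ the inclusion $[a]r \sqsubseteq T$ becomes $r \sqsubseteq T \cup \{a\}$, and $\f{fV}([a]r) = \f{fV}(r)$, so again equality. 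For $\rulefont{{\sqsubseteq}X}$ the inclusion $\pi \act X^S \sqsubseteq T$ becomes $X^S \sqsubseteq \pi^\mone \act T$; since $\f{fV}(\pi \act X^S) = \{X^S\} = \f{fV}(X^S)$, the free unknowns are unchanged. Collecting the cases gives $\f{fV}(\f{Inc'}) \subseteq \f{fV}(\f{Inc})$ throughout.

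I do not expect a genuine obstacle: the result is routine structural bookkeeping, with equality holding in every case except the deletion rule $\rulefont{{\sqsubseteq}X'}$, which is the sole reason the conclusion must be stated as an inclusion rather than an equality. The only point requiring mild care is that the permutation-moving rule $\rulefont{{\sqsubseteq}X}$, and the permutation action it applies to $T$, have no effect on free unknowns, since $\f{fV}(\pi \act X^S)$ depends only on $X^S$ and not on $\pi$ or on $T$.
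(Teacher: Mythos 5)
Your proof is correct and follows exactly the approach of the paper, whose entire proof of this lemma is the single line ``By case analysis on the rules defining $\simpto{}$ (Definition~\ref{defn.supp.inc})''; you have simply spelled out the five cases that the paper leaves implicit. Your observation that the inclusion can be strict only because of the deletion rule \rulefont{{\sqsubseteq}X'} (and \rulefont{{\sqsubseteq}a}, where nothing is lost since $\f{fV}(a)=\varnothing$) is accurate and consistent with Definitions~\ref{defn.fV} and~\ref{defn.inc.fv}.
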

\begin{proof}
By case analysis on the rules defining $\simpto{}$ (Definition~\ref{defn.supp.inc}).
\end{proof}

\begin{lemm}
\label{lemm.theta.minus.rhoinc.subst}
If $\theta\in \f{Sol}(\f{Inc})$ and $\f{fV}(\f{Inc})\subseteq\mathcal V$ then $\theta{-}\rhoinc$ is a substitution.
\end{lemm}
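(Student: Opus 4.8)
The plan is to verify the two requirements that make $\theta{-}\rhoinc$ a \emph{substitution} in the sense of Definition~\ref{defn.subst}: that it is a well-defined total function from unknowns to terms, and that it respects sorts, i.e.\ $\f{fa}((\theta{-}\rhoinc)(Z^U))\subseteq U$ for every unknown $Z^U$. The first requirement comes for free from the results just established. Since $\theta\in\f{Sol}(\f{Inc})$, the problem $\f{Inc}$ is solvable, hence consistent by Theorem~\ref{thrm.when.Inc.consistent}; so $\rhoinc$ exists (Lemma~\ref{lemm.rhoinc.exists}) and is well-defined (Lemma~\ref{lemm.rhoinc.well.defined}), and therefore $\theta{-}\rhoinc$ exists and is well-defined (Lemma~\ref{lemm.theta.minus.rhoinc.well.defined}); the hypothesis $\f{fV}(\f{Inc})\subseteq\mathcal V$ is just what Definition~\ref{defn.sigma-rho} needs to apply. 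On the unknowns of $\mathcal V$ themselves, which neither clause of Definition~\ref{defn.sigma-rho} touches, $\theta{-}\rhoinc$ acts as the identity, and there $\f{fa}(\id\act X^S)=S\subseteq S$ trivially. So the content is checking the sort condition for the two clauses of Definition~\ref{defn.sigma-rho}.

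The easy clause is $(\theta{-}\rhoinc)(X^S)\equiv\theta(X^S)$ for $X^S\not\in\mathcal V$: here $\f{fa}(\theta(X^S))\subseteq S$ holds simply because $\theta$ is itself a substitution. The real work is the clause $(\theta{-}\rhoinc)({X'}^{S'})\equiv\theta(X^S)$, where $X^S\in\mathcal V$ and $\rhoinc(X^S)\equiv\id\act{X'}^{S'}$; for this I must show $\f{fa}(\theta(X^S))\subseteq S'$, with $S'$ given by \eqref{eq.S'} as $S'=S\cap\bigcap\{T\mid X^S\sqsubseteq T\in\f{nf}(\f{Inc})\}$. I would obtain this from two observations. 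First, $\f{fa}(\theta(X^S))\subseteq S$, again because $\theta$ is a substitution. Second, for each support inclusion $X^S\sqsubseteq T$ in $\f{nf}(\f{Inc})$ I would show $\f{fa}(\theta(X^S))\subseteq T$: iterating Theorem~\ref{thrm.simpto.sol.inc} along the reduction $\f{Inc}\simptostar{}\f{nf}(\f{Inc})$ gives $\f{Sol}(\f{Inc})=\f{Sol}(\f{nf}(\f{Inc}))$, so $\theta$ solves $\f{nf}(\f{Inc})$ and hence $\f{fa}(X^S\theta)\subseteq T$; and $X^S\theta\equiv\theta(X^S)$ by Definition~\ref{defn.subst.action} and Lemma~\ref{lemm.permutation.comp}(1), so $\f{fa}(X^S\theta)=\f{fa}(\theta(X^S))$. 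Intersecting the first observation with the second over all such $T$ yields exactly $\f{fa}(\theta(X^S))\subseteq S'$, which is what we need. When no inclusion $X^S\sqsubseteq T$ occurs in $\f{nf}(\f{Inc})$ the intersection is vacuous, $S'=S$, and the first observation alone suffices.

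I do not anticipate a serious obstacle: the set $S'$ in \eqref{eq.S'} was defined precisely so that any solution of $\f{Inc}$ already confines the free atoms of $\theta(X^S)$ within $S'$, so the sort bound falls out of the very definition. The only points needing care are routine bookkeeping: transporting the solution property from $\f{Inc}$ to its normal form $\f{nf}(\f{Inc})$ via Theorem~\ref{thrm.simpto.sol.inc}, and the identification $X^S\theta\equiv\theta(X^S)$ that lets the solution condition $\f{fa}(X^S\theta)\subseteq T$ be read directly as a constraint on $\f{fa}(\theta(X^S))$.
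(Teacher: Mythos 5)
Your proof is correct and follows essentially the same route as the paper's: both reduce the claim to the sort condition $\f{fa}(\theta(X^S))\subseteq S'$, established by combining $\f{fa}(\theta(X^S))\subseteq S$ (since $\theta$ is a substitution) with $\f{fa}(\theta(X^S))\subseteq T$ for each $X^S\sqsubseteq T\in\f{nf}(\f{Inc})$, the latter coming from $\theta\in\f{Sol}(\f{nf}(\f{Inc}))$ via Theorem~\ref{thrm.simpto.sol.inc}. The only differences are bookkeeping: the paper splits into sub-cases on whether $X^S\in\f{fV}(\f{nf}(\f{Inc}))$ whereas you treat that situation as a vacuous intersection, and your write-up is in fact slightly more careful than the paper's own text about the $S\cap{}$ component of \eqref{eq.S'} and about the existence/well-definedness of $\theta{-}\rhoinc$.
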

\begin{proof}
By Lemma~\ref{lemm.rhoinc.exists}, $\rhoinc$ exists.
We show $\f{fa}((\theta{-}\rhoinc)({X'}^{S'}))\subseteq S$ by cases:
\begin{squishitem}
\item
The case $\id\act {X'}^{S'}\equiv \rhoinc(X^S)$ for $X^S\in\mathcal V$.

Using Lemma~\ref{lemm.supp.reduct.fv.reduce}, $\f{fV}(\f{nf}(\f{Inc}))\subseteq\f{fV}(\f{Inc})$.
Then $\f{fV}(\f{nf}(\f{Inc}))\subseteq\mathcal V$, as $\f{fV}(\f{Inc}) \subseteq \mathcal V$ by assumption.
There are two sub-cases:
\begin{squishitem}
\item
The case $X^S\not\in\f{fV}(\f{nf}(\f{Inc}))$.\quad
Then $S=S'$ and $(\theta{-}\rhoinc)({X'}^S)=\theta(X^S)$ by Definition~\ref{defn.sigma-rho}.
By assumption $\f{fa}(\theta(X^S)) \subseteq S$.
The result follows.
\item
The case $X^S\in\f{fV}(\f{nf}(\f{Inc}))$.\quad
By assumption, $\theta\in\f{Sol}(\f{Inc})$ so $\theta\in\f{Sol}(\f{nf}(\f{Inc}))$ using Theorem~\ref{thrm.simpto.sol.inc}.
Then $\f{fa}(\theta(X^S)) \subseteq T$ for every $T$ such that $X^S\sqsubseteq T\in\f{nf}(\f{Inc})$, using Definition~\ref{defn.solves.Inc}.
By Definition~\ref{defn.sigma-rho}, $S' = \bigcap \{T \mid X^S \sqsubseteq T\in\f{nf}(\f{Inc})\}$.
The result follows.
\end{squishitem}
\item
Otherwise, $(\theta{-}\rhoinc)(X^S)\equiv \theta(X^S)$ and $\f{fa}(\theta(X^S))\subseteq S$ by assumption.
\end{squishitem}
\end{proof}

\begin{thrm}
\label{thrm.sigma.rho.sigma}
Suppose $\theta\in \f{Sol}(\f{Inc})$ and $\f{fV}(\f{Inc})\subseteq\mathcal V$.

Then $\theta(X^S) \equiv (\rhoinc \fcomp (\theta{-}\rhoinc))(X^S)$ for every $X^S\in\mathcal V$.
\end{thrm}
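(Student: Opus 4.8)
The plan is to prove the identity by a direct calculation, unwinding the definition of substitution composition (Definition~\ref{defn.sigma.com}) and then the definition of $\theta{-}\rhoinc$ (Definition~\ref{defn.sigma-rho}). Since the claim is asserted only for $X^S\in\mathcal V$, exactly one case arises and no induction on term structure is needed; the work is purely a matter of chaining defining clauses.

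First I would record that both substitutions involved are legitimate: $\rhoinc$ exists by Lemma~\ref{lemm.rhoinc.exists} (using that $\theta\in\f{Sol}(\f{Inc})$, hence $\f{Inc}$ is consistent by Theorem~\ref{thrm.when.Inc.consistent}), and $\theta{-}\rhoinc$ is a genuine substitution by Lemma~\ref{lemm.theta.minus.rhoinc.subst} (using $\f{fV}(\f{Inc})\subseteq\mathcal V$). Then I would fix $X^S\in\mathcal V$ and let ${X'}^{S'}$ be the chosen fresh unknown with $\rhoinc(X^S)\equiv\id\act{X'}^{S'}$, and compute
$(\rhoinc\fcomp(\theta{-}\rhoinc))(X^S)\equiv(\rhoinc(X^S))(\theta{-}\rhoinc)\equiv(\id\act{X'}^{S'})(\theta{-}\rhoinc)\equiv\id\act(\theta{-}\rhoinc)({X'}^{S'})\equiv\id\act\theta(X^S)\equiv\theta(X^S)$.
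The first step is Definition~\ref{defn.sigma.com}; the second is the defining clause of $\rhoinc$ on $\mathcal V$; the third is the clause for $\pi\act X^S$ in Definition~\ref{defn.subst.action} with $\pi=\id$; the fourth is the first bullet of Definition~\ref{defn.sigma-rho}, applicable precisely because $X^S\in\mathcal V$ and $\rhoinc(X^S)\equiv\id\act{X'}^{S'}$; and the last is Lemma~\ref{lemm.permutation.comp}(1). This yields the claimed \emph{syntactic} identity $\equiv$, not merely $\aeq$.

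I do not expect a genuine obstacle: the content of the theorem is simply that $\theta{-}\rhoinc$ has been engineered to invert, on ${X'}^{S'}$, exactly the renaming that $\rhoinc$ performs on $X^S$, so the composite reproduces $\theta$. The only points that need care are bookkeeping ones, namely confirming that $\equiv$ (rather than just $\aeq$) holds at every step, and being clear about why the statement is restricted to $X^S\in\mathcal V$: the restriction is what makes $\rhoinc$ behave as a renaming that $\theta{-}\rhoinc$ can undo, and it is exactly the form in which the result is needed later to factor an arbitrary solution $\theta$ through the most general solution $\rhoinc$ on the variables of interest.
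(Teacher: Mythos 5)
Your proposal is correct and follows essentially the same route as the paper: the paper's proof likewise observes that $\rhoinc(X^S)\equiv\id\act{X'}^{S'}$ and $(\theta{-}\rhoinc)({X'}^{S'})\equiv\theta(X^S)$ for the chosen fresh ${X'}^{S'}\not\in\mathcal V$, and concludes by Lemma~\ref{lemm.permutation.comp}. Your version merely spells out the same definitional unwinding step by step and adds the (harmless) bookkeeping of citing Lemmas~\ref{lemm.rhoinc.exists} and~\ref{lemm.theta.minus.rhoinc.subst} for existence and well-formedness.
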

\begin{proof}
For some fresh ${X'}^S\not\in\mathcal V$, $\rho(X^S)\equiv \id\act {X'}^S$, and $(\theta{-}\rhoinc)({X'}^S)\equiv\theta(X^S)$.
The result follows by Lemma~\ref{lemm.permutation.comp}.
\end{proof}

\subsection{Support reduction example}
\label{subsect.examples.support.reduction}

Suppose $a, c \in \mathbb A^< $ and $b, d \not\in \mathbb A^< $.
Take $S = \mathbb A^<  = T$, $U = \mathbb A^<  \cup \{ b \}$ and $V = \mathbb A^<  \cup \{ d \}$.
We first run the support reduction algorithm on $\f{Inc} = \{ a \sqsubseteq S,\ \tf{f}([a]a, \id \act Y^T) \sqsubseteq T,\ (c\ a) \act Z^U \sqsubseteq S,\ (b\ a) \act W^V \sqsubseteq T \}$:

\begin{displaymath}
\begin{array}{l@{\hspace{1.25em}}c@{\qquad}l@{\;}l}
\greymath{a \sqsubseteq S},\ \tf{f}([a]a, \id \act Y^T) \sqsubseteq T,\ (c\ a) \act Z^U \sqsubseteq S,\ (b\ a) \act W^V \sqsubseteq T
& \simpto{} &
\rulefont{{\sqsubseteq}a}
\\[1.25ex]
\greymath{\tf{f}([a]a, \id \act Y^T) \sqsubseteq T},\  (c\ a) \act Z^U \sqsubseteq S,\ (b\ a) \act W^V \sqsubseteq T
& \simpto{} &
\rulefont{{\sqsubseteq}\tf f}
\\[1.25ex]
\greymath{[a]a \sqsubseteq T},\ \id \act Y^T \sqsubseteq T,\ (c\ a) \act Z^U \sqsubseteq S,\ (b\ a) \act W^V \sqsubseteq T
& \simpto{} &
\rulefont{{\sqsubseteq}[]}
\\[1.25ex]
\greymath{\id \act Y^T \sqsubseteq T},\ (c\ a) \act Z^U \sqsubseteq S,\ (b\ a) \act W^V \sqsubseteq T
& \simpto{} &
\rulefont{{\sqsubseteq}X'}
\\[1.25ex]
\greymath{(c\ a) \act Z^U \sqsubseteq S},\ (b\ a) \act W^V \sqsubseteq T
& \simpto{} &
\rulefont{{\sqsubseteq}X}
\\[1.25ex]
Z^U \sqsubseteq (c\ a) \act S,\ \greymath{(b\ a) \act W^V \sqsubseteq T}
& \simpto{} &
\rulefont{{\sqsubseteq}X}
\\[1.25ex]
Z^U \sqsubseteq (c\ a) \act S,\ W^V \sqsubseteq (b\ a) \act T
\end{array}
\end{displaymath}

We take $\f{nf}(\f{Inc}) = \{ Z^U \sqsubseteq (c\ a) \act S,\ W^V \sqsubseteq (b\ a) \act T \}$.
By Definition~\ref{defn.nf.inc} we have $\f{nf}(\f{Inc})$ is consistent, therefore by Theorem~\ref{thrm.when.Inc.consistent} a solution for $\f{Inc}$ exists.

We now construct $\rho^{\mathcal{V}}_\f{Inc}$.
Take $W'$ and $Z'$ as our injective choices of fresh unknowns.
Take $U' = U \cap ((c\ a) \act S \cap (b\ a) \act T)$ and $V' = V \cap ((c\ a) \act S \cap (b\ a) \act T)$.
An easy calculation shows that $U' = \mathbb A^<  \setminus \{ a \} = V'$.
We define $\rho^{\mathcal{V}}_\f{Inc}$ piecewise by:
\begin{gather*}
\rho^{\mathcal{V}}_\f{Inc}(Z^U) = \id \act {Z'}^{U'} \;\text{and}\;
\rho^{\mathcal{V}}_\f{Inc}(W^V) = \id \act {W'}^{V'} \;\text{and}\;
\rho^{\mathcal{V}}_\f{Inc}(X^S) = \id \act X^S \;\text{for all other $X^S$}
\end{gather*}
It is easy to verify that this is in fact a substitution (i.e. $\f{fa}(\rho^{\mathcal{V}}_\f{Inc}(X^S)) \subseteq S$ always).

\section{Permissive nominal unification problems}
\label{sect.pernu}

We can now give an algorithm to compute solutions to permissive nominal unification problems (Definition~\ref{defn.problems}) and we prove that our algorithm computes most general solutions.
Note, as we have observed before, that the notion of problem and solution is based just on equalities and substitutions. 

\subsection{The unification algorithm}

\begin{lemm}
\label{lemm.Sol.circ}
$\theta \fcomp \theta' \in \f{Sol}(Pr)$ if and only if $\theta' \in \f{Sol}(Pr\theta)$.
\end{lemm}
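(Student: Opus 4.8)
The plan is to unfold the definition of $\f{Sol}$ on both sides of the biconditional and observe that, after rewriting nested substitutions as a single composite substitution, the two resulting conditions are literally identical. The only nontrivial ingredient is Lemma~\ref{lemm.sigma.sigma'.circ}, which lets us pass between $r(\theta\fcomp\theta')$ and $(r\theta)\theta'$; everything else is bookkeeping on the multiset of equalities.

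First I would spell out the left-hand side. By Definition~\ref{defn.problems}, $\theta\fcomp\theta'\in\f{Sol}(Pr)$ means that for every $r\ueq s\in Pr$ we have $r(\theta\fcomp\theta')\aeq s(\theta\fcomp\theta')$. Applying Lemma~\ref{lemm.sigma.sigma'.circ} to each side rewrites this as: for every $r\ueq s\in Pr$, $(r\theta)\theta'\aeq(s\theta)\theta'$.

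Next I would spell out the right-hand side. By Definition~\ref{defn.problems}, $\theta'\in\f{Sol}(Pr\theta)$ means that for every equality in $Pr\theta$ the two sides are $\aeq$ after applying $\theta'$. Since $Pr\theta=\{r\theta\ueq s\theta\mid r\ueq s\in Pr\}$, the equalities of $Pr\theta$ are exactly the images $r\theta\ueq s\theta$ of the equalities of $Pr$, so this condition reads: for every $r\ueq s\in Pr$, $(r\theta)\theta'\aeq(s\theta)\theta'$. This is word-for-word the reformulated left-hand side, so the two are equivalent and the lemma follows.

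I expect no real obstacle here; the statement is essentially a definitional unfolding together with one application of Lemma~\ref{lemm.sigma.sigma'.circ}. The one point deserving a moment's care is that $Pr$ is a \emph{multiset}, so I would be explicit that the assignment $(r\ueq s)\mapsto(r\theta\ueq s\theta)$ is the defining multiset-image used in $Pr\theta$, whence quantifying over equalities of $Pr$ and over equalities of $Pr\theta$ amounts to the same quantification; this guarantees no equality is silently added or dropped when forming $Pr\theta$.
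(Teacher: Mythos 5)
Your proposal is correct and is exactly the paper's own argument: the paper proves this lemma ``by unpacking Definition~\ref{defn.unif.sol} and using Lemma~\ref{lemm.sigma.sigma'.circ}'', which is precisely your unfolding of $\f{Sol}$ on both sides plus the rewriting $r(\theta\fcomp\theta')\equiv(r\theta)\theta'$. Your added care about the multiset image defining $Pr\theta$ is a fine (if minor) point of explicitness beyond what the paper records.
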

\begin{proof}
By unpacking Definition~\ref{defn.unif.sol} and using Lemma~\ref{lemm.sigma.sigma'.circ}.
\end{proof}

\begin{defn}
\label{defn.Prsqsubseteq}
If $Pr$ is a problem, define a support inclusion problem $Pr\tinys\sqsubseteq$ by:
\begin{displaymath}
Pr\tinys\sqsubseteq = \{r\sqsubseteq\f{fa}(s),\ s\sqsubseteq\f{fa}(r) \mid r\ueq s\in Pr\}
\end{displaymath}
Call a support inclusion problem $\f{Inc}$ \deffont{non-trivial} when $\f{nf}(\f{Inc})\neq\varnothing$.
\end{defn}

\begin{figure}
\begin{displaymath}
\begin{array}{l@{\quad}l@{\hspace{-1em}}c@{\ }l}
\rulefont{{\ueq}a} & \mathcal V; a\ueq a,\, Pr & \simpto{}& \mathcal V; Pr \\
\rulefont{{\ueq}\tf f} & \mathcal V; \tf f(r_1,\, \ldots) \ueq \tf f(s_1,\, \ldots),\, Pr & \simpto{} & \mathcal V; r_1\ueq s_1,\, \ldots,\, Pr \\
\rulefont{{\ueq}[a]} & \mathcal V; [a]r\ueq [a]s,\, Pr & \simpto{} & \mathcal V; r\ueq s,\, Pr \\
\rulefont{{\ueq}[b]} & \mathcal V; [a]r\ueq [b]s,\, Pr & \simpto{} & \mathcal V; (b\ a) \act r \ueq s,\, Pr \\
& & & \qquad (b\not\in\f{fa}(r)) \\
\rulefont{{\ueq}X} & \mathcal V; \pi\act X^S\ueq \pi\act X^S,\, Pr &\simpto{} & \mathcal V; Pr \\
\rulefont{I1} & \mathcal V; \pi\act X^S\ueq s,\, Pr  & \simpto{[X^S\sm\pi^\mone\act s]} & \mathcal V; Pr[X^S\sm \pi^\mone\act s] \\
& & & \qquad(X^S\not\in\f{fV}(s),\ \f{fa}(s)\subseteq\pi\act S) \\
\rulefont{I2} & \mathcal V; r\ueq \pi\act X^S,\, Pr  & \simpto{[X^S\sm\pi^\mone\act r]} & \mathcal V; Pr[X^S\sm \pi^\mone\act r] \\
& & & \qquad(X^S\not\in\f{fV}(r),\ \f{fa}(r)\subseteq\pi\act S) \\
\rulefont{I3} & \mathcal V; Pr  & \simpto{\rho\smallss{\mathcal V}{Pr\tinys\sqsubseteq}} & \mathcal V\cup{\mathcal V'}\smallss{\mathcal V}{Pr\tinys{\sqsubseteq}}; Pr\ \rho\smallss{\mathcal V}{Pr\tinys\sqsubseteq} 
\\
& & & \qquad(Pr\tinys{\sqsubseteq}\text{ consistent and non-trivial})
\end{array}
\end{displaymath}
\caption{Simplification rules for problems}
\label{fig.simr}
\end{figure}

\begin{defn}
\label{defn.unif}
Define a \deffont{simplification} rewrite relation $\mathcal V;Pr\simpto{} \mathcal V';Pr'$ on unification problems by the rules in Figure~\ref{fig.simr}.

Call \rulefont{{\ueq}a}, \rulefont{{\ueq}\tf f}, \rulefont{{\ueq}[a]}, \rulefont{{\ueq}[b]}, and \rulefont{{\ueq}X} \deffont{non-instantiating rules}.
Call \rulefont{I1}, \rulefont{I2}, and \rulefont{I3} \deffont{instantiating rules}.
Write $\simptostar{}$ for the transitive and reflexive closure of $\simpto{}$.
\end{defn}

In \rulefont{I3} we insist that $Pr\tinys\sqsubseteq$ is non-trivial to avoid indefinite rewrites.
We insist $Pr\tinys\sqsubseteq$ is consistent so that $\rho\smallss{\mathcal V}{Pr\tinys\sqsubseteq}$ exists.
$\rho\smallss{\mathcal V}{Pr\tinys\sqsubseteq}$ and  
(${\mathcal V'}\smallss{\mathcal V}{Pr\tinys{\sqsubseteq}}$ are defined in Definition~\ref{defn.rho}.)

\begin{defn}
\label{defn.fV.Pr}
Define
$\f{fV}(Pr)=\bigcup\{\f{fV}(r)\cup\f{fV}(s)\mid r\ueq s\in Pr\}$.
\end{defn}

\begin{defn}
\label{defn.restricted}
Suppose $\mathcal V$ is a set of unknowns.
Define $\theta|_{\mathcal V}$ by:
\begin{frameqn}
\begin{array}{r@{\ }l@{\quad}l}
\theta|_{\mathcal V}(X)\equiv&\theta(X) &\text{if }X\in \mathcal V
\\
\theta|_{\mathcal V}(X)\equiv&\id\act X& \text{otherwise}
\end{array}
\end{frameqn}
\end{defn}
\noindent(We overload $|$, for technical convenience: $\pi|_S$ is partial and $\theta|_{\mathcal V}$ is total.)

\begin{defn}
\label{defn.unification.algorithm}
If $Pr$ is a problem, define a \deffont{unification algorithm} by: 
\begin{frametxt}
\begin{enumerate}
\item
Rewrite $\f{fV}(Pr);Pr$ using the rules of Definition~\ref{defn.unif} as much as possible, with top-down precedence (so we apply \rulefont{{\ueq}a} before \rulefont{{\ueq}\tf f}, and so on down to \rulefont{I3}).
\item
If we reduce to $\mathcal V';\varnothing$, we succeed and return $\theta|_{\mathcal V}$ where $\theta$ is the functional composition of all the substitutions labelling rewrites (we take $\theta=\id$ if there are none).
Otherwise, we fail.
\end{enumerate}
\end{frametxt}
\end{defn}

\begin{prop}
\label{prop.unification.algorithm.strong.normalisation}
The algorithm of Definition~\ref{defn.unification.algorithm} always terminates.
\end{prop}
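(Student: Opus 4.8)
The plan is to exhibit a lexicographic measure on configurations $\mathcal V; Pr$ that strictly decreases under every rule of Figure~\ref{fig.simr}, so that $\simpto{}$ is strongly normalising and hence the algorithm, being a particular strategy for $\simpto{}$, always halts. I would use the triple
$$\mu(\mathcal V;Pr) = (\,|\f{fV}(Pr)|,\ M(Pr),\ K(Pr)\,)$$
ordered lexicographically on $\mathbb N^3$, where $K(Pr)$ is the total syntactic size of $Pr$ (summing, over all $r\ueq s\in Pr$, the number of atom-, term-former- and abstraction-nodes in $r$ and $s$), and $M(Pr)$ is a measure of the permission sets carried by the unknowns of $Pr$, defined below.

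First I would record the routine effect of each rule on the first and third components. Each non-instantiating rule \rulefont{{\ueq}a}, \rulefont{{\ueq}\tf f}, \rulefont{{\ueq}[a]}, \rulefont{{\ueq}[b]}, \rulefont{{\ueq}X} never introduces an unknown and never enlarges a permission set --- note that \rulefont{{\ueq}[b]} only precomposes moderating permutations with $(b\ a)$ and leaves every sort $S$ untouched --- so it weakly decreases $|\f{fV}(Pr)|$ and does not increase $M(Pr)$; moreover, when it leaves $\f{fV}(Pr)$ unchanged it strictly decreases $K(Pr)$, since it either deletes a trivial equality or strips a pair of matching top constructors, and a permutation action does not change size. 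The rules \rulefont{I1} and \rulefont{I2} apply $[X^S\sm\pi^\mone\act s]$ (resp.\ $[X^S\sm\pi^\mone\act r]$); since their side-conditions give $X^S\notin\f{fV}(s)$ while $s$ already occurs in $Pr$, we get $\f{fV}(Pr')\subseteq\f{fV}(Pr)\setminus\{X^S\}$, so they strictly decrease $|\f{fV}(Pr)|$, which dominates whatever they do to $M$ and $K$.

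The crux is \rulefont{I3}, which preserves $|\f{fV}(Pr)|$ and $K(Pr)$ --- its label $\rho\smallss{\mathcal V}{Pr\tinys\sqsubseteq}$ is an injective renaming of unknowns that only shrinks permission sets (Definition~\ref{defn.rho}, using the maintained invariant $\f{fV}(Pr)\subseteq\mathcal V$) --- so it must be forced down by $M$. I would first establish the invariant that a single run mentions atoms only from a fixed finite pool $\mathcal A_0$, namely the atoms in the initial $Pr$ together with the non-trivial domains of its permutations and the finite deviations $S\shs\mathbb A^<$ of its permission sets: no rule introduces a genuinely new atom, since \rulefont{{\ueq}[b]} reuses the $b$ already present in $[b]s$, \rulefont{I1}/\rulefont{I2} substitute subterms already present, and \rulefont{I3} only intersects existing permission sets. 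Hence every permission set arising has the form $(\mathbb A^<\setminus A)\cup B$ with $A\cup B\subseteq\mathcal A_0$, and I set $M(Pr)=\sum_{X^S\in\f{fV}(Pr)}|S\cap\mathcal A_0|$, a finite natural number. As \rulefont{I3} replaces each $X^S\in\f{fV}(Pr)$ by some ${X'}^{S'}$ with $S'\subseteq S$, and the side-condition $\f{nf}(Pr\tinys\sqsubseteq)\neq\varnothing$ together with Lemma~\ref{lemm.consistent.inc.form} forces some $S'=S\cap\bigcap\{T\mid X^S\sqsubseteq T\in\f{nf}(Pr\tinys\sqsubseteq)\}\subsetneq S$ with $S\setminus S'\subseteq\mathcal A_0$, at least one summand strictly drops while none increases; thus $M$ strictly decreases.

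Combining, $\mu$ strictly decreases in the lexicographic order at every $\simpto{}$-step, and $\mathbb N^3$ under this order is well-founded, so no infinite reduction exists and the algorithm of Definition~\ref{defn.unification.algorithm} terminates. The main obstacle I expect is precisely the finite-pool invariant underpinning $M$: one must check that neither the permutation bookkeeping of \rulefont{{\ueq}[b]}, \rulefont{I1}, \rulefont{I2} nor the repeated intersection in \rulefont{I3} can drive a permission set's deviation outside $\mathcal A_0$. This is what makes $M$ monotone, and it is the reason $M$ is phrased as $|S\cap\mathcal A_0|$ rather than $|S\shs\mathbb A^<|$: deleting an atom of $\mathbb A^<$ from $S$ \emph{enlarges} $S\shs\mathbb A^<$, so the naive deviation count would fail to decrease.
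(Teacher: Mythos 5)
Your strategy is genuinely different from the paper's own proof, and for most rules it is better. The paper orders the measure of Definition~\ref{defn.size.unification.problem} as (number of equalities, term-formers, abstractions), checks only a few rules, and disposes of \rulefont{I3} with the bare assertion that non-triviality of $Pr\tinys{\sqsubseteq}$ prevents indefinite \rulefont{I3}-rewrites. Your ordering, with $|\f{fV}(Pr)|$ first, is more robust: it absorbs the duplication of $s$ that \rulefont{I1}/\rulefont{I2} cause (which the paper's claim ``$\f{size}(Pr[X^S\ssm\pi^\mone\act s])=(E{-}1,T,A)$'' silently ignores), and your $K$ strictly decreases under \rulefont{{\ueq}\tf f}, whereas the paper's first component, the equality count, can actually grow there. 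Your treatment of the non-instantiating rules and of \rulefont{I1}/\rulefont{I2} is correct.

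The gap is in the \rulefont{I3} case, which is the crux. Your strict decrease of $M$ rests on the invariant that every permission set arising has the form $(\mathbb A^<\setminus A)\cup B$ with $A\cup B\subseteq\mathcal A_0$, justified by ``\rulefont{I3} only intersects existing permission sets''. But by Definition~\ref{defn.rho}, \rulefont{I3} intersects $S$ not with permission sets but with the constraint sets $T$ occurring in $\f{nf}(Pr\tinys{\sqsubseteq})$, and by Definition~\ref{defn.Prsqsubseteq} these are (simplification-modified) free-atom sets $\f{fa}(s)$ of terms of $Pr$. If $s$ contains no unknowns, $\f{fa}(s)$ is a finite set of atoms, so such a $T$ does not contain $\mathbb A^<\setminus\mathcal A_0$; then $S\setminus S'$ is co-finite rather than a subset of $\mathcal A_0$, and $|S'\cap\mathcal A_0|$ need not drop. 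Concretely, take $a\in\mathbb A^<$, $S=\mathbb A^<$, distinct term-formers $\tf f,\tf g$, and
\begin{displaymath}
Pr \;=\; \{\, \tf f(\id\act X^{S}) \ueq \tf g(a)\, \}.
\end{displaymath}
No rule of higher precedence applies (the head term-formers clash, and neither side is of the form $\pi\act X^S$), while $Pr\tinys{\sqsubseteq}$ normalises to $\{\id\act X^{S}\sqsubseteq\{a\}\}$, which is consistent and non-trivial, so the stated side-conditions of \rulefont{I3} hold. Here $\mathcal A_0=\{a\}$ and $S'=S\cap\{a\}=\{a\}$, so $|S'\cap\mathcal A_0|=1=|S\cap\mathcal A_0|$; moreover $|\f{fV}(Pr)|$ and $K(Pr)$ are also unchanged. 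So your measure fails to decrease at exactly the kind of step it was introduced to handle, and the claimed strict decrease ``$S\setminus S'\subseteq\mathcal A_0$'' extracted from Lemma~\ref{lemm.consistent.inc.form} is false here.

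To be fair, this example probes a soft spot of the paper itself: $S'=\{a\}$ is not a permission set in the sense of Definition~\ref{defn.the.comb}, so strictly speaking no unknown ${X'}^{S'}$ exists and the existence of the substitution labelling \rulefont{I3} (which the paper claims follows from consistency alone) is already delicate. But your proof neither invokes nor establishes any such restriction: the standard-form invariant is asserted rather than proved, and it is precisely what fails. A repair needs either a lemma that whenever \rulefont{I3} fires, every constraint $X^S\sqsubseteq T$ in $\f{nf}(Pr\tinys{\sqsubseteq})$ satisfies $T\supseteq S\setminus\mathcal A_0$ (the rules as written do not guarantee this), or a two-tier per-unknown measure --- for instance value $\omega+|S\cap\mathcal A_0|$ for a sort agreeing with $\mathbb A^<$ outside $\mathcal A_0$ and value $|S|$ for a finite sort $S\subseteq\mathcal A_0$ --- so that both the transition from the first kind to the second and the strict shrinking within the second register as decreases.
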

\begin{proof}
See Appendix~\ref{sect.omitted.proofs}.
\end{proof}

\subsection{Examples of the algorithm}

\subsubsection*{Example one.}

Suppose $a, c \in \mathbb A^< $ and $d \not\in \mathbb A^< $.
Take $S = \mathbb A^<$.
Take $\mathcal{V} = \{ X^S \}$.
Suppose a term-former $\tf{g}$.
We apply the algorithm to  $\{ \tf{g}([a](\id \act X^S), [a]a) \ueq \tf{g}([d]c, [d]d) \}$:

\begin{displaymath}
\begin{array}{l@{\hspace{1.25em}}c@{\qquad}l@{\;}l}
\mathcal{V};\ \greymath{\tf{g}([a](\id \act X^S), [a]a) \ueq \tf{g}([d]c, [d]d)}
& \simpto{} &
\rulefont{{\ueq}\tf f}
\\[1.25ex]
\mathcal{V};\ \greymath{[a](\id \act X^S) \ueq [d]c}, [a]a \ueq [d]d
& \simpto{} &
\rulefont{{\ueq}[b]}
\\[1.25ex]
\mathcal{V};\ \greymath{(d\ a) \act X^S \ueq c}, [a]a \ueq [d]d
& \simpto{[X^S \ssm c]} &
\rulefont{I1}
\\[1.25ex]
\mathcal{V};\ \greymath{[a]a \ueq [d]d}
& \simpto{} &
\rulefont{{\ueq}[b]}
\\[1.25ex]
\mathcal{V};\ \greymath{(d\ a) \act a \ueq d}
& \simpto{} &
\rulefont{{\ueq}a}
\\[1.25ex]
\mathcal{V};\ \varnothing \qquad\text{(Success!)}
\end{array}
\end{displaymath}

The algorithm succeeds and returns the substitution $[X^S \ssm c]$.

\subsubsection*{Example two.}

Suppose $a,c\in\mathbb A^< $ and $b,d\not\in\mathbb A^< $.
Take $S = \mathbb A^<  \cup \{ b, d \}$ and $T = \mathbb A^<  \cup \{ f \}$ and $U = \mathbb A^< $.
Take $\mathcal{V} = \{ X^S, Y^T \}$. 
Suppose a term-former $\tf f$.

We apply the algorithm to $\{\tf{f}([a]b, \id \act Z^U, \id \act X^S)\ueq \tf{f}([d]b, [a]a, \id \act Y^T)\}$:
\begin{displaymath}
\begin{array}{l@{\hspace{-2.5em}}c@{\qquad}l@{\;}l}
\mathcal{V};\ \greymath{\tf{f}([a]b,  \id \act Z^U,  \id \act X^S) \ueq \tf{f}([d]b,  [a]a,  \id \act Y^T)}
&
\simpto{}
&
\rulefont{{\ueq}\tf{f}} 
\\[1.25ex]
\mathcal{V};\ \greymath{[a]b \ueq [d]b},\  \id \act Z^U \ueq [a]a,\  \id \act X^S \ueq \id \act Y^T 
&
\simpto{}
&
\rulefont{{\ueq}[b]} 
\\[1.25ex]
\mathcal{V};\ \greymath{(d\ a) \act b \ueq b},\  \id \act Z^U \ueq [a]a,\  \id \act X^S \ueq \id \act Y^T 
&
\simpto{} 
&
\rulefont{{\ueq}a} 
\\[1.25ex]
\mathcal{V}; \greymath{\id \act Z^U \ueq [a]a},\  \id \act X^S \ueq \id \act Y^T 
&
\simpto{[Z^U \ssm [a]a]} 
&
\rulefont{{\ueq}I1} 
\\[1.25ex]
\mathcal{V};\ \greymath{\id \act X^S \ueq \id \act Y^T} 
&
\simpto{[X^S \ssm X'^{S'}][Y^T \ssm Y'^{T'}]} 
&
\rulefont{I3} 
\\[1.25ex]
\mathcal{V} \cup \{ X'^{S'},  Y'^{T'} \};\ \greymath{\id \act X'^{S'} \ueq \id \act Y'^{T'}} 
&
\simpto{[X'^{S'} \ssm Y'^{T'}]} 
&
\rulefont{I1} 
\\[1.25ex]
\mathcal{V} \cup \{ X'^{S'},  Y'^{T'} \};\ \greymath{\id \act Y'^{T'} \ueq \id \act Y'^{T'}} 
&
\simpto{} 
&
\rulefont{{\ueq}X} 
\\[1.25ex]
\mathcal{V} \cup \{ X'^{S'},  Y'^{T'} \};\ \varnothing 
\qquad (\text{Success!})
\end{array}
\end{displaymath}
Here $X'$ and $Y'$ are the choice of unknown made in Definition~\ref{defn.rho},
and $S' = \mathbb A^<  = T'$.

The algorithm succeeds and returns the substitution 
$$
[X'^{S'} \ssm Y'^{T'}] \fcomp [X^S \ssm X'^{S'}] \fcomp [Y^T \ssm Y'^{T'}] \fcomp [Z^U \ssm [a]a] .
$$

\subsubsection*{Example three.}

An example that fails to unify.
Take $S = \mathbb A^< $.
Take $\mathcal{V} = \{ X^S \}$.
We run the algorithm on $\{ [a]([b](\id \act X^S)) \ueq[a](\id \act X^S) \}$:

\begin{displaymath}
\begin{array}{l@{\hspace{1.25em}}c@{\qquad}l@{\;}l}
\mathcal{V};\ \greymath{[a]([b](\id \act X^S)) \ueq [a](\id \act X^S)}
& \simpto{} &
\rulefont{{\ueq}[a]}
\\[1.25ex]
\mathcal{V};\ \greymath{[b](id \act X^S) \ueq \id \act X^S} \qquad\text{(Failure!)}
\end{array}
\end{displaymath}

The algorithm fails as the precondition of rule \rulefont{I2}, $X^S \not\in \f{fV}([b](id \act X^S))$, the `occurs check', fails to hold.
By Theorem~\ref{thrm.algorithm.correctness} there is no solution to the unification problem.

\subsection{Preservation of solutions}

\begin{lemm} 
\label{lemm.preservation.of.solutions}
If $\mathcal V; Pr \simpto{} \mathcal V; Pr'$ by a non-instantiating rule (Definition~\ref{defn.unif})
then $\f{Sol}(Pr)=\f{Sol}(Pr')$.
\end{lemm}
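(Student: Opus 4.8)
The plan is to argue by case analysis on which non-instantiating rule realises $\mathcal V;Pr\simpto{}\mathcal V;Pr'$. Since by Definition~\ref{defn.unif.sol} a substitution solves a problem exactly when it solves each of its equalities, and every non-instantiating rule leaves the common tail $Pr$ untouched, it suffices in each case to show that the affected equality (or equalities) have the same set of solving substitutions before and after the rewrite; intersecting with the solutions of the untouched tail then gives $\f{Sol}(Pr)=\f{Sol}(Pr')$.

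For the two deletion rules this is immediate. In $\rulefont{{\ueq}a}$ the removed equality is $a\ueq a$; since $a\theta\equiv a$ for every $\theta$ (Definition~\ref{defn.subst.action}) and $a\aeq a$ by $\rulefont{{\aeq}aa}$, every $\theta$ solves it, so deleting it does not change the solution set. In $\rulefont{{\ueq}X}$ the removed equality is $\pi\act X^S\ueq\pi\act X^S$, which every $\theta$ solves by reflexivity of $\aeq$ (Proposition~\ref{prop.aeq.transitive}).

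For the three decomposition rules I would exploit that the substitution action is homomorphic on term structure (Definition~\ref{defn.subst.action}), so that $\theta$ solving the parent equality amounts to an $\aeq$-judgement between two terms with the same outermost constructor, which I then analyse by the matching $\aeq$ rule and its inversion. For $\rulefont{{\ueq}\tf f}$, $(\tf f(r_1,\ldots,r_n))\theta\equiv\tf f(r_1\theta,\ldots,r_n\theta)$ and likewise for $s$, so $\theta$ solves the parent if and only if $\tf f(r_1\theta,\ldots)\aeq\tf f(s_1\theta,\ldots)$, which by $\rulefont{{\aeq}\tf f}$ and inversion holds exactly when $r_i\theta\aeq s_i\theta$ for all $i$, i.e. when $\theta$ solves every child equality. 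For $\rulefont{{\ueq}[a]}$ the same reasoning gives $[a](r\theta)\aeq[a](s\theta)$ iff $r\theta\aeq s\theta$: the forward direction is $\rulefont{{\aeq}[a]}$, and since the two abstracted atoms coincide, by the permutative convention only $\rulefont{{\aeq}[a]}$ can have been applied, so inversion yields the converse. For $\rulefont{{\ueq}[b]}$, I would first rewrite $((b\ a)\act r)\theta\equiv(b\ a)\act(r\theta)$ using Lemma~\ref{lemm.sub.perm}, so that $\theta$ solving the child equality means $(b\ a)\act(r\theta)\aeq s\theta$; the side condition $b\not\in\f{fa}(r)$ together with Lemma~\ref{lemm.substitution} (which gives $\f{fa}(r\theta)\subseteq\f{fa}(r)$) supplies $b\not\in\f{fa}(r\theta)$, exactly the proviso needed to apply $\rulefont{{\aeq}[b]}$ in the forward direction, and recovered by inversion in the converse. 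As $a\neq b$ by the permutative convention, $[a](r\theta)\aeq[b](s\theta)$ can only arise from $\rulefont{{\aeq}[b]}$, so the converse is clean.

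The step I expect to be the main obstacle is the inversion (generation) argument for $\aeq$: I need that an $\aeq$-judgement between two terms with a given outermost constructor must have been derived by the matching rule. This is where the $\rulefont{{\aeq}[a]}$ versus $\rulefont{{\aeq}[b]}$ distinction really bites, and where the permutative convention (distinctness of $a$ and $b$) does essential work to guarantee that exactly one abstraction rule is applicable. Once this syntax-directedness of the $\aeq$ rules is in hand, every case collapses to the routine identities above, and assembling them across all five non-instantiating rules yields $\f{Sol}(Pr)=\f{Sol}(Pr')$.
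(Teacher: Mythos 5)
Your proposal is correct and follows essentially the same route as the paper's proof: case analysis on the rewrite rule acting on a single equality, using the homomorphic substitution action (Definition~\ref{defn.subst.action}), the matching $\aeq$ rules with inversion, and, for \rulefont{{\ueq}[b]}, Lemma~\ref{lemm.sub.perm} to commute permutation past substitution together with Lemma~\ref{lemm.substitution} to transfer the freshness side-condition to $r\theta$. The only difference is presentational: you spell out the syntax-directedness/inversion argument and the role of the permutative convention, which the paper compresses into ``by the rules in Definition~\ref{defn.aeq}''.
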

\begin{proof}
See Appendix~\ref{sect.omitted.proofs}.
We use Lemmas~\ref{lemm.substitution} and~\ref{lemm.sub.perm}, and Proposition~\ref{prop.aeq.transitive}.
\end{proof}

\begin{lemm}
\label{lemm.sigma.seq.Pr}
Suppose $\theta(X^S)\aeq \theta'(X^S)$ for all $X^S\in\f{fV}(Pr)$. 
Then $\theta\in\f{Sol}(Pr)$ if and only if $\theta'\in\f{Sol}(Pr)$.
\end{lemm}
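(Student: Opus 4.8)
The plan is to reduce everything to Lemma~\ref{lemm.eq.X.eq.always} together with the fact that $\aeq$ is an equivalence relation (Proposition~\ref{prop.aeq.transitive}). By symmetry it suffices to prove one direction, say that $\theta\in\f{Sol}(Pr)$ implies $\theta'\in\f{Sol}(Pr)$; the converse follows by swapping the roles of $\theta$ and $\theta'$, which is legitimate because the hypothesis $\theta(X^S)\aeq\theta'(X^S)$ is itself symmetric in $\theta$ and $\theta'$ (using symmetry of $\aeq$).

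So I would fix an arbitrary equality $r\ueq s\in Pr$ and assume $\theta\in\f{Sol}(Pr)$, i.e.\ $r\theta\aeq s\theta$ by Definition~\ref{defn.unif.sol}. The key observation is that $\f{fV}(r)\subseteq\f{fV}(Pr)$ and $\f{fV}(s)\subseteq\f{fV}(Pr)$ by Definition~\ref{defn.fV.Pr}. Hence the hypothesis $\theta(X^S)\aeq\theta'(X^S)$ holds in particular for every $X^S\in\f{fV}(r)$ and for every $X^S\in\f{fV}(s)$. Applying Lemma~\ref{lemm.eq.X.eq.always} twice then yields $r\theta\aeq r\theta'$ and $s\theta\aeq s\theta'$.

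It remains to chain these together. Using symmetry and transitivity of $\aeq$ (Proposition~\ref{prop.aeq.transitive}), from $r\theta'\aeq r\theta$, $r\theta\aeq s\theta$, and $s\theta\aeq s\theta'$ I conclude $r\theta'\aeq s\theta'$. Since $r\ueq s$ was an arbitrary element of $Pr$, this gives $\theta'\in\f{Sol}(Pr)$ by Definition~\ref{defn.unif.sol}, as required.

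I do not expect any genuine obstacle here: the content is entirely carried by Lemma~\ref{lemm.eq.X.eq.always}. The only points that need care are recording the inclusions $\f{fV}(r),\f{fV}(s)\subseteq\f{fV}(Pr)$ so that the lemma actually applies to each side of each equality, and invoking the equivalence-relation structure of $\aeq$ (rather than treating $\aeq$ as literal identity) when performing the symmetry/transitivity chaining and the direction-swap.
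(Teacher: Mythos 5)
Your proposal is correct and follows essentially the same route as the paper's proof: both unpack Definition~\ref{defn.unif.sol}, use the inclusions $\f{fV}(r),\f{fV}(s)\subseteq\f{fV}(Pr)$ from Definition~\ref{defn.fV.Pr} to apply Lemma~\ref{lemm.eq.X.eq.always} to each side of each equality, and then chain with symmetry and transitivity of $\aeq$ (Proposition~\ref{prop.aeq.transitive}). The only difference is that you spell out the chaining and the symmetry reduction explicitly, which the paper leaves as ``easy''.
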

\begin{proof}
Unpacking Definition~\ref{defn.unif.sol} it suffices to show that $r\theta\aeq s\theta$ if and only if $r\theta'\aeq s\theta'$, for every $r\ueq s\in Pr$.
This is easy using Lemma~\ref{lemm.eq.X.eq.always} and the fact by construction (Definition~\ref{defn.fV.Pr}) that $\f{fV}(r)\subseteq\f{fV}(Pr)$ and $\f{fV}(s)\subseteq\f{fV}(Pr)$.
\end{proof}

\begin{defn}
\label{defn.sigma-X}
Write $\theta{-}X^S$ for the substitution such that
\begin{frameqn}
\begin{array}{r@{\ }l@{\quad}l}
(\theta{-}X^S)(X^S)\equiv&\id\act X^S
&\text{and}\quad
\\
(\theta{-}X^S)(Y^T)\equiv&\theta(Y^T)&\text{for all other }Y^T.
\end{array}
\end{frameqn}
\end{defn}

In the right circumstances, a substitution $\theta$ can be factored as `a part of $\theta$ that does not touch $X^S$' and `a single substitution for $X^S$':
\begin{thrm}
\label{thrm.single.out.X}
Suppose $X^S\theta\aeq s\theta$ and $X^S\not\in\f{fV}(s)$.
Then
$$
\begin{array}{r@{\ }l}
X^S\theta \aeq& X^S([X^S\sm s] \fcomp (\theta{-}X^S))
\quad\text{and}
\\
Y^T\theta \aeq& Y^T([X^S\sm s] \fcomp (\theta{-}X^S)) .
\end{array}
$$
\end{thrm}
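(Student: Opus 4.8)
The plan is to unfold both composites on the right-hand sides using Definition~\ref{defn.sigma.com}, and then reduce the one non-trivial claim to an application of Lemma~\ref{lemm.eq.X.eq.always}, with the occurs-check hypothesis $X^S\not\in\f{fV}(s)$ doing the real work. Writing $\sigma = [X^S\sm s]\fcomp(\theta{-}X^S)$, I would first compute $X^S\sigma$. By Definition~\ref{defn.subst.action} together with $\id\act r\equiv r$ (Lemma~\ref{lemm.permutation.comp}) we have $X^S\sigma\equiv\sigma(X^S)$, and by Definition~\ref{defn.sigma.com} and the definition of $[X^S\sm s]$, $\sigma(X^S)\equiv([X^S\sm s](X^S))(\theta{-}X^S)\equiv s(\theta{-}X^S)$. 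So the first equation reduces to showing $X^S\theta\aeq s(\theta{-}X^S)$.

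Since $X^S\theta\equiv\theta(X^S)\aeq s\theta$ by hypothesis, transitivity (Proposition~\ref{prop.aeq.transitive}) reduces this further to $s\theta\aeq s(\theta{-}X^S)$. This is the only step that is not pure unfolding, and it is where I expect the sole genuine content to lie. By Definition~\ref{defn.sigma-X}, $\theta$ and $\theta{-}X^S$ agree syntactically (hence up to $\aeq$, by reflexivity) on every unknown other than $X^S$; and because $X^S\not\in\f{fV}(s)$, every $Z^U\in\f{fV}(s)$ is distinct from $X^S$, so $\theta(Z^U)\aeq(\theta{-}X^S)(Z^U)$ holds for all $Z^U\in\f{fV}(s)$. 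Lemma~\ref{lemm.eq.X.eq.always} then yields $s\theta\aeq s(\theta{-}X^S)$, completing the first equation.

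For the second equation I fix an arbitrary $Y^T$ (so $Y^T\neq X^S$ by the permutative convention on unknowns) and compute as before: $Y^T\sigma\equiv([X^S\sm s](Y^T))(\theta{-}X^S)\equiv(\id\act Y^T)(\theta{-}X^S)\equiv(\theta{-}X^S)(Y^T)\equiv\theta(Y^T)\equiv Y^T\theta$, where the penultimate identity uses Definition~\ref{defn.sigma-X} and $Y^T\neq X^S$. Thus $Y^T\theta\equiv Y^T\sigma$, which gives $Y^T\theta\aeq Y^T\sigma$ by reflexivity; so this half is in fact a syntactic identity rather than a mere $\alpha$-equivalence.

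One point worth flagging is well-definedness. The very notation $[X^S\sm s]$ presupposes $\f{fa}(s)\subseteq S$ (Definition~\ref{defn.subst}), and this is \emph{not} forced by the two stated hypotheses alone: one can have $X^S\theta\aeq s\theta$ with $X^S\not\in\f{fV}(s)$ yet $\f{fa}(s)\not\subseteq S$ (for instance when a free atom of $s$ outside $S$ is absorbed by $\theta$). However, $\f{fa}(s)\subseteq S$ does hold wherever the theorem is applied; in particular the side-conditions of rules \rulefont{I1} and \rulefont{I2} (via Lemma~\ref{lemm.pi.ftma}) guarantee it. I would therefore read $\f{fa}(s)\subseteq S$ as implicit in writing $[X^S\sm s]$, after which the argument above is an otherwise mechanical unfolding whose only real step is the appeal to Lemma~\ref{lemm.eq.X.eq.always}.
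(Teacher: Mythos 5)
Your proof is correct and follows essentially the same route as the paper's: unfold the composite via Definition~\ref{defn.sigma.com}, dispatch the one substantive step $s\theta\aeq s(\theta{-}X^S)$ by Lemma~\ref{lemm.eq.X.eq.always} using the occurs-check $X^S\not\in\f{fV}(s)$, and note that the $Y^T$ case is a purely syntactic identity. Your side remark that writing $[X^S\sm s]$ tacitly presupposes $\f{fa}(s)\subseteq S$ is a fair observation about the theorem statement itself rather than a flaw in your argument; the paper leaves this implicit, and it is indeed guaranteed at the points of use via the side-conditions of \rulefont{I1} and \rulefont{I2}.
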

\begin{proof}
We reason as follows:
\begin{calcenv}
X^S([X^S\sm s] \fcomp (\theta{-}X^S)) & \equiv & s(\theta{-}X^S) & \text{Definition~\ref{defn.subst.action}} \\
                                      &\equiv & s\theta & X^S\not\in\f{fV}(s),\ \text{Lemma~\ref{lemm.eq.X.eq.always}} \\
                                      &\aeq & X^S\theta & \text{Assumption} \\\\
Y^T([X^S\sm s] \fcomp (\theta{-}X^S)) & \equiv & Y^T(\theta{-}X^S) & \text{Definition~\ref{defn.sigma.com}} \\
                                      &\equiv & Y^T\theta & \text{Definition~\ref{defn.sigma-X}}
\end{calcenv}
\end{proof}

\subsection{Simplification rewrites calculate principal solutions} 

\begin{defn}
\label{defn.instantiation.ordering}
Write $\theta_1\leq\theta_2$ when there exists some $\theta'$ such that $X^S\theta_2\aeq X^S(\theta_1\circ\theta')$ always.
Call $\leq$ the \deffont{instantiation ordering}. 
\end{defn}

\begin{defn}
\label{def.principal.solution}
A \deffont{principal} (or \deffont{most general}) solution to a problem $Pr$ is a solution $\theta\in\f{Sol}(Pr)$ such that $\theta\leq\theta'$ for all other $\theta'\in\f{Sol}(Pr)$.
\end{defn}

Our main results are Theorems~\ref{thrm.sol.is.solution} --- the unification algorithm from Definition~\ref{defn.unification.algorithm} calculates a solution --- and~\ref{thrm.sol.is.principal} --- the solution it calculates, is principal.

\begin{lemm}
\label{lemm.simp.fv.pres}
If $\f{fV}(Pr) \subseteq \mathcal V$ and $\mathcal V; Pr \simpto{} \mathcal V'; Pr'$ using a non-instantiating rule, then $\f{fV}(Pr') \subseteq \mathcal V$.
\end{lemm}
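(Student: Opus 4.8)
The plan is to proceed by case analysis on which of the five non-instantiating rules is applied (Definition~\ref{defn.unif}), observing that none of them introduces a fresh unknown: each rule either \emph{discards} equalities (and so can only shrink the set of free unknowns) or \emph{replaces} an equality by equalities whose free unknowns are already accounted for. Throughout I write $P$ for the whole problem on the left of the rewrite (what the lemma calls $Pr$) and $P'$ for the result ($Pr'$). Since the rules act on a distinguished equality together with a context, it suffices in each case to check that $\f{fV}(P') \subseteq \f{fV}(P)$; then $\f{fV}(P') \subseteq \mathcal V$ follows from the hypothesis $\f{fV}(P) \subseteq \mathcal V$. Note also that every non-instantiating rule leaves the unknown-set untouched, so $\mathcal V' = \mathcal V$ and there is nothing further to verify about $\mathcal V'$.

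The cases \rulefont{{\ueq}a} and \rulefont{{\ueq}X} simply delete an equality, so $\f{fV}(P') \subseteq \f{fV}(P)$ is immediate. For \rulefont{{\ueq}\tf f} and \rulefont{{\ueq}[a]} I would appeal to the clauses of Definition~\ref{defn.fV} for term-formers and abstraction, namely $\f{fV}(\tf f(r_1,\ldots,r_n)) = \f{fV}(r_1)\cup\cdots\cup\f{fV}(r_n)$ and $\f{fV}([a]r)=\f{fV}(r)$; these show that the free unknowns of the rewritten equalities coincide exactly with those of the equality they replace, giving $\f{fV}(P') = \f{fV}(P)$.

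The only case needing a small auxiliary observation is \rulefont{{\ueq}[b]}, which rewrites $[a]r \ueq [b]s$ to $(b\ a)\act r \ueq s$. Here I would first record that the permutation action preserves free unknowns, i.e.\ $\f{fV}(\pi\act r)=\f{fV}(r)$ for all $\pi$ and $r$. This is an immediate induction on $r$, the only interesting clause being $\pi\act(\pi'\act X^S)\equiv(\pi\fcomp\pi')\act X^S$ (Definition~\ref{defn.perm}), both sides of which have free unknowns $\{X^S\}$ by Definition~\ref{defn.fV}. With this in hand, $\f{fV}((b\ a)\act r)=\f{fV}(r)=\f{fV}([a]r)$ and $\f{fV}(s)=\f{fV}([b]s)$, so again $\f{fV}(P')=\f{fV}(P)$.

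I expect no genuine obstacle: the entire content of the lemma is that the non-instantiating rules never manufacture unknowns, and the only point that is not literally a single clause of Definition~\ref{defn.fV} is the permutation-invariance of $\f{fV}$ used in the \rulefont{{\ueq}[b]} case, which is itself routine.
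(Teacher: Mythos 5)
Your proposal is correct and follows essentially the same route as the paper: a case analysis on the non-instantiating rules, using the clauses of Definition~\ref{defn.fV} (and Definition~\ref{defn.fV.Pr}) to see that each rule either deletes an equality or replaces it by equalities with the same free unknowns. The only difference is cosmetic: you spell out the permutation-invariance $\f{fV}(\pi\act r)=\f{fV}(r)$ as an explicit auxiliary induction for the \rulefont{{\ueq}[b]} case, which the paper treats as immediate from Definition~\ref{defn.fV}.
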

\begin{proof}
See Appendix~\ref{sect.omitted.proofs}.
\end{proof}

\begin{lemm}
\label{lemm.subst.fV}
$\f{fV}(r[X^S \ssm s]) \subseteq \f{fV}(r) \cup \f{fV}(s)$.
\end{lemm}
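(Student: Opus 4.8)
The plan is to argue by induction on the structure of the term $r$, matching the clauses of Definition~\ref{defn.fV} for $\f{fV}$ against the clauses of Definition~\ref{defn.subst.action} for the substitution action. In each case the inductive hypothesis supplies the bound for the immediate subterms, and because both $\f{fV}$ and the substitution action recurse through the same structure, the containment propagates directly.

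First the routine cases. When $r\equiv a$ we compute both sides: $\f{fV}(a[X^S\ssm s])=\f{fV}(a)=\varnothing$, trivially contained in $\f{fV}(a)\cup\f{fV}(s)$. When $r\equiv\tf f(r_1,\ldots,r_n)$, the substitution distributes over the arguments while $\f{fV}$ takes a union, so the left-hand side is $\bigcup_i\f{fV}(r_i[X^S\ssm s])$; applying the inductive hypothesis to each $r_i$ and collecting the unions gives containment in $\bigl(\bigcup_i\f{fV}(r_i)\bigr)\cup\f{fV}(s)$, as required. The abstraction case $r\equiv[a]r'$ is immediate since both $\f{fV}$ and the substitution action leave the abstracted atom untouched, so $\f{fV}(([a]r')[X^S\ssm s])=\f{fV}(r'[X^S\ssm s])$ and the inductive hypothesis finishes the case.

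The only case with content is that of an unknown under a permutation, $r\equiv\pi\act Y^T$, where I would split on whether $Y^T$ is the unknown being substituted. If $Y^T\not\equiv X^S$, then $[X^S\ssm s](Y^T)\equiv\id\act Y^T$, so $(\pi\act Y^T)[X^S\ssm s]\equiv\pi\act Y^T$ by Definitions~\ref{defn.subst.action} and~\ref{defn.subst}; hence $\f{fV}$ of both sides is $\{Y^T\}$ and containment is immediate. If $Y^T\equiv X^S$, then $(\pi\act X^S)[X^S\ssm s]\equiv\pi\act s$, and I must show $\f{fV}(\pi\act s)\subseteq\{X^S\}\cup\f{fV}(s)$.

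The main (and essentially only) obstacle is the auxiliary fact that applying a permutation does not change the free unknowns of a term, i.e. $\f{fV}(\pi\act s)=\f{fV}(s)$. This follows by a short separate induction on $s$: the permutation action of Definition~\ref{defn.perm} merely rewrites atoms and composes with the suspended permutations on unknowns, via $\pi\act(\pi'\act Y^T)\equiv(\pi\fcomp\pi')\act Y^T$, so it never introduces or deletes an unknown and the $\f{fV}$-clauses are preserved verbatim. Granting this, the case $Y^T\equiv X^S$ closes, since $\f{fV}(\pi\act s)=\f{fV}(s)\subseteq\{X^S\}\cup\f{fV}(s)=\f{fV}(\pi\act X^S)\cup\f{fV}(s)$, completing the induction.
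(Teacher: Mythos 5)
Your proof is correct and follows essentially the same route as the paper, which simply states ``By induction on $r$''; your case analysis (including the auxiliary observation that $\f{fV}(\pi\act s)=\f{fV}(s)$, needed in the $Y^T\equiv X^S$ case) is exactly the content that induction unfolds to.
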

\begin{proof}
By induction on $r$.
\end{proof}

\begin{lemm}
\label{lemm.inst.rule.fv.pres}
If $\f{fV}(Pr) \subseteq \mathcal V$ and $\mathcal V; Pr \simpto{\theta} \mathcal V'; Pr'\theta$ using an instantiating rule, then $\f{fV}(Pr'\theta) \subseteq \mathcal V'$.
\end{lemm}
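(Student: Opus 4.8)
The plan is to do a case analysis on which instantiating rule of Definition~\ref{defn.unif} was applied, reducing the claim in each case to the behaviour of $\f{fV}$ under substitution. First I would record two bookkeeping facts. One: Lemma~\ref{lemm.subst.fV} extends from terms to problems, since $\f{fV}(Pr)=\bigcup\{\f{fV}(r)\cup\f{fV}(s)\mid r\ueq s\in Pr\}$ by Definition~\ref{defn.fV.Pr}, so applying Lemma~\ref{lemm.subst.fV} componentwise and taking unions gives $\f{fV}(Pr[X^S\sm t])\subseteq\f{fV}(Pr)\cup\f{fV}(t)$. Two: from Definition~\ref{defn.fV}, $\f{fV}(\pi\act X^S)=\{X^S\}$ does not depend on $\pi$, whence $\f{fV}(\pi\act r)=\f{fV}(r)$ for every $r$ by an immediate induction on $r$.

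For rule \rulefont{I1} we have $\theta=[X^S\sm\pi^\mone\act s]$, $\mathcal V'=\mathcal V$, and $Pr'$ is the starting problem $Pr$ with the equation $\pi\act X^S\ueq s$ deleted. Since $\f{fV}(Pr)=\{X^S\}\cup\f{fV}(s)\cup\f{fV}(Pr')\subseteq\mathcal V$, we get both $\f{fV}(Pr')\subseteq\mathcal V$ and $\f{fV}(s)\subseteq\mathcal V$, and $\f{fV}(\pi^\mone\act s)=\f{fV}(s)$ by the remark above. The extended Lemma~\ref{lemm.subst.fV} then yields $\f{fV}(Pr'\theta)\subseteq\f{fV}(Pr')\cup\f{fV}(\pi^\mone\act s)\subseteq\mathcal V=\mathcal V'$. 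Rule \rulefont{I2} is symmetric, exchanging the two sides of the equation.

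Rule \rulefont{I3} is the case to watch, since there $\theta=\rho\smallss{\mathcal V}{Pr\tinys\sqsubseteq}$ substitutes for every unknown of $\mathcal V$ simultaneously rather than for a single unknown, so Lemma~\ref{lemm.subst.fV} does not apply verbatim. I would instead use the general identity $\f{fV}(r\theta)=\bigcup\{\f{fV}(\theta(X^S))\mid X^S\in\f{fV}(r)\}$, proved by the same routine induction on $r$ as Lemma~\ref{lemm.subst.fV} (the abstraction case uses $\f{fV}([a]r)=\f{fV}(r)$, and the unknown case uses $\f{fV}((\pi\act X^S)\theta)=\f{fV}(\pi\act\theta(X^S))=\f{fV}(\theta(X^S))$). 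By Definition~\ref{defn.rho}, $\rho\smallss{\mathcal V}{Pr\tinys\sqsubseteq}$ sends each $X^S\in\mathcal V$ to $\id\act{X'}^{S'}$ with ${X'}^{S'}\in{\mathcal V'}\smallss{\mathcal V}{Pr\tinys{\sqsubseteq}}$ and fixes every unknown outside $\mathcal V$. Here $Pr'=Pr$, and $\f{fV}(Pr)\subseteq\mathcal V$, so every free unknown of $Pr$ lies in $\mathcal V$ and is sent into ${\mathcal V'}\smallss{\mathcal V}{Pr\tinys{\sqsubseteq}}$; hence by the identity $\f{fV}(Pr'\theta)\subseteq{\mathcal V'}\smallss{\mathcal V}{Pr\tinys{\sqsubseteq}}\subseteq\mathcal V\cup{\mathcal V'}\smallss{\mathcal V}{Pr\tinys{\sqsubseteq}}=\mathcal V'$.

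The only genuine subtlety is \rulefont{I3}: its substitution is not single-variable, so the right tool is the general substitution--$\f{fV}$ identity rather than Lemma~\ref{lemm.subst.fV} directly. Once that identity is available, the conclusion is immediate from the defining clauses of $\rho\smallss{\mathcal V}{Pr\tinys\sqsubseteq}$ and the definition of its target set ${\mathcal V'}\smallss{\mathcal V}{Pr\tinys{\sqsubseteq}}$, while \rulefont{I1} and \rulefont{I2} follow routinely from the problem-level form of Lemma~\ref{lemm.subst.fV} together with the invariance of $\f{fV}$ under permutation.
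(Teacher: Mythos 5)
Your proof is correct, and for \rulefont{I1} and \rulefont{I2} it coincides with the paper's: both reduce those cases to Definition~\ref{defn.fV.Pr} plus Lemma~\ref{lemm.subst.fV}, together with the observation (implicit in the paper) that $\f{fV}(\pi^\mone\act s)=\f{fV}(s)$. The only divergence is \rulefont{I3}. The paper dispatches that case with the one-line remark that it is ``a corollary of Lemma~\ref{lemm.supp.reduct.fv.reduce}'' --- the lemma that support-inclusion simplification does not increase free unknowns, which is what keeps the unknowns mentioned in $\f{nf}(Pr\tinys\sqsubseteq)$, and hence the data from which $\rhoinc$ and $\Vinc$ are built, inside $\mathcal V$ --- leaving the reader to reconstruct the actual computation. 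You instead argue directly: you isolate the general identity $\f{fV}(r\theta)=\bigcup\{\f{fV}(\theta(X^S))\mid X^S\in\f{fV}(r)\}$ (correctly noting that Lemma~\ref{lemm.subst.fV}, stated for single-unknown substitutions, does not apply verbatim to the simultaneous substitution $\rhoinc$) and then read the conclusion off Definition~\ref{defn.rho}: every free unknown of $Pr$ lies in $\mathcal V$, so $\rhoinc$ sends it into $\Vinc\subseteq\mathcal V\cup\Vinc=\mathcal V'$. Your route is more self-contained and makes explicit exactly the point the paper leaves buried; what the paper's citation buys is brevity and reuse of an already-proved lemma. Your identification of \rulefont{I3} as the one case needing a different tool is exactly right, and your handling of \rulefont{I2} by symmetry matches the paper's (tacit) treatment.
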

\begin{proof}
There are two cases:
\begin{squishitem}
\item
The case \rulefont{I1}.\quad
Suppose $\f{fV}(\pi \act X^S \ueq s, Pr') \subseteq \mathcal V$ and $\mathcal V; \pi \act X^S \ueq s, Pr' \simpto{[X^S \ssm \pi^\mone \act s]} \mathcal V'; Pr'[X^S \ssm \pi^\mone \act s]$ using \rulefont{I1}.
Using Definition~\ref{defn.fV.Pr} and Lemma~\ref{lemm.subst.fV}, we have $\f{fV}(Pr'[X^S \ssm \pi^\mone \act s]) \subseteq \f{fV}(Pr') \cup \f{fV}(\pi^\mone \act s)$.
The result follows.
\item
The case \rulefont{I3}.\quad
A corollary of Lemma~\ref{lemm.supp.reduct.fv.reduce}.
\end{squishitem}
\end{proof}

\begin{lemm}
\label{lemm.rest.fcomp.move}
If $X^S \in \mathcal V$ then $([X^S \ssm s] \fcomp \theta)|_{\mathcal V} = [X^S \ssm s] \fcomp (\theta|_{\mathcal V})$
\end{lemm}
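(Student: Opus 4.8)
The plan is to prove equality of the two substitutions pointwise: a substitution is a function on unknowns, so it suffices to show that both sides send an arbitrary unknown $Y^T$ to $\aeq$-equal terms. I would fix $Y^T$ and unfold both composites using Definition~\ref{defn.sigma.com}, the definition of $[X^S\ssm s]$ (Definition~\ref{defn.subst}), and the definition of $\theta|_{\mathcal V}$ (Definition~\ref{defn.restricted}), splitting on the position of $Y^T$ relative to $\mathcal V$ and to $X^S$.

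The easy cases are those with $Y^T\neq X^S$. If $Y^T\notin\mathcal V$, then the left-hand side reduces by restriction to $\id\act Y^T$, while on the right $[X^S\ssm s](Y^T)\equiv\id\act Y^T$ and hence $(\id\act Y^T)(\theta|_{\mathcal V})\equiv(\theta|_{\mathcal V})(Y^T)\equiv\id\act Y^T$ as well. If instead $Y^T\in\mathcal V$ (still with $Y^T\neq X^S$), both sides reduce to $\theta(Y^T)$: on the left because $X^S\in\mathcal V$ makes the outer restriction transparent and $[X^S\ssm s](Y^T)\equiv\id\act Y^T$, and on the right because $(\theta|_{\mathcal V})(Y^T)\equiv\theta(Y^T)$. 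In both sub-cases the two sides coincide already for $\equiv$, so $\aeq$ is immediate.

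The only interesting case, and the main obstacle, is $Y^T=X^S$ (which lies in $\mathcal V$ by hypothesis). Here the left-hand side reduces to $s\theta$ and the right-hand side to $s(\theta|_{\mathcal V})$, so I must establish $s\theta\aeq s(\theta|_{\mathcal V})$. Since $\theta|_{\mathcal V}$ agrees with $\theta$ on every unknown of $\mathcal V$ and differs from it only off $\mathcal V$, this follows from Lemma~\ref{lemm.eq.X.eq.always} as soon as every free unknown of $s$ already lies in $\mathcal V$, i.e. $\f{fV}(s)\subseteq\mathcal V$. This containment is not among the displayed hypotheses, and indeed the identity fails without it (take $s\equiv\id\act Z^T$ with $Z^T\notin\mathcal V$ and $\theta(Z^T)\not\aeq\id\act Z^T$); it does, however, hold at every point where this lemma is invoked inside the unification algorithm, where $s$ is a subterm of the current problem and the invariant $\f{fV}(Pr)\subseteq\mathcal V$ is maintained (Lemmas~\ref{lemm.simp.fv.pres} and~\ref{lemm.inst.rule.fv.pres}). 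I would therefore carry $\f{fV}(s)\subseteq\mathcal V$ as the side condition under which the case analysis goes through, apply Lemma~\ref{lemm.eq.X.eq.always} in this final case, and conclude by collecting the three cases.
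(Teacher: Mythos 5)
Your case split (the distinguished unknown $X^S$; other unknowns in $\mathcal V$; unknowns outside $\mathcal V$) is exactly the paper's own proof structure, and your two easy cases coincide with the paper's. The interesting point is the case $Y^T\equiv X^S$: there the paper's proof ends with
$([X^S \ssm s] \fcomp \theta)|_{\mathcal V}(X^S) \equiv s\theta \equiv s(\theta|_{\mathcal V})$,
justifying the final step only by ``Definition~\ref{defn.restricted}'' --- which, as you observe, is not sufficient in general: $\theta$ and $\theta|_{\mathcal V}$ are only guaranteed to agree on $\mathcal V$, so the step fails if $s$ has a free unknown outside $\mathcal V$ that $\theta$ moves. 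Your counterexample is genuine (take $s\equiv\id\act Z^T$ with $Z^T\not\in\mathcal V$ and $T\subseteq S$, so that $[X^S\ssm s]$ is a legitimate substitution, and $\theta(Z^T)\equiv a$ for some $a\in T$): the left-hand side at $X^S$ gives $\theta(Z^T)$ while the right-hand side gives $\id\act Z^T$. So the lemma as displayed silently assumes $\f{fV}(s)\subseteq\mathcal V$, and your repair --- carrying that hypothesis explicitly and then discharging the critical case --- is the right one. The paper is in fact aware of the condition at the point of use: in Theorem~\ref{thrm.sol.is.principal}, case \rulefont{I1}, the appeal to this lemma is immediately preceded by ``by assumption $\f{fV}(s)\subseteq\mathcal V$ and $X^S\in\mathcal V$'', and the invariant $\f{fV}(Pr)\subseteq\mathcal V$ is maintained by exactly the lemmas you cite (Lemmas~\ref{lemm.simp.fv.pres} and~\ref{lemm.inst.rule.fv.pres}); it has simply not been propagated into the lemma statement or its proof. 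One small improvement to your final case: under $\f{fV}(s)\subseteq\mathcal V$ the substitutions $\theta$ and $\theta|_{\mathcal V}$ agree \emph{syntactically} on every unknown of $s$, so a direct induction on $s$ yields $s\theta\equiv s(\theta|_{\mathcal V})$ rather than the weaker $s\theta\aeq s(\theta|_{\mathcal V})$ you obtain from Lemma~\ref{lemm.eq.X.eq.always}; this recovers the lemma as an identity of substitutions, which is how it is actually used in Theorems~\ref{thrm.sol.is.solution} and~\ref{thrm.sol.is.principal}.
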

\begin{proof}
We consider cases:
\begin{squishitem}
\item
The case $X^S$ with $X^S \in \mathcal V$.\quad
We reason as follows:
\begin{calcenv}
([X^S \ssm s] \fcomp \theta)|_\mathcal V(X^S) & \equiv & ([X^S \ssm s] \fcomp \theta)(X^S) & \text{Definition~\ref{defn.restricted}, $X^S \in \mathcal V$} \\
                                              & \equiv & s\theta & \text{Definition~\ref{defn.sigma.com}} \\
                                              & \equiv & s\theta|_\mathcal V & \text{Definition~\ref{defn.restricted}}
\end{calcenv}
\item
The case $Y^T$ with $Y^T \in \mathcal V$.\quad
We reason as follows:
\begin{calcenv}
([X^S \ssm s] \fcomp \theta)|_\mathcal V(Y^T) & \equiv & ([X^S \ssm s] \fcomp \theta)(Y^T) & \text{Definition~\ref{defn.restricted}, $Y^T \in \mathcal V$} \\
                                              & \equiv & \theta(Y^T) & \text{Definition~\ref{defn.sigma.com}} \\
                                              & \equiv & \theta|_{\mathcal V}(Y^T) & \text{Definition~\ref{defn.restricted}}
\end{calcenv}
\item
The case $Y^T$ with $Y^T \not\in \mathcal V$.\quad
Since $([X^S \ssm s] \fcomp \theta)|_{\mathcal V}(Y^T) \equiv \id \act Y^T$ and $\theta|_{\mathcal V} \equiv \id \act Y^T$.
\end{squishitem}
\end{proof}

Recall that $\theta|_{\mathcal V}$ is defined in Definition~\ref{defn.restricted}:
\begin{thrm}
\label{thrm.sol.is.solution}
If $\f{fV}(Pr)\subseteq\mathcal V$ then $\mathcal V;Pr \simptostar{\theta} \mathcal V';\varnothing$ implies $\theta|_{\mathcal V} \in \f{Sol}(Pr)$.
\end{thrm}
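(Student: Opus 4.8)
The plan is to induct on the length of the reduction $\mathcal V; Pr \simptostar{\theta} \mathcal V'; \varnothing$, peeling off its first step. Write $\mathcal V; Pr \simpto{\sigma} \mathcal V_1; Pr_1 \simptostar{\theta'} \mathcal V'; \varnothing$, so that $\theta = \sigma \fcomp \theta'$, where $\sigma \equiv \id$ exactly when the first rewrite is non-instantiating (per the labelling convention of Definition~\ref{defn.unification.algorithm}). The base case is the empty reduction: then $Pr = \varnothing$, $\theta = \id$, and every substitution solves $\varnothing$. For the step, Lemmas~\ref{lemm.simp.fv.pres} and~\ref{lemm.inst.rule.fv.pres} preserve the invariant $\f{fV}(Pr_1) \subseteq \mathcal V_1$, so the inductive hypothesis applies to the shorter reduction and gives $\theta'|_{\mathcal V_1} \in \f{Sol}(Pr_1)$.

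The heart of the argument is an instantiating first step (\rulefont{I1}, \rulefont{I2}, or \rulefont{I3}), handled by Lemma~\ref{lemm.Sol.circ}. Since $\theta'|_{\mathcal V_1}$ and $\theta'$ agree on $\mathcal V_1 \supseteq \f{fV}(Pr_1)$, Lemma~\ref{lemm.sigma.seq.Pr} upgrades the hypothesis to $\theta' \in \f{Sol}(Pr_1)$. I would then check $\f{Sol}(Pr_1) = \f{Sol}(Pr\sigma)$: for \rulefont{I3} this holds on the nose, since $Pr_1 \equiv Pr\sigma$; for \rulefont{I1}/\rulefont{I2} the rule deletes the equation $\pi\act X^S \ueq s$ and applies $\sigma \equiv [X^S \ssm \pi^\mone\act s]$ to the remainder, and the image of the deleted equation under $\sigma$ is $s \ueq s\sigma$ with $s\sigma \aeq s$ (Lemma~\ref{lemm.eq.X.eq.always}, as $X^S \notin \f{fV}(s)$) and $(\pi\act X^S)\sigma \equiv \pi\act(\pi^\mone\act s) \equiv s$, hence trivially solvable; the side-condition $\f{fa}(s)\subseteq\pi\act S$ is precisely what makes $\sigma$ a legitimate substitution, by Lemmas~\ref{lemm.permutation.comp} and~\ref{lemm.pi.ftma}. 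Now Lemma~\ref{lemm.Sol.circ} yields $\sigma\fcomp\theta' \in \f{Sol}(Pr)$, that is $\theta \in \f{Sol}(Pr)$. Finally, as $\f{fV}(Pr)\subseteq\mathcal V$ and $\theta|_{\mathcal V}$ agrees with $\theta$ on $\mathcal V$, a second use of Lemma~\ref{lemm.sigma.seq.Pr} descends to $\theta|_{\mathcal V} \in \f{Sol}(Pr)$.

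A non-instantiating first step is easier: there $\sigma \equiv \id$ and $\mathcal V_1 = \mathcal V$, so $\theta = \theta'$, and Lemma~\ref{lemm.preservation.of.solutions} gives $\f{Sol}(Pr) = \f{Sol}(Pr_1)$. Combining the inductive hypothesis $\theta'|_{\mathcal V} \in \f{Sol}(Pr_1)$ with the same restriction argument (Lemma~\ref{lemm.sigma.seq.Pr}) gives $\theta|_{\mathcal V} \in \f{Sol}(Pr)$ at once.

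I expect the main obstacle to be bookkeeping rather than conceptual. The delicate points are: threading the invariant $\f{fV}\subseteq\mathcal V$ through the \emph{enlarging} set $\mathcal V_1$ produced by \rulefont{I3}; ensuring that every move between a substitution and its restriction $(\cdot)|_{\mathcal V}$ is licensed by Lemma~\ref{lemm.sigma.seq.Pr}, which only inspects the action on the free unknowns of the problem; and the one genuinely fiddly verification, namely that the equation eliminated by \rulefont{I1}/\rulefont{I2} collapses to a triviality under $\sigma$, so that $\f{Sol}(Pr\sigma) = \f{Sol}(Pr_1)$.
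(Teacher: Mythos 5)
Your proof is correct and takes essentially the same route as the paper's: induction on the length of the rewrite path, the same split into the non-instantiating, \rulefont{I1}/\rulefont{I2}, and \rulefont{I3} cases, with Lemma~\ref{lemm.Sol.circ} carrying the instantiating cases and Lemmas~\ref{lemm.simp.fv.pres} and~\ref{lemm.inst.rule.fv.pres} maintaining the invariant $\f{fV}(Pr_1)\subseteq\mathcal V_1$. The only divergences are in bookkeeping: where the paper commutes restriction past composition via Lemma~\ref{lemm.rest.fcomp.move}, you instead drop and reinstate the restriction with two uses of Lemma~\ref{lemm.sigma.seq.Pr}, and you explicitly verify that the equation deleted by \rulefont{I1}/\rulefont{I2} becomes trivial under $\sigma$ (so that $\f{Sol}(Pr\sigma)=\f{Sol}(Pr_1)$) --- a point the paper's proof silently absorbs by writing the reduct as $Pr\chi$.
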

\begin{proof}
By induction on the length of the path in $\simptostar{\theta}$.
\begin{squishitem}
\item
\emph{Length $0$.}\quad
Then $Pr=\varnothing$ and $\theta \equiv \id$. 
The result follows.
\item
\emph{Length $k+1$.}\quad
There are three cases: 
\begin{squishitem}
\item
\emph{The non-instantiating case.}\quad
Suppose $\mathcal V;Pr \simpto{} \mathcal V;Pr'' \simptostar{\theta} \mathcal V';\varnothing$.
Using Lemma~\ref{lemm.simp.fv.pres}, $\f{fV}(Pr'')\subseteq\mathcal V$ and $\theta\in\f{Sol}(Pr'')$ by inductive hypothesis.
Using Lemma~\ref{lemm.preservation.of.solutions}, $\theta\in\f{Sol}(Pr)$, and the result follows.
\item
\emph{The case of \rulefont{I1} or \rulefont{I2}.}\quad
Suppose $\mathcal V;Pr \simpto{\chi} \mathcal V; Pr\chi \simptostar{\theta'} \mathcal V';\varnothing$.
Using Lemma~\ref{lemm.inst.rule.fv.pres}, $\f{fV}(Pr\chi) \subseteq \mathcal V$.
By inductive hypothesis $\theta'|_{\mathcal V} \in \f{Sol}(Pr\chi)$.
By Lemma~\ref{lemm.rest.fcomp.move}, $(\chi \fcomp \theta')|_{\mathcal V} = \chi \fcomp (\theta'|_{\mathcal V})$.
By Lemma~\ref{lemm.Sol.circ}, $(\chi \fcomp \theta')|_{\mathcal V}\in\f{Sol}(Pr)$.
\item
\emph{The case of \rulefont{I3}.}\quad
Suppose $\mathcal V;Pr \simpto{\rho} \mathcal V';Pr\rho \simptostar{\theta'} \mathcal V'';\varnothing$. Using Lemma~\ref{lemm.inst.rule.fv.pres}, $\f{fV}(Pr\rho)\subseteq\mathcal V'$.
By inductive hypothesis $\theta'|_{\mathcal V'} \in \f{Sol}(Pr\rho)$.
By Lemma~\ref{lemm.Sol.circ}, $\rho \fcomp (\theta'|_{\mathcal V'}) \in \f{Sol}(Pr)$.
By Lemma~\ref{lemm.rest.fcomp.move}, $\rho \fcomp (\theta'|_{\mathcal V'})=(\rho \fcomp \theta')|_{\mathcal V'}$.
By Lemma~\ref{lemm.sigma.seq.Pr},\ $(\rho \fcomp \theta')|_{\mathcal V}\in\f{Sol}(Pr)$. 
\end{squishitem}
\qedhere
\end{squishitem}
\end{proof}

\begin{lemm}
\label{lemm.add.substitution.on.left}
If $\theta_1\leq \theta_2$ then $\theta \fcomp \theta_1 \leq \theta \fcomp \theta_2$.
\end{lemm}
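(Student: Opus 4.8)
The plan is to unfold Definition~\ref{defn.instantiation.ordering} and reuse the \emph{same} witness. By hypothesis $\theta_1\leq\theta_2$, so there is some $\theta'$ with $X^S\theta_2\aeq X^S(\theta_1\fcomp\theta')$ for every $X^S$. I claim this same $\theta'$ witnesses $\theta\fcomp\theta_1\leq\theta\fcomp\theta_2$; that is, it suffices to show $X^S(\theta\fcomp\theta_2)\aeq X^S((\theta\fcomp\theta_1)\fcomp\theta')$ for every $X^S$. Note in particular that $\theta'$ does not depend on the prefix $\theta$, so no new witness needs to be constructed.

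First I would rewrite both sides using Definition~\ref{defn.sigma.com} (composition) and Lemma~\ref{lemm.sigma.sigma'.circ}. Writing $r\equiv\theta(X^S)$, the left-hand side $X^S(\theta\fcomp\theta_2)$ reduces to $r\theta_2$, and the right-hand side $X^S((\theta\fcomp\theta_1)\fcomp\theta')$ reduces to $(r\theta_1)\theta'\equiv r(\theta_1\fcomp\theta')$. Hence the goal becomes the single statement $r\theta_2\aeq r(\theta_1\fcomp\theta')$.

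The key step is then to invoke Lemma~\ref{lemm.eq.X.eq.always} applied to the two substitutions $\theta_2$ and $\theta_1\fcomp\theta'$. The global hypothesis, instantiated at each $Y^T\in\f{fV}(r)$, yields $\theta_2(Y^T)\aeq(\theta_1\fcomp\theta')(Y^T)$ (these are exactly $Y^T\theta_2$ and $Y^T(\theta_1\fcomp\theta')$ up to the harmless leading $\id$ action). Lemma~\ref{lemm.eq.X.eq.always} upgrades this pointwise agreement on the free unknowns of $r$ to $r\theta_2\aeq r(\theta_1\fcomp\theta')$, which is precisely what is required. Since $X^S$ was arbitrary, $\theta'$ witnesses $\theta\fcomp\theta_1\leq\theta\fcomp\theta_2$.

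I expect no serious obstacle. The only points needing care are that the witness $\theta'$ is kept unchanged across the whole argument, and that the reduction to $r\equiv\theta(X^S)$ correctly turns the hypothesis (a statement quantified over \emph{all} unknowns) into a statement about $r\theta_2$ and $r(\theta_1\fcomp\theta')$ via Lemma~\ref{lemm.eq.X.eq.always}; this is the crux. Everything else is routine unfolding of the definition of composition together with Lemma~\ref{lemm.sigma.sigma'.circ}.
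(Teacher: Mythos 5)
Your proposal is correct and follows essentially the same route as the paper's proof: keep the witness $\theta'$ from the hypothesis unchanged, unfold both sides with Definition~\ref{defn.sigma.com} and Lemma~\ref{lemm.sigma.sigma'.circ}, and use Lemma~\ref{lemm.eq.X.eq.always} to lift the pointwise agreement of $\theta_2$ and $\theta_1\fcomp\theta'$ to the term $\theta(X^S)$. The only difference is presentational: you reduce both sides to $r\theta_2$ and $r(\theta_1\fcomp\theta')$ before invoking the key lemma, whereas the paper applies it mid-calculation to $(X^S\theta)\theta_2$.
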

\begin{proof}
By Definition~\ref{defn.instantiation.ordering}, $\theta'$ exists such that $X^S\theta_2 \aeq X^S(\theta_1 \fcomp \theta')$ always.
Then:
\begin{calcenv}
X^S(\theta\fcomp\theta_2) & \equiv & (X^S\theta)\theta_2 & \text{Lemma~\ref{lemm.sigma.sigma'.circ}} \\
                          & \aeq & (X^S\theta)(\theta_1\fcomp\theta') & \text{Lemma~\ref{lemm.eq.X.eq.always}} \\
                          & \equiv & X^S((\theta\fcomp\theta_1)\fcomp\theta') &\text{Lemma~\ref{lemm.sigma.sigma'.circ}}
\qedhere
\end{calcenv}
\end{proof}
 
\begin{lemm}
\label{lemm.leq.equiv}
Suppose $X^S\theta_2\aeq X^S\theta_2'$ always. 
Then $\theta_1\leq\theta_2$ implies $\theta_1\leq\theta_2'$.
\end{lemm}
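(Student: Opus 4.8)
The plan is to unfold Definition~\ref{defn.instantiation.ordering} and observe that the very same witnessing substitution used for $\theta_1\leq\theta_2$ also witnesses $\theta_1\leq\theta_2'$, the only work being an appeal to transitivity of $\aeq$.

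First I would unpack the hypothesis $\theta_1\leq\theta_2$: by Definition~\ref{defn.instantiation.ordering} there exists some $\theta'$ such that $X^S\theta_2\aeq X^S(\theta_1\fcomp\theta')$ for every $X^S$. I would then combine this with the standing assumption $X^S\theta_2\aeq X^S\theta_2'$. Using symmetry of $\aeq$ on the assumption gives $X^S\theta_2'\aeq X^S\theta_2$, and then transitivity chains these together to yield $X^S\theta_2'\aeq X^S(\theta_1\fcomp\theta')$ for every $X^S$. Both symmetry and transitivity are supplied by Proposition~\ref{prop.aeq.transitive}.

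Finally I would read this conclusion back through Definition~\ref{defn.instantiation.ordering}: since $\theta'$ satisfies $X^S\theta_2'\aeq X^S(\theta_1\fcomp\theta')$ always, it witnesses $\theta_1\leq\theta_2'$, which is exactly what we want. Concretely the argument is the short chain
\begin{calcenv}
X^S\theta_2' & \aeq & X^S\theta_2 & \text{Assumption, Proposition~\ref{prop.aeq.transitive}} \\
             & \aeq & X^S(\theta_1\fcomp\theta') & \theta_1\leq\theta_2,\ \text{Definition~\ref{defn.instantiation.ordering}}
\end{calcenv}
holding for all $X^S$, whence $\theta_1\leq\theta_2'$.

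There is essentially no obstacle here: the result is immediate once one notices that the instantiation ordering only ever compares the left-hand side against terms \emph{up to} $\aeq$, so replacing $\theta_2$ by a pointwise-$\aeq$ substitution $\theta_2'$ cannot change whether a given witness $\theta'$ succeeds. The only point requiring care is to reuse the same $\theta'$ rather than attempting to construct a new witness.
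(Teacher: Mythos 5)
Your proof is correct and is precisely the "routine calculation using Definition~\ref{defn.instantiation.ordering} and Proposition~\ref{prop.aeq.transitive}" that the paper gives as its (one-line) proof: reuse the same witness $\theta'$, and chain the hypothesis with the defining property of $\theta_1\leq\theta_2$ via symmetry and transitivity of $\aeq$. No gap; you have simply written out the steps the paper leaves implicit.
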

\begin{proof}
By a routine calculation using Definition~\ref{defn.instantiation.ordering} and using Proposition~\ref{prop.aeq.transitive}.
\end{proof}

\begin{lemm}
\label{lemm.Sol.Pr.Sol.Inc}
If $\theta\in\f{Sol}(Pr)$ (Definition~\ref{defn.unif.sol}) then $\theta\in\f{Sol}(Pr\tinys{\sqsubseteq})$ (Definition~\ref{defn.solves.Inc}).
\end{lemm}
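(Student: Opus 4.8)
The plan is to unpack both notions of solution and reduce the claim to two inclusions of free-atom sets, which then follow by combining Lemma~\ref{lemm.substitution} with Lemma~\ref{lemm.aeq.ftma.pres}. By Definition~\ref{defn.Prsqsubseteq}, $Pr\tinys\sqsubseteq$ consists exactly of the support inclusions $r\sqsubseteq\f{fa}(s)$ and $s\sqsubseteq\f{fa}(r)$ as $r\ueq s$ ranges over $Pr$. So by Definition~\ref{defn.solves.Inc} it suffices to fix an arbitrary equality $r\ueq s\in Pr$ and prove $\f{fa}(r\theta)\subseteq\f{fa}(s)$ and $\f{fa}(s\theta)\subseteq\f{fa}(r)$.

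First I would use the hypothesis $\theta\in\f{Sol}(Pr)$, which by Definition~\ref{defn.unif.sol} gives $r\theta\aeq s\theta$. The crucial step is then Lemma~\ref{lemm.aeq.ftma.pres}, which upgrades this $\alpha$-equivalence to the \emph{equality} of free-atom sets $\f{fa}(r\theta)=\f{fa}(s\theta)$. This is exactly what lets the free atoms of the instantiation of one side be controlled by the other side.

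Next I would apply Lemma~\ref{lemm.substitution} (instantiation never enlarges the free atoms) in each of its two instances: $\f{fa}(s\theta)\subseteq\f{fa}(s)$ and $\f{fa}(r\theta)\subseteq\f{fa}(r)$. Chaining with the equality above, $\f{fa}(r\theta)=\f{fa}(s\theta)\subseteq\f{fa}(s)$ gives the first required inclusion, and symmetrically $\f{fa}(s\theta)=\f{fa}(r\theta)\subseteq\f{fa}(r)$ gives the second. Since $r\ueq s$ was arbitrary, $\theta$ solves every support inclusion in $Pr\tinys\sqsubseteq$, i.e.\ $\theta\in\f{Sol}(Pr\tinys\sqsubseteq)$.

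There is no real obstacle here: the only thing to watch is the \emph{direction} of the inclusions, namely that one pairs the equality $\f{fa}(r\theta)=\f{fa}(s\theta)$ with the shrinking inclusion on the correct side --- applying Lemma~\ref{lemm.substitution} to $s$ (not $r$) in order to bound $\f{fa}(r\theta)$ by $\f{fa}(s)$. Everything else is a two-line composition of the cited lemmas, so the argument is short and purely calculational.
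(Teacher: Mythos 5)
Your proof is correct and is exactly the ``routine calculation'' the paper leaves implicit: unpack Definitions~\ref{defn.unif.sol} and~\ref{defn.Prsqsubseteq}, use Lemma~\ref{lemm.aeq.ftma.pres} to get $\f{fa}(r\theta)=\f{fa}(s\theta)$, and close with Lemma~\ref{lemm.substitution} on the appropriate side. The only difference is that the paper's one-line proof does not explicitly cite Lemma~\ref{lemm.substitution}, which your write-up correctly identifies as necessary.
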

\begin{proof}
By a routine calculation, using Definitions~\ref{defn.unif.sol} and~\ref{defn.Prsqsubseteq}, and Lemma~\ref{lemm.aeq.ftma.pres}.
\end{proof}

\begin{lemm}
\label{lemm.theta.transfer.v}
If $X^S \in \mathcal V$ then $(\theta|_{\mathcal V} - X^S) = (\theta{-}X^S)|_{\mathcal V}$.
\end{lemm}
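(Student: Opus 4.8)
The plan is to prove the equality of the two substitutions pointwise, since both $(\theta|_{\mathcal V}{-}X^S)$ and $(\theta{-}X^S)|_{\mathcal V}$ are by construction functions from unknowns to terms. So I would fix an arbitrary unknown $Z$, compute the value of each side on $Z$ using Definitions~\ref{defn.restricted} and~\ref{defn.sigma-X}, and check that they agree. The natural case split is according to whether $Z$ is $X^S$ itself, is some other unknown lying in $\mathcal V$, or lies outside $\mathcal V$.

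First, for $Z\equiv X^S$: the left-hand side gives $(\theta|_{\mathcal V}{-}X^S)(X^S)\equiv\id\act X^S$ directly by Definition~\ref{defn.sigma-X}. For the right-hand side, the hypothesis $X^S\in\mathcal V$ ensures the restriction $|_{\mathcal V}$ retains the value at $X^S$, so $(\theta{-}X^S)|_{\mathcal V}(X^S)\equiv(\theta{-}X^S)(X^S)\equiv\id\act X^S$, again by Definition~\ref{defn.sigma-X}. Both sides are $\id\act X^S$. This is precisely the case in which the assumption $X^S\in\mathcal V$ is invoked. Second, for $Z\equiv Y^T$ with $Y^T\neq X^S$ and $Y^T\in\mathcal V$: on the left, Definition~\ref{defn.sigma-X} leaves the $-X^S$ operation inert (as $Y^T\neq X^S$), reducing to $\theta|_{\mathcal V}(Y^T)\equiv\theta(Y^T)$ since $Y^T\in\mathcal V$; on the right, the restriction keeps the value (as $Y^T\in\mathcal V$), reducing to $(\theta{-}X^S)(Y^T)\equiv\theta(Y^T)$. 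Both sides are $\theta(Y^T)$. Third, for $Z\equiv Y^T$ with $Y^T\notin\mathcal V$: in both cases the outer restriction $|_{\mathcal V}$ forces the value to $\id\act Y^T$ regardless of the $-X^S$ operation, so the two sides agree. Since $Z$ was arbitrary, the two substitutions coincide.

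There is no genuine obstacle here: the statement is a bookkeeping fact asserting that the operations ``delete the entry for $X^S$'' and ``restrict the domain to $\mathcal V$'' commute when $X^S\in\mathcal V$, and it is settled by the three-way pointwise check above. The only point worth flagging explicitly is the role of the hypothesis $X^S\in\mathcal V$ in the first case, where it guarantees that the restriction on the right-hand side does not discard the very entry that $-X^S$ has just set to $\id\act X^S$; in the remaining cases the hypothesis plays no part.
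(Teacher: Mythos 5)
Your proposal is correct and follows essentially the same route as the paper's own proof: a pointwise three-way case analysis ($Z\equiv X^S$; $Z\equiv Y^T\in\mathcal V$; $Z\equiv Y^T\notin\mathcal V$) using Definitions~\ref{defn.restricted} and~\ref{defn.sigma-X}, with both sides computed to agree in each case. Your explicit remark on where the hypothesis $X^S\in\mathcal V$ is used is a nice addition that the paper leaves implicit; the only nitpick is that in your third case the restriction is the \emph{inner} operation on the left-hand side (it is $(\theta|_{\mathcal V}){-}X^S$), but the computation you describe still goes through unchanged.
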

\begin{proof}
We consider cases:
\begin{squishitem}
\item
The case $X^S$.\quad
Then $(\theta|_{\mathcal V} - X^S)(X^S) = \id \act X^S$ and $(\theta{-}X^S)|_{\mathcal V}(X^S) = \id \act X^S$.
The result follows.
\item
The case $Y^T$ with $Y^T \not\in \mathcal V$.\quad
Then $(\theta|_{\mathcal V} - X^S)(Y^T) = \id \act Y^T$ and $(\theta{-}X^S)|_{\mathcal V}(Y^T) = \id \act Y^T$.
The result follows.
\item
The case $Y^T$ with $Y^T \in \mathcal V$.\quad
Then $(\theta|_{\mathcal V} - X^S)(Y^T) = \theta|_\mathcal{V}(Y^T)$ and $(\theta{-}X^S)|_{\mathcal V}(Y^T) = \theta(Y^T)$.
As $\theta|_\mathcal V(Y^T) = \theta(Y^T)$ when $Y^T \in \mathcal V$, the result follows.
\end{squishitem}
\end{proof}

\begin{thrm}
\label{thrm.sol.is.principal}
Suppose $\f{fV}(Pr) \subseteq \mathcal V$.

If $\mathcal V;Pr \simptostar{\theta} \mathcal V';\varnothing$ then $\theta|_{\mathcal V}$ is a principal solution to $Pr$ (Definition~\ref{def.principal.solution}).
\end{thrm}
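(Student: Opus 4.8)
The plan is to prove principality by induction on the length of the reduction $\mathcal V;Pr \simptostar{\theta}\mathcal V';\varnothing$, reusing the case structure of Theorem~\ref{thrm.sol.is.solution}. That theorem already delivers $\theta|_{\mathcal V}\in\f{Sol}(Pr)$, so by Definition~\ref{def.principal.solution} the only remaining obligation is $\theta|_{\mathcal V}\leq\sigma$ for an \emph{arbitrary} $\sigma\in\f{Sol}(Pr)$. The base case (length $0$) is immediate: then $Pr=\varnothing$ and $\theta\equiv\id$, and $\id|_{\mathcal V}\leq\sigma$ holds with the witness $\sigma$ itself, since $\id|_{\mathcal V}\fcomp\sigma$ acts as $\sigma$ on every unknown. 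For the inductive step I would split on the first rewrite. In the \emph{non-instantiating} case $\mathcal V;Pr\simpto{}\mathcal V;Pr''$, Lemma~\ref{lemm.preservation.of.solutions} gives $\f{Sol}(Pr)=\f{Sol}(Pr'')$ and Lemma~\ref{lemm.simp.fv.pres} gives $\f{fV}(Pr'')\subseteq\mathcal V$, so the inductive hypothesis for $Pr''$ transfers verbatim.

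The two substantive cases are the instantiating ones, and both follow the same template: \emph{factor} $\sigma$ through the labelling substitution, apply the inductive hypothesis to the tail, and \emph{reassemble} the ordering. For \rulefont{I1} (and symmetrically \rulefont{I2}) with $\chi=[X^S\sm\pi^\mone\act s]$, I would first turn the solved equation $\pi\act X^S\ueq s$, i.e. $\pi\act\sigma(X^S)\aeq s\sigma$, into $X^S\sigma\aeq(\pi^\mone\act s)\sigma$ using Lemmas~\ref{lemm.equality.permutation}, \ref{lemm.permutation.comp} and~\ref{lemm.sub.perm}, noting $X^S\notin\f{fV}(\pi^\mone\act s)$. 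Theorem~\ref{thrm.single.out.X} then factors $\sigma$ as $\chi\fcomp(\sigma{-}X^S)$ up to $\aeq$ on \emph{every} unknown; Lemma~\ref{lemm.sigma.seq.Pr} transports this through $\f{fV}(Pr)$ and Lemma~\ref{lemm.Sol.circ} gives $(\sigma{-}X^S)\in\f{Sol}(Pr\chi)$. Since $\f{fV}(Pr\chi)\subseteq\mathcal V$ by Lemma~\ref{lemm.inst.rule.fv.pres}, the inductive hypothesis yields $\theta'|_{\mathcal V}\leq(\sigma{-}X^S)$; Lemma~\ref{lemm.add.substitution.on.left} promotes this to $\chi\fcomp(\theta'|_{\mathcal V})\leq\chi\fcomp(\sigma{-}X^S)$, Lemma~\ref{lemm.leq.equiv} replaces the right-hand side by $\sigma$, and Lemma~\ref{lemm.rest.fcomp.move} (using $X^S\in\f{fV}(Pr)\subseteq\mathcal V$) rewrites $\chi\fcomp(\theta'|_{\mathcal V})$ as $(\chi\fcomp\theta')|_{\mathcal V}=\theta|_{\mathcal V}$, giving $\theta|_{\mathcal V}\leq\sigma$.

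The \rulefont{I3} case has the same shape, with $\rho=\rho\smallss{\mathcal V}{Pr\tinys\sqsubseteq}$ in place of $\chi$ and the enlarged context $\mathcal V'$ on the tail. Here the factoring of $\sigma$ comes from Lemma~\ref{lemm.Sol.Pr.Sol.Inc} (so that $\sigma\in\f{Sol}(Pr\tinys\sqsubseteq)$) together with Theorem~\ref{thrm.sigma.rho.sigma}, which expresses $\sigma$ as $\rho\fcomp(\sigma{-}\rho)$; Lemmas~\ref{lemm.sigma.seq.Pr} and~\ref{lemm.Sol.circ} then place $(\sigma{-}\rho)$ in $\f{Sol}(Pr\rho)$, the inductive hypothesis gives $\theta'|_{\mathcal V'}\leq(\sigma{-}\rho)$, and Lemmas~\ref{lemm.add.substitution.on.left}, \ref{lemm.leq.equiv} and~\ref{lemm.rest.fcomp.move} reassemble $(\rho\fcomp\theta')|_{\mathcal V}\leq\sigma$, exactly paralleling the solution-membership argument of Theorem~\ref{thrm.sol.is.solution}.

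The main obstacle I anticipate is reconciling the two ``scopes'' — the restriction operator $|_{\mathcal V}$ (and the enlarged $\mathcal V'$ introduced by \rulefont{I3}) against the \emph{global} quantification ``always'' in the instantiation ordering of Definition~\ref{defn.instantiation.ordering}. In the \rulefont{I1}/\rulefont{I2} case this is mild, because Theorem~\ref{thrm.single.out.X} yields agreement on all unknowns, so Lemma~\ref{lemm.leq.equiv} applies directly. In \rulefont{I3}, by contrast, Theorem~\ref{thrm.sigma.rho.sigma} only guarantees $\sigma\equiv\rho\fcomp(\sigma{-}\rho)$ on $\mathcal V$, while the fresh renamed unknowns lie in $\mathcal V'\setminus\mathcal V$. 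The delicate point is to see that the values of the witnessing substitution \emph{outside} $\mathcal V$ are immaterial — they are absorbed by the identity part of $\theta|_{\mathcal V}$ — so that the ``always'' requirement collapses to the equality on $\mathcal V$ that Theorem~\ref{thrm.sigma.rho.sigma} already supplies. I would make this explicit by carrying the witness from the inductive hypothesis through on $\mathcal V$, evaluating $X^S(\theta|_{\mathcal V}\fcomp\psi)$ via the fresh variable $\rho(X^S)$, setting the witness to agree with $\sigma$ off $\mathcal V$, and then aligning the restrictions with Lemmas~\ref{lemm.sigma.seq.Pr} and~\ref{lemm.rest.fcomp.move}.
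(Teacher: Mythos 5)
Your proof is, in substance, the paper's proof: the same induction on path length, the same case split (base case, non-instantiating rules, \rulefont{I1}/\rulefont{I2}, \rulefont{I3}), and the same lemma chains, so the bulk of it goes through exactly as in the paper. The interesting divergence is the ``delicate point'' you flag at the end. You are right that it is the crux, and right that \rulefont{I1}/\rulefont{I2} are unproblematic (Theorem~\ref{thrm.single.out.X} gives agreement at \emph{every} unknown, so Lemma~\ref{lemm.leq.equiv} applies as stated); but your proposed resolution of the \rulefont{I3} case does not work.

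Concretely, the fresh unknowns introduced by \rulefont{I3} occur in the \emph{range} of $\theta|_{\mathcal V}$, so the values of a candidate witness $\psi$ at those unknowns are not ``immaterial'' and cannot be freely ``set to agree with $\sigma$ off $\mathcal V$'': they are already pinned down by the conditions on $\mathcal V$. Take $Pr=\{\id\act X^S\ueq\id\act Y^T\}$ with $S\not\subseteq T$ and $T\not\subseteq S$; the algorithm fires \rulefont{I3} and then \rulefont{I1}, returning $\theta|_{\mathcal V}$ with $\theta|_{\mathcal V}(X^S)\equiv\id\act {Y'}^{T'}\equiv\theta|_{\mathcal V}(Y^T)$, where ${Y'}^{T'}$ is fresh and $T'=S\cap T$. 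Now let $\sigma(X^S)\equiv\sigma(Y^T)\equiv a$ and $\sigma({Y'}^{T'})\equiv b$ for distinct $a,b\in S\cap T$ (identity elsewhere); then $\sigma\in\f{Sol}(Pr)$, but any witness must satisfy both $\psi({Y'}^{T'})\aeq a$ (the condition at $X^S$) and $\psi({Y'}^{T'})\aeq b$ (the condition at ${Y'}^{T'}$, since $\theta|_{\mathcal V}$ is the identity there), which is impossible. So under the literal ``always'' of Definition~\ref{defn.instantiation.ordering} the conclusion itself fails, and no choice of witness can repair it. The genuine resolution is to read the instantiation ordering (hence principality) relative to the unknowns of the problem, i.e.\ to demand $X^S\sigma\aeq X^S(\theta|_{\mathcal V}\fcomp\psi)$ only for $X^S\in\mathcal V$; then the $\mathcal V$-agreement supplied by Theorem~\ref{thrm.sigma.rho.sigma} is exactly enough and your argument closes. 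Note that the paper's own proof must be read the same way: its final appeal to Lemma~\ref{lemm.leq.equiv} in the \rulefont{I3} case likewise has only $\mathcal V$-agreement available from Theorem~\ref{thrm.sigma.rho.sigma}, not the ``always'' that lemma's hypothesis demands, so your instinct located precisely the step the paper glosses over.
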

\begin{proof}
By Theorem~\ref{thrm.sol.is.solution}, $\theta|_{\mathcal V} \in \f{Sol}(Pr)$.
We prove $\theta|_{\mathcal V}$ is principal by induction on the path length of $\mathcal V;Pr \simptostar{\theta} \mathcal V';\varnothing$.
\begin{squishitem}
\item
\emph{Length $0$.}\quad
So $Pr = \varnothing$ and $\theta = \id|_{\mathcal V}$.
By Definition~\ref{defn.instantiation.ordering}, $\id|_{\mathcal V} \leq \theta'|_{\mathcal V}$. 
\item
\emph{Length $k+1$.}\quad
We consider the rules in Definition~\ref{defn.unif}.
\begin{squishitem}
\item
The non-instantiating case.\quad
Suppose
\begin{displaymath}
\mathcal V;Pr \simpto{} \mathcal V;Pr' \simptostar{\theta} \mathcal V';\varnothing
\end{displaymath}
where $\mathcal V;Pr \simpto{} \mathcal V;Pr'$ is a non-instantiating simplification rewrite.
By inductive hypothesis $\theta|_{\mathcal V}$ is a principal solution of $Pr'$.
By Lemma~\ref{lemm.preservation.of.solutions} $\theta|_{\mathcal V}$ is a principal solution of $Pr$. 
The result follows.
\item
The case \rulefont{I1}.\quad
Suppose $\f{fa}(s) \subseteq \pi\act S$ and $X^S \not\in \f{fV}(s)$.
Write $\chi = [X^S\sm\pi^\mone\act s]$. 
Suppose ${Pr\ =\ \pi {\act} X^S  \ueq  s,\; Pr''}$ so that
\begin{displaymath}
\mathcal V;\pi \act X^S \ueq s,\ Pr'' \simpto{\chi} \mathcal V;Pr''\chi \simptostar{\theta''} \mathcal V';\varnothing.
\end{displaymath}
Further, suppose that $\theta'|_{\mathcal V} \in \f{Sol}(Pr)$.

By assumption $(\pi\act X)\theta'\aeq s\theta'$, so 
by Lemma~\ref{lemm.sub.perm} $\pi\act \theta'(X)\aeq s\theta'$.
By Lemmas~\ref{lemm.equality.permutation} and~\ref{lemm.permutation.comp} $\theta'(X)\aeq \pi^\mone\act (s\theta')$, and by Lemma~\ref{lemm.sub.perm} $\theta'(X)\aeq (\pi^\mone\act s)\theta'$.
It follows by Theorem~\ref{thrm.single.out.X} and Lemma~\ref{lemm.sigma.seq.Pr} that $\chi \fcomp (\theta'|_{\mathcal V} - X^S) \in \f{Sol}(Pr)$.
Using Lemma~\ref{lemm.theta.transfer.v}, $(\theta'|_{\mathcal V} - X^S) = (\theta' - X^S)|_{\mathcal V}$.
By Lemma~\ref{lemm.Sol.circ}, $(\theta' - X^S)|_{\mathcal V} \in \f{Sol}(Pr''\chi)$. 

By Theorem~\ref{thrm.sol.is.solution}, $\theta''|_{\mathcal V} \in \f{Sol}(Pr''\chi)$.
By Lemma~\ref{lemm.inst.rule.fv.pres}, $\f{fV}(Pr''\chi) \subseteq \mathcal V$.

By inductive hypothesis $\theta''|_{\mathcal V} \leq (\theta'{-}X^S)|_{\mathcal V}$.
By Lemma~\ref{lemm.add.substitution.on.left}, $\chi \fcomp (\theta''|_{\mathcal V}) \leq \chi \fcomp (\theta'{-}X^S)|_{\mathcal V}$.
Now by assumption ${\f{fV}(s)\subseteq\mathcal V}$ and ${X^S\in\mathcal V}$.
Using Lemma~\ref{lemm.rest.fcomp.move} it follows that $\chi \fcomp (\theta''|_{\mathcal V}) = (\chi \fcomp \theta'')|_{\mathcal V}$.
By Lemma~\ref{lemm.theta.transfer.v}, $(\theta'{-}X^S)|_{\mathcal V} = \theta'|_{\mathcal V} {-} X^S$.
By Theorem~\ref{thrm.single.out.X} and Lemma~\ref{lemm.leq.equiv},\ $(\chi \fcomp \theta'')|_{\mathcal V} \leq \theta'|_{\mathcal V}$ as required.
\item
The case \rulefont{I2} is similar to the case of \rulefont{I1}.
\item
The case \rulefont{I3}.\quad
Suppose $Pr\tinys{\sqsubseteq}$ is consistent and non-trivial.
Write $\rho = \rho\smallss{\mathcal V}{Pr\tinys{\sqsubseteq}}$, so that
\begin{displaymath}
\mathcal V;Pr\simpto{\rho}\mathcal V'';Pr\rho \simptostar{\theta''} \mathcal V';\varnothing,
\end{displaymath}
and suppose that $\theta'|_{\mathcal V} \in\f{Sol}(Pr)$.

By Theorem~\ref{thrm.sol.is.solution}, $\theta''|_{\mathcal V''} \in \f{Sol}(Pr\rho)$.
It is a fact that $\mathcal V''=\mathcal V\cup{\mathcal V'}\smallss{\mathcal V}{Pr\tinys{\sqsubseteq}}$, so $\f{fV}(Pr\rho)\subseteq\mathcal V''$.
By Lemma~\ref{lemm.Sol.Pr.Sol.Inc}, $\theta'|_{\mathcal V} \in \f{Sol}(Pr\tinys\sqsubseteq)$.
By Theorem~\ref{thrm.sigma.rho.sigma} and Lemma~\ref{lemm.sigma.seq.Pr}, $\rho \fcomp (\theta'|_{\mathcal V} {-}\rho)\in\f{Sol}(Pr)$.
By Lemma~\ref{lemm.Sol.circ},\ $\theta'|_{\mathcal V} {-} \rho\in\f{Sol}(Pr\rho)$.

By inductive hypothesis $\theta''|_{\mathcal V} \leq \theta'|_{\mathcal V} {-}\rho$.
By Lemma~\ref{lemm.add.substitution.on.left}, $\rho \fcomp \theta''|_{\mathcal V} \leq \rho \fcomp (\theta'|_{\mathcal V} {-} \rho)$.
It is a fact that $\rho\fcomp(\theta''|_{\mathcal V}) = (\rho \fcomp \theta'')|_{\mathcal V}$.
By Theorem~\ref{thrm.sigma.rho.sigma} and Lemma~\ref{lemm.leq.equiv}, $(\rho \fcomp\theta'')|_{\mathcal V} \leq \theta'|_{\mathcal V}$ as required.
\end{squishitem}
\end{squishitem}
\end{proof}

\begin{lemm}
\label{lemm.missing.link}
\begin{enumerate}
\item
Suppose $\f{fa}(s){\subseteq} \pi\act S$ and $X^S\not\in\f{fV}(s)$.
Write $\chi{=}[X^S\sm\pi^\mone\act s]$. 

If $\mathcal V;Pr\simpto{\chi}\mathcal V;Pr'$ with \rulefont{I1} or \rulefont{I2} then $\theta\in\f{Sol}(Pr)$ implies $\theta{-}X^S\in\f{Sol}(Pr')$.
\item
If $\mathcal V;Pr\simpto{\rho}\mathcal V';Pr'$ with \rulefont{I3} then $\theta\in\f{Sol}(Pr)$ implies $\theta{-}\rho\in\f{Sol}(Pr')$.
\end{enumerate}
\end{lemm}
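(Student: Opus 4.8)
The plan is to reduce both parts to Lemma~\ref{lemm.Sol.circ}, which converts a claim of the form ``$\chi\fcomp\sigma\in\f{Sol}(Pr)$'' into ``$\sigma\in\f{Sol}(Pr\chi)$''. Since each instantiating rule produces $Pr'=Pr''\chi$ (for \rulefont{I1}/\rulefont{I2}, where $Pr''$ is the part of $Pr$ other than the selected equality) respectively $Pr'=Pr\rho$ (for \rulefont{I3}), it suffices in each case to show that the \emph{composed} substitution $\chi\fcomp(\theta{-}X^S)$ respectively $\rho\fcomp(\theta{-}\rho)$ solves the original problem, and then peel off the leading substitution using Lemma~\ref{lemm.Sol.circ}.

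For part~1 I would treat \rulefont{I1} (the case \rulefont{I2} being symmetric). Write $Pr=(\pi\act X^S\ueq s),\,Pr''$, so $Pr'=Pr''\chi$ with $\chi=[X^S\sm\pi^\mone\act s]$. From $\theta\in\f{Sol}(Pr)$ I extract $(\pi\act X^S)\theta\aeq s\theta$, and rearrange it into the shape $X^S\theta\aeq(\pi^\mone\act s)\theta$ exactly as in the \rulefont{I1} case of Theorem~\ref{thrm.sol.is.principal}, using Lemma~\ref{lemm.sub.perm} together with the permutation Lemmas~\ref{lemm.equality.permutation} and~\ref{lemm.permutation.comp}. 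Because permutation does not affect free unknowns, $\f{fV}(\pi^\mone\act s)=\f{fV}(s)$, so the hypothesis $X^S\not\in\f{fV}(s)$ lets me apply Theorem~\ref{thrm.single.out.X} with $\pi^\mone\act s$ in the role of $s$; this yields $Y^T\theta\aeq Y^T(\chi\fcomp(\theta{-}X^S))$ for every $Y^T$. Since $\theta$ solves every equality of $Pr''$, Lemma~\ref{lemm.sigma.seq.Pr} gives $\chi\fcomp(\theta{-}X^S)\in\f{Sol}(Pr'')$, and Lemma~\ref{lemm.Sol.circ} then delivers $\theta{-}X^S\in\f{Sol}(Pr''\chi)=\f{Sol}(Pr')$.

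For part~2 (rule \rulefont{I3}) I have $Pr'=Pr\rho$ with $\rho=\rho\smallss{\mathcal V}{Pr\tinys\sqsubseteq}$. Working under the invariant $\f{fV}(Pr)\subseteq\mathcal V$ (which is exactly what makes $\theta{-}\rho$ well-defined via Definition~\ref{defn.sigma-rho}), I note $\f{fV}(Pr\tinys\sqsubseteq)=\f{fV}(Pr)\subseteq\mathcal V$. Lemma~\ref{lemm.Sol.Pr.Sol.Inc} upgrades $\theta\in\f{Sol}(Pr)$ to $\theta\in\f{Sol}(Pr\tinys\sqsubseteq)$, so Theorem~\ref{thrm.sigma.rho.sigma} applies and gives $\theta(X^S)\equiv(\rho\fcomp(\theta{-}\rho))(X^S)$ for every $X^S\in\mathcal V$. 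Then Lemma~\ref{lemm.sigma.seq.Pr}, using $\f{fV}(Pr)\subseteq\mathcal V$, transfers $\theta\in\f{Sol}(Pr)$ to $\rho\fcomp(\theta{-}\rho)\in\f{Sol}(Pr)$, and a final application of Lemma~\ref{lemm.Sol.circ} yields $\theta{-}\rho\in\f{Sol}(Pr\rho)=\f{Sol}(Pr')$.

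The only genuine computation is the rearrangement of the equality hypothesis in part~1 into the precondition of Theorem~\ref{thrm.single.out.X}; everything else is bookkeeping with Lemmas~\ref{lemm.Sol.circ} and~\ref{lemm.sigma.seq.Pr}. The subtlety I expect to be the main obstacle is in part~2, where both the definedness of $\theta{-}\rho$ and the applicability of Theorem~\ref{thrm.sigma.rho.sigma} hinge on $\f{fV}(Pr\tinys\sqsubseteq)\subseteq\mathcal V$, which is why I would flag the invariant $\f{fV}(Pr)\subseteq\mathcal V$ explicitly rather than leaving it implicit in the rule \rulefont{I3}.
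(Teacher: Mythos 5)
Your proposal is correct and follows essentially the same route as the paper's own proof: both parts rearrange the hypotheses so that Theorem~\ref{thrm.single.out.X} (respectively Theorem~\ref{thrm.sigma.rho.sigma}) applies, transfer solutionhood across substitutions agreeing on free unknowns via Lemma~\ref{lemm.sigma.seq.Pr}, and peel off the leading substitution with Lemma~\ref{lemm.Sol.circ}. The only difference is that you make explicit some details the paper leaves implicit --- the use of Lemma~\ref{lemm.Sol.Pr.Sol.Inc} to obtain $\theta\in\f{Sol}(Pr\tinys{\sqsubseteq})$, the fact that $\f{fV}(\pi^\mone\act s)=\f{fV}(s)$, and the invariant $\f{fV}(Pr)\subseteq\mathcal V$ on which the definedness of $\theta{-}\rho$ rests --- which is a point in your favour.
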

\begin{proof}
\begin{enumerate}
\item
We consider the case of \rulefont{I1}; the case of \rulefont{I2} is similar.
Suppose
$Pr=\pi \act X^S \ueq s,\ Pr''$
so that
$\mathcal V;\pi \act X^S \ueq s,\ Pr'' \simpto{\chi} \mathcal V;Pr''\chi$.
Now suppose $\theta\in\f{Sol}(Pr)$.
By Lemma~\ref{lemm.sigma.seq.Pr} and Theorem~\ref{thrm.single.out.X}, $\chi\fcomp(\theta{-}X^S))\in\f{Sol}(Pr)$.
By Lemma~\ref{lemm.Sol.circ}, $\theta{-}X^S\in\f{Sol}(Pr\chi)$.
It follows that $\theta{-}X^S\in\f{Sol}(Pr''\chi)$ as required.
\item
Suppose $Pr\tinys{\sqsubseteq}$ is consistent and non-trivial.
Write $\rho= \rho\smallss{\mathcal V}{Pr\tinys{\sqsubseteq}}$,
so that
$
\mathcal V;Pr\simpto{\rho}\mathcal V'';Pr\rho$. 
Now suppose $\theta\in\f{Sol}(Pr)$.
By Lemma~\ref{lemm.sigma.seq.Pr} and Theorem~\ref{thrm.sigma.rho.sigma}, $\rho\fcomp(\theta{-}\rho)\in\f{Sol}(Pr)$.
By Lemma~\ref{lemm.Sol.circ}, $\theta{-}\rho\in\f{Sol}(Pr\rho)$ as required.
\end{enumerate}
\end{proof}

\begin{thrm}
\label{thrm.algorithm.correctness}
Given a problem $Pr$, if the algorithm of Definition~\ref{defn.unification.algorithm} succeeds then it returns a principal solution; if it fails then there is no solution.
\end{thrm}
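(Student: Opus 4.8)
The plan is to split the statement into its two implications and to notice that the success direction is essentially immediate from the machinery already built, while the failure direction requires one genuinely new ingredient: a ``progress'' analysis of stuck configurations.

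For the success direction, recall that the algorithm starts from $\f{fV}(Pr);Pr$, where $\f{fV}(Pr)\subseteq\f{fV}(Pr)$ holds trivially. If the algorithm succeeds then by Definition~\ref{defn.unification.algorithm} we have $\f{fV}(Pr);Pr\simptostar{\theta}\mathcal V';\varnothing$ and the returned substitution is $\theta|_{\f{fV}(Pr)}$. Theorem~\ref{thrm.sol.is.principal} applies verbatim, with $\mathcal V=\f{fV}(Pr)$, and tells us that $\theta|_{\f{fV}(Pr)}$ is a principal solution of $Pr$. So this half needs no new work.

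For the failure direction I would prove the contrapositive: if $Pr$ is solvable then the algorithm succeeds. The argument rests on two facts. First, solvability is preserved under $\simpto{}$: if $\mathcal V;Q\simpto{}\mathcal W;Q'$ and $Q$ is solvable then $Q'$ is solvable. For non-instantiating rules this is immediate from Lemma~\ref{lemm.preservation.of.solutions}, which gives $\f{Sol}(Q)=\f{Sol}(Q')$; for the instantiating rules \rulefont{I1}, \rulefont{I2}, \rulefont{I3} it is exactly Lemma~\ref{lemm.missing.link}, which turns a solution $\theta$ of $Q$ into a solution $\theta{-}X^S$ (respectively $\theta{-}\rho$) of $Q'$. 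By Proposition~\ref{prop.unification.algorithm.strong.normalisation} the algorithm reaches a $\simpto{}$-normal form $\mathcal V';Pr'$; iterating solvability-preservation along this finite path shows $Pr'$ is solvable. Second, and this is the crux, I would prove a \emph{progress lemma}: a $\simpto{}$-normal form $\mathcal V';Pr'$ whose problem $Pr'$ is solvable must have $Pr'=\varnothing$. Granting this, the terminal $Pr'$ is empty, so the algorithm ends in success, as required.

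The progress lemma is where the real work lies, and I would organise it around the global rule \rulefont{I3}. Since $\mathcal V';Pr'$ is normal, \rulefont{I3} does not fire, so $Pr'\tinys\sqsubseteq$ (Definition~\ref{defn.Prsqsubseteq}) is either inconsistent or trivial. If it is inconsistent then by Theorem~\ref{thrm.when.Inc.consistent} it is unsolvable, and Lemma~\ref{lemm.Sol.Pr.Sol.Inc} gives $\f{Sol}(Pr')\subseteq\f{Sol}(Pr'\tinys\sqsubseteq)=\varnothing$, contradicting solvability; hence $Pr'\tinys\sqsubseteq$ is consistent and trivial. Now suppose $Pr'\neq\varnothing$ and pick $r\ueq s\in Pr'$ with a solution $\theta$, so $r\theta\aeq s\theta$. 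I would do a case analysis on the head forms of $r$ and $s$, each time contradicting either normality (some rule fires) or solvability ($r\theta\not\aeq s\theta$). The rigid--rigid clashes — distinct atoms, atom versus term-former, distinct term-formers, term-former versus abstraction — contradict solvability outright, while matching rigid heads let \rulefont{{\ueq}a}, \rulefont{{\ueq}\tf f}, or \rulefont{{\ueq}[a]} fire. The delicate sub-cases are the ones the support machinery is designed for. For $[a]r'\ueq[b]s'$ with $a\neq b$ and \rulefont{{\ueq}[b]} blocked because $b\in\f{fa}(r')$, reducing $[a]r'\sqsubseteq\f{fa}([b]s')\in Pr'\tinys\sqsubseteq$ by \rulefont{{\sqsubseteq}[]} leaves $b$ on the left but not on the right, so either $b$ occurs as a literal atom (forcing $b\in\f{fa}(r'\theta)$ for every $\theta$, hence unsolvability) or $b$ is contributed by an unknown (making $Pr'\tinys\sqsubseteq$ non-trivial, enabling \rulefont{I3}) — either way a contradiction. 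For the unknown cases, triviality of $Pr'\tinys\sqsubseteq$ supplies exactly the side condition $\f{fa}(s)\subseteq\pi\act S$ needed for \rulefont{I1}/\rulefont{I2} whenever $X^S\not\in\f{fV}(s)$. I expect the main obstacle to be the remaining configuration $\pi\act X^S\ueq\pi'\act X^S$ with the two permutations disagreeing on $S$: here I must combine consistency and triviality of $Pr'\tinys\sqsubseteq$ with the solvability condition $\pi|_{\f{fa}(\theta(X^S))}=\pi'|_{\f{fa}(\theta(X^S))}$ to show that either \rulefont{{\ueq}X} is applicable or \rulefont{I3} must still fire, and getting this interaction exactly right is the part that will require the most care.
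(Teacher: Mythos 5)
Your split is, up to contraposition, exactly the paper's proof: the success half is Theorem~\ref{thrm.sol.is.principal} with $\mathcal V=\f{fV}(Pr)$ (the paper's proof of that half is the single sentence ``if the algorithm succeeds we use Theorem~\ref{thrm.sol.is.principal}''), and for the failure half the paper runs your argument backwards --- it lists the shapes a terminal non-empty problem can have, asserts each is unsolvable ``by arguments on syntax and size of syntax'', and pulls unsolvability back along the rewrite path by Lemma~\ref{lemm.missing.link} (with Lemma~\ref{lemm.preservation.of.solutions} covering non-instantiating steps); that is the contrapositive of your forward preservation-of-solvability plus progress lemma. Your treatment of the blocked-abstraction case, and your observation that triviality of $Pr'\tinys{\sqsubseteq}$ supplies the side condition $\f{fa}(s)\subseteq\pi\act S$ of \rulefont{I1}/\rulefont{I2} (e.g.\ via Theorem~\ref{thrm.simpto.sol.inc} applied to $\id$), are correct and more detailed than anything the paper writes down.

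However, the case you flagged as needing the most care is a genuine gap, and it cannot be closed with the rules as given. Take $a,b\in S$ and $Pr'=\{\id\act X^S\ueq(a\ b)\act X^S\}$. Then $Pr'$ is a $\simpto{}$-normal form: \rulefont{{\ueq}X} requires the two permutations to coincide (and even under the charitable reading with side condition $\pi|_S=\pi'|_S$ it does not apply, since $\id$ and $(a\ b)$ disagree on $a,b\in S$); \rulefont{I1} and \rulefont{I2} fail the occurs check; and since $(a\ b)\act S=S$, the support inclusion problem $Pr'\tinys{\sqsubseteq}=\{\id\act X^S\sqsubseteq S,\ (a\ b)\act X^S\sqsubseteq S\}$ normalises to $\varnothing$ by \rulefont{{\sqsubseteq}X'}, so it is trivial and \rulefont{I3} is blocked. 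Yet $Pr'$ is solvable: $[X^S\sm c]$ for any atom $c\in S$ (distinct from $a$ and $b$ by the permutative convention) sends both sides to $c$. So the progress lemma you need is false for this rule set: solvability of such an equation requires precisely that $\f{fa}(\theta(X^S))$ avoid every atom on which $\pi$ and $\pi'$ disagree, a constraint invisible to $Pr\tinys{\sqsubseteq}$ (Definition~\ref{defn.Prsqsubseteq}), which only compares free-atom sets. Note that the paper's own proof has the same hole: it files $\pi\act X^S\ueq\pi'\act X^S$ under the occurs-check failures, but the ``size of syntax'' argument is only valid when $X^S$ occurs as a proper subterm of the other side. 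Closing the gap requires new machinery, for instance an extra instantiating rule mapping $X^S$ to $\id\act{X'}^{S'}$ with $S'=\{c\in S\mid\pi(c)=\pi'(c)\}$ for equations $\pi\act X^S\ueq\pi'\act X^S$ --- the permissive analogue of nominal unification's emission of $\dot a\#\dot X$ for every atom where the suspended permutations disagree --- after which both your argument and the paper's go through.
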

\begin{proof}
If the algorithm succeeds we use Theorem~\ref{thrm.sol.is.principal}.
Otherwise, the algorithm generates 
an element of the form $\tf f(r_1,\ldots,r_n)\ueq\tf f(r_1',\ldots,r_{n'}')$ where $n\neq n'$,\  
$\tf f(\ldots)\ueq \tf g(\ldots)$,\ $\tf f(\ldots)\ueq [a]s$,\ $\tf f(\ldots)\ueq a$,\ $[a]r\aeq a$,\ $[a]r\aeq b$,\ $a\ueq b$,\ a $Pr$ such that $Pr\tinys{\sqsubseteq}$ is inconsistent, or $\pi\act X^S\ueq r$ or $r\ueq\pi\act X^S$ where $X^S\in\f{fV}(r)$.
By arguments on syntax and size of syntax, no solution to the reduced problem exists. 
It follows by Lemma~\ref{lemm.missing.link} that no solution to $Pr$ exists.
\end{proof}

\section{Lambda-term syntax}
\label{sect.lambda.calculus}

We want to relate permissive nominal unification with higher-order pattern unification as promised in the Introduction.
In this section we recall the syntax and operational semantics of the $\lambda$-calculus.
It is convenient to match the variables of permissive nominal terms with those of the $\lambda$-calculus.
Therefore, when we define $\lambda$-terms' syntax in Definition~\ref{defn.terms.g}, we use the same atoms and unknowns as we used in Definition~\ref{defn.terms}; both behave like ordinary variables (with capture-avoiding substitution).
As in permissive nominal terms we unify on the unknowns, but we do not care about the permission sets so we will let $X, Y, Z, \ldots$ range over distinct unknowns (without superscripts).
\begin{defn}
\label{defn.terms.g}

Define \deffont{$\lambda$-terms} by:
\begin{frameqn}
g,h,\ldots ::= a \mid X \mid \tf f \mid \lam{a}g \mid g'g 
\end{frameqn}
Here $\tf f$ ranges over term-formers, $a$ ranges over atoms, and $X$ ranges over unknowns.

$g,h,k$ will range over $\lambda$-terms.
\end{defn}

\begin{defn}
\label{defn.perm.g}
Define a \deffont{permutation action} by:
\begin{frameqn}
\begin{gathered}
\pi {\act} a \equiv \pi(a)
\quad
\pi {\act} X \equiv X
\quad
\pi {\act} \tf f \equiv \tf f
\quad
\pi {\act} (\lam{a}g) \equiv \lam{\pi(a)}(\pi{\act} g)
\quad
\pi {\act} (g'g) \equiv (\pi{\act} g')(\pi{\act} g)
\end{gathered}
\end{frameqn}
\end{defn}

\begin{defn}
\label{defn.fa.g}
Define \deffont{free atoms} by:
\begin{frameqn}
\begin{gathered}
\f{fa}(a) {=} \{ a \} 
\quad 
\f{fa}(X) {=} \varnothing 
\quad 
\f{fa}(\tf f) {=} \varnothing 
\quad
\f{fa}(\lam{a}g) {=} \f{fa}(g){\setminus}\{a\}
\quad
\f{fa}(g'g) {=} \f{fa}(g'){\cup}\f{fa}(g)
\end{gathered}
\end{frameqn}
\end{defn}

\begin{defn}
\label{defn.alpha.lambda.terms}
Define \deffont{$\alpha$-equivalence} $\aeq$ inductively by the rules in Figure~\ref{fig.lambda.aeq}.
\begin{figure*}
$$
\begin{array}{c@{\quad}c@{\quad}c}
\begin{prooftree}
\justifies
a \aeq a
\using\rulefont{\lambda{ \aeq }a}
\end{prooftree}
&
\begin{prooftree}
g \aeq h 
\justifies
\lam{a}g \aeq \lam{a}h
\using\rulefont{\lambda{ \aeq }\lambda aa}
\end{prooftree}
&
\begin{prooftree}
(b\ a)\act g \aeq h \quad b\not\in\f{fa}(g)
\justifies
\lam{a}g \aeq \lam{b}h
\using\rulefont{\lambda{ \aeq }\lambda ab}
\end{prooftree}
\\[3ex] 
\begin{prooftree}
\justifies
\tf f \aeq \tf f
\using\rulefont{\lambda{ \aeq }\tf f}
\end{prooftree}
&
\begin{prooftree}
\justifies
X \aeq X
\using\rulefont{\lambda{ \aeq }X}
\end{prooftree}
&
\begin{prooftree}
g\aeq g'\quad h\aeq h'
\justifies
gh\aeq g'h'
\using\rulefont{\lambda{\aeq}p}
\end{prooftree}
\end{array}
$$
\caption{$\alpha$-equivalence on $\lambda$-terms.}
\label{fig.lambda.aeq}
\end{figure*}
\end{defn}

It is not hard to prove that Definition~\ref{defn.alpha.lambda.terms} does indeed specify the usual $\alpha$-equivalence relation on $\lambda$-terms.
Our definition is designed to match the definition of $\alpha$-equivalence on nominal terms (Definition~\ref{defn.aeq}).
This makes later results easier to prove (for example Theorem~\ref{thrm.injectivity}).

Lemma~\ref{lemm.ds.pi.g} to Proposition~\ref{prop.aeq.transitive.g} mirror similar results for permissive nominal terms.
The proofs of Lemmas~\ref{lemm.ds.pi.g} to~\ref{lemm.pi.ftma.g} are by induction on $g$.

\begin{lemm}
\label{lemm.ds.pi.g}
If $\pi|_{\f{fa}(g)}=\pi'|_{\f{fa}(g)}$ then $\pi\act g\aeq \pi'\act g$.
\end{lemm}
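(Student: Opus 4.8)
The plan is to induct on the structure of the $\lambda$-term $g$, following Definition~\ref{defn.terms.g}. Throughout I will use two routine facts about the permutation action of Definition~\ref{defn.perm.g}, each proved by an easy induction on $g$ exactly as in the permissive nominal case (they are the $\lambda$-term analogues of Lemmas~\ref{lemm.permutation.comp} and~\ref{lemm.pi.ftma}): the action composes, $\pi'\act(\pi\act g)\equiv(\pi'\fcomp\pi)\act g$, and it commutes with free atoms, $\f{fa}(\pi\act g)=\pi\act\f{fa}(g)$.

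The base and application cases are routine. If $g\equiv X$ or $g\equiv\tf f$ then $\pi\act g\equiv g\equiv\pi'\act g$, so reflexivity (rules \rulefont{\lambda{ \aeq }X} and \rulefont{\lambda{ \aeq }\tf f}) applies. If $g\equiv a$ then $\f{fa}(g)=\{a\}$, so the hypothesis gives $\pi(a)=\pi'(a)$, whence $\pi\act a\equiv\pi(a)\equiv\pi'\act a$ and we conclude by \rulefont{\lambda{ \aeq }a}. For an application $g\equiv g_1g_2$ we have $\f{fa}(g)=\f{fa}(g_1)\cup\f{fa}(g_2)$, so $\pi$ and $\pi'$ agree on $\f{fa}(g_1)$ and on $\f{fa}(g_2)$; the inductive hypothesis gives $\pi\act g_i\aeq\pi'\act g_i$ for $i=1,2$, and \rulefont{\lambda{\aeq}p} finishes the case.

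The only case requiring real work, and the step I expect to be the main obstacle, is abstraction $g\equiv\lam{a}g'$, where $\f{fa}(g)=\f{fa}(g')\setminus\{a\}$ and the goal is $\lam{\pi(a)}(\pi\act g')\aeq\lam{\pi'(a)}(\pi'\act g')$. The difficulty is that $\pi(a)$ and $\pi'(a)$ may differ, so the same-binder rule need not apply directly. I would split on this. Writing $c=\pi(a)$ and $d=\pi'(a)$: if $c=d$ then $\pi$ and $\pi'$ in fact agree on all of $\f{fa}(g')$ (on $\f{fa}(g')\setminus\{a\}$ by hypothesis, and on $a$ since $c=d$), so the inductive hypothesis gives $\pi\act g'\aeq\pi'\act g'$ and \rulefont{\lambda{ \aeq }\lambda aa} concludes. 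If $c\neq d$ I would instead use \rulefont{\lambda{ \aeq }\lambda ab}, whose premises are (i) $d\notin\f{fa}(\pi\act g')$ and (ii) $(d\ c)\act(\pi\act g')\aeq\pi'\act g'$.

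For (i), using $\f{fa}(\pi\act g')=\pi\act\f{fa}(g')$: if $d$ lay in $\pi\act\f{fa}(g')$ then $\pi^{\mone}(d)\in\f{fa}(g')$, and $\pi^{\mone}(d)\neq a$ (else $d=\pi(a)=c$); but then $\pi^{\mone}(d)\in\f{fa}(g')\setminus\{a\}$, where $\pi$ and $\pi'$ agree, forcing $\pi'(\pi^{\mone}(d))=d=\pi'(a)$ and hence $\pi^{\mone}(d)=a$ by injectivity of $\pi'$, a contradiction. For (ii), the composition fact rewrites the left-hand side as $((d\ c)\fcomp\pi)\act g'$, and I would apply the inductive hypothesis to the two permutations $(d\ c)\fcomp\pi$ and $\pi'$ once I check they agree on $\f{fa}(g')$: on $a$ both send it to $d$ (since $(d\ c)(c)=d$), and on any $e\in\f{fa}(g')\setminus\{a\}$ we have $\pi(e)=\pi'(e)$, a value distinct from both $c$ and $d$ by injectivity (as $e\neq a$), so the swap $(d\ c)$ fixes it. The bijectivity of $\pi$ and $\pi'$ together with the permutative convention is exactly what makes both verifications succeed; this interaction between the two distinct binders is the crux of the argument.
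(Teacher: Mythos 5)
Your proof is correct and takes essentially the same approach as the paper: Lemma~\ref{lemm.ds.pi.g} is proved there by a routine structural induction on $g$ (the paper omits the details, noting only that Lemmas~\ref{lemm.ds.pi.g} to~\ref{lemm.pi.ftma.g} are by induction on $g$). Your case analysis in the abstraction case, splitting on whether $\pi(a)=\pi'(a)$ and verifying the premises of \rulefont{\lambda{\aeq}\lambda ab} via injectivity and the composition and free-atom facts, supplies exactly the details the paper treats as routine, and there is no circularity since those two auxiliary facts are themselves proved by independent structural inductions.
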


\begin{lemm}
\label{lemm.permutation.comp.g}
$\pi \act (\pi' \act g) \equiv (\pi \fcomp \pi') \act g$
\end{lemm}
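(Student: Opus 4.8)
The plan is to proceed by a straightforward induction on the structure of $g$, following the grammar of Definition~\ref{defn.terms.g} and unfolding the permutation action of Definition~\ref{defn.perm.g} on both sides of the claimed identity. This mirrors part~2 of Lemma~\ref{lemm.permutation.comp} for permissive nominal terms, and I expect no genuine difficulty; throughout, $\equiv$ is syntactic identity, so no appeal to $\aeq$ is needed.

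First I would dispatch the base cases. For an atom $a$, both sides reduce to $\pi(\pi'(a))$: the left via $\pi \act (\pi' \act a) \equiv \pi \act \pi'(a) \equiv \pi(\pi'(a))$, and the right via the defining property $(\pi \fcomp \pi')(a) = \pi(\pi'(a))$ from Definition~\ref{def.permutation}. For an unknown $X$ and for a term-former $\tf f$, both sides are simply $X$ and $\tf f$ respectively, since the permutation action fixes them.

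Next the inductive cases. For an application $g'g$, I would push $\pi \act (\pi' \act {\cdot})$ through the two immediate subterms using the application clause, apply the inductive hypothesis to $g'$ and to $g$ separately, and recombine; unfolding the right-hand side $(\pi \fcomp \pi') \act (g'g)$ gives the identical pair of subterms. For an abstraction $\lam{a}g$, the left-hand side unfolds to $\lam{\pi(\pi'(a))}(\pi \act (\pi' \act g))$, while the right-hand side unfolds to $\lam{(\pi \fcomp \pi')(a)}((\pi \fcomp \pi') \act g)$; these agree on the body by the inductive hypothesis and on the binding atom by the definition of composition.

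The only point requiring any attention — and it is scarcely an obstacle — is the abstraction case, where one must check that the bound atom is transported consistently on the two sides, that is, that $\pi(\pi'(a)) = (\pi \fcomp \pi')(a)$; this is immediate from Definition~\ref{def.permutation}. All remaining steps are mechanical rewriting by the two defining equations.
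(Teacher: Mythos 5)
Your proof is correct and follows exactly the route the paper takes: the paper dispatches Lemma~\ref{lemm.permutation.comp.g} as a routine structural induction on $g$, which is precisely what you have carried out, unfolding Definition~\ref{defn.perm.g} in each case and using $(\pi \fcomp \pi')(a) = \pi(\pi'(a))$ for the atom and abstraction cases. Since $\equiv$ is syntactic identity on raw $\lambda$-term syntax, your observation that no appeal to $\aeq$ or freshness is needed is also right.
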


\begin{lemm}
\label{lemm.pi.ftma.g}
$\f{fa}(\pi\act g)=\pi\act\f{fa}(g)$.
\end{lemm}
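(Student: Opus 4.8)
The plan is to prove this by a routine structural induction on $g$, exactly mirroring the argument for permissive nominal terms (Lemma~\ref{lemm.pi.ftma}). The one fact I would lean on throughout is that the pointwise action of Definition~\ref{defn.pointwise.action} is that of a \emph{bijection} on atoms, and hence commutes with the Boolean set operations: for any $A,B\subseteq\mathbb A$ we have $\pi\act(A\cup B)=(\pi\act A)\cup(\pi\act B)$ and $\pi\act(A\setminus B)=(\pi\act A)\setminus(\pi\act B)$, the latter using injectivity of $\pi$.

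First I would dispatch the base cases by unfolding Definitions~\ref{defn.perm.g} and~\ref{defn.fa.g}. For $g\equiv a$ both sides compute to $\{\pi(a)\}$. For $g\equiv X$ and $g\equiv\tf f$ the permutation action fixes $g$ while $\f{fa}(g)=\varnothing$, so both sides are $\varnothing$; note that, unlike the permissive nominal case where the unknown carries a suspended permutation, here the unknown case is trivial precisely because $\pi\act X\equiv X$ and $\f{fa}(X)=\varnothing$.

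Then I would treat the two genuinely inductive cases. For application $g\equiv g'g''$, I would compute $\f{fa}(\pi\act(g'g''))=\f{fa}(\pi\act g')\cup\f{fa}(\pi\act g'')$, apply the two inductive hypotheses, and reassemble using commutation of the pointwise action with $\cup$ to reach $\pi\act\f{fa}(g'g'')$. For abstraction $g\equiv\lam{a}g'$, I would compute $\f{fa}(\pi\act\lam{a}g')=\f{fa}(\pi\act g')\setminus\{\pi(a)\}$ via Definitions~\ref{defn.perm.g} and~\ref{defn.fa.g}, apply the inductive hypothesis to the first factor, and then invoke commutation with set difference together with $\pi\act\{a\}=\{\pi(a)\}$ to conclude $(\pi\act\f{fa}(g'))\setminus\{\pi(a)\}=\pi\act(\f{fa}(g')\setminus\{a\})=\pi\act\f{fa}(\lam{a}g')$.

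The only step requiring any care — and the closest thing to an obstacle — is the abstraction case, where the identity $\pi\act(\f{fa}(g')\setminus\{a\})=(\pi\act\f{fa}(g'))\setminus\{\pi(a)\}$ depends essentially on $\pi$ being injective, so that deleting $a$ before applying $\pi$ agrees with deleting $\pi(a)$ afterwards. Everything else is a direct unfolding of definitions.
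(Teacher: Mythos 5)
Your proof is correct and takes exactly the approach the paper intends: the paper states that Lemma~\ref{lemm.pi.ftma.g} is proved ``by induction on $g$'' and gives no further detail, and your structural induction, with the observation that the pointwise action of a bijection commutes with $\cup$ and (using injectivity) with $\setminus$, fills in those routine details faithfully.
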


\begin{lemm}
\label{lemm.pi.aeq.g}
$g \aeq h$ implies $\pi\act g\aeq\pi\act h$.
\end{lemm}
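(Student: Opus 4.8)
The plan is to proceed by induction on the derivation of $g \aeq h$, mirroring exactly the proof of the corresponding permissive-nominal result (Lemma~\ref{lemm.equality.permutation}). The base cases $\rulefont{\lambda{ \aeq }a}$, $\rulefont{\lambda{ \aeq }\tf f}$ and $\rulefont{\lambda{ \aeq }X}$ are immediate: the two sides of each are syntactically identical, and since $\pi\act a\equiv\pi(a)$, $\pi\act\tf f\equiv\tf f$ and $\pi\act X\equiv X$ (Definition~\ref{defn.perm.g}), applying $\pi$ to both sides again yields identical terms, so the same axiom re-derives the goal.

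The two congruence cases are then handled directly by the inductive hypothesis. For $\rulefont{\lambda{ \aeq }\lambda aa}$ the premise is $g\aeq h$; the IH gives $\pi\act g\aeq\pi\act h$, and since $\pi\act(\lam a g)\equiv\lam{\pi(a)}(\pi\act g)$, re-applying $\rulefont{\lambda{ \aeq }\lambda aa}$ closes the case. For $\rulefont{\lambda{\aeq}p}$ the IH applied to both premises, together with $\pi\act(g'g)\equiv(\pi\act g')(\pi\act g)$, lets me re-apply $\rulefont{\lambda{\aeq}p}$.

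The main obstacle is the binder-renaming rule $\rulefont{\lambda{ \aeq }\lambda ab}$, where from $(b\ a)\act g\aeq h$ and $b\notin\f{fa}(g)$ we concluded $\lam a g\aeq\lam b h$, and must now derive $\lam{\pi(a)}(\pi\act g)\aeq\lam{\pi(b)}(\pi\act h)$. To re-apply $\rulefont{\lambda{ \aeq }\lambda ab}$ I need two ingredients. The freshness side-condition $\pi(b)\notin\f{fa}(\pi\act g)$ follows from Lemma~\ref{lemm.pi.ftma.g}, since $\f{fa}(\pi\act g)=\pi\act\f{fa}(g)$ and $\pi$ is injective with $b\notin\f{fa}(g)$. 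The equational premise $(\pi(b)\ \pi(a))\act(\pi\act g)\aeq\pi\act h$ is where the real work lies: the IH applied to $(b\ a)\act g\aeq h$ yields $\pi\act((b\ a)\act g)\aeq\pi\act h$, and by Lemma~\ref{lemm.permutation.comp.g} both this and the target permutation can be regrouped, reducing the goal to the conjugation identity $(\pi(b)\ \pi(a))\fcomp\pi=\pi\fcomp(b\ a)$. I would verify this identity by a short check on an arbitrary atom $c$, splitting on $c=a$, $c=b$, and otherwise (using bijectivity of $\pi$ in the last subcase to see $\pi(c)\neq\pi(a),\pi(b)$). Given it, $((\pi(b)\ \pi(a))\fcomp\pi)\act g\equiv(\pi\fcomp(b\ a))\act g\equiv\pi\act((b\ a)\act g)\aeq\pi\act h$ supplies the required premise, and $\rulefont{\lambda{ \aeq }\lambda ab}$ finishes the case.
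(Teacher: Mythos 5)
Your proposal is correct and follows essentially the same route as the paper: induction on the derivation of $g \aeq h$, with the only non-trivial case being \rulefont{\lambda{\aeq}\lambda ab}, handled via Lemma~\ref{lemm.permutation.comp.g}, the conjugation identity $\pi \fcomp (b\ a) = (\pi(b)\ \pi(a)) \fcomp \pi$, and Lemma~\ref{lemm.pi.ftma.g} for the freshness side-condition. The only difference is cosmetic: the paper states the conjugation identity as a bare fact, whereas you spell out its atom-by-atom verification.
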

\begin{proof}
By induction on the derivation of $g \aeq h$.
We consider one case:
\begin{squishitem}
\item
The case \rulefont{{\lambda}{\aeq}{\lambda}ab}.\quad
By inductive hypothesis $\pi \act ((b\ a) \act g) \aeq \pi \act h$.
By Lemma~\ref{lemm.permutation.comp.g}, $\pi \act ((b\ a) \act g) \equiv (\pi \fcomp (b\ a)) \act g$.
It is a fact that $\pi \fcomp (b\ a) = (\pi(b)\ \pi(a)) \fcomp \pi$, therefore $(\pi(b)\ \pi(a)) \act (\pi \act g) \aeq \pi \act h$, by Lemma~\ref{lemm.permutation.comp.g}.
By Lemma~\ref{lemm.pi.ftma.g}, $\pi(b) \not\in \f{fa}(\pi \act g)$.
Using \rulefont{{\lambda}{\aeq}{\lambda}ab}, we obtain $\lam{\pi(a)}\pi \act g \aeq \lam{\pi(b)}\pi \act h$.
The result follows from Definition~\ref{defn.perm.g}.
\end{squishitem}
\end{proof}

\begin{lemm}
\label{lemm.aeq.fa.g}
If $g \aeq h$ then $\f{fa}(g) = \f{fa}(h)$.
\end{lemm}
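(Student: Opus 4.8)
The plan is to induct on the derivation of $g \aeq h$, mirroring the analogous result for permissive nominal terms, Lemma~\ref{lemm.aeq.ftma.pres}. The axiom cases $\rulefont{\lambda{\aeq}a}$, $\rulefont{\lambda{\aeq}\tf f}$ and $\rulefont{\lambda{\aeq}X}$ are immediate, since in each of them $g\equiv h$ and so trivially $\f{fa}(g)=\f{fa}(h)$.

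The two congruence cases fall out from the inductive hypothesis together with the clauses of Definition~\ref{defn.fa.g}. For $\rulefont{\lambda{\aeq}p}$, from $g\aeq g'$ and $h\aeq h'$ the inductive hypothesis gives $\f{fa}(g)=\f{fa}(g')$ and $\f{fa}(h)=\f{fa}(h')$, whence $\f{fa}(gh)=\f{fa}(g)\cup\f{fa}(h)=\f{fa}(g')\cup\f{fa}(h')=\f{fa}(g'h')$. For $\rulefont{\lambda{\aeq}\lambda aa}$, from $g\aeq h$ the inductive hypothesis gives $\f{fa}(g)=\f{fa}(h)$, so $\f{fa}(\lam{a}g)=\f{fa}(g)\setminus\{a\}=\f{fa}(h)\setminus\{a\}=\f{fa}(\lam{a}h)$.

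The only case requiring attention is the renaming rule $\rulefont{{\lambda}{\aeq}{\lambda}ab}$, whose premises are $(b\ a)\act g\aeq h$ and $b\not\in\f{fa}(g)$, and for which we must establish $\f{fa}(\lam{a}g)=\f{fa}(\lam{b}h)$. Here the inductive hypothesis yields $\f{fa}((b\ a)\act g)=\f{fa}(h)$, and Lemma~\ref{lemm.pi.ftma.g} rewrites the left-hand side as $(b\ a)\act\f{fa}(g)$, so that $\f{fa}(h)=(b\ a)\act\f{fa}(g)$. It then suffices to verify the set-theoretic identity
$$
\f{fa}(g)\setminus\{a\} = \big((b\ a)\act\f{fa}(g)\big)\setminus\{b\},
$$
since the left side is $\f{fa}(\lam{a}g)$ and the right side is $\f{fa}(h)\setminus\{b\}=\f{fa}(\lam{b}h)$. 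This is exactly where the freshness side-condition $b\not\in\f{fa}(g)$ does its work: because $b\not\in\f{fa}(g)$, the swapping $(b\ a)$ acting pointwise on $\f{fa}(g)$ either sends the single element $a\in\f{fa}(g)$ to $b$ and fixes all the rest, or (when $a\not\in\f{fa}(g)$) acts as the identity on $\f{fa}(g)$; in both subcases, deleting $b$ afterwards returns precisely $\f{fa}(g)\setminus\{a\}$.

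I expect no genuine obstacle. The entire substance of the argument is the interaction of the pointwise permutation action with set difference in the abstraction case, and this is fully controlled by Lemma~\ref{lemm.pi.ftma.g} and the freshness hypothesis $b\not\in\f{fa}(g)$ carried by $\rulefont{{\lambda}{\aeq}{\lambda}ab}$; the remaining cases are bookkeeping.
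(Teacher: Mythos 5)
Your proof is correct and follows essentially the same route as the paper's: induction on the derivation of $g \aeq h$, with the only non-trivial case being \rulefont{{\lambda}{\aeq}{\lambda}ab}, handled via Lemma~\ref{lemm.pi.ftma.g} and the set identity $\f{fa}(g)\setminus\{a\} = ((b\ a)\act\f{fa}(g))\setminus\{b\}$ justified by $b\not\in\f{fa}(g)$. Your write-up merely spells out the trivial cases and the case analysis behind that identity, which the paper leaves implicit.
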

\begin{proof}
By induction on the derivation of $g \aeq h$.
We consider one case;
\begin{squishitem}
\item
The case \rulefont{{\lambda}{\aeq}{\lambda}ab}.\quad
Suppose $\lam{a}g \aeq \lam{b}h$ using \rulefont{{\lambda}{\aeq}{\lambda}ab}, where $b \not\in \f{fa}(g)$.
We aim to show $\f{fa}(\lam{a}g) = \f{fa}(\lam{b}h)$, or, $\f{fa}(g) \setminus \{ a \} = \f{fa}(h) \setminus \{ b \}$.
As $b \not\in \f{fa}(g)$ we have $\f{fa}(g) \setminus \{ a \} = (b\ a) \act \f{fa}(g) \setminus \{ b \}$.
Using Lemma~\ref{lemm.pi.ftma.g}, $(b\ a) \act \f{fa}(g) \setminus \{ b \} = \f{fa}((b\ a) \act g) \setminus \{ b \}$.
By inductive hypothesis $\f{fa}((b\ a) \act g) = \f{fa}(s)$, as required.
\end{squishitem}
\end{proof}

\begin{defn}
\label{defn.depth.g}
Define a notion of \deffont{size} on $\lambda$-terms by:
\begin{frameqn}
\begin{gathered}
\f{size}(a) = 0
\quad
\f{size}(X) = 0
\quad
\f{size}(\tf f) = 0
\quad
\f{size}(g'g) = \f{size}(g') {+} \f{size}(g)
\\
\f{size}(\lam{a}g) = 1 {+} \f{size}(g)
\end{gathered}
\end{frameqn}
\end{defn}

\begin{lemm}
\label{lemm.pi.depth.invariant.g}
$\f{size}(g) = \f{size}(\pi \act g)$
\end{lemm}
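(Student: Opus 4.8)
The plan is to prove this by a routine structural induction on $g$, using Definition~\ref{defn.depth.g} for $\f{size}$ and Definition~\ref{defn.perm.g} for the permutation action. The guiding observation is that the permutation action only relabels the atoms occurring at the leaves (and in the binders) of $g$, but never alters the underlying tree shape; since $\f{size}$ counts only the nesting of $\lambda$-abstractions and is blind to which atom labels a binder, applying $\pi$ cannot change it.

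First I would dispatch the three base cases. For $g \equiv a$, we have $\pi \act a \equiv \pi(a)$, which is again an atom, so $\f{size}(\pi \act a) = 0 = \f{size}(a)$; the cases $g \equiv X$ and $g \equiv \tf f$ are immediate since $\pi \act X \equiv X$ and $\pi \act \tf f \equiv \tf f$, so $\f{size}$ is manifestly preserved.

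Next I would treat the two inductive cases. For application $g \equiv g' g''$, Definition~\ref{defn.perm.g} gives $\pi \act (g' g'') \equiv (\pi \act g')(\pi \act g'')$, so by Definition~\ref{defn.depth.g} and the inductive hypothesis applied to $g'$ and $g''$ we get $\f{size}(\pi \act (g' g'')) = \f{size}(\pi \act g') + \f{size}(\pi \act g'') = \f{size}(g') + \f{size}(g'') = \f{size}(g' g'')$. For abstraction $g \equiv \lam{a}h$, we have $\pi \act (\lam{a}h) \equiv \lam{\pi(a)}(\pi \act h)$, hence $\f{size}(\pi \act (\lam{a}h)) = 1 + \f{size}(\pi \act h) = 1 + \f{size}(h) = \f{size}(\lam{a}h)$ by the inductive hypothesis on $h$.

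There is no real obstacle here: the only thing one must notice is that the abstraction clause of $\f{size}$ adds a constant $1$ independently of which atom names the bound variable, so the fact that $\pi$ may rename $a$ to $\pi(a)$ is irrelevant. This lemma is needed downstream to justify well-founded inductions on $\f{size}$ that are invariant under the permutation action (mirroring the analogous invariance results already established for permissive nominal terms), and the same template will serve for those later uses.
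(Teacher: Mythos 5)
Your proof is correct and takes essentially the same approach as the paper, which simply states ``By induction on $g$''; your write-up is the routine structural induction spelled out in full, with all cases handled exactly as the definitions of $\f{size}$ (Definition~\ref{defn.depth.g}) and the permutation action (Definition~\ref{defn.perm.g}) dictate.
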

\begin{proof}
By induction on $g$.
\end{proof}

\begin{prop}
\label{prop.aeq.transitive.g}
$\aeq$ is transitive, reflexive, and symmetric.
\end{prop}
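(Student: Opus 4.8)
The plan is to prove the three properties separately, exactly mirroring the development of Proposition~\ref{prop.aeq.transitive} for permissive nominal terms but using the $\lambda$-term analogues Lemmas~\ref{lemm.ds.pi.g}--\ref{lemm.aeq.fa.g}. Reflexivity is the easiest: I would argue by structural induction on $g$, producing $g\aeq g$ from \rulefont{{\lambda}{\aeq}a}, \rulefont{{\lambda}{\aeq}X}, \rulefont{{\lambda}{\aeq}\tf f} in the leaf cases, from \rulefont{{\lambda}{\aeq}p} and the two inductive hypotheses in the application case, and from \rulefont{{\lambda}{\aeq}{\lambda}aa} and the inductive hypothesis in the abstraction case. (I will use freely that $\id\act g\equiv g$, which is immediate from Definition~\ref{defn.perm.g}.)

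For symmetry I would induct on the derivation of $g\aeq h$. Every rule except \rulefont{{\lambda}{\aeq}{\lambda}ab} is visibly symmetric, so the only real work is there: from $\lam{a}g'\aeq\lam{b}h'$ derived via $(b\ a)\act g'\aeq h'$ with $b\not\in\f{fa}(g')$, I must build $\lam{b}h'\aeq\lam{a}g'$, i.e.\ supply both $(a\ b)\act h'\aeq g'$ and $a\not\in\f{fa}(h')$. The side-condition follows from Lemma~\ref{lemm.aeq.fa.g} (giving $\f{fa}(h')=\f{fa}((b\ a)\act g')$) and Lemma~\ref{lemm.pi.ftma.g} (giving $\f{fa}((b\ a)\act g')=(b\ a)\act\f{fa}(g')$): since $b\not\in\f{fa}(g')$ one checks $a\not\in(b\ a)\act\f{fa}(g')$. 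For the equation, the inductive hypothesis gives $h'\aeq (b\ a)\act g'$; applying Lemma~\ref{lemm.pi.aeq.g} with the swapping $(b\ a)=(a\ b)$ and simplifying the right-hand side by Lemma~\ref{lemm.permutation.comp.g} together with $(b\ a)\fcomp(b\ a)=\id$ yields $(a\ b)\act h'\aeq g'$, whence \rulefont{{\lambda}{\aeq}{\lambda}ab} closes the case.

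Transitivity is the crux, and I would prove it by induction on a notion of syntactic size that is invariant under the permutation action (so that the renaming introduced in the binder case does not increase the measure; cf.\ Lemma~\ref{lemm.pi.depth.invariant.g}). The key structural observation is that $\aeq$ relates only terms of the same top-level shape, so given $g\aeq h$ and $h\aeq k$ I may case-split on the common shape of $g,h,k$. The atom, unknown, and term-former cases force $g\equiv h\equiv k$ and are immediate; the application case $h\equiv h_1h_2$ forces $g\equiv g_1g_2$, $k\equiv k_1k_2$ with both derivations ending in \rulefont{{\lambda}{\aeq}p}, and I recurse componentwise on the strictly smaller pairs $(g_i,h_i,k_i)$. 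The abstraction case is where all the difficulty lives: each of the two derivations may end in \rulefont{{\lambda}{\aeq}{\lambda}aa} or \rulefont{{\lambda}{\aeq}{\lambda}ab}, giving four sub-cases. I expect the both-\rulefont{{\lambda}{\aeq}{\lambda}ab} sub-case to be the main obstacle: from $(b\ a)\act g'\aeq h'$ and $(c\ b)\act h'\aeq k'$ I would transport the first equation along $(c\ b)$ using Lemma~\ref{lemm.pi.aeq.g}, rewrite the composite swapping via Lemma~\ref{lemm.permutation.comp.g}, apply the inductive hypothesis through the strictly smaller (permuted) middle body $(c\ b)\act h'$, and finally discharge the freshness side-condition for the target rule using Lemmas~\ref{lemm.aeq.fa.g} and~\ref{lemm.pi.ftma.g}. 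The careful bookkeeping of swappings and their effect on free atoms is the only genuinely delicate part; everything else is a routine unwinding of the rules.
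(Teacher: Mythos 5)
Your overall plan coincides with the paper's proof: reflexivity by induction on terms, symmetry by induction on derivations, transitivity by induction on a permutation-invariant size, and your symmetry case for \rulefont{\lambda{\aeq}\lambda ab} matches the paper's argument in detail. However, in the transitivity crux case (both derivations ending in \rulefont{\lambda{\aeq}\lambda ab}) the chain of steps you describe stops short of what the rule needs. After transporting $(b\ a)\act g'\aeq h'$ along $(c\ b)$ with Lemma~\ref{lemm.pi.aeq.g}, rewriting with Lemma~\ref{lemm.permutation.comp.g}, and applying the inductive hypothesis against $(c\ b)\act h'\aeq k'$, what you hold is
\[
((c\ b)\fcomp(b\ a))\act g'\aeq k' ,
\]
whereas \rulefont{\lambda{\aeq}\lambda ab} requires the premise $(c\ a)\act g'\aeq k'$. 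These are not the same: $(c\ b)\fcomp(b\ a)$ is the $3$-cycle sending $a\mapsto c$, $c\mapsto b$, $b\mapsto a$, so it genuinely differs from the swapping $(c\ a)$ at $b$ and at $c$, and Lemma~\ref{lemm.permutation.comp.g} (which only collapses nested actions into a composed permutation) cannot convert one into the other. The freshness bookkeeping you invoke (Lemmas~\ref{lemm.aeq.fa.g} and~\ref{lemm.pi.ftma.g}, giving $c\not\in\f{fa}(g')$) discharges only the side-condition of the rule, not this premise.

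The missing step is precisely where Lemma~\ref{lemm.ds.pi.g} earns its place --- you list it in your toolkit but never deploy it. Since $b\not\in\f{fa}(g')$ and $c\not\in\f{fa}(g')$, the permutations $(c\ b)\fcomp(b\ a)$ and $(c\ a)$ agree on $\f{fa}(g')$, so Lemma~\ref{lemm.ds.pi.g} gives $(c\ a)\act g'\aeq((c\ b)\fcomp(b\ a))\act g'$; one further appeal to the inductive hypothesis (legitimate, since all terms involved have the size of $g'$, strictly below that of $\lam{a}g'$, by Lemma~\ref{lemm.pi.depth.invariant.g}) chains this with the displayed equation to yield $(c\ a)\act g'\aeq k'$, and the rule then closes the case. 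This is exactly how the paper proceeds: at this point it cites Lemmas~\ref{lemm.pi.depth.invariant.g}, \ref{lemm.permutation.comp.g} and~\ref{lemm.ds.pi.g} together. With that one repair your proposal is correct and is essentially the paper's own argument.
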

\begin{proof}
See Appendix~\ref{sect.omitted.proofs}.
We use Lemmas~\ref{lemm.ds.pi.g}, \ref{lemm.permutation.comp.g}, \ref{lemm.pi.ftma.g}, \ref{lemm.pi.aeq.g}, \ref{lemm.aeq.fa.g} and~\ref{lemm.pi.depth.invariant.g}.
\end{proof}

\begin{defn}
\label{defn.sigma}
Call a function $\sigma$ from unknowns to $\lambda$-terms a ($\lambda$-calculus) \deffont{substitution}.
$\sigma$ will range over substitutions (and later so will $\rho$; Definition~\ref{defn.renaming}).

We will write $[h/X]$ for the substitution which maps $X$ to $h$ and maps all other $Y$ to $Y$.
\end{defn}

\begin{defn}
\label{defn.subst.g}
Define the \deffont{capture-avoiding substitution} action $g\sigma$ on $\lambda$-terms by:
\begin{frameqn}
\begin{gathered}
a\sigma \equiv a
\qquad 
X\sigma \equiv \sigma(X)
\qquad
\tf{f}\sigma \equiv \tf{f}
\qquad
(g'g)\sigma \equiv (g'\sigma)(g\sigma)
\\
\begin{array}{r@{\ }l@{\ \ }l}
(\lam{a}g)\sigma \equiv& \lam{a}(g\sigma) & (a \not\in \bigcup \{\f{fa}(\sigma(X))\mid X\in\f{fV}(g)\}) 
\\[1ex]
(\lam{a}g)\sigma \equiv& \lam{b}(((b\ a) {\act} g)\sigma) &  (a \in \bigcup \{\f{fa}(\sigma(X))\mid X\in\f{fV}(g)\}) 
\end{array}
\end{gathered}
\end{frameqn}
In the final clause, `$b$ fresh' denotes a fixed but arbitrary choice of fresh $b$ (so $b \not\in \f{fa}(g)$ and $b\not\in \bigcup \{\f{fa}(\sigma(X)){\mid} X{\in}\f{fV}(g)\}$).
\end{defn}

\begin{defn}
\label{defn.sigma.circ.g}
Define \deffont{composition} $\sigma \fcomp \sigma'$ by: $(\sigma \fcomp \sigma')(X) \equiv (\sigma(X))\sigma'$.
This mirrors the definition for substitutions on permissive terms, given in Definition~\ref{defn.sigma.com}.
\end{defn} 

\begin{lemm}
\label{lemm.sigma.fcomp.g}
$g\sigma\sigma' \aeq g(\sigma \fcomp \sigma')$
\end{lemm}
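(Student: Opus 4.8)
The plan is to prove $g\sigma\sigma'\aeq g(\sigma\fcomp\sigma')$ by induction, with the $\lambda$-abstraction case carrying all the difficulty: because capture-avoiding substitution (Definition~\ref{defn.subst.g}) may rename bound atoms, we can only hope for $\aeq$ here, unlike the capturing case of Lemma~\ref{lemm.sigma.sigma'.circ} which gave $\equiv$. Since the renaming clause of Definition~\ref{defn.subst.g} recurses on $(b\ a)\act g$, which is \emph{not} a subterm of $\lam{a}g$, I would not use plain structural induction but instead induct on a permutation-invariant measure — concretely the number of syntactic nodes, which (exactly as $\f{size}$ in Lemma~\ref{lemm.pi.depth.invariant.g}) is unchanged by the permutation action, strictly decreases from $\lam{a}g$ to $(b\ a)\act g$, and strictly decreases from $g'g$ to each of $g'$ and $g$. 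The base cases $g\equiv a$, $g\equiv\tf f$, $g\equiv X$ are in fact syntactic identities, e.g. $X\sigma\sigma'\equiv(\sigma(X))\sigma'\equiv X(\sigma\fcomp\sigma')$ by Definition~\ref{defn.sigma.circ.g}, closed by reflexivity of $\aeq$ (Proposition~\ref{prop.aeq.transitive.g}); the application case unfolds both substitution actions, applies the inductive hypothesis to $g'$ and to $g$, and recombines with the congruence rule \rulefont{{\lambda}{\aeq}p}.

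The crux is $g\equiv\lam{a}g_0$. Rather than chase the several clauses of Definition~\ref{defn.subst.g} (which fire according to whether $a$ lies among the free atoms of the relevant $\sigma$- or $\sigma'$-images), I would first $\alpha$-rename the bound atom to a single atom $c$ chosen fresh for $g_0$ and for the free atoms of every substitution-image that can arise on either side — a finite set, so such $c$ exists since each $\f{fa}$ is finite (Definition~\ref{defn.fa.g}) and atoms are infinite. With $c$ this fresh, all three substitution computations take the \emph{non-renaming} clause of Definition~\ref{defn.subst.g} on the $c$-binder, giving $(\lam{c}((c\ a)\act g_0))\sigma\sigma'\equiv\lam{c}(((c\ a)\act g_0)\sigma\sigma')$ and $(\lam{c}((c\ a)\act g_0))(\sigma\fcomp\sigma')\equiv\lam{c}(((c\ a)\act g_0)(\sigma\fcomp\sigma'))$. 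Since $(c\ a)\act g_0$ has strictly smaller measure than $\lam{a}g_0$, the inductive hypothesis yields $((c\ a)\act g_0)\sigma\sigma'\aeq((c\ a)\act g_0)(\sigma\fcomp\sigma')$, and \rulefont{{\lambda}{\aeq}{\lambda}aa} lifts this under the common binder $c$.

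The one remaining gap — and the main obstacle — is to connect $\lam{a}g_0$ with its renaming $\lam{c}((c\ a)\act g_0)$ \emph{after} substitution, i.e.\ to know $(\lam{a}g_0)\sigma\sigma'\aeq(\lam{c}((c\ a)\act g_0))\sigma\sigma'$ and likewise for $\sigma\fcomp\sigma'$. This is an instance of the fact that capture-avoiding substitution is well-defined on $\alpha$-equivalence classes, namely that $g\aeq h$ implies $g\sigma\aeq h\sigma$ (the $\lambda$-calculus analogue of Lemma~\ref{lemm.aeq.subst}). I would establish this compatibility first, as a routine preliminary induction on derivations of $\aeq$ whose only delicate case, \rulefont{{\lambda}{\aeq}{\lambda}ab}, is handled by the same fresh-renaming device together with a commutation of the permutation action past substitution up to $\aeq$ (provable from Lemmas~\ref{lemm.permutation.comp.g}, \ref{lemm.pi.ftma.g}, \ref{lemm.pi.aeq.g} and~\ref{lemm.aeq.fa.g}). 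Granting it, the noted facts chain together by transitivity of $\aeq$ (Proposition~\ref{prop.aeq.transitive.g}) to close the abstraction case. I expect essentially all the real work to sit in proving that substitution respects $\aeq$ and in pinning down the fresh-atom bookkeeping; once those are in place the present lemma reduces to a short transitivity argument.
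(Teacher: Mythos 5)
Your proposal is correct and follows essentially the same route as the paper, whose entire proof of this lemma is the single line ``By induction on $\f{size}(g)$'': your induction on a permutation-invariant measure, with the fresh-renaming treatment of the abstraction case and the appeal to transitivity, is exactly the elaboration that one-liner presupposes. The details you supply --- the preliminary compatibility lemma that $g\aeq h$ implies $g\sigma\aeq h\sigma$, and a node-count measure that (unlike the paper's $\f{size}$ of Definition~\ref{defn.depth.g}, which assigns $0$ to atoms, unknowns and term-formers and so need not strictly decrease from $g'g$ to $g'$) strictly decreases in the application case as well --- are left implicit by the paper, and your handling of them is sound.
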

\begin{proof}
By induction on $\f{size}(g)$.
\end{proof}

We also need a substitution action on atoms so that we can talk about $\alpha\beta$-convertibility --- the distinction between unknowns and atoms is rather artificial here, but in the context of relating to permissive nominal terms it is useful to maintain it:
\begin{defn}
\label{defn.gha}
Define a \deffont{capture-avoiding substitution} $g[h/a]$ by: 
\begin{frameqn}
\begin{gathered}
a[h/a] \equiv h
\qquad 
b[h/a] \equiv b
\qquad 
X[h/a] \equiv X
\qquad
\tf{f}[h/a] \equiv \tf{f}
\\
(g'g)[h/a] \equiv (g'[h/a])(g[h/a])
\qquad
(\lam{a}g)[h/a] \equiv \lam{a}g
\\
(\lam{a}g)[h/a] \equiv \lam{b}(((b\ a) \act g)[h/a]) \quad (b\text{ fresh})
\end{gathered}
\end{frameqn}
In the final clause, `$b$ fresh' denotes a fixed but arbitrary choice of $b$ such that $b\not\in \f{fa}(h)\cup\f{fa}(g)$.
\end{defn}

\begin{defn}
\label{defn.abeq}
Let \deffont{$\alpha\beta$-equivalence} $\abeq$ be the least transitive, reflexive, symmetric relation such that
$(\lam{a}g)h \abeq g[h/a]$  and closed under the rules of Definition~\ref{defn.alpha.lambda.terms}.
\end{defn}

We define unification problems as usual and write `$g\ueq h$' for an equality considered as part of a unification problem.
$\sigma$ solves a problem when $g\sigma\abeq h\sigma$ for every $g\ueq h$ in the problem, as usual.

We conclude with definions of \emph{pattern} and \emph{pattern substitution} \cite{miller:uniump,miller:logpll}. Recall that we work in an untyped $\lambda$-term syntax.

\begin{defn}
\label{defn.phi.patterns}
Let $\phi$ map each unknown $X$ to a natural number which we call its \deffont{arity}.
Define \deffont{$\phi$-patterns}, a subset of $\lambda$-terms, by: 
\begin{frameqn}
q,r,\ldots ::= a \mid Xa_1\ldots a_{\phi(X)} \mid \tf f q_1\ldots q_n \mid \lam{a}q
\end{frameqn}
Call $q$ a \deffont{pattern} when it is a $\phi$-pattern for some $\phi$.
$q,r,\ldots$ will range over patterns.

Call $\sigma$ a \deffont{$\phi$-pattern substitution} when every $\sigma(X)$ is a $\phi$-pattern. 
Call $\sigma$ a \deffont{pattern substitution} when $\sigma$ is a $\phi$-pattern substitution for some $\phi$.
\end{defn}
So $g$ is a pattern when there exists some $\phi$ such that every $X$ in $g$ occurs as $Xa_1\ldots a_{\phi(X)}$.
This is not quite a typing constraint, but it achieves part of what a typing system would achieve; that within $g$, for each $X$, $X$ is consistently applied to a list of atoms of the same length. 
Note that the translation of Definition~\ref{defn.nominal.trans.g} below, produces terms of this form.

\section{Translating nominal terms to lambda-term syntax}
\label{sect.nominal.terms.to.lambda}

\subsection{The translation, and its soundness} 

We define the translation from permissive nominal terms to $\lambda$-terms, and show that it is sound in the sense that $\alpha$-convertible permissive nominal terms map to $\alpha$-convertible $\lambda$-terms (Theorem~\ref{thrm.nomeq.lameq}).
The translation involves a vector of atoms $D$; we discuss where this comes from in Subsection~\ref{subsect.capt}.

\begin{defn}
\label{defn.vector}
Call a finite list of distinct atoms a \deffont{vector}.
$C,D$ range over vectors.
Write $[a_1,\ldots,a_n]$ for the vector containing $a_1$, \ldots, $a_n$ in that order.
\end{defn}

\begin{defn}
\label{defn.sets.notation}
Suppose $A\subseteq\mathbb A$. 
Write $C \cap A$ for the vector of atoms in $C$ that occur in $A$, in order; thus $[a_1,a_2,a_3]\cap \{a_1,a_3,a_5\} = [a_1,a_3]$.
Write $C \subseteq A$ when every atom in $C$ is in $A$.
Write $A \subseteq C$ when every atom in $A$ is in $C$.
\end{defn} 

\begin{defn}
\label{defn.nominal.trans.g}
Translate a nominal term $r$ to a $\lambda$-term $\inter{r}^D$ by:
\begin{frameqn}
\begin{gathered}
\inter{a}^D\equiv a
\quad
\inter{\pi\act X^S}^D \equiv X^S\pi(d_1)\ldots\pi(d_n) \quad ([d_1,\ldots,d_n] = D \cap S)
\\
\inter{[a]r}^D\equiv \lam{a}\inter{r}^D
\quad
\inter{\tf f(r_1,\ldots,r_n)}^D\equiv \tf f\inter{r_1}^D\ldots \inter{r_n}^D
\end{gathered}
\end{frameqn}
\end{defn}

\begin{lemm}
\label{lemm.commutation}
$\inter{\pi\act r}^D \equiv \pi\act \inter{r}^D$
\end{lemm}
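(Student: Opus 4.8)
The plan is to induct on the structure of the permissive nominal term $r$, unfolding the translation of Definition~\ref{defn.nominal.trans.g} together with the permutation action on permissive nominal terms (Definition~\ref{defn.perm}) on the left, and the permutation action on $\lambda$-terms (Definition~\ref{defn.perm.g}) on the right. In each case I would push $\pi$ through using Definition~\ref{defn.perm}, translate, and apply the inductive hypothesis, then check that this matches the result of translating first and acting by $\pi$ afterwards.

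The base case $r\equiv a$ is immediate: the left-hand side is $\inter{\pi\act a}^D\equiv\inter{\pi(a)}^D\equiv\pi(a)$, while the right-hand side is $\pi\act\inter{a}^D\equiv\pi\act a\equiv\pi(a)$. For $r\equiv\tf f(r_1,\ldots,r_n)$, Definition~\ref{defn.perm} gives $\pi\act r\equiv\tf f(\pi\act r_1,\ldots,\pi\act r_n)$, so the left-hand side is $\tf f\inter{\pi\act r_1}^D\ldots\inter{\pi\act r_n}^D$; by the inductive hypothesis each $\inter{\pi\act r_i}^D\equiv\pi\act\inter{r_i}^D$, and since the $\lambda$-term permutation action of Definition~\ref{defn.perm.g} fixes $\tf f$ and distributes over application, this equals $\pi\act(\tf f\inter{r_1}^D\ldots\inter{r_n}^D)\equiv\pi\act\inter{r}^D$. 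The abstraction case $r\equiv[a]s$ is similar, using $\pi\act[a]s\equiv[\pi(a)](\pi\act s)$ on the left and $\pi\act(\lam{a}g)\equiv\lam{\pi(a)}(\pi\act g)$ on the right, together with the inductive hypothesis for $s$. These cases are routine.

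The one case needing care is the moderated unknown $r\equiv\pi'\act X^S$, and this is where I expect the only real bookkeeping. Here Definition~\ref{defn.perm} gives $\pi\act(\pi'\act X^S)\equiv(\pi\fcomp\pi')\act X^S$, so the left-hand side translates to $X^S\,(\pi\fcomp\pi')(d_1)\ldots(\pi\fcomp\pi')(d_n)$ where $[d_1,\ldots,d_n]=D\cap S$. On the right, $\inter{\pi'\act X^S}^D\equiv X^S\pi'(d_1)\ldots\pi'(d_n)$, and applying $\pi$ via Definition~\ref{defn.perm.g} uses $\pi\act X^S\equiv X^S$ (unknowns are fixed by the $\lambda$-term action) and distributes over the applications, yielding $X^S\,\pi(\pi'(d_1))\ldots\pi(\pi'(d_n))$. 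The two agree because $(\pi\fcomp\pi')(d_i)=\pi(\pi'(d_i))$ by Definition~\ref{def.permutation}. The key observation keeping this honest is that the permutation action on permissive nominal terms does not alter the sort $S$ of the unknown, so the vector $D\cap S$ used to translate is literally the same on both sides; the whole effect of $\pi$ is felt only on the suspended atoms $\pi'(d_i)$, exactly where the composition $\pi\fcomp\pi'$ records it.
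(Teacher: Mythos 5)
Your proposal is correct and matches the paper's proof, which is exactly the same induction on the structure of $r$ (the paper simply states ``By induction on $r$'' and leaves the cases implicit). Your case analysis is accurate, including the key point in the $\pi'\act X^S$ case that the permutation action leaves the permission set $S$ (and hence the vector $D\cap S$) unchanged, so the identity reduces to $(\pi\fcomp\pi')(d_i)=\pi(\pi'(d_i))$.
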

\begin{proof}
By induction on $r$.
\end{proof}

Lemma~\ref{lemm.not.fa.fresh} is useful for the proof of Theorem~\ref{thrm.nomeq.lameq}:

\begin{lemm}
\label{lemm.not.fa.fresh}
$\f{fa}(\inter{r}^D)\subseteq\f{fa}(r)$.
\end{lemm}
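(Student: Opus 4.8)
The plan is a routine induction on the structure of the permissive nominal term $r$, matching the clauses of $\f{fa}$ on $\lambda$-terms (Definition~\ref{defn.fa.g}) against those of $\f{fa}$ on permissive nominal terms (Definition~\ref{defn.fa}) through the translation of Definition~\ref{defn.nominal.trans.g}. In each case I would compute $\f{fa}(\inter{r}^D)$ directly from the shape of the translated $\lambda$-term and compare it with $\f{fa}(r)$, appealing to the inductive hypothesis on immediate subterms.

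Three of the four cases are immediate and essentially structure-preserving. For $r\equiv a$ we have $\inter{a}^D\equiv a$, so $\f{fa}(\inter{a}^D)=\{a\}=\f{fa}(a)$. For $r\equiv\tf f(r_1,\ldots,r_n)$ the translation is the iterated application $\tf f\,\inter{r_1}^D\ldots\inter{r_n}^D$; since $\f{fa}(\tf f)=\varnothing$ and $\f{fa}$ of an application is the union of the $\f{fa}$ of its two immediate subterms, this telescopes to $\bigcup_i\f{fa}(\inter{r_i}^D)$, which by the inductive hypothesis is contained in $\bigcup_i\f{fa}(r_i)=\f{fa}(r)$. For $r\equiv[a]s$ we have $\inter{[a]s}^D\equiv\lam{a}\inter{s}^D$, so $\f{fa}(\inter{[a]s}^D)=\f{fa}(\inter{s}^D)\setminus\{a\}$; the inductive hypothesis $\f{fa}(\inter{s}^D)\subseteq\f{fa}(s)$ together with the monotonicity of $(-)\setminus\{a\}$ gives $\f{fa}(\inter{s}^D)\setminus\{a\}\subseteq\f{fa}(s)\setminus\{a\}=\f{fa}([a]s)$.

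The only case carrying any content is $r\equiv\pi\act X^S$. Here $\inter{\pi\act X^S}^D\equiv X^S\,\pi(d_1)\ldots\pi(d_n)$ with $[d_1,\ldots,d_n]=D\cap S$. Since $X^S$ is an unknown we have $\f{fa}(X^S)=\varnothing$, and each argument $\pi(d_i)$ is an atom contributing $\{\pi(d_i)\}$, so $\f{fa}(\inter{\pi\act X^S}^D)=\{\pi(d_1),\ldots,\pi(d_n)\}=\pi\act(D\cap S)$, using the pointwise action of Definition~\ref{defn.pointwise.action}. The one thing to get right will be the observation that the translation only ever feeds $X^S$ atoms drawn from $D\cap S$, and that $D\cap S\subseteq S$; hence $\pi\act(D\cap S)\subseteq\pi\act S=\f{fa}(\pi\act X^S)$, which is exactly the required inclusion. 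There is no genuine obstacle here — indeed this inclusion is typically strict, and it is precisely the restriction to $D\cap S$ (rather than all of $S$) that forces the statement to be an inclusion rather than an equality.
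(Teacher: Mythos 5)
Your proof is correct and follows essentially the same route as the paper's: a structural induction on $r$ whose only substantive case is $\pi\act X^S$, where both arguments compute $\f{fa}(\inter{\pi\act X^S}^D)=\pi\act(D\cap S)$ and conclude from $D\cap S\subseteq S$ that this is contained in $\pi\act S=\f{fa}(\pi\act X^S)$. The only cosmetic difference is that the paper invokes Lemma~\ref{lemm.pi.ftma} for the final equality where you read it directly off Definition~\ref{defn.fa}; both are fine.
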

\begin{proof}
By induction on $r$.
We consider only one case:
\begin{squishitem}
\item
The case $\pi \act X^S$.\quad 
We reason as follows:
\begin{calcenv}
\f{fa}(\inter{\pi \act X^S}^D) & = & \f{fa}(X^S \pi(d_1) \ldots \pi(d_n)) & \text{Definition~\ref{defn.nominal.trans.g}} \\
                               & = & \f{fa}(\pi(d_1)) \cup \ldots \cup \f{fa}(\pi(d_n)) & \text{Definition~\ref{defn.fa.g}} \\
                               & = & \pi \act (\f{fa}(d_1) \cup \ldots \cup \f{fa}(d_n)) & \text{Fact} \\
                               & \subseteq & \pi \act \f{fa}(X^S) & \text{Definition~\ref{defn.fa}} \\
                               & = & \f{fa}(\pi \act X^S) & \text{Lemma~\ref{lemm.pi.ftma}}
\end{calcenv}
The result follows.
\end{squishitem}
\end{proof}

\begin{thrm}[Soundness]
\label{thrm.nomeq.lameq}
If $r \aeq s$ then $\inter{r}^D \aeq \inter{s}^D$.
\end{thrm}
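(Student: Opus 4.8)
The plan is to induct on the derivation of $r \aeq s$, following the rules of Figure~\ref{fig.permissive.aeq}; four of the five cases are routine, and only one carries real content. For $\rulefont{{\aeq}aa}$ both sides translate to $a$ (Definition~\ref{defn.nominal.trans.g}), so I would conclude by reflexivity of $\aeq$ on $\lambda$-terms (Proposition~\ref{prop.aeq.transitive.g}). For $\rulefont{{\aeq}\tf f}$ the inductive hypotheses give $\inter{r_i}^D \aeq \inter{s_i}^D$ for each $i$; since $\inter{\tf f(r_1,\ldots,r_n)}^D \equiv \tf f\,\inter{r_1}^D \cdots \inter{r_n}^D$ is a left-nested application, a short secondary induction on $n$, using $\rulefont{\lambda{\aeq}\tf f}$ for the head $\tf f \aeq \tf f$ and the application-congruence rule $\rulefont{\lambda{\aeq}p}$, assembles the required $\aeq$. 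For $\rulefont{{\aeq}[a]}$ the inductive hypothesis $\inter{r}^D \aeq \inter{s}^D$ yields $\lam{a}\inter{r}^D \aeq \lam{a}\inter{s}^D$ directly by the same-binder abstraction rule, and $\inter{[a]r}^D \equiv \lam{a}\inter{r}^D$.

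The case $\rulefont{{\aeq}X}$ turns on a small syntactic observation. Here $\pi \act X^S \aeq \pi' \act X^S$ with $\pi|_S = \pi'|_S$, and by Definition~\ref{defn.nominal.trans.g} the two translations are $X^S\,\pi(d_1)\cdots\pi(d_n)$ and $X^S\,\pi'(d_1)\cdots\pi'(d_n)$, where $[d_1,\ldots,d_n] = D \cap S$. Each $d_i$ lies in $S$, so $\pi|_S = \pi'|_S$ forces $\pi(d_i) = \pi'(d_i)$ for every $i$; hence the two $\lambda$-terms are \emph{syntactically} identical ($\equiv$), and a fortiori $\aeq$.

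The main case, and the one I expect to be the only genuine obstacle, is $\rulefont{{\aeq}[b]}$, where $[a]r \aeq [b]s$ is derived from $(b\ a)\act r \aeq s$ together with $b \not\in \f{fa}(r)$. The inductive hypothesis gives $\inter{(b\ a)\act r}^D \aeq \inter{s}^D$, and by Lemma~\ref{lemm.commutation} the left-hand side equals $(b\ a)\act \inter{r}^D$, so $(b\ a)\act \inter{r}^D \aeq \inter{s}^D$. To apply the $\lambda$-calculus rule $\rulefont{\lambda{\aeq}\lambda ab}$ and obtain $\lam{a}\inter{r}^D \aeq \lam{b}\inter{s}^D$, i.e. $\inter{[a]r}^D \aeq \inter{[b]s}^D$, the remaining task is to discharge the side-condition $b \not\in \f{fa}(\inter{r}^D)$. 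This is exactly where Lemma~\ref{lemm.not.fa.fresh} enters: it gives $\f{fa}(\inter{r}^D) \subseteq \f{fa}(r)$, so the hypothesis $b \not\in \f{fa}(r)$ transfers immediately to $b \not\in \f{fa}(\inter{r}^D)$. The difficulty here is therefore not computational but a matter of plumbing: one must route the swapping through the translation (Lemma~\ref{lemm.commutation}) and transport the freshness assumption across it (Lemma~\ref{lemm.not.fa.fresh}), after which the two notions of $\alpha$-equivalence line up rule-for-rule, reflecting the deliberate design of Definition~\ref{defn.alpha.lambda.terms} to mirror Definition~\ref{defn.aeq}.
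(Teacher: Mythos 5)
Your proof is correct, but it takes a genuinely different route from the paper's in the one case that matters. You induct structurally on the derivation and, in the $\rulefont{{\aeq}[b]}$ case, apply the inductive hypothesis directly to the premise $(b\ a)\act r \aeq s$, push the swapping through the translation with Lemma~\ref{lemm.commutation}, transport the freshness side-condition with Lemma~\ref{lemm.not.fa.fresh}, and close with a single use of $\rulefont{\lambda{\aeq}\lambda ab}$. The paper instead inducts on the size of $r$ (Definition~\ref{defn.depth}) and takes a detour through a fresh atom $c\not\in\f{fa}(r)\cup\f{fa}(s)$: it first derives $(c\ a)\act r \aeq (c\ b)\act s$ (via Lemma~\ref{lemm.equality.permutation}), applies the inductive hypothesis to that statement --- which is not a subderivation of the original one, hence the need for size induction --- and then renames both abstractions to $c$ using $\rulefont{\lambda{\aeq}\lambda aa}$, two applications of $\rulefont{\lambda{\aeq}\lambda ab}$, and transitivity (Proposition~\ref{prop.aeq.transitive.g}). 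Your version is shorter and needs strictly fewer auxiliary facts (no equivariance of $\aeq$, no transitivity, no fresh atom); it exploits the fact that Figure~\ref{fig.lambda.aeq} was deliberately designed to mirror Figure~\ref{fig.permissive.aeq}, so the premise and side-condition of $\rulefont{{\aeq}[b]}$ and of $\rulefont{\lambda{\aeq}\lambda ab}$ line up exactly. The paper's fresh-atom argument is the more robust pattern: it would survive a mismatch between the two abstraction rules (for instance a freshness condition stated on the other term, or a differently oriented rule), at the price of heavier machinery. One small remark: your secondary induction on $n$ in the $\rulefont{{\aeq}\tf f}$ case, assembling the left-nested applications with $\rulefont{\lambda{\aeq}p}$ and $\rulefont{\lambda{\aeq}\tf f}$, is exactly what the paper leaves implicit under ``straightforward''.
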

\begin{proof}
By induction on the size of $r$ (Definition~\ref{defn.depth}).
We reason by cases on the last rule in the derivation of $r\aeq s$:
\begin{squishitem}
\item
The cases \rulefont{{\aeq}a}, \rulefont{{\aeq}\tf f} and \rulefont{{\aeq}[]aa} are straightforward.
\item
The case \rulefont{{\aeq}X}.\quad
There are two cases:
\begin{squishitem} 
\item
The case $D \cap S = []$.\quad 
Then $\inter{\pi\act X^S}^D = \inter{\pi'\act X^S}^D = X^S$.
We use \rulefont{\lambda{\aeq}X}.
The result follows.
\item
The case $D \cap S = [d_1, \ldots, d_n]$ and $n \geq 1$.\quad
By assumption $\pi|_{\delta(X)} = \pi'|_{\delta(X)}$.
Then $\pi(d_i)=\pi'(d_i)$ for $1 \leq i \leq n$ and $\inter{\pi \act X^S}^D \equiv \inter{\pi' \act X^S}^D \equiv X^S\pi(d_1) \ldots \pi(d_n)$.
We use \rulefont{\lambda{\aeq}p}, \rulefont{\lambda{\aeq}X}, and \rulefont{\lambda{\aeq}a}.
The result follows.
\end{squishitem}
\item
The case \rulefont{{\aeq}[]ab}.\quad
By assumption $(b\ a) \act r \aeq s$ and $b \not\in \f{fa}(r)$.
We choose fresh $c$ so $c \not\in \f{fa}(r) \cup \f{fa}(s)$.
By Lemma~\ref{lemm.equality.permutation}, $(c\ a) \act r \aeq (c\ b) \act s$.
By inductive hypothesis $\inter{(c\ a)\act r}^D \aeq \inter{(c\ b)\act s}^D$.
Using \rulefont{\lambda{\aeq}\lambda aa}, $\lam{c}\inter{(c\ a)\act r}^D \aeq \lam{c}\inter{(c\ b)\act s}^D$.
By Lemma~\ref{lemm.commutation}, $\lam{c}((c\ a)\act\inter{r}^D) \aeq \lam{c}((c\ b)\act \inter{s}^D)$.
By Lemma~\ref{lemm.not.fa.fresh}, $c\not\in\f{fa}(\inter{r}^D)\cup\f{fa}(\inter{s}^D)$.
By \rulefont{\lambda{\aeq}\lambda ab}, $\lam{c}((c\ a)\act\inter{r}^D) \aeq \lam{a}\inter{r}^D$ and $\lam{c}((c\ b)\act\inter{s}^D) \aeq \lam{b}\inter{s}^D$. 
Using Proposition~\ref{prop.aeq.transitive.g}, $\lam{a}\inter{r}^D \aeq \lam{b}\inter{s}^D$.
By Definition~\ref{defn.nominal.trans.g}, $\inter{[a]r}^D\aeq\inter{[b]s}^D$, as required.
\end{squishitem}
\end{proof}

\subsection{Capturable atoms; injectivity and minimality}
\label{subsect.capt}

We investigate the converse to Theorem~\ref{thrm.nomeq.lameq}; if the translations of two terms are $\alpha$-convertible, then so are the two terms.
This is injectivity (Theorem~\ref{thrm.injectivity}). 
The translation $\inter{r}^D$ (Definition~\ref{defn.nominal.trans.g}) is parameterised by a vector $D$.  
Levy and Villaret introduced a translation \cite[Definition 2]{levy:nomufh} (for nominal, not permissive nominal terms); they used \emph{all} the atoms in $r$.  
This is a safe choice, but we will also show that the smaller set of \emph{capturable} atoms in $r$ (Definition~\ref{defn.capturable}) is consistent with injectivity --- and that the capturable atoms are the minimal such set (Theorem~\ref{thrm.at.least}).

\begin{defn}
\label{defn.capturable}
Define the \deffont{capturable atoms} of a term (with respect to a set of atoms) $\f{capt}_A(r)$ inductively by:
\begin{frameqn}
\begin{array}{r@{\ }l@{\qquad}r@{\ }l}
\f{capt}_A(a) =& \varnothing
&
\f{capt}_A(\pi\act X^S) =& (\f{nontriv}(\pi) \cup A) \cap S
\\
\f{capt}_A([a]r)=&\f{capt}_{A \cup \{a\}}(r)
&
\f{capt}_A(\tf f(r_1,\ldots,r_n))=& \bigcup_{1 \leq i \leq n}\f{capt}_A(r_i) \qquad\qquad
\end{array}
\end{frameqn}
Write $\f{capt}_\varnothing(r)$ as $\f{capt}(r)$.
\end{defn}

For instance, if $S = (\mathbb A^<  \cup \{a\}) \setminus \{b\}$, then $\f{capt}([a][b]X^S) = \{a\}$ and $\f{capt}((b\ a) \act X^S) = \{a\}$.

\begin{lemm}
\label{lemm.capturable.fresh.atom}
If $a \not\in \f{fa}(r)$ then $\f{capt}_A(r) = \f{capt}_{A \cup \{a\}}(r)$.
\end{lemm}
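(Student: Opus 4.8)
The plan is to prove the statement by induction on the structure of $r$, with $A$ kept universally quantified so that the inductive hypothesis may be applied with an enlarged set of atoms in the abstraction case. The guiding observation is that the parameter $A$ enters the computation of $\f{capt}$ only at unknown leaves $\pi\act X^S$, through the expression $(\f{nontriv}(\pi)\cup A)\cap S$. Consequently, enlarging $A$ to $A\cup\{a\}$ can alter $\f{capt}_A(r)$ only by possibly introducing $a$ at such leaves, and then only when $a\in S$; the freshness hypothesis $a\notin\f{fa}(r)$ is exactly what rules this out.

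The structural cases other than the unknown case are routine. For $r\equiv b$ both sides are $\varnothing$. For $r\equiv\tf f(r_1,\ldots,r_n)$ the hypothesis $a\notin\f{fa}(\tf f(r_1,\ldots,r_n))=\bigcup_i\f{fa}(r_i)$ gives $a\notin\f{fa}(r_i)$ for each $i$, so the inductive hypothesis applies componentwise and I take the union. For $r\equiv[b]r'$ I would use $\f{fa}([b]r')=\f{fa}(r')\setminus\{b\}$ and split on whether $a=b$. If $a=b$, then $\f{capt}_A([a]r')=\f{capt}_{A\cup\{a\}}(r')$ and $\f{capt}_{A\cup\{a\}}([a]r')=\f{capt}_{(A\cup\{a\})\cup\{a\}}(r')=\f{capt}_{A\cup\{a\}}(r')$ coincide by idempotency of adjoining $\{a\}$. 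If $a\neq b$, then $a\notin\f{fa}(r')$, so the inductive hypothesis at the set $A\cup\{b\}$ yields $\f{capt}_{A\cup\{b\}}(r')=\f{capt}_{(A\cup\{b\})\cup\{a\}}(r')$, and the two unfoldings agree since $(A\cup\{b\})\cup\{a\}=(A\cup\{a\})\cup\{b\}$.

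The case $r\equiv\pi\act X^S$ is the only one that uses the freshness hypothesis, and is where I expect the real content to lie. Here $\f{capt}_A(\pi\act X^S)=(\f{nontriv}(\pi)\cup A)\cap S$, so $\f{capt}_{A\cup\{a\}}(\pi\act X^S)$ differs from $\f{capt}_A(\pi\act X^S)$ by at most $\{a\}\cap S$, and the two are equal precisely when either $a\notin S$ or $a\in\f{nontriv}(\pi)\cup A$. I would then invoke the hypothesis $a\notin\f{fa}(\pi\act X^S)=\pi\act S$. If $a\notin S$ there is nothing to prove. If $a\in S$, then $a\notin\pi\act S$ forces $\pi(a)\neq a$ — for if $\pi(a)=a$ then $a=\pi(a)\in\pi\act S$, contradicting freshness — so $a\in\f{nontriv}(\pi)\subseteq\f{nontriv}(\pi)\cup A$, and the two intersections coincide. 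The only mildly delicate point is this last implication: translating `$a$ is not free in $\pi\act X^S$' (a fact about $\pi\act S$) into the usable statement `$a\in S\Rightarrow a\in\f{nontriv}(\pi)$' that governs the $\cap S$ computation. This is a one-line consequence of the definitions of $\f{fa}$ and $\f{nontriv}$, so I do not anticipate any genuine obstruction.
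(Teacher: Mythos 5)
Your proof is correct and follows essentially the same route as the paper's: induction on $r$ with $A$ universally quantified, the abstraction case handled via the inductive hypothesis at $A\cup\{b\}$ (with idempotency when the abstracted atom is $a$), and the unknown case resolved by the dichotomy $\pi(a)=a$ (forcing $a\not\in S$) versus $\pi(a)\neq a$ (giving $a\in\f{nontriv}(\pi)$). The only difference is cosmetic: you split on $a\in S$ and derive $\pi(a)\neq a$ by contradiction, while the paper splits on $\pi(a)=a$ directly.
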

\begin{proof}
By induction on $r$.
\begin{squishitem}
\item
The cases $b$, $\tf{f}(r_1, \ldots, r_n)$ and $[a]r$ are straightforward.
\item
The case $[b]r$.\quad
If $a \not\in \f{fa}([b]r)$ then $a \not\in \f{fa}(r)$ by Definition~\ref{defn.fa}.
We then have:
\begin{calcenv}
\f{capt}_A([b]r) & = & \f{capt}_{A \cup \{ b \}}(r) & \text{Definition~\ref{defn.capturable}} \\
                 & = & \f{capt}_{A \cup \{ b \} \cup \{ a \}}(r) & \text{Inductive hypothesis} \\
                 & = & \f{capt}_{A \cup \{ a \}}([b]r) & \text{Definition~\ref{defn.capturable}}
\end{calcenv}
The result follows.
\item
The case $\pi\act X^S$.\quad
If $a \not\in \f{fa}(\pi \act X^S)$ then $a \not\in \pi \act S$.
By Definition~\ref{defn.capturable}, $\f{capt}_A(\pi \act X^S) = (\f{nontriv}(\pi) \cup A) \cap S$.
Further, $\f{capt}_{A \cup \{a\}}(\pi \act X^S) = (\f{nontriv}(\pi) \cup A \cup \{a \}) \cap S$.
If $\pi(a) = a$ then $a \not\in S$. 
If $\pi(a) \not= a$ then $a \in \f{nontriv}(\pi)$, as required.
\end{squishitem}
\end{proof}

\begin{lemm}
\label{lemm.capturable.dom.pi}
If $\f{nontriv}(\pi) \subseteq A$ then $\f{capt}_A(\pi \act r) = \f{capt}_A(r)$.
\end{lemm}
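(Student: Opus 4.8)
The plan is to prove the statement by induction on the structure of $r$, with $\pi$ fixed and the set $A$ universally quantified (so the inductive hypothesis may be applied at enlarged $A$, which is harmless since the side-condition $\f{nontriv}(\pi)\subseteq A$ is preserved under enlarging $A$). The two easy cases are the atom and term-former clauses. For $r\equiv a$ we have $\pi\act a\equiv\pi(a)$ and $\f{capt}_A(\pi(a))=\varnothing=\f{capt}_A(a)$ by Definition~\ref{defn.capturable}, so both sides are empty. For $r\equiv\tf f(r_1,\ldots,r_n)$ the permutation action (Definition~\ref{defn.perm}) distributes over the arguments and $\f{capt}_A$ is defined as the union over the arguments, so the result is immediate from the inductive hypotheses applied to each $r_i$.

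For the abstraction case $r\equiv[a]r'$, I would use $\pi\act[a]r'\equiv[\pi(a)](\pi\act r')$ to compute $\f{capt}_A(\pi\act[a]r')=\f{capt}_{A\cup\{\pi(a)\}}(\pi\act r')$, apply the inductive hypothesis to $r'$ at the set $A\cup\{\pi(a)\}\supseteq\f{nontriv}(\pi)$ to rewrite this as $\f{capt}_{A\cup\{\pi(a)\}}(r')$, and then reconcile it with the target $\f{capt}_A([a]r')=\f{capt}_{A\cup\{a\}}(r')$. The reconciling observation is that $A\cup\{a\}=A\cup\{\pi(a)\}$: if $\pi(a)=a$ this is trivial, and if $\pi(a)\neq a$ then $a\in\f{nontriv}(\pi)\subseteq A$, while $\f{nontriv}(\pi)$ is stable under $\pi$ (a permutation restricts to a bijection of its support), so also $\pi(a)\in\f{nontriv}(\pi)\subseteq A$; hence both singletons are already contained in $A$.

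The main obstacle is the unknown case $r\equiv\pi'\act X^S$, which is where the hypothesis $\f{nontriv}(\pi)\subseteq A$ really does the work. Here $\pi\act(\pi'\act X^S)\equiv(\pi\fcomp\pi')\act X^S$ (Definition~\ref{defn.perm}), so I must show $(\f{nontriv}(\pi\fcomp\pi')\cup A)\cap S=(\f{nontriv}(\pi')\cup A)\cap S$, and it suffices to prove $\f{nontriv}(\pi\fcomp\pi')\cup A=\f{nontriv}(\pi')\cup A$. One inclusion uses the standard fact $\f{nontriv}(\pi\fcomp\pi')\subseteq\f{nontriv}(\pi)\cup\f{nontriv}(\pi')$, so that the left side is contained in $\f{nontriv}(\pi)\cup\f{nontriv}(\pi')\cup A=\f{nontriv}(\pi')\cup A$ because $\f{nontriv}(\pi)\subseteq A$. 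For the reverse inclusion I would write $\pi'=\pi^\mone\fcomp(\pi\fcomp\pi')$, giving $\f{nontriv}(\pi')\subseteq\f{nontriv}(\pi^\mone)\cup\f{nontriv}(\pi\fcomp\pi')$, and use $\f{nontriv}(\pi^\mone)=\f{nontriv}(\pi)\subseteq A$ to conclude $\f{nontriv}(\pi')\cup A\subseteq\f{nontriv}(\pi\fcomp\pi')\cup A$. Intersecting the resulting equality with $S$ and reading off Definition~\ref{defn.capturable} finishes this case and hence the induction. The only delicate points are these elementary support identities for permutations, which I expect to state as facts rather than re-derive in detail.
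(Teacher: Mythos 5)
Your proposal is correct and follows essentially the same route as the paper's own proof: structural induction on $r$, with the abstraction case handled by passing to $\f{capt}_{A\cup\{\pi(a)\}}(\pi\act r')$ and observing that both $a$ and $\pi(a)$ already lie in $A$ when $\pi(a)\neq a$, and the unknown case reduced via $\pi\act(\pi'\act X^S)\equiv(\pi\fcomp\pi')\act X^S$ to the set identity $\f{nontriv}(\pi\fcomp\pi')\cup A=\f{nontriv}(\pi')\cup A$. If anything you are slightly more careful than the paper, which labels the step $(\f{nontriv}(\pi\fcomp\pi')\cup A)\cap S=(\f{nontriv}(\pi)\cup\f{nontriv}(\pi')\cup A)\cap S$ merely as a ``Fact'' (as a bare equality of $\f{nontriv}$ sets it would be only an inclusion); your two-inclusion argument, writing $\pi'=\pi^\mone\fcomp(\pi\fcomp\pi')$ and using $\f{nontriv}(\pi^\mone)=\f{nontriv}(\pi)\subseteq A$, is exactly what justifies it.
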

\begin{proof}
See Appendix~\ref{sect.omitted.proofs}.
We use Lemma~\ref{lemm.permutation.comp}.
\end{proof}

\begin{corr}
\label{corr.capturable.basic.alpha}
If $b \not\in \f{fa}(r)$ then $\f{capt}_A([a]r) = \f{capt}_A([b](b\ a) \act r)$.
\end{corr}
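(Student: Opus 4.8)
The plan is to unfold the abstraction clause of Definition~\ref{defn.capturable} on both sides and then reconcile the two resulting expressions. Since $\f{capt}_A([a]r)=\f{capt}_{A\cup\{a\}}(r)$ and $\f{capt}_A([b](b\ a)\act r)=\f{capt}_{A\cup\{b\}}((b\ a)\act r)$, the corollary reduces to proving the single equality $\f{capt}_{A\cup\{a\}}(r)=\f{capt}_{A\cup\{b\}}((b\ a)\act r)$. My strategy is to route both sides through the common intermediate value $\f{capt}_{A\cup\{a,b\}}(r)$, which sits at the ``top'' of the partial order since it carries the largest subscript.

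For the left-hand side I would grow the subscript from $A\cup\{a\}$ to $A\cup\{a,b\}$: because $b\not\in\f{fa}(r)$ by hypothesis, Lemma~\ref{lemm.capturable.fresh.atom} applied with the atom $b$ gives $\f{capt}_{A\cup\{a\}}(r)=\f{capt}_{A\cup\{a,b\}}(r)$. For the right-hand side I first strip the permutation using Lemma~\ref{lemm.capturable.dom.pi}: since $\f{nontriv}((b\ a))=\{a,b\}\subseteq A\cup\{a,b\}$, we get $\f{capt}_{A\cup\{a,b\}}((b\ a)\act r)=\f{capt}_{A\cup\{a,b\}}(r)$, so both sides meet at the common intermediate. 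It then remains only to connect $\f{capt}_{A\cup\{b\}}((b\ a)\act r)$ to $\f{capt}_{A\cup\{a,b\}}((b\ a)\act r)$, again by Lemma~\ref{lemm.capturable.fresh.atom}, this time adding the atom $a$ to the subscript of the permuted term. Chaining these four equalities together with the two definitional unfoldings yields the result.

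The one step that is not pure bookkeeping — and hence the main obstacle — is verifying the side condition $a\not\in\f{fa}((b\ a)\act r)$ needed for the last application of Lemma~\ref{lemm.capturable.fresh.atom}. Here I would invoke Lemma~\ref{lemm.pi.ftma} to rewrite $\f{fa}((b\ a)\act r)$ as $(b\ a)\act\f{fa}(r)$, and then observe that $a\in(b\ a)\act\f{fa}(r)$ would force $(b\ a)(b)=a$ to have a preimage $b$ lying in $\f{fa}(r)$, contradicting the hypothesis $b\not\in\f{fa}(r)$. Everything else is a direct appeal to Definition~\ref{defn.capturable} and the two capture lemmas, so once this freshness fact is in hand the proof closes immediately.
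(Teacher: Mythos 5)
Your proof is correct and essentially identical to the paper's: both unfold Definition~\ref{defn.capturable}, pass through the common value $\f{capt}_{A\cup\{a,b\}}$ via Lemma~\ref{lemm.capturable.fresh.atom}, strip the swapping with Lemma~\ref{lemm.capturable.dom.pi}, and discharge the side condition $a\not\in\f{fa}((b\ a)\act r)$ using Lemma~\ref{lemm.pi.ftma}. The only difference is presentational --- you meet in the middle from both sides where the paper writes a single directed chain of equalities.
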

\begin{proof}
See Appendix~\ref{sect.omitted.proofs}.
We use Lemmas~\ref{lemm.capturable.fresh.atom}, \ref{lemm.capturable.dom.pi}, and~\ref{lemm.pi.ftma}.
\end{proof}

Capturable atoms is a canonical notion, in the sense that it is preserved by $\alpha$-equivalence:
\begin{lemm}
\label{lemm.capturable.alpha}
If $r \aeq s$ then $\f{capt}_A(r)=\f{capt}_A(s)$.
\end{lemm}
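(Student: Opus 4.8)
The plan is to proceed by induction on the derivation of $r \aeq s$ (Figure~\ref{fig.permissive.aeq}), proving the statement for \emph{arbitrary} $A$, so that the inductive hypothesis may later be instantiated at a different set of atoms than the one in the conclusion. Note that the induction must be on the derivation rather than on the size of $r$, because in the rule $\rulefont{{\aeq}[b]}$ the premise involves $(b\ a)\act r$, which has the same size as $r$.

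The base case $\rulefont{{\aeq}aa}$ is immediate since both sides are $\varnothing$ by Definition~\ref{defn.capturable}. The cases $\rulefont{{\aeq}\tf f}$ and $\rulefont{{\aeq}[a]}$ follow from the inductive hypothesis together with the matching clause of Definition~\ref{defn.capturable}; in the abstraction case one uses $\f{capt}_A([a]r)=\f{capt}_{A\cup\{a\}}(r)$ and applies the inductive hypothesis at $A\cup\{a\}$. For $\rulefont{{\aeq}X}$ I would reduce to a computation on $\f{nontriv}$: by Definition~\ref{defn.capturable} we have $\f{capt}_A(\pi\act X^S)=(\f{nontriv}(\pi)\cup A)\cap S$, so it suffices to show $\f{nontriv}(\pi)\cap S=\f{nontriv}(\pi')\cap S$. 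This is exactly what the side-condition $\pi|_S=\pi'|_S$ delivers, since for $a\in S$ we have $\pi(a)=\pi'(a)$, whence $\pi(a)\neq a$ iff $\pi'(a)\neq a$, i.e. $a\in\f{nontriv}(\pi)$ iff $a\in\f{nontriv}(\pi')$.

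The main obstacle is the case $\rulefont{{\aeq}[b]}$, where the premise is $(b\ a)\act r\aeq s$ with $b\not\in\f{fa}(r)$, and where $[a]r$ and $[b]s$ bind \emph{different} atoms. Here I would exploit Corollary~\ref{corr.capturable.basic.alpha}, which is tailor-made for this situation: since $b\not\in\f{fa}(r)$ it gives $\f{capt}_A([a]r)=\f{capt}_A([b](b\ a)\act r)$, and the abstraction clause of Definition~\ref{defn.capturable} rewrites the right-hand side as $\f{capt}_{A\cup\{b\}}((b\ a)\act r)$. Applying the inductive hypothesis to the premise $(b\ a)\act r\aeq s$ at the set $A\cup\{b\}$ yields $\f{capt}_{A\cup\{b\}}((b\ a)\act r)=\f{capt}_{A\cup\{b\}}(s)$, and one final use of Definition~\ref{defn.capturable} identifies the last term with $\f{capt}_A([b]s)$. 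Chaining these equalities gives $\f{capt}_A([a]r)=\f{capt}_A([b]s)$, completing the case; the only subtlety is precisely that the hypothesis is needed at $A\cup\{b\}$, which is why the universal quantification over $A$ is built into the induction from the start.
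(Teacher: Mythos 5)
Your proof is correct and follows essentially the same route as the paper: induction on the derivation of $r\aeq s$ (with $A$ universally quantified), where Corollary~\ref{corr.capturable.basic.alpha} does the work in the \rulefont{{\aeq}[b]} case. If anything your write-up is more careful than the paper's, whose treatment of that case is terse (it effectively assumes $s\equiv (b\ a)\act r$) and omits the appeal to the inductive hypothesis at $A\cup\{b\}$ that your chain of equalities makes explicit.
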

\begin{proof}
By induction on the derivation of $r \aeq s$.
We consider one case:
\begin{squishitem}
\item
The case of \rulefont{{\aeq}[b]}.\quad
Suppose $b\not\in\f{fa}(r)$,\ $(b\ a)\act r\aeq s$ and $s\equiv [b](b\ a)\act r$.
The result follows by Corollary~\ref{corr.capturable.basic.alpha}.
\end{squishitem}
\end{proof}

Provided that $D$ is `large enough', $\alpha$-equivalence of translated terms implies $\alpha$-equivalence of the original terms: 
\begin{thrm}[Injectivity] 
\label{thrm.injectivity}
Let $D$ be a vector.
Let $r$ and $s$ be nominal terms and let $A,B\subseteq \mathbb A$ be finite.
Suppose 
$\f{capt}_A(r)\cup\f{capt}_B(s) \subseteq D$.
Then
\begin{displaymath}
\inter{r}^D \aeq \inter{s}^D
\quad\text{implies}\quad 
r \aeq s.
\end{displaymath}

As a corollary, if $\f{capt}(r) \cup \f{capt}(s) \subseteq D$ and $\inter{r}^D \aeq \inter{s}^D$ then $r \aeq s$ and $\f{capt}_A(r)=\f{capt}_A(s)$ for all $A$.
\end{thrm}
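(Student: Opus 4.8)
The plan is to prove the main implication by induction on the size of $r$ (Definition~\ref{defn.depth}), generalising over $s$, $A$, $B$, and $D$, and then read off the corollary. The first ingredient I would record is an \emph{inversion} observation about $\aeq$ on $\lambda$-terms: by inspection of the rules in Figure~\ref{fig.lambda.aeq}, $\alpha$-equivalence preserves the head symbol and the length of the application spine. Since the four clauses of Definition~\ref{defn.nominal.trans.g} produce, respectively, an atom, a spine headed by the unknown $X^S$, an abstraction, and a spine headed by the term-former $\tf f$, knowing $\inter{r}^D\aeq\inter{s}^D$ pins down the top-level shape of $s$ to match that of $r$. Concretely: if $r\equiv a$ then $s\equiv a$; if $r\equiv\tf f(r_1,\ldots,r_n)$ then $s\equiv\tf f(s_1,\ldots,s_n)$ with $\inter{r_i}^D\aeq\inter{s_i}^D$; if $r\equiv[a]r'$ then $s\equiv[b]s'$; and if $r\equiv\pi\act X^S$ then $s\equiv\pi'\act X^S$ for the \emph{same} unknown $X^S$ (distinct unknowns translate to distinct heads).

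The atom and term-former cases are then immediate (for $\tf f$, apply the inductive hypothesis to each pair $r_i,s_i$, noting $\f{capt}_A(r_i)\subseteq\f{capt}_A(r)\subseteq D$). The crux is the moderated-unknown case $r\equiv\pi\act X^S$, $s\equiv\pi'\act X^S$. Writing $[d_1,\ldots,d_n]=D\cap S$, the equality $X^S\pi(d_1)\cdots\pi(d_n)\aeq X^S\pi'(d_1)\cdots\pi'(d_n)$ forces $\pi(d_i)=\pi'(d_i)$ for all $i$, i.e.\ $\pi$ and $\pi'$ agree on $D\cap S$. To conclude $\pi\act X^S\aeq\pi'\act X^S$ via \rulefont{{\aeq}X} I must upgrade this to $\pi|_S=\pi'|_S$, and this is exactly where the capturability hypothesis does its work: for $a\in S$, either $a\in D$ (so $a\in D\cap S$ and the two permutations already agree), or $a\notin D$, in which case $\f{capt}_A(\pi\act X^S)=(\f{nontriv}(\pi)\cup A)\cap S\subseteq D$ forces $a\notin\f{nontriv}(\pi)$, i.e.\ $\pi(a)=a$, and symmetrically $\pi'(a)=a$. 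Hence $\pi|_S=\pi'|_S$ and $r\aeq s$. I expect this case to be the main obstacle, and the definition of $\f{capt}$ is precisely engineered to make it go through.

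For the abstraction case $r\equiv[a]r'$, $s\equiv[b]s'$, I would choose a fresh $c$ with $c\notin\f{fa}(r')\cup\f{fa}(s')$; note one cannot instead reuse the bound atom $b$, since $b\notin\f{fa}(\inter{r'}^D)$ does \emph{not} give $b\notin\f{fa}(r')$ (the translation may drop atoms). Using Lemma~\ref{lemm.commutation} and a routine $\alpha$-renaming computation for $\lambda$-terms (via Figure~\ref{fig.lambda.aeq} and Proposition~\ref{prop.aeq.transitive.g}), together with Lemma~\ref{lemm.not.fa.fresh} to see $c\notin\f{fa}(\inter{r'}^D)\cup\f{fa}(\inter{s'}^D)$, the hypothesis $\lam{a}\inter{r'}^D\aeq\lam{b}\inter{s'}^D$ yields $\inter{(c\ a)\act r'}^D\aeq\inter{(c\ b)\act s'}^D$. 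To feed this to the inductive hypothesis I need matching capturability bounds, and Corollary~\ref{corr.capturable.basic.alpha} supplies them: $\f{capt}_{A\cup\{c\}}((c\ a)\act r')=\f{capt}_{A\cup\{a\}}(r')=\f{capt}_A([a]r')\subseteq D$, and symmetrically for $s'$. Since permutation preserves size, $(c\ a)\act r'$ is smaller than $r$, so the inductive hypothesis applies and gives $(c\ a)\act r'\aeq(c\ b)\act s'$; then \rulefont{{\aeq}[a]}, two applications of \rulefont{{\aeq}[b]}, and transitivity (Proposition~\ref{prop.aeq.transitive}) assemble $[a]r'\aeq[b]s'$.

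Finally, the corollary is immediate: taking $A=B=\varnothing$ turns $\f{capt}(r)\cup\f{capt}(s)\subseteq D$ into the hypothesis of the theorem and yields $r\aeq s$, and then Lemma~\ref{lemm.capturable.alpha} gives $\f{capt}_A(r)=\f{capt}_A(s)$ for every $A$.
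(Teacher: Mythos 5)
Your proposal is correct and follows essentially the same route as the paper's proof: induction on the size of $r$ with a case analysis equivalent to the paper's case split on the last rule deriving $\inter{r}^D \aeq \inter{s}^D$, the same fresh-atom renaming via Lemma~\ref{lemm.commutation}, Lemma~\ref{lemm.not.fa.fresh} and Corollary~\ref{corr.capturable.basic.alpha} in the abstraction case, and Lemma~\ref{lemm.capturable.alpha} for the corollary. The only difference is one of exposition: you make explicit the inversion property of $\aeq$ and spell out the step from $\pi|_{D\cap S}=\pi'|_{D\cap S}$ to $\pi|_S=\pi'|_S$ in the unknown case, which the paper states tersely.
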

\begin{proof}
For the first part, we work by induction on the size of $r$ (Definition~\ref{defn.depth}), reasoning by cases on the last rule in the derivation of $\inter{r}^D\aeq \inter{s}^D$:
\begin{squishitem}
\item
The cases \rulefont{\lambda{\aeq} a} and \rulefont{\lambda{\aeq} \lambda aa}.\quad
Easy.
\item
The case \rulefont{\lambda{\aeq} \lambda ab}.\quad
Suppose $(b\ a) \act \inter{r}^D \aeq \inter{s}^D$, $b \not\in \f{fa}(\inter{r}^D)$ and $\f{capt}_A([a]r) \cup \f{capt}_B([b]s) \subseteq D$.

We choose fresh $c$ (so $c\not\in\f{fa}(r)\cup\f{fa}(s)$ and $c\not\in\f{fa}(\inter{r}^D)\cup\f{fa}(\inter{s}^D)$).
By Lemma~\ref{lemm.ds.pi.g}, $(c\ a) \act \inter{r}^D \aeq (c\ b) \act \inter{s}^D$.
By Lemma~\ref{lemm.commutation}:
\begin{displaymath}
\inter{(c\ a)\act r}^D\aeq \inter{(c\ b)\act s}^D .
\end{displaymath}
By Corollary~\ref{corr.capturable.basic.alpha} and Definition~\ref{defn.capturable}:
\begin{displaymath}
\f{capt}_{A\cup\{c\}}((c\ a) \act r) \cup \f{capt}_{B\cup\{c\}}((c\ b) \act s) \subseteq D.
\end{displaymath}
By inductive hypothesis $(c\ a) \act r \aeq (c\ b) \act s$, and using \rulefont{{\aeq}[]ab}, and Proposition~\ref{prop.aeq.transitive.g}, $[a]r \aeq [b]s$ as required.
\item
The case \rulefont{\lambda{\aeq} app}.
\quad
From Definition~\ref{defn.nominal.trans.g}, there are two possibilities:
\begin{squishitem}
\item
The case $\tf f(r_1,\ldots, r_n)$ and $\tf f(s_1,\ldots,s_n)$ and $\tf f\inter{r_1}^D\ldots\inter{r_n}^D\aeq \tf f\inter{s_1}^D\ldots\inter{s_n}^D$.

Then $\inter{r_i}^D\aeq \inter{s_i}^D$ for $1\leq i\leq n$.
By inductive hypothesis $r_i \aeq s_i$ for $1 \leq i \leq n$.
The result follows from \rulefont{{\aeq}\tf f}.
\item
The case $\pi\act X^S$ and $\pi'\act X^S$ and $X^S \pi(d_1) \ldots \pi(d_n)\aeq X^S \pi'(d_1) \ldots \pi'(d_n)$ where $[d_1, \ldots, d_n] = D \cap S$.

Then $\pi(d_i) \aeq \pi'(d_i)$ for $1 \leq i \leq n$, and so $\pi|_{D \cap S} = \pi'|_{D \cap S}$.
By assumption, $\f{capt}(\pi \act X^S) \subseteq D$, and by definition, $\pi|_{S} = \pi'|_{S}$.
We conclude with \rulefont{{\aeq}X}.
\end{squishitem}
\item
The case \rulefont{\lambda{\aeq}X}. 
\quad
From the form of the translation it must be that $r = \pi\act X^S$ and $s= \pi' \act X^S$ and $\f{nontriv}(\pi) \cap S = \varnothing = \f{nontriv}(\pi') \cap S$. 
The result follows from \rulefont{{\aeq}X}.
\end{squishitem}
The corollary follows from the first part, and by Lemma~\ref{lemm.capturable.alpha}.
\end{proof}

\begin{lemm}
\label{lemm.somewhere.X}
$a \in \f{capt}_A(r)$ implies $X^S \in \f{fV}(r)$ exists such that $a \in S$.
\end{lemm}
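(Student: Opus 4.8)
The plan is to prove the statement by a straightforward structural induction on the term $r$, keeping the set parameter $A$ universally quantified. The quantification over $A$ matters: in the abstraction case the inductive hypothesis must be invoked at an enlarged set, so I would state the induction loading as ``for all finite $A$, if $a \in \f{capt}_A(r)$ then some $X^S \in \f{fV}(r)$ has $a \in S$'' and then do induction on the syntax of $r$.

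The only case carrying real content is the unknown case $r \equiv \pi\act X^S$. Here Definition~\ref{defn.capturable} gives $\f{capt}_A(\pi\act X^S) = (\f{nontriv}(\pi)\cup A)\cap S$, so the assumption $a \in \f{capt}_A(\pi\act X^S)$ immediately forces $a \in S$, since the whole set is intersected with $S$. As $\f{fV}(\pi\act X^S) = \{X^S\}$ by Definition~\ref{defn.fV}, the required witness is $X^S$ itself. This is the crux of the lemma, and it is essentially definitional: capturability of an atom can only ever place it inside the permission set of the unknown in which it is captured.

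The remaining cases are bookkeeping. For $r \equiv a'$ an atom, $\f{capt}_A(a') = \varnothing$, so the hypothesis is vacuous. For $r \equiv \tf f(r_1,\ldots,r_n)$, the assumption $a \in \bigcup_{1\leq i\leq n}\f{capt}_A(r_i)$ gives $a \in \f{capt}_A(r_i)$ for some $i$, and the inductive hypothesis supplies $X^S \in \f{fV}(r_i)\subseteq\f{fV}(r)$ with $a \in S$. For the abstraction case, writing the bound atom as $b$ so that (by the permutative convention) it stays distinct from the target atom $a$, Definition~\ref{defn.capturable} gives $\f{capt}_A([b]r') = \f{capt}_{A\cup\{b\}}(r')$; applying the inductive hypothesis at the set $A\cup\{b\}$ yields $X^S \in \f{fV}(r') = \f{fV}([b]r')$ with $a \in S$. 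I do not expect any genuine obstacle; the two points demanding care are keeping $A$ general so the abstraction step type-checks, and respecting the permutative convention so the target atom is never conflated with a bound atom.
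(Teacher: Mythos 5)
Your proposal follows the same route as the paper's proof (structural induction on $r$ with $A$ kept general, the unknown case immediate from Definition~\ref{defn.capturable}), but it contains one genuine flaw: the appeal to the permutative convention in the abstraction case. The convention governs how meta-variables for atoms are named --- if you choose to write the bound atom as $b$, then indeed $b\neq a$ --- but it does not let you assume that terms never abstract the atom $a$ itself. The term $[a]r'$, binding exactly the atom named in the statement, is a legitimate term over which the lemma quantifies, and it is not an edge case but the paradigmatic one: for $S=(\mathbb A^<\cup\{a\})\setminus\{b\}$ the paper notes $\f{capt}([a][b]X^S)=\{a\}$ immediately after Definition~\ref{defn.capturable}, and your case analysis as written (only $[b]r'$ with $b\neq a$) never reaches this term, since its outer abstraction binds the target atom. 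This is exactly why the paper's proof splits abstraction into two cases, ``the case $[a]r$'' and ``the case $[b]r$'', handled in parallel.

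Fortunately the repair is the one-line argument you already have, and your induction loading is precisely what makes it go through: $\f{capt}_A([a]r')=\f{capt}_{A\cup\{a\}}(r')$ by Definition~\ref{defn.capturable}, the inductive hypothesis at the enlarged set $A\cup\{a\}$ yields $X^S\in\f{fV}(r')$ with $a\in S$, and $\f{fV}([a]r')=\f{fV}(r')$ by Definition~\ref{defn.fV}. With that case added, your proof coincides with the paper's.
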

\begin{proof}
See Appendix~\ref{sect.omitted.proofs}.
\end{proof}
 
In Theorem~\ref{thrm.injectivity} we used a notion of a `large enough' vector $D$, based on capturable atoms.
Theorem~\ref{thrm.at.least} shows how this bound is precise:
\begin{thrm}[Minimality]
\label{thrm.at.least}
If $\f{capt}(r) \not\subseteq D$ then there exists some $s$ such that $r \not\aeq s$ and $\inter{r}^D \aeq \inter{s}^D$.
\end{thrm}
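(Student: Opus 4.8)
The plan is to prove a slightly more general statement by induction on the structure of $r$: \emph{for every finite $A\subseteq\mathbb A$ and every vector $D$, if $\f{capt}_A(r)\not\subseteq D$ then there exists a term $s$ with $r\naeq s$ and $\inter{r}^D\aeq\inter{s}^D$}; the theorem is then the instance $A=\varnothing$. Generalising over $A$ is forced by the abstraction case, where descending under a binder enlarges the accumulated set of bound atoms. Note that the desired conclusion ($\alpha$-inequivalence together with $\alpha$-equivalence of the translations) is self-contained, so it threads through the induction without any extra bookkeeping about $A$ in the conclusion itself. (One could instead first invoke Lemma~\ref{lemm.somewhere.X} to locate an $X^S\in\f{fV}(r)$ with $a\in S$, but the structural induction delivers this leaf directly.)

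The two composite cases are routine and rely only on inversion for $\aeq$. If $r=\tf f(r_1,\ldots,r_n)$ then $\f{capt}_A(r)=\bigcup_i\f{capt}_A(r_i)$, so some $\f{capt}_A(r_i)\not\subseteq D$; applying the inductive hypothesis to $r_i$ yields $s_i$, and I set $s=\tf f(r_1,\ldots,s_i,\ldots,r_n)$. The translation is preserved by $\rulefont{\lambda{\aeq}p}$ and $\rulefont{\lambda{\aeq}\tf f}$ since the other components are unchanged, and $r\naeq s$ follows because $\rulefont{{\aeq}\tf f}$ is the only rule deriving $\tf f(\ldots)\aeq\tf f(\ldots)$, so $r\aeq s$ would force $r_i\aeq s_i$. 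If $r=[b]r'$ then $\f{capt}_A([b]r')=\f{capt}_{A\cup\{b\}}(r')\not\subseteq D$, so the inductive hypothesis applied to $r'$ with $A\cup\{b\}$ gives $s'$, and I set $s=[b]s'$; here $\rulefont{\lambda{\aeq}\lambda aa}$ preserves the translation, and $[b]r'\aeq[b]s'$ would force $r'\aeq s'$ by inverting $\rulefont{{\aeq}[a]}$ or $\rulefont{{\aeq}[b]}$ (in the equal-binder subcase the swap in $\rulefont{{\aeq}[b]}$ is the identity), contradicting $r'\naeq s'$. The atom case is vacuous, since $\f{capt}_A(a)=\varnothing\subseteq D$.

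The crux is the leaf case $r=\pi\act X^S$, where $a\in(\f{nontriv}(\pi)\cup A)\cap S$ and $a\notin D$; the only features I use are $a\in S$ and $a\notin D$. The idea is to perturb the suspended permutation \emph{only where the translation cannot see it}. Writing $D\cap S=[d_1,\ldots,d_n]$, the translation records $\pi$ solely through the values $\pi(d_1),\ldots,\pi(d_n)$, whereas $\aeq$ compares $\pi$ on all of $S$ via $\rulefont{{\aeq}X}$. I pick a fresh atom $c\notin \pi(D)\cup\{\pi(a)\}$ and set $s=\pi'\act X^S$ with $\pi'=(\pi(a)\ c)\fcomp\pi$. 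Since $a\notin D$, injectivity of $\pi$ and the choice of $c$ give $\pi'(x)=\pi(x)$ for every $x\in D$, hence $\pi'|_{D\cap S}=\pi|_{D\cap S}$ and $\inter{r}^D$ and $\inter{s}^D$ are literally the \emph{same} $\lambda$-term (so $\aeq$ by reflexivity); while $\pi'(a)=c\neq\pi(a)$ with $a\in S$ gives $\pi'|_S\neq\pi|_S$, so $r\naeq s$ by $\rulefont{{\aeq}X}$, which is the only rule deriving an $\alpha$-equivalence between two terms of the form $\pi\act X^S$.

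I expect the main obstacle to be organisational rather than mathematical: choosing the right inductive generalisation so that the bound atoms $A$ are tracked correctly, and verifying the handful of inversion facts for $\aeq$ used in the composite cases. The genuinely load-bearing observation is the leaf construction, namely that whenever a capturable atom $a\in S$ lies outside $D$ the translation is blind to the action of the suspended permutation at $a$, so redirecting $a$ to a fresh target changes the $\alpha$-equivalence class without changing the translation; everything else is bookkeeping and fresh-atom housekeeping.
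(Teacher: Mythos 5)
Your proof is correct, and it rests on exactly the key observation that also drives the paper's proof: the translation of $\pi\act X^S$ records $\pi$ only on $D\cap S$, while \rulefont{{\aeq}X} compares permutations on all of $S$, so a capturable atom $a\in S\setminus D$ can be redirected to a fresh target without the translation noticing. The decomposition, however, is genuinely different. The paper locates some $X^S\in\f{fV}(r)$ with $a\in S$ via Lemma~\ref{lemm.somewhere.X}, chooses $c\in S\setminus D$, and takes the single global witness $s\equiv r[X^S\ssm (c\ a)\act X^S]$, leaving the verification that $r\not\aeq s$ and $\inter{r}^D\aeq\inter{s}^D$ as ``an easy calculation'' --- which is, in effect, the induction you write out explicitly. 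You instead run a structural induction generalised over the accumulated set $A$ of abstracted atoms (matching the recursion in Definition~\ref{defn.capturable}), and perturb only one occurrence of the unknown, post-composing with the transposition $(\pi(a)\ c)$ for $c$ fresh. Each route has small advantages. The paper's witness is a substitution instance of $r$, which is uniform across all occurrences of $X^S$ and reuses Lemma~\ref{lemm.somewhere.X}; note that its requirement $c\in S$ is not optional, since otherwise $[X^S\ssm(c\ a)\act X^S]$ would violate the condition $\f{fa}((c\ a)\act X^S)=(c\ a)\act S\subseteq S$ of Definition~\ref{defn.subst}. Your version is self-contained, needs no constraint $c\in S$ (freshness relative to the finite set $\pi(D)\cup\{\pi(a)\}$ suffices, precisely because you never form a substitution), and makes explicit the inversion facts and the $A$-bookkeeping that the paper compresses into Lemma~\ref{lemm.somewhere.X} and the ``easy calculation''. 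One micro-remark: your parenthetical about \rulefont{{\aeq}[b]} with an identity swap is unnecessary, since under the paper's permutative convention that rule applies only to distinct binders, so equal-binder equivalences invert through \rulefont{{\aeq}[a]} alone.
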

\begin{proof}
Suppose $a \in \f{capt}(r)$ and $a \not\in D$.
By Lemma~\ref{lemm.somewhere.X} $X^S \in \f{fV}(r)$ exists such that $a \in S$.
We choose $c \in S\setminus D$ and take $s \equiv r[X^S \ssm (c\ a) \act X^S]$. 
It is a fact that $X^S \not\aeq (c\ a) \act X^S$ whilst $\inter{X^S}^D \aeq \inter{(c\ a) \act X^S}^D$.
An easy calculation shows $r \not\aeq r[X^S \ssm (c\ a) \act X^S]$ and $\inter{r}^D \aeq \inter{r[X^S \ssm (c\ a) \act X^S]}^D$. 
\end{proof}

\section{Translating substitutions; relating solutions of nominal and pattern unification problems}
\label{sect.translating.solutions}

\subsection{Translating substitutions}
\label{subsect.translating.substitutions}

We extend the translation to substitutions.
Our main result is Theorem~\ref{thrm.sub.composition}: we can read this as a compositionality result for permissive nominal substitutions acting on terms with respect to the translation.
Given the compositionality result it is easy to prove that the translation also preserves the natural instantiation ordering of solutions to unification problems (Corollary~\ref{corr.instant}).

Translating substitutions introduces a problem: $\theta$ may solve $Pr$ but in substituting it may introduce new capturable atoms (consider $\theta = [X^S \ssm [c]Z^S]$ solving $\{X^S \ueq X^S \}$, where $c \in S$). 
This motivates introducing a second vector, to account for the capturable atoms `after' the substitution.
Accordingly, we will introduce another vector $E$ that contains at least the capturable atoms of $\theta$. 

\begin{defn}
\label{defn.thetaC}
Define $\inter{\theta}_D^E$ by:
\begin{displaymath}
\text{$\inter{\theta}_D^E(X^S) = \lam{d_1}\ldots\lam{d_n}\inter{\theta(X^S)}^E$ where $[d_1, \ldots, d_n] = D \cap S$.}
\end{displaymath}
\end{defn}

Lemma~\ref{lemm.abeq.pi} is useful in the proof of Theorem~\ref{thrm.sub.composition}:
\begin{lemm}
\label{lemm.abeq.pi}
If $\f{nontriv}(\pi)\cap\f{fa}(g) \subseteq \{ d_1, \ldots ,d_n \}$ then
$$
(\lam{d_1}\ldots\lam{d_n}g)\pi(d_1)\ldots\pi(d_n) \abeq \pi \act g.
$$
\end{lemm}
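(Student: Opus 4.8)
The plan is to prove the statement by reducing the $n$ leftmost $\beta$-redexes and then identifying the resulting term, up to $\aeq$, with $\pi\act g$ via Lemma~\ref{lemm.ds.pi.g}. The only real difficulty is capture: when we $\beta$-reduce $(\lam{d_1}\ldots\lam{d_n}g)\pi(d_1)\ldots\pi(d_n)$ from the outside in (Definition~\ref{defn.abeq}), the substituted atoms $\pi(d_i)$ may clash with the remaining bound atoms $d_{i+1},\ldots,d_n$ or with free atoms of $g$, so no single $\beta$-step is itself a permutation action. I would sidestep this by first $\alpha$-renaming all the bound atoms to fresh ones.

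First I would choose distinct atoms $c_1,\ldots,c_n$ that are fresh (not among the $d_i$, not among the $\pi(d_i)$, and not in $\f{fa}(g)$), and let $\tau$ be the permutation $(c_1\ d_1)\fcomp\cdots\fcomp(c_n\ d_n)$ of disjoint transpositions. Using $\rulefont{\lambda{ \aeq }\lambda ab}$ together with Lemmas~\ref{lemm.pi.aeq.g} and~\ref{lemm.pi.ftma.g}, one obtains $\lam{d_1}\ldots\lam{d_n}g \aeq \lam{c_1}\ldots\lam{c_n}(\tau\act g)$. Since the $c_i$ are fresh, reducing the $n$ leftmost redexes of $(\lam{c_1}\ldots\lam{c_n}(\tau\act g))\pi(d_1)\ldots\pi(d_n)$ is capture-free and yields, up to $\abeq$, the iterated substitution $(\tau\act g)[\pi(d_1)/c_1]\cdots[\pi(d_n)/c_n]$.

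Next I would show each of these substitutions is, up to $\aeq$, a swapping. The key small fact, proved by induction on $h$ from Definition~\ref{defn.gha}, is that $h[e/c]\aeq(e\ c)\act h$ whenever $e\not\in\f{fa}(h)$. I would then check, using the hypothesis $\f{nontriv}(\pi)\cap\f{fa}(g)\subseteq\{d_1,\ldots,d_n\}$ and the injectivity of $\pi$, that at each stage the atom $\pi(d_i)$ being substituted really is absent from the free atoms of the current term; this is the one place the hypothesis is used, since a free atom of $g$ equal to some $\pi(d_j)$ but lying outside $\{d_1,\ldots,d_n\}$ would have to be a fixed point of $\pi$, forcing a contradiction. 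Applying the small fact repeatedly, together with Lemma~\ref{lemm.permutation.comp.g}, the iterated substitution is $\aeq(\upsilon\fcomp\tau)\act g$ where $\upsilon=(\pi(d_n)\ c_n)\fcomp\cdots\fcomp(\pi(d_1)\ c_1)$.

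Finally I would verify $(\upsilon\fcomp\tau)|_{\f{fa}(g)}=\pi|_{\f{fa}(g)}$: a free $d_i$ is sent to $c_i$ by $\tau$ and then to $\pi(d_i)$ by $\upsilon$, while any other free atom $a$ is fixed by $\tau$ and, again by the hypothesis and injectivity, by $\upsilon$, and satisfies $\pi(a)=a$. Lemma~\ref{lemm.ds.pi.g} then gives $(\upsilon\fcomp\tau)\act g\aeq\pi\act g$, and chaining the $\abeq$- and $\aeq$-steps through Proposition~\ref{prop.aeq.transitive.g} (recalling $\aeq\,\subseteq\,\abeq$) completes the proof. The main obstacle is precisely the freshness bookkeeping in the third step — confirming that the successive $\beta$-substitutions never merge two atoms — which is where the assumption on $\f{nontriv}(\pi)$ does its work; once the binders have been renamed to fresh atoms this becomes a routine, if fiddly, verification.
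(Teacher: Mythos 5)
Your proof is correct, but it takes a genuinely different route from the paper's. The paper proves this lemma by structural induction on $g$: it unfolds the $n$ redexes into the iterated capture-avoiding substitution $(\cdot)[\pi(d_1)/d_1]\ldots[\pi(d_n)/d_n]$ and pushes that through each term constructor, the binding case $\lam{a}g$ being handled by renaming the bound atom to a single fresh $a'$ so the substitutions slide under the binder (Definition~\ref{defn.gha}) and the inductive hypothesis pulls $\pi$ out. You instead avoid any induction on the statement itself: you rename the $n$ outer binders to globally fresh atoms $c_i$ in one shot, make the $\beta$-reductions capture-free, convert each substitution $[\pi(d_i)/c_i]$ into a swapping via the auxiliary fact that $h[e/c]\aeq(e\ c)\act h$ when $e\not\in\f{fa}(h)$ (a fact the paper never states, but standard and provable by induction on $h$), and finish by pure permutation algebra through Lemma~\ref{lemm.ds.pi.g}. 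Your identification of where the hypothesis works is exactly right, and is the same reason it is needed (but hidden in the inductive hypothesis) in the paper's proof: an atom $a=\pi(d_j)\in\f{fa}(g)\setminus\{d_1,\ldots,d_n\}$ would satisfy $\pi(a)=a$ by the hypothesis, hence $\pi(d_j)=d_j$ by injectivity, hence $a=d_j$, a contradiction. What your route buys is that all freshness bookkeeping is concentrated in one place and the lemma reduces to checking $(\upsilon\fcomp\tau)|_{\f{fa}(g)}=\pi|_{\f{fa}(g)}$; what the paper's route buys is that it needs no auxiliary lemma beyond what is already on the page. One small caution for the write-up: when you apply the small fact repeatedly, apply it to the actual term $h_{i-1}$ at each stage and transport the accumulated swappings using Lemmas~\ref{lemm.pi.aeq.g}, \ref{lemm.aeq.fa.g} and~\ref{lemm.permutation.comp.g}, rather than substituting into an $\aeq$-representative; that way you never need congruence of $[h/a]$ with respect to $\aeq$, which the paper does not prove (though its own proof quietly relies on it, so you are no worse off either way).
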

\begin{proof}
By induction on $g$.
We consider one case: 
\begin{squishitem}
\item
The case $\lam{a}g$.\quad
Choose $a'$ fresh (so $a'$ does not appear in $\{d_1,\ldots,d_n\}$,\ $\f{nontriv}(\pi)$,\ or $g$).
\begin{calcenv}
h & \abeq & (\lam{a}g)[\pi(d_1) / d_1] \ldots [\pi(d_n) / d_n] & \text{Definition~\ref{defn.abeq}} \\
  & \aeq & (\lam{a'}(a'\ a)\act g)[\pi(d_1) / d_1] \ldots [\pi(d_n) / d_n] & \text{Definition~\ref{defn.alpha.lambda.terms}} \\
  & \equiv & \lam{a'}((a'\ a)\act g)[\pi(d_1) / d_1] \ldots [\pi(d_n) / d_n]) & \text{Definition~\ref{defn.gha}, $a'$ fresh} \\
  & \abeq & \lam{a'}(\pi \act ((a'\ a)\act g)) & \text{Inductive hypothesis} \\
    & \equiv & \pi \act \lam{a'}((a'\ a)\act g) & \text{Definition~\ref{defn.perm.g}} \\
  & \aeq & \pi \act \lam{a}g & \text{Definition~\ref{defn.alpha.lambda.terms}}
\end{calcenv}
\end{squishitem}
\end{proof}

Theorem~\ref{thrm.sub.composition} is a compositionality result for permissive nominal substitutions acting on terms with respect to the translation:
\begin{thrm}
\label{thrm.sub.composition}
If $\f{capt}(r) \subseteq D$ then $\inter{r \theta}^E\abeq \inter{r}^D\inter{\theta}_D^E$.
\end{thrm}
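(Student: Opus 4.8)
The plan is to proceed by structural induction on $r$, following the grammar of Definition~\ref{defn.terms}. The base case $r\equiv a$ is immediate: both $\inter{a\theta}^E$ and $\inter{a}^D\inter{\theta}_D^E$ reduce to $a$, since atoms are fixed both by the nominal substitution (Definition~\ref{defn.subst.action}) and by the $\lambda$-substitution (Definition~\ref{defn.subst.g}). The term-former case $r\equiv\tf f(r_1,\ldots,r_n)$ is a routine congruence: $(\tf f(r_1,\ldots,r_n))\theta\equiv\tf f(r_1\theta,\ldots,r_n\theta)$, the translation distributes $\inter{\theta}_D^E$ over the head $\tf f$ and its arguments, and since $\f{capt}(r_i)\subseteq\f{capt}(r)\subseteq D$ the inductive hypotheses give $\inter{r_i\theta}^E\abeq\inter{r_i}^D\inter{\theta}_D^E$ for each $i$; the result then follows because $\abeq$ is a congruence.

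The unknown case $r\equiv\pi\act X^S$ is where the real computation lies. Here $(\pi\act X^S)\theta\equiv\pi\act\theta(X^S)$, so by Lemma~\ref{lemm.commutation} the left-hand side equals $\pi\act\inter{\theta(X^S)}^E$. On the right-hand side, writing $[d_1,\ldots,d_n]=D\cap S$, we have $\inter{\pi\act X^S}^D\equiv X^S\pi(d_1)\ldots\pi(d_n)$, and applying $\inter{\theta}_D^E$ (which fixes the atoms $\pi(d_i)$ and sends $X^S$ to $\lam{d_1}\ldots\lam{d_n}\inter{\theta(X^S)}^E$) yields $(\lam{d_1}\ldots\lam{d_n}\inter{\theta(X^S)}^E)\pi(d_1)\ldots\pi(d_n)$. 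I would then invoke Lemma~\ref{lemm.abeq.pi} with $g\equiv\inter{\theta(X^S)}^E$ to collapse this to $\pi\act\inter{\theta(X^S)}^E$, matching the left-hand side. The side-condition of Lemma~\ref{lemm.abeq.pi}, namely $\f{nontriv}(\pi)\cap\f{fa}(\inter{\theta(X^S)}^E)\subseteq\{d_1,\ldots,d_n\}$, is exactly where the hypothesis $\f{capt}(r)\subseteq D$ is spent: by Lemma~\ref{lemm.not.fa.fresh} and $\f{fa}(\theta(X^S))\subseteq S$ (Definition~\ref{defn.subst}) we get $\f{fa}(\inter{\theta(X^S)}^E)\subseteq S$, while $\f{capt}(\pi\act X^S)=\f{nontriv}(\pi)\cap S\subseteq D$ by assumption, so intersecting gives $\f{nontriv}(\pi)\cap\f{fa}(\inter{\theta(X^S)}^E)\subseteq\f{nontriv}(\pi)\cap S\subseteq D\cap S$, as required.

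The abstraction case $r\equiv[a]r'$ is conceptually the delicate one, because nominal substitution is \emph{capturing} ($([a]r')\theta\equiv[a](r'\theta)$) whereas $\lambda$-substitution is capture-avoiding. The left-hand side is $\lam{a}\inter{r'\theta}^E$, and the inductive hypothesis (applicable since $\f{capt}(r')\subseteq\f{capt}_{\{a\}}(r')=\f{capt}([a]r')\subseteq D$) together with congruence gives $\lam{a}\inter{r'\theta}^E\abeq\lam{a}(\inter{r'}^D\inter{\theta}_D^E)$. It then remains to show that the $\lambda$-substitution does \emph{not} in fact rename $a$, i.e.\ that $(\lam{a}\inter{r'}^D)\inter{\theta}_D^E\equiv\lam{a}(\inter{r'}^D\inter{\theta}_D^E)$. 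By Definition~\ref{defn.subst.g} this requires $a\notin\f{fa}(\inter{\theta}_D^E(X^S))$ for every $X^S\in\f{fV}(\inter{r'}^D)=\f{fV}(r')$. Since $\f{fa}(\inter{\theta}_D^E(X^S))\subseteq S\setminus(D\cap S)$ (again by Lemma~\ref{lemm.not.fa.fresh} and Definition~\ref{defn.subst}), I would argue that if $a\in S$ for such an $X^S$ then $a\in\f{capt}_{\{a\}}(r')=\f{capt}([a]r')\subseteq D$, whence $a\notin S\setminus D$; and if $a\notin S$ the claim is trivial. Hence no renaming occurs and the two sides coincide.

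I expect the main obstacle to be precisely the bookkeeping in the abstraction case: one must check that the capturing behaviour on the nominal side is faithfully mirrored by the fact that the capture-avoiding $\lambda$-substitution happens \emph{not} to rename $a$, and this rests on the interaction between $\f{capt}$, the argument lists $D\cap S$ that the translation threads into unknowns, and the permission-set discipline $\f{fa}(\theta(X^S))\subseteq S$. The unknown case carries the heaviest calculation but reduces cleanly to Lemma~\ref{lemm.abeq.pi}, whose hypothesis matches $\f{capt}(r)\subseteq D$ on the nose.
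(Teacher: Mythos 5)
Your proof is correct, and on the atom, term-former and unknown cases it coincides with the paper's own proof (the same route through Lemma~\ref{lemm.commutation}, Definition~\ref{defn.thetaC} and Lemma~\ref{lemm.abeq.pi}, with the same verification of that lemma's side-condition from $\f{capt}(\pi\act X^S)=\f{nontriv}(\pi)\cap S\subseteq D$ and $\f{fa}(\inter{\theta(X^S)}^E)\subseteq S$). The genuine difference is the abstraction case. The paper $\alpha$-renames on the nominal side: it picks a fresh $b$, rewrites $[a]r\aeq[b]((b\ a)\act r)$, transports this across the translation using Theorem~\ref{thrm.nomeq.lameq} and Lemma~\ref{lemm.aeq.subst}, and then pushes $\inter{\theta}_D^E$ under $\lam{b}$, which is non-capturing because $b$ is fresh. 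This costs two things: the induction must be on the \emph{size} of $r$ rather than its structure (the inductive hypothesis is applied to $(b\ a)\act r$, which is not a subterm), and applying that hypothesis requires $\f{capt}((b\ a)\act r)\subseteq D$, which follows from Corollary~\ref{corr.capturable.basic.alpha} and Lemma~\ref{lemm.capturable.A.B} but is glossed over in the paper. You instead show that no renaming is needed at all: $a\notin\f{fa}(\inter{\theta}_D^E(X^S))$ for every $X^S\in\f{fV}(r')$, because $\f{fa}(\inter{\theta}_D^E(X^S))\subseteq S\setminus D$, while $a\in S$ would force $a\in\f{capt}_{\{a\}}(r')=\f{capt}([a]r')\subseteq D$. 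This buys plain structural induction, avoids invoking Theorem~\ref{thrm.nomeq.lameq} and Lemma~\ref{lemm.aeq.subst} inside the proof, and yields the syntactic identity $(\lam{a}\inter{r'}^D)\inter{\theta}_D^E\equiv\lam{a}(\inter{r'}^D\inter{\theta}_D^E)$ where the paper settles for $\aeq$. The one debt you incur is the implication ``$X^S\in\f{fV}(r')$ and $a\in S$ imply $a\in\f{capt}_{\{a\}}(r')$'': this is true, but it is not a lemma of the paper (it is a converse to Lemma~\ref{lemm.somewhere.X}, in the spirit of Lemma~\ref{lemm.uncapt.to.capt}), and it does not follow from Lemma~\ref{lemm.uncapt.to.capt} --- for instance $a\in\f{nontriv}(\pi)$ gives $a\notin\f{uncapt}(\pi\act X^S)$ even when $a\in S$ --- so it needs its own easy induction on $r'$, generalising the anchor set to any $A$ with $a\in A$ so the abstraction case goes through. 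With that small lemma spelled out, your argument is complete.
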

\begin{proof}
By induction on $r$.
\begin{squishitem}
\item 
The cases $a$ and $\tf{f}(r_1, \ldots, r_n)$ are routine.
\item 
The case $\pi \act X^S$.\quad 
Let $d_1, \ldots, d_n$ be $D \cap S$
and $\inter{\theta}_D^E(X^S) = \lam{d_1}\ldots\lam{d_n}\inter{\theta(X^S)}^E$ by Definition~\ref{defn.thetaC}.
Then:
\begin{calcenv}
\inter{(\pi\act X^S)\theta}^E & \equiv & \inter{\pi \act \theta(X^S)}^E & \text{Definition~\ref{defn.subst.action}} \\
                              & \equiv & \pi \act \inter{\theta(X^S)}^E & \text{Lemma~\ref{lemm.commutation}} \\
                              & \abeq & (\lam{d_1}\ldots\lam{d_n}\inter{\theta(X^S)}^E)\pi(d_1)\ldots\pi(d_n) & \text{Lemma~\ref{lemm.abeq.pi}} \\
                              & \equiv & (X^S\pi(d_1) \ldots \pi(d_n)) \inter{\theta}_D^E & \text{Definition~\ref{defn.thetaC}} \\
                              & \equiv & \inter{\pi\act X^S}^D \inter{\theta}_D^E & \text{Definition~\ref{defn.nominal.trans.g}}
\end{calcenv}
The use of Lemma~\ref{lemm.abeq.pi} above is valid, because: By assumption $\f{capt}(\pi \act X^S) \subseteq D$. 
By Definition~\ref{defn.capturable} $\f{capt}(\pi\act X^S)=\f{nontriv}(\pi)\cap S$, so $\f{nontriv}(\pi) \cap S \subseteq D$.
By assumption in Definition~\ref{defn.subst}, $\f{fa}(\theta(X^S))\subseteq S$.
It follows from Definition~\ref{defn.nominal.trans.g} that $\f{fa}(\inter{\theta(X^S)}^E)\subseteq S$, and so that $\f{nontriv}(\pi)\cap\f{fa}(\inter{\theta(X^S)}^E)\subseteq D$. 

The result follows.
\item
The case $[a]r$.\quad
Choose $b$ fresh, so $b \not\in \f{fa}(\inter{\theta(X^S)}_D^E)$ for every $X^S \in \f{fV}(r)$ and $b \not\in \f{fa}(r)$.
Then:
\begin{calcenv}
\inter{([a]r)\theta}^E & \aeq & \inter{([b]((b\ a) \act r)) \theta}^E & \text{Definition~\ref{defn.aeq}, Theorem~\ref{thrm.nomeq.lameq}, Lemma~\ref{lemm.aeq.subst}} \\
                       & \equiv & \lam{b}(\inter{((b\ a) \act r)\theta)}^E & \text{Definitions~\ref{defn.subst.action} and~\ref{defn.nominal.trans.g}} \\
                       & \abeq & \lam{b}((\inter{(b\ a) \act r}^D)\inter{\theta}_D^E) & \text{Inductive hypothesis} \\
                       & \equiv & (\lam{b}\inter{(b\ a) \act r}^D)\inter{\theta}_D^E & \text{Definition~\ref{defn.gha}, $b$ fresh} \\
                       & \equiv & \inter{[b]((b\ a) \act r)}^D \inter{\theta}_D^E & \text{Definition~\ref{defn.nominal.trans.g}} \\
                       & \aeq & \inter{[a]r}^D \inter{\theta}_D^E & \text{Definition~\ref{defn.aeq}, Theorem~\ref{thrm.nomeq.lameq}, Lemma~\ref{lemm.aeq.subst}}
\end{calcenv}
The result follows.
\end{squishitem}
\end{proof}

Recall the instantiation ordering $\theta_1\leq\theta_2$ from Definition~\ref{defn.instantiation.ordering}.
Similarly:
\begin{defn}
\label{defn.instantiation.ordering.lambda}
Write $\sigma_1\leq\sigma_2$ when there exists some $\sigma'$ such that $X\sigma_2\abeq X(\sigma_1\circ\sigma')$, for any $X$.
Call $\leq$ the \deffont{instantiation ordering}. 
\end{defn}

We can leverage Theorem~\ref{thrm.sub.composition} to prove a corollary, describing a sense in which the instantiation ordering $\theta_1\leq\theta_2$ of Definition~\ref{defn.instantiation.ordering} translates to the instantiation ordering of Definition~\ref{defn.instantiation.ordering.lambda}:
\begin{corr}
\label{corr.instant}
Suppose $\bigcup_{X^S}\f{capt}(\theta_2(X^S)) \subseteq E$.

If $\theta_1\leq\theta_2$ then $\model{\theta_1}_D^E\leq\model{\theta_2}_D^E$.
\end{corr}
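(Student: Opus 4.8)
The plan is to exhibit an explicit witness for the $\lambda$-calculus instantiation ordering of Definition~\ref{defn.instantiation.ordering.lambda}, namely the translated substitution $\sigma' = \inter{\theta'}_E^E$, where $\theta'$ is the substitution witnessing $\theta_1\leq\theta_2$. By Definition~\ref{defn.instantiation.ordering} and Definition~\ref{defn.sigma.com}, $X^S\theta_2\aeq X^S(\theta_1\fcomp\theta')$ unpacks to $\theta_2(X^S)\aeq\theta_1(X^S)\theta'$ for every unknown $X^S$. With $\sigma'$ so chosen, it suffices to verify, unknown-by-unknown, that $\model{\theta_2}_D^E(X^S)\abeq\model{\theta_1}_D^E(X^S)\sigma'$.

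Fix $X^S$ and write $[d_1,\ldots,d_n]=D\cap S$. Unfolding Definition~\ref{defn.thetaC}, the target left-hand side is $\lam{d_1}\ldots\lam{d_n}\inter{\theta_2(X^S)}^E$. I would rewrite it in three steps. First, apply soundness (Theorem~\ref{thrm.nomeq.lameq}) to $\theta_2(X^S)\aeq\theta_1(X^S)\theta'$ to get $\inter{\theta_2(X^S)}^E\aeq\inter{\theta_1(X^S)\theta'}^E$. Second, apply the compositionality result (Theorem~\ref{thrm.sub.composition}) with $r=\theta_1(X^S)$ and substitution $\theta'$, giving $\inter{\theta_1(X^S)\theta'}^E\abeq\inter{\theta_1(X^S)}^E\inter{\theta'}_E^E = \inter{\theta_1(X^S)}^E\sigma'$. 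Third, choosing the abstracted atoms $d_1,\ldots,d_n$ fresh for $\sigma'$ lets the substitution commute with the leading abstractions, so $\lam{d_1}\ldots\lam{d_n}(\inter{\theta_1(X^S)}^E\sigma')\aeq(\lam{d_1}\ldots\lam{d_n}\inter{\theta_1(X^S)}^E)\sigma'=\model{\theta_1}_D^E(X^S)\sigma'$. Chaining these with Proposition~\ref{prop.aeq.transitive.g} (and the fact that $\abeq$ contains $\aeq$, being closed under the rules of Definition~\ref{defn.alpha.lambda.terms}) delivers $\model{\theta_2}_D^E(X^S)\abeq\model{\theta_1}_D^E(X^S)\sigma'$, which is the required inequality.

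The one genuine obstacle is the side-condition of Theorem~\ref{thrm.sub.composition}: it applies to $r=\theta_1(X^S)$ only when $\f{capt}(\theta_1(X^S))\subseteq E$. This is precisely where the capturability hypothesis relating the substitution to the codomain vector $E$ must be deployed, and the care needed is to see that $E$ contains the capturable atoms of each $\theta_1(X^S)$ on which $\theta'$ can act nontrivially. The reason it matters is that if $E$ is too small the translation $\inter{\cdot}^E$ silently forgets the permutations suspended on unknowns occurring inside $\theta_1(X^S)$ (a swap $(c\ a)$ with $a,c\notin E$ becomes invisible to $\inter{\cdot}^E$, yet $\theta'$ can later expose it), so that the second step would break and no uniform $\sigma'$ could repair it across different $X^S$; containing those atoms in $E$ is exactly what legitimises the single use of compositionality and hence the uniform choice $\sigma'=\inter{\theta'}_E^E$. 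The residual freshness bookkeeping for the third step --- picking the $d_i$ away from $\f{fa}(\sigma'(Y^T))$ for the finitely many $Y^T\in\f{fV}(\theta_1(X^S))$ --- is routine, using Corollary~\ref{corr.always.fresh} together with the capture-avoiding clauses of Definition~\ref{defn.subst.g}.
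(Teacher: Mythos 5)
Your skeleton coincides with the paper's proof: the same witness $\sigma'=\inter{\theta'}_E^E$, soundness (Theorem~\ref{thrm.nomeq.lameq}) applied to $\theta_2(X^S)\aeq\theta_1(X^S)\theta'$, then compositionality (Theorem~\ref{thrm.sub.composition}) at $r=\theta_1(X^S)$. The real divergence is your third step: you push $\sigma'$ underneath the binders $\lam{d_1}\ldots\lam{d_n}$, whereas the paper never commutes a substitution under a $\lambda$ --- it reasons about the applied terms $\inter{X^S}^D\sigma$, using Theorem~\ref{thrm.sub.composition} twice more (with $r=X^S$, whose capturable atoms are empty) and Lemma~\ref{lemm.sigma.fcomp.g} to assemble $\inter{X^S}^D\model{\theta_2}_D^E\abeq\inter{X^S}^D(\model{\theta_1}_D^E\fcomp\sigma')$, and so avoids exactly the step you need.

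That third step is a genuine gap, not routine bookkeeping. First, the $d_i$ are not yours to choose: $[d_1,\ldots,d_n]=D\cap S$ is fixed by $D$ and the sort of $X^S$, so Corollary~\ref{corr.always.fresh} gives you nothing. Second, the capture-avoiding clauses of Definition~\ref{defn.subst.g} do not repair the step: if $d_i\in\f{fa}(\sigma'(Y^T))$ for some $Y^T\in\f{fV}(\theta_1(X^S))$, then $(\lam{d_i}g)\sigma'\equiv\lam{b}(((b\ d_i)\act g)\sigma')$ for fresh $b$, and this is \emph{not} $\alpha\beta$-equivalent to $\lam{d_i}(g\sigma')$; the two differ on precisely whether the $d_i$ contributed by $\sigma'$ gets captured. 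This situation is permitted by the stated hypotheses: take $\theta_1=[X^S\sm\id\act Y^S]$, $\theta'=[Y^S\sm d]$ with $d\in S$, $D=[d]$, and $d\not\in E$ (allowed, since the hypothesis only bounds $\f{capt}(\theta_2(X^S))=\f{capt}(d)=\varnothing$); then $\model{\theta_2}_D^E(X^S)\equiv\lam{d}d$ while $\model{\theta_1}_D^E(X^S)\sigma'\abeq\lam{b}d$, and one can check that no alternative witness fixes this, because no substitution can supply the bound atom $d$. What your step actually requires is that no $d_i$ be free in any $\sigma'(Y^T)$; since $\f{fa}(\sigma'(Y^T))\subseteq T\setminus E$ (Lemma~\ref{lemm.not.fa.fresh} together with Definitions~\ref{defn.subst} and~\ref{defn.thetaC}), this holds exactly when the atoms of $D$ lie in $E$ --- a side condition in the spirit of $D\subseteq E$ in Theorem~\ref{thrm.sound}, which is not among the corollary's hypotheses and cannot be conjured by freshness arguments. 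A related slip: you claim the capturability hypothesis is what yields $\f{capt}(\theta_1(X^S))\subseteq E$ for your second step, but the stated hypothesis constrains $\theta_2$, not $\theta_1$ (and factoring through $\theta'$ can strictly shrink capturable atoms, so the $\theta_2$-bound does not imply the $\theta_1$-bound); the paper's proof, at the corresponding line, simply asserts the $\theta_1$-condition outright.
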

\begin{proof}

Suppose $\theta_1\leq\theta_2$.
By definition (Definition~\ref{defn.instantiation.ordering}) there exists some $\theta'$ such that
$X^S\theta_1 \aeq X^S(\theta_2\circ\theta')$ always.
We reason as follows, for any unknown $X^S$:
$$
\begin{array}{l@{\ }c@{\ }l@{\qquad}l}
\inter{X^S}^D \model{\theta_2}_D^E & \abeq & \inter{X^S\theta_2}^E & \text{Theorem~\ref{thrm.sub.composition}} \\
                                 & \aeq & \inter{X^S(\theta_1\circ\theta')}^E & \text{Theorem~\ref{thrm.nomeq.lameq}} \\
                                 & \equiv & \inter{(X^S\theta_1)\theta')}^E & \text{Lemma~\ref{lemm.sigma.sigma'.circ}} \\
                                 & \abeq & \inter{X^S\theta_1}^E\inter{\theta'}_E^E & \text{Theorem~\ref{thrm.sub.composition},}\ \f{capt}(\theta_1(X^S)) \subseteq E \\
                                 & \abeq & (\inter{X^S}^D\inter{\theta_1}_D^E)\inter{\theta'}_E^E & \text{Theorem~\ref{thrm.sub.composition}} \\
                                 & \equiv & \inter{X^S}^D(\inter{\theta_1}_D^E\fcomp\inter{\theta'}_E^E) & \text{Lemma~\ref{lemm.sigma.fcomp.g}}
\end{array}
$$
The result follows.
\end{proof}

In Corollary~\ref{corr.instant}, the precondition $\bigcup_{X^S}\f{capt}(\theta_2(X^S)) \subseteq E$ is necessary to prevent $\theta_2$ from introducing infinitely many capturable atoms.
The `complexity' of $\theta_1$ is unconstrained.
In practice it is likely that we will care about a particular finite set of unknowns $\mathcal V$ (for example, $\f{fV}(Pr)$ for some $Pr$), and the precondition can be correspondingly refined to consider just $X^S\in\mathcal V$.

\subsection{Translating permissive nominal unification to pattern unification; soundness, weak completeness}

The main result of this subsection is Theorem~\ref{thrm.sound}.

It says that if $D$ and $E$ are `large enough', then $\theta$ solves $Pr$ if and only if $\inter{\theta}_D^E$ solves $\inter{Pr}^D$.
We call this `soundness and weak completeness', to distinguish from a stronger completeness result we prove in Subsection~\ref{subsect.strong.completeness}.

\begin{defn}
\label{defn.unification.problem}
An \deffont{equation} is a pair $r \ueq s$.  
A \deffont{unification problem} $Pr$ is a finite set of equations.
A \deffont{solution} to $Pr$ is a 
$\theta$ such that
$r \theta {\aeq} s \theta$ for all $r{\ueq} s\in Pr$.
\end{defn}

\begin{defn}
\label{defn.pr.rc}
If $D=[d_1, \ldots, d_n]$ and $Pr=\{r_1 \ueq s_1,\ldots \}$ then define $\inter{Pr}^D$ by:
\begin{frameqn}
\inter{Pr}^D = \bigl\{ \lam{d_1}\ldots\lam{d_n}\inter{r}^D \ueq \lam{d_1}\ldots\lam{d_n}\inter{s}^D \mid r \ueq s \in Pr \bigr\}
\end{frameqn}
\end{defn}
For example, if $Pr=\{X^S \ueq \tf f (Y^S,a,Z^S)\}$ where $S = \mathbb A^<  \cup \{a,b\}$, then $\inter{Pr}^{[a]}=\{\lam{a}(X^S a) \ueq \lam{a}(\tf{f}\ (Y^S a)\ a\ (Z^S a))\}$.


Definition~\ref{defn.uncapt} is a technical definition, and the results following it are technical lemmas required for Lemma~\ref{lemm.capturable.pi.r}, which is a key result for Lemma~\ref{lemm.capturable.theta.r}, which is itself needed for Theorem~\ref{thrm.sound}.
\begin{defn}
\label{defn.uncapt}
Define $\f{uncapt}(r)$ by:
\begin{frameqn}
\begin{array}{r@{\ }l}
\f{uncapt}(a)=&\varnothing
\\
\f{uncapt}(\pi\act X^S)=&S\setminus\f{nontriv}(\pi)
\\
\f{uncapt}([a]r)=&\f{uncapt}(r)\setminus\{a\}
\\
\f{uncapt}(\tf f(r_1,\ldots,r_n))=&\f{uncapt}(r_1)\cup\cdots\cup\f{uncapt}(r_n)
\end{array}
\end{frameqn}
\end{defn}
$\f{uncapt}(r)$ is quite interesting; technically, it came about as the `right definition' to make Lemmas~\ref{lemm.uncapt.to.capt},\ \ref{lemm.capturable.pi.r}, and~\ref{lemm.uncapt.fa} work.\footnote{Thanks to an anonymous referee for spotting the error in a previous proof.} 
Intuitively, it may be useful to think of $\f{uncapt}(r)$ as collecting those atoms that are not necessarily in $\f{capt}(r)$, but which could feature in $\f{capt}([a]r)$ for some $a$. 

For example, if $a\in S$ then $a\in\f{uncapt}(X^S)$ but 
$$
a\not\in\f{uncapt}(a), \quad a\not\in\f{uncapt}([b][a]X^S),\quad\text{and}\quad a\not\in\f{uncapt}((b\ a)\act X^S). 
$$

\begin{lemm}
\label{lemm.capturable.A.B}
If $A \subseteq B$ then $\f{capt}_A(r) \subseteq \f{capt}_B(r)$.

As a corollary, $\f{capt}_A(r) \subseteq \f{capt}_A([a]r)$.
\end{lemm}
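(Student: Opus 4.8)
The plan is to prove the main statement by a routine induction on the structure of $r$, following the clauses of Definition~\ref{defn.capturable}, and then to obtain the corollary by instantiating $B$ appropriately. The whole argument is a monotonicity-in-the-parameter-$A$ observation, so nothing delicate arises.

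First I would dispatch the two base cases. For $r\equiv a$ we have $\f{capt}_A(a)=\varnothing=\f{capt}_B(a)$, so the inclusion is immediate. For $r\equiv\pi\act X^S$ the content is $(\f{nontriv}(\pi)\cup A)\cap S$; since $A\subseteq B$ gives $\f{nontriv}(\pi)\cup A\subseteq\f{nontriv}(\pi)\cup B$, intersecting with $S$ preserves the inclusion, yielding $\f{capt}_A(\pi\act X^S)\subseteq\f{capt}_B(\pi\act X^S)$. This is the one clause where the parameter $A$ actually enters, and it is where the hypothesis $A\subseteq B$ is used directly.

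Next I would handle the two inductive cases. For $r\equiv\tf f(r_1,\ldots,r_n)$, the inductive hypothesis gives $\f{capt}_A(r_i)\subseteq\f{capt}_B(r_i)$ for each $i$, and taking unions over $1\leq i\leq n$ preserves the inclusion. For $r\equiv[a]r'$, I would note that $\f{capt}_A([a]r')=\f{capt}_{A\cup\{a\}}(r')$, and since $A\subseteq B$ entails $A\cup\{a\}\subseteq B\cup\{a\}$, the inductive hypothesis applied to $r'$ with the enlarged parameter sets gives $\f{capt}_{A\cup\{a\}}(r')\subseteq\f{capt}_{B\cup\{a\}}(r')=\f{capt}_B([a]r')$. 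This completes the induction.

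For the corollary, I would simply take $B=A\cup\{a\}$ in the main statement: since $A\subseteq A\cup\{a\}$, the first part yields $\f{capt}_A(r)\subseteq\f{capt}_{A\cup\{a\}}(r)$, and by Definition~\ref{defn.capturable} the right-hand side is exactly $\f{capt}_A([a]r)$. There is no real obstacle here; the only point requiring a moment's care is remembering that in the abstraction case the induction must be applied at the \emph{shifted} parameters $A\cup\{a\}$ and $B\cup\{a\}$ rather than at $A$ and $B$, which is precisely what keeps the induction going.
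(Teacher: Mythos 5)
Your proof is correct and follows exactly the paper's route: the paper also proves the first part by a (routine, unspelled-out) structural induction on $r$ with the parameter generalised, and obtains the corollary from the clause $\f{capt}_A([a]r) = \f{capt}_{A \cup \{a\}}(r)$ just as you do by taking $B = A \cup \{a\}$. Your write-up merely makes explicit the case analysis and the point about applying the inductive hypothesis at the shifted parameters, which the paper leaves implicit.
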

\begin{proof}
By induction on $r$.
As $\f{capt}_A([a]r) = \f{capt}_{A \cup \{ a \}}(r)$, the corollary follows.
\end{proof}

\begin{lemm}
\label{lemm.uncapt.to.capt}
Suppose $a\in A$.
Then if $a\in\f{uncapt}(r)$ then $a\in\f{capt}_A(r)$.

As a corollary, $a\in\f{uncapt}(r)$ implies $a\in\f{capt}([a]r)$.
\end{lemm}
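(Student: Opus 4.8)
The plan is to prove the displayed implication by structural induction on $r$, taking care to quantify the inductive hypothesis over the auxiliary set: the hypothesis I will carry is that for every set of atoms $A'$, if $a\in A'$ and $a\in\f{uncapt}(r')$ then $a\in\f{capt}_{A'}(r')$, for every strict subterm $r'$ of $r$. With this set-up, the atom and term-former cases are routine. For $r\equiv b$, Definition~\ref{defn.uncapt} gives $\f{uncapt}(b)=\varnothing$, so the premise $a\in\f{uncapt}(r)$ cannot hold and the implication is vacuous. For $r\equiv\tf f(r_1,\ldots,r_n)$, the premise $a\in\f{uncapt}(r)=\bigcup_i\f{uncapt}(r_i)$ gives $a\in\f{uncapt}(r_i)$ for some $i$, and then the inductive hypothesis (with the same $A$) together with Definition~\ref{defn.capturable} yields $a\in\f{capt}_A(r_i)\subseteq\f{capt}_A(r)$.

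For $r\equiv\pi\act X^S$ I would argue directly, without the inductive hypothesis: Definition~\ref{defn.uncapt} gives $\f{uncapt}(\pi\act X^S)=S\setminus\f{nontriv}(\pi)$, so the premise forces $a\in S$, and combined with the standing assumption $a\in A$ this places $a$ in $(\f{nontriv}(\pi)\cup A)\cap S$, which is exactly $\f{capt}_A(\pi\act X^S)$ by Definition~\ref{defn.capturable}. Note that the subtraction of $\f{nontriv}(\pi)$ in $\f{uncapt}$ plays no role here, since $a\in A$ already lands $a$ in the union $\f{nontriv}(\pi)\cup A$.

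The only case needing genuine care is the abstraction, where under the permutative convention I would split into $r\equiv[a]s$ (bound atom equal to $a$) and $r\equiv[b]s$ (bound atom $b\neq a$). In the first sub-case $\f{uncapt}([a]s)=\f{uncapt}(s)\setminus\{a\}$ excludes $a$, so the premise fails and nothing is to be shown. In the second, $a\in\f{uncapt}([b]s)=\f{uncapt}(s)\setminus\{b\}$ yields $a\in\f{uncapt}(s)$, and since $a\in A$ gives $a\in A\cup\{b\}$, I would invoke the inductive hypothesis at the \emph{enlarged} set $A\cup\{b\}$ to obtain $a\in\f{capt}_{A\cup\{b\}}(s)=\f{capt}_A([b]s)$. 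This is the step for which the universally quantified form of the inductive hypothesis is essential, and it is the only mild obstacle in the argument.

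Finally, the corollary follows by instantiating the main statement at $A=\{a\}$: since $a\in\{a\}$ trivially, the premise $a\in\f{uncapt}(r)$ gives $a\in\f{capt}_{\{a\}}(r)$, and $\f{capt}_{\{a\}}(r)=\f{capt}_\varnothing([a]r)=\f{capt}([a]r)$ by Definition~\ref{defn.capturable}.
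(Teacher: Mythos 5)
Your proof is correct, and it follows the same overall route as the paper's: structural induction on $r$, with the atom, term-former and unknown cases handled exactly as the paper handles them (vacuously, by decomposition over the $r_i$, and directly from Definition~\ref{defn.capturable}, respectively). The one place you genuinely diverge is the abstraction case $[b]s$ with $b\neq a$. The paper applies the inductive hypothesis at the \emph{same} set $A$, obtaining $a\in\f{capt}_A(s)$, and then invokes the monotonicity result Lemma~\ref{lemm.capturable.A.B} (in its corollary form $\f{capt}_A(s)\subseteq\f{capt}_A([b]s)$) to finish. You instead strengthen the inductive hypothesis to quantify over the auxiliary set, apply it at the enlarged set $A\cup\{b\}$, and conclude by the definitional equality $\f{capt}_{A\cup\{b\}}(s)=\f{capt}_A([b]s)$. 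Both are sound; yours is marginally more self-contained, since it never needs Lemma~\ref{lemm.capturable.A.B}, while the paper's version keeps the induction parameter fixed and reuses a lemma it has already proved and needs elsewhere (e.g.\ in Lemma~\ref{lemm.capturable.pi.r}). Your explicit derivation of the corollary, instantiating $A=\{a\}$ and computing $\f{capt}_{\{a\}}(r)=\f{capt}_\varnothing([a]r)=\f{capt}([a]r)$, is exactly the step the paper leaves implicit.
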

\begin{proof}
By induction on $r$.
\begin{squishitem}
\item
The cases $a$ and $\tf{f}(r_1, \ldots, r_n)$ are straightforward.
\item
The case $[a]r$.\quad
$a\in\f{uncapt}([a]r)$ is impossible.
\item
The case $[b]r$.\quad 
Suppose $a\in\f{uncapt}([b]r)$. 
By Definition~\ref{defn.uncapt} $a\in\f{uncapt}(r)$.
By inductive hypothesis $a\in\f{capt}_A(r)$.
By Lemma~\ref{lemm.capturable.A.B} $a\in\f{capt}_A([b]r)$ as required.
\item
The case $\pi\act X^S$.\quad
Suppose $a\in\f{uncapt}(\pi\act X^S)$, so $a\in S\setminus\f{nontriv}(\pi)$.
By assumption $a\in A$ so $a\in (\f{nontriv}(\pi)\cup A)\cap S=\f{capt}_A(\pi\act X^S)$.
\end{squishitem}
\end{proof}

\begin{lemm}
\label{lemm.capturable.pi.r}
$\f{capt}_A(\pi \act r) \subseteq ((\f{nontriv}(\pi)\cup A) \cap \f{uncapt}(r)) \cup \f{capt}(r)$.
\end{lemm}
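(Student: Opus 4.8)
The plan is to prove the inclusion by induction on the structure of $r$, keeping $A$ and $\pi$ universally quantified so that the inductive hypothesis may be reinstantiated at a different set of atoms in the abstraction case. The base case $r\equiv a$ is immediate since $\f{capt}_A(\pi\act a)=\f{capt}_A(\pi(a))=\varnothing$. For $r\equiv\tf f(r_1,\ldots,r_n)$ I would use $\pi\act\tf f(r_1,\ldots,r_n)\equiv\tf f(\pi\act r_1,\ldots,\pi\act r_n)$, apply the inductive hypothesis to each $r_i$, and then distribute intersection over the union: $\bigcup_i\big(((\f{nontriv}(\pi)\cup A)\cap\f{uncapt}(r_i))\cup\f{capt}(r_i)\big)$ rearranges to $((\f{nontriv}(\pi)\cup A)\cap\f{uncapt}(r))\cup\f{capt}(r)$ by Definitions~\ref{defn.uncapt} and~\ref{defn.capturable}.

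For the unknown case $r\equiv\pi'\act X^S$ I would compute directly. Here $\pi\act(\pi'\act X^S)\equiv(\pi\fcomp\pi')\act X^S$, so the left-hand side is $(\f{nontriv}(\pi\fcomp\pi')\cup A)\cap S$, while $\f{uncapt}(r)=S\setminus\f{nontriv}(\pi')$ and $\f{capt}(r)=\f{nontriv}(\pi')\cap S$. The inclusion then reduces to the elementary permutation fact that if $a\notin\f{nontriv}(\pi')$ (so $\pi'(a)=a$) and $a\in\f{nontriv}(\pi\fcomp\pi')$ (so $\pi(\pi'(a))\neq a$), then $\pi(a)\neq a$, i.e.\ $a\in\f{nontriv}(\pi)$. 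A short element-chase using this splits any $a$ in the left-hand side according to whether it lies in $\f{nontriv}(\pi')$ or not, placing it in $\f{nontriv}(\pi')\cap S$ in the first case and in $(\f{nontriv}(\pi)\cup A)\cap(S\setminus\f{nontriv}(\pi'))$ in the second.

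The abstraction case $r\equiv[c]s$ is the crux. Using $\pi\act[c]s\equiv[\pi(c)](\pi\act s)$ and Definition~\ref{defn.capturable}, the left-hand side equals $\f{capt}_{A\cup\{\pi(c)\}}(\pi\act s)$, to which I apply the inductive hypothesis (with the enlarged set $A\cup\{\pi(c)\}$) to obtain a subset of $\big((\f{nontriv}(\pi)\cup A\cup\{\pi(c)\})\cap\f{uncapt}(s)\big)\cup\f{capt}(s)$. I must show this lands inside $\big((\f{nontriv}(\pi)\cup A)\cap(\f{uncapt}(s)\setminus\{c\})\big)\cup\f{capt}_{\{c\}}(s)$, which is exactly $((\f{nontriv}(\pi)\cup A)\cap\f{uncapt}([c]s))\cup\f{capt}([c]s)$. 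The bookkeeping has two delicate points: an element of $\f{capt}(s)=\f{capt}_\varnothing(s)$ sits inside $\f{capt}_{\{c\}}(s)$ by Lemma~\ref{lemm.capturable.A.B}; an element $x\in\f{uncapt}(s)$ equal to $c$ is absorbed into $\f{capt}_{\{c\}}(s)$ via Lemma~\ref{lemm.uncapt.to.capt}, while an element equal to the spurious $\pi(c)$ with $\pi(c)\neq c$ is automatically in $\f{nontriv}(\pi)$ because $\pi$ is a bijection (if $\pi(\pi(c))=\pi(c)$ then injectivity forces $\pi(c)=c$).

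The main obstacle is precisely this last case: reconciling the extra atom $\pi(c)$ introduced by the inductive hypothesis and the atom $c$ removed from $\f{uncapt}$ on passing through the binder. The argument hinges on the two auxiliary lemmas (\ref{lemm.capturable.A.B} and~\ref{lemm.uncapt.to.capt}) together with bijectivity of $\pi$; everything else is routine distribution of intersection over union and unfolding of the permutation action.
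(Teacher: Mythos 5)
Your proof is correct and takes essentially the same route as the paper's: the same structural induction with $A$ and $\pi$ kept general, the unknown case resolved by the same decomposition along $\f{nontriv}(\pi')$, and the abstraction case handled exactly as in the paper via Lemma~\ref{lemm.capturable.A.B}, Lemma~\ref{lemm.uncapt.to.capt}, and the injectivity observation that $\pi(c)\neq c$ forces $\pi(c)\in\f{nontriv}(\pi)$. Nothing is missing; your element-chase covers precisely the paper's ``extra $\pi(a)$'' and ``missing $a$'' case analyses.
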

\begin{proof}
See Appendix~\ref{sect.omitted.proofs}.
We use Lemmas~\ref{lemm.capturable.A.B} and~\ref{lemm.uncapt.to.capt}.
\end{proof}

\begin{lemm}
\label{lemm.uncapt.fa}
$\f{uncapt}(r)\subseteq\f{fa}(r)$.
\end{lemm}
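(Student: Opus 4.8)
The plan is to prove $\f{uncapt}(r)\subseteq\f{fa}(r)$ by structural induction on the term $r$, following the shape of Definitions~\ref{defn.fa} and~\ref{defn.uncapt}, which recurse on the same four cases. In each case I would compare the defining clause for $\f{uncapt}$ with that for $\f{fa}$ and check the inclusion directly, invoking the inductive hypothesis on immediate subterms where needed.

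The two straightforward structural cases are the term-former and the abstraction. For $r=\tf f(r_1,\ldots,r_n)$ I would use that $\f{uncapt}(\tf f(r_1,\ldots,r_n))=\bigcup_i\f{uncapt}(r_i)$ and $\f{fa}(\tf f(r_1,\ldots,r_n))=\bigcup_i\f{fa}(r_i)$, so the inclusion follows by applying the inductive hypothesis $\f{uncapt}(r_i)\subseteq\f{fa}(r_i)$ termwise and taking unions. For $r=[a]r'$ I would use $\f{uncapt}([a]r')=\f{uncapt}(r')\setminus\{a\}$ and $\f{fa}([a]r')=\f{fa}(r')\setminus\{a\}$; since the inductive hypothesis gives $\f{uncapt}(r')\subseteq\f{fa}(r')$ and set difference is monotone in its first argument, the inclusion is immediate. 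The atom case $r=a$ is trivial since $\f{uncapt}(a)=\varnothing$.

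The main obstacle, such as it is, is the unknown case $r=\pi\act X^S$, where I must show $S\setminus\f{nontriv}(\pi)\subseteq\pi\act S$. The key observation is that any $a\in S\setminus\f{nontriv}(\pi)$ satisfies $\pi(a)=a$ by Definition~\ref{defn.dom.bijection}, so $a=\pi(a)\in\pi\act S$ because $a\in S$ and $\pi\act S=\{\pi(b)\mid b\in S\}$ by Definition~\ref{defn.pointwise.action}; recall also $\f{fa}(\pi\act X^S)=\pi\act S$. This is the only step with any content: it records that atoms left fixed by $\pi$ that lie in the permission set survive into the pointwise image. Since this case involves no recursion, no inductive hypothesis is needed there, and the whole argument is a short induction with the genuine work concentrated in this single set-theoretic inclusion.
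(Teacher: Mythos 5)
Your proposal is correct and follows exactly the paper's own argument, which is stated there simply as ``a routine induction on $r$ using Definitions~\ref{defn.fa} and~\ref{defn.uncapt}''; you have merely spelled out the cases, including the only nontrivial one, $S\setminus\f{nontriv}(\pi)\subseteq\pi\act S$, which you handle correctly.
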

\begin{proof}
By a routine induction on $r$ using Definitions~\ref{defn.fa} and~\ref{defn.uncapt}. 
\end{proof}

\begin{lemm}
\label{lemm.capturable.theta.r}

$\f{capt}_A(r\theta) \subseteq \f{capt}_A(r) \cup \bigcup_{X^S\in\f{fV}(r)}\f{capt}(\theta(X^S))$.

\end{lemm}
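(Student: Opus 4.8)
The plan is to proceed by induction on the structure of $r$, using the substitution action (Definition~\ref{defn.subst.action}) to push $\theta$ inward and the definition of capturable atoms (Definition~\ref{defn.capturable}) to compute $\f{capt}_A$ on the result. The atom and term-former cases are immediate: $a\theta\equiv a$ gives $\f{capt}_A(a\theta)=\varnothing$, and for $\tf f(r_1,\ldots,r_n)$ both $\f{capt}_A$ and $\f{fV}$ distribute over the arguments, so the claim follows by taking the union of the inductive hypotheses. The abstraction case $[a]r$ is handled by observing $([a]r)\theta\equiv[a](r\theta)$, whence $\f{capt}_A(([a]r)\theta)=\f{capt}_{A\cup\{a\}}(r\theta)$; applying the inductive hypothesis at the enlarged set $A\cup\{a\}$ (the induction should be read as universally quantified over $A$) and then using $\f{capt}_{A\cup\{a\}}(r)=\f{capt}_A([a]r)$ together with $\f{fV}(r)=\f{fV}([a]r)$ closes this case.

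The only real work is the case $\pi\act X^S$, where $(\pi\act X^S)\theta\equiv\pi\act\theta(X^S)$ and $\f{fV}(\pi\act X^S)=\{X^S\}$. Here I would invoke Lemma~\ref{lemm.capturable.pi.r} with $\theta(X^S)$ in place of $r$, giving
$$\f{capt}_A(\pi\act\theta(X^S))\subseteq\bigl((\f{nontriv}(\pi)\cup A)\cap\f{uncapt}(\theta(X^S))\bigr)\cup\f{capt}(\theta(X^S)).$$
Since the target right-hand side is $\f{capt}_A(\pi\act X^S)\cup\f{capt}(\theta(X^S))=\bigl((\f{nontriv}(\pi)\cup A)\cap S\bigr)\cup\f{capt}(\theta(X^S))$, and the $\f{capt}(\theta(X^S))$ summands coincide, it suffices to show $\f{uncapt}(\theta(X^S))\subseteq S$. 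This follows at once from Lemma~\ref{lemm.uncapt.fa}, which gives $\f{uncapt}(\theta(X^S))\subseteq\f{fa}(\theta(X^S))$, together with the defining condition on substitutions $\f{fa}(\theta(X^S))\subseteq S$ from Definition~\ref{defn.subst}. Intersecting both sides with $\f{nontriv}(\pi)\cup A$ then yields $(\f{nontriv}(\pi)\cup A)\cap\f{uncapt}(\theta(X^S))\subseteq\f{capt}_A(\pi\act X^S)$, as needed.

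I expect this $\pi\act X^S$ case to be the crux, and it is precisely why Lemma~\ref{lemm.capturable.pi.r} and the auxiliary notion $\f{uncapt}$ were set up beforehand: the permutation $\pi$ can expose atoms of $\theta(X^S)$ that lie in $S$ but are not themselves moved by $\pi$, so that they become capturable after $\pi$ acts, and $\f{uncapt}$ is exactly the device that tracks these. The constraint $\f{fa}(\theta(X^S))\subseteq S$ is what keeps them bounded by $S$, hence by $\f{capt}_A(\pi\act X^S)$; without it the substitution could smuggle in atoms outside $S$ and the inclusion would fail.
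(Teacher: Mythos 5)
Your proof is correct and takes essentially the same route as the paper's: induction on $r$, with the trivial atom and term-former cases, the $[a]r$ case handled via $\f{capt}_A(([a]r)\theta)=\f{capt}_{A\cup\{a\}}(r\theta)$ and the inductive hypothesis at $A\cup\{a\}$, and the crux $\pi\act X^S$ case resolved by Lemma~\ref{lemm.capturable.pi.r} followed by Lemma~\ref{lemm.uncapt.fa}. The only (welcome) difference is that you make explicit the appeal to $\f{fa}(\theta(X^S))\subseteq S$ from Definition~\ref{defn.subst}, which the paper leaves implicit in its citation of Lemma~\ref{lemm.uncapt.fa}.
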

\begin{proof}
By induction on $r$:
\begin{squishitem}
\item
The cases $a$ and $\tf{f}(r_1, \ldots, r_n)$.\quad
Straightforward.
\item
The case $[a]r$.\quad
We reason as follows:
\begin{calcenv}
\hspace{-2em}
\f{capt}_A([a](r\theta)) & = & \f{capt}_{A\cup\{a\}}(r\theta) & \text{Definition~\ref{defn.capturable}} \\
                            & \subseteq & \f{capt}_{A \cup \{ a \}}(r) \cup \bigcup_{X^S\in\f{fV}(r)}\f{capt}(\theta(X^S)) & \text{Ind. hyp.} \\
                            & = & \f{capt}_A([a]r) \cup \bigcup_{X^S\in\f{fV}(r)}\f{capt}(\theta(X^S)) & \text{Definition~\ref{defn.capturable}}
\end{calcenv}
The result follows.
\item
The case $\pi \act X^S$.\quad
As $(\pi \act X^S)\theta \equiv \pi \act \theta(X^S)$, we reason as follows:
$$
\hspace{-2em}
\begin{array}{r@{\ }ll}
\f{capt}_A(\pi\act\theta(X^S))  \subseteq & ((\f{nontriv}(\pi)\cup A) \cap \f{uncapt}(\theta(X^S)))
\\
 & \qquad  \cup \f{capt}(\theta(X^S)) & \text{Lemma~\ref{lemm.capturable.pi.r}} 
\\
                        \subseteq & ((\f{nontriv}(\pi) \cup A) \cap S) \cup \f{capt}(\theta(X^S))  & \text{Lemma~\ref{lemm.uncapt.fa}}  \\
                        = & \f{capt}_A(\pi\act X^S) \cup \f{capt}(\theta(X^S)) & \text{Definition~\ref{defn.capturable}}
\end{array}
$$
The result follows.
\qedhere
\end{squishitem}
\end{proof}

\begin{rmrk}
$\f{capt}(r\theta) \subseteq \bigcup_{\f{fV}(r)}\f{capt}(\theta(X^S))$ is not true in general.
For example if $a \in S$ and $b \in S$ then $\f{capt}([a]X^S)= \{ a \}$ and $\f{capt}([X^S \ssm [b]X^S]) = \{ b \}$, and $\f{capt}(([a]X^S))\theta = \{ a, b\} \not\subseteq \{ b \}$.
\end{rmrk}

\begin{lemm}
\label{lemm.solves.sound}
Suppose $\f{capt}(Pr) \subseteq D$ and $\f{capt}(Pr\theta) \subseteq E$. 
Then $\theta$ solves $Pr$ if and only if $\inter{\theta}_D^E$ solves $\inter{Pr}^D$.
\end{lemm}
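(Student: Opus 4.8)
The plan is to route both implications through a single compositionality ``chain'' and then, in the hard direction, to collapse $\alpha\beta$- to $\alpha$-equivalence. Write $[d_1,\ldots,d_n]=D$. First I would isolate the following sub-claim: for any $r$ with $\f{capt}(r)\subseteq D$,
$$(\lam{d_1}\ldots\lam{d_n}\inter{r}^D)\inter{\theta}_D^E \abeq \lam{d_1}\ldots\lam{d_n}\inter{r\theta}^E.$$
Its justification has two parts. First, the substitution $\inter{\theta}_D^E$ can be pushed under the leading abstractions with no renaming, because no $d_i$ is free in $\inter{\theta}_D^E(X^S)$ for any $X^S$: by Definition~\ref{defn.thetaC} and Lemma~\ref{lemm.not.fa.fresh} we have $\f{fa}(\inter{\theta}_D^E(X^S))\subseteq S\setminus(D\cap S)$, and since $d_i\in D$ this set cannot contain $d_i$. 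Second, Theorem~\ref{thrm.sub.composition} (applicable since $\f{capt}(r)\subseteq D$, which follows from $\f{capt}(Pr)\subseteq D$) rewrites $\inter{r}^D\inter{\theta}_D^E$ to $\inter{r\theta}^E$ under each leading $\lambda$.

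For the forward direction, suppose $\theta$ solves $Pr$ and take $r\ueq s\in Pr$, so $r\theta\aeq s\theta$. By Theorem~\ref{thrm.nomeq.lameq} we get $\inter{r\theta}^E\aeq\inter{s\theta}^E$, and hence $\lam{d_1}\ldots\lam{d_n}\inter{r\theta}^E\aeq\lam{d_1}\ldots\lam{d_n}\inter{s\theta}^E$ by repeated use of $\rulefont{\lambda{ \aeq }\lambda aa}$. Chaining this with the sub-claim (once for $r$, once reversed for $s$) and using transitivity of $\abeq$ yields $(\lam{d_1}\ldots\lam{d_n}\inter{r}^D)\inter{\theta}_D^E\abeq(\lam{d_1}\ldots\lam{d_n}\inter{s}^D)\inter{\theta}_D^E$, i.e. $\inter{\theta}_D^E$ solves $\inter{Pr}^D$.

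For the backward direction, suppose $\inter{\theta}_D^E$ solves $\inter{Pr}^D$. For each $r\ueq s\in Pr$ the sub-claim turns the solved equation into $\lam{d_1}\ldots\lam{d_n}\inter{r\theta}^E\abeq\lam{d_1}\ldots\lam{d_n}\inter{s\theta}^E$. Both sides are $\beta$-normal, since the translation $\inter{\cdot}^E$ never produces a $\beta$-redex, so I would invoke the fact that $\alpha\beta$-equivalent $\beta$-normal $\lambda$-terms are already $\alpha$-equivalent, obtaining $\lam{d_1}\ldots\lam{d_n}\inter{r\theta}^E\aeq\lam{d_1}\ldots\lam{d_n}\inter{s\theta}^E$. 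Stripping the common leading abstractions — each step forced to use $\rulefont{\lambda{ \aeq }\lambda aa}$, since the bound atoms agree and the permutative convention forbids $\rulefont{\lambda{ \aeq }\lambda ab}$ — gives $\inter{r\theta}^E\aeq\inter{s\theta}^E$. Finally $\f{capt}(Pr\theta)\subseteq E$ supplies $\f{capt}(r\theta)\cup\f{capt}(s\theta)\subseteq E$, so Theorem~\ref{thrm.injectivity} delivers $r\theta\aeq s\theta$; hence $\theta$ solves $Pr$.

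The main obstacle is precisely the one step in the backward direction where $\abeq$ must be collapsed to $\aeq$: everything else is bookkeeping over the two hypotheses $\f{capt}(Pr)\subseteq D$ and $\f{capt}(Pr\theta)\subseteq E$ together with results already established, whereas the passage from $\alpha\beta$- to $\alpha$-equivalence needs a normalisation/confluence property of the $\lambda$-calculus lying outside the inductive apparatus developed here. I expect to discharge it by noting that all translated terms are patterns, hence $\beta$-normal, and appealing to Church--Rosser (uniqueness of normal forms up to $\aeq$).
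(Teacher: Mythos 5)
Your proposal is correct and takes essentially the same route as the paper: the paper's proof is a single chain of four $\liff$s justified by Theorems~\ref{thrm.nomeq.lameq} and~\ref{thrm.injectivity} (your translation of $\aeq$ back and forth, using $\f{capt}(Pr\theta)\subseteq E$), Theorem~\ref{thrm.sub.composition} (the second half of your sub-claim, using $\f{capt}(Pr)\subseteq D$), a ``fact of $\lambda$-terms'' for wrapping in $\lam{d_1}\ldots\lam{d_n}$, and the observation that no atom of $D$ is free in $\inter{\theta}_D^E$ (the first half of your sub-claim). The only difference is that you make explicit the collapse of $\abeq$ to $\aeq$ between $\beta$-normal translated terms via Church--Rosser, a step which the backward direction of the paper's second $\liff$ also needs but leaves implicit; spelling it out (translated terms contain no $\beta$-redexes, so $\abeq$ between them is $\aeq$) is a point of extra rigour, not a deviation.
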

\begin{proof}

We reason as follows, where $D=[d_1,\ldots,d_n]$:
  {
  $$
  \begin{array}{r@{}l@{\qquad}l}
  r \theta \aeq s \theta
  \liff\ &\model{r \theta}^E \aeq \model{s \theta}^E
  &\text{Theorems~\ref{thrm.nomeq.lameq} and~\ref{thrm.injectivity}}
 \\
 \liff\ &\model{r}^D \model{\theta}_D^E
 \abeq \model{s}^D \model{\theta}_D^E
  &\text{Theorem~\ref{thrm.sub.composition}}
 \\
 \liff\lam{d_1} \ldots \lam{d_n}&\model{r}^D \model{\theta}_D^E
 \abeq
 \lam{d_1} \ldots \lam{d_n}\model{s}^D \model{\theta}_D^E
&\text{fact of $\lambda$-terms}
 \\
 \liff
 (\lam{d_1} \ldots \lam{d_n}&\model{r}^D) \model{\theta}_D^E \abeq
 (\lam{d_1} \ldots \lam{d_n}\model{s}^D) \model{\theta}_D^E
 &\text{no atom of $D$ free in $\model{\theta}_D^E$}
 \end{array}
 $$
 }
\end{proof}

If $D$ and $E$ are `large enough', then $\theta$ solves $Pr$ if and only if the translation $\inter{\theta}_D^E$ solves the translation $\inter{Pr}^D$:
\begin{thrm}[Soundness and weak completeness]
\label{thrm.sound}
Suppose 
$$\f{capt}(Pr) \subseteq D
\qquad 
\f{capt}(\theta(X^S)) \subseteq E\text{ for all }X^S\in\f{fV}(Pr),\quad\text{and}\quad D \subseteq E.
$$
Then $\theta$ solves $Pr$ if and only if $\inter{\theta}_D^E$ solves $\inter{Pr}^D$.
\end{thrm}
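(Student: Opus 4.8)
The plan is to reduce \textbf{Theorem~\ref{thrm.sound}} to \textbf{Lemma~\ref{lemm.solves.sound}}, which already delivers the desired biconditional ($\theta$ solves $Pr$ iff $\inter{\theta}_D^E$ solves $\inter{Pr}^D$) under the slightly different hypotheses $\f{capt}(Pr)\subseteq D$ and $\f{capt}(Pr\theta)\subseteq E$. The first of these is given directly by assumption, so the entire task is to derive the second, namely $\f{capt}(Pr\theta)\subseteq E$, from the Theorem's hypotheses ($\f{capt}(\theta(X^S))\subseteq E$ for all $X^S\in\f{fV}(Pr)$, together with $D\subseteq E$). Once that containment is in hand, invoking Lemma~\ref{lemm.solves.sound} closes the proof with no further calculation.

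The key step is to bound $\f{capt}(Pr\theta)$ termwise. By definition $\f{capt}(Pr\theta)=\bigcup\{\f{capt}(r\theta)\cup\f{capt}(s\theta)\mid r\ueq s\in Pr\}$, so it suffices to show $\f{capt}(r\theta)\subseteq E$ for each term $r$ occurring in an equation of $Pr$. For this I would apply Lemma~\ref{lemm.capturable.theta.r} at $A=\varnothing$, which gives
$$
\f{capt}(r\theta)\subseteq \f{capt}(r)\cup\bigcup_{X^S\in\f{fV}(r)}\f{capt}(\theta(X^S)).
$$
Now each summand lands in $E$: on the one hand $\f{capt}(r)\subseteq\f{capt}(Pr)\subseteq D\subseteq E$, where the last inclusion is precisely the hypothesis $D\subseteq E$; on the other hand, since $r$ occurs in $Pr$ we have $\f{fV}(r)\subseteq\f{fV}(Pr)$, so every $X^S\in\f{fV}(r)$ satisfies $\f{capt}(\theta(X^S))\subseteq E$ by assumption. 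Hence $\f{capt}(r\theta)\subseteq E$, and taking the union over all terms of $Pr$ yields $\f{capt}(Pr\theta)\subseteq E$.

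With $\f{capt}(Pr)\subseteq D$ given and $\f{capt}(Pr\theta)\subseteq E$ now established, Lemma~\ref{lemm.solves.sound} applies verbatim and the result follows. I do not expect any genuine obstacle here: the conceptual content — controlling how capturable atoms propagate under substitution — is already isolated in Lemma~\ref{lemm.capturable.theta.r}, and the only thing Theorem~\ref{thrm.sound} adds over Lemma~\ref{lemm.solves.sound} is the repackaging of the hypothesis on $Pr\theta$ into the more usable, per-unknown form on $\theta$ plus the monotonicity condition $D\subseteq E$. The one point to handle carefully is that $D\subseteq E$ is genuinely needed (to push $\f{capt}(r)\subseteq D$ forward into $E$), and that the union in Lemma~\ref{lemm.capturable.theta.r} ranges over $\f{fV}(r)$, which must be recognised as a subset of $\f{fV}(Pr)$ so that the Theorem's hypothesis on $\theta$ covers every unknown that can appear.
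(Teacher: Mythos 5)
Your proposal is correct and follows essentially the same route as the paper's own proof: derive $\f{capt}(Pr\theta)\subseteq E$ from the hypotheses via Lemma~\ref{lemm.capturable.theta.r} (using $\f{capt}(Pr)\subseteq D\subseteq E$ and the per-unknown bound over $\f{fV}(Pr)$), then invoke Lemma~\ref{lemm.solves.sound} for the biconditional. The only difference is presentational — you apply Lemma~\ref{lemm.capturable.theta.r} termwise and take unions, where the paper writes the chain of inclusions directly for $Pr\theta$ — which is, if anything, slightly more careful about what $\f{capt}$ of a problem means.
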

\begin{proof}
Suppose $\f{capt}(Pr) \subseteq D$,\ \ $\f{capt}(\theta(X^S)) \subseteq E$ for all $X^S\in\f{fV}(Pr)$,\ \ and $D \subseteq E$.
Then:
$$
\begin{array}{r@{\ }l@{\quad}l}
\f{capt}(Pr\theta)\subseteq& \f{capt}(Pr)\cup\bigcup_{X^S\in\f{fV}(Pr)}\f{capt}(\theta(X^S))
&\text{Lemma~\ref{lemm.capturable.theta.r}}
\\
\subseteq& D\cup\bigcup_{X^S\in\f{fV}(Pr)}\f{capt}(\theta(X^S))
&\f{capt}(Pr)\subseteq D
\\
\subseteq& D\cup E
&\f{capt}(\theta(X^S)) \subseteq E \\
\subseteq & E
&D\subseteq E
\end{array}
$$
By Lemma~\ref{lemm.solves.sound}, $\theta$ solves $Pr$ if and only if $\inter{\theta}_D^E$ solves $\inter{Pr}^D$. 
\end{proof}

For example, $Pr = \{ X^S \ueq \tf{f}(Y^S, a, Z^S) \}$ where $S = \mathbb A^<  \cup \{a,b\}$ translates to $\inter{Pr}^{[a]} = \{ \lam{a}(X^S\,a) = \lam{a}(\tf{f}\,(Y^S\,a)\,a\,(Z^S\,a))\}$.

The solution $[X^S \ssm \tf f(W^S, a, b), Y^S \ssm W^S, Z^S \ssm b]$ with $S = \mathbb A^<  \cup \{a,b\}$ translates to $\inter{\theta}_{[a]}^{[a,b]} = [X^S \ssm \lam{a} (\tf{f}\,(W^S\,a\,b)\,a\,b), Y^S \ssm \lam{a}(W^S\, a\, b), Z^S \ssm \lam{a}b]$.

\subsection{Strong Completeness}
\label{subsect.strong.completeness}

The main result of this subsection is Theorem~\ref{thrm.strong}.
This strengthens the completeness result of Theorem~\ref{thrm.sound}, in a certain sense, by
expressing that a class of $\sigma$ solving $\inter{Pr}^D$ all originate from $\theta$ solving $Pr$, in a suitable formal sense. 

\begin{defn}
\label{defn.renaming}
Call a bijection on unknowns a \deffont{renaming}.
$\rho$ will range over renamings.
Each $X$ is also a $\lambda$-term (Definition~\ref{defn.terms.g}), so each $\rho$ is also a substitution (Definition~\ref{defn.sigma}).
\end{defn}

\begin{lemm}
\label{lemm.fa.subst.g}
$\f{fa}(g) = \f{fa}(g\rho)$
\end{lemm}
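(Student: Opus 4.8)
The plan is to prove this by a straightforward structural induction on $g$. The key observation, which I would record first, is that a renaming $\rho$ maps every unknown to an unknown, and by Definition~\ref{defn.fa.g} we have $\f{fa}(Y) = \varnothing$ for every unknown $Y$. Hence $\rho$ introduces no free atoms whatsoever: the set $\bigcup\{\f{fa}(\rho(X)) \mid X \in \f{fV}(g)\}$ appearing in Definition~\ref{defn.subst.g} is always empty.

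This observation has two consequences that I would use. First, in the abstraction clause of Definition~\ref{defn.subst.g} the side-condition $a \in \bigcup\{\f{fa}(\rho(X)) \mid X \in \f{fV}(g)\}$ can never hold, so we always fall into the first clause $(\lam{a}g)\rho \equiv \lam{a}(g\rho)$; no bound atom is ever renamed. This is precisely what lets the argument stay at the level of syntactic identity $\equiv$ rather than $\aeq$. Second, replacing an unknown $X$ by $\rho(X)$ swaps an empty set of free atoms for another empty set, so the unknown case is trivial.

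Then I would run the induction. The cases $g \equiv a$, $g \equiv X$, $g \equiv \tf f$, and $g \equiv g'g$ are immediate from Definitions~\ref{defn.fa.g} and~\ref{defn.subst.g}: for $a$ both sides are $\{a\}$; for $X$ and for $\tf f$ both sides are $\varnothing$; and for an application one distributes $\f{fa}$ over $(g'\rho)(g\rho)$ and applies the inductive hypothesis to each subterm. For the abstraction case $g \equiv \lam{a}g'$, the first consequence above gives $(\lam{a}g')\rho \equiv \lam{a}(g'\rho)$, whence $\f{fa}((\lam{a}g')\rho) = \f{fa}(g'\rho)\setminus\{a\} = \f{fa}(g')\setminus\{a\} = \f{fa}(\lam{a}g')$ by the inductive hypothesis.

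There is no genuine obstacle here; the only point requiring attention is to confirm that the capture-avoiding clause of substitution is never activated by a renaming, so that the bound atom $a$ of an abstraction survives literally instead of being freshened to some $b$. Once that is noted, the computation is entirely routine.
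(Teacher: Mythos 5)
Your proof is correct and takes essentially the same route as the paper, whose entire proof is ``By induction on $g$.'' Your write-up simply makes explicit the routine details behind that one-liner, in particular the key observation that $\f{fa}(\rho(X)) = \varnothing$ for every unknown, so the capture-avoiding clause of Definition~\ref{defn.subst.g} never fires and the induction proceeds at the level of syntactic identity.
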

\begin{proof}
By induction on $g$.
\end{proof}

\begin{lemm}
\label{lemm.aeq.rho.g}
$g \aeq h$ if and only if $g\rho \aeq h\rho$.
\end{lemm}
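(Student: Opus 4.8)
The plan is to prove the two implications separately, obtaining the reverse direction almost for free from the forward one by exploiting that $\rho$, being a bijection on unknowns (Definition~\ref{defn.renaming}), has an inverse renaming $\rho^\mone$.

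First I would establish the forward direction, that $g\aeq h$ implies $g\rho\aeq h\rho$, by induction on the derivation of $g\aeq h$ using the rules of Definition~\ref{defn.alpha.lambda.terms}. The cases $\rulefont{\lambda{\aeq}a}$, $\rulefont{\lambda{\aeq}\tf f}$, $\rulefont{\lambda{\aeq}X}$, and $\rulefont{\lambda{\aeq}p}$ are immediate, since $a\rho\equiv a$, $\tf f\rho\equiv\tf f$, and $X\rho$ is again an unknown, so after applying the inductive hypothesis the same rule re-fires. The case $\rulefont{\lambda{\aeq}\lambda aa}$ uses that a renaming cannot capture, i.e. $(\lam ag)\rho\equiv\lam a(g\rho)$; this holds because every $\rho(X)$ is an unknown with $\f{fa}(\rho(X))=\varnothing$, so the capturing clause of Definition~\ref{defn.subst.g} never triggers, and then one reapplies the rule to the inductive hypothesis.

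The one case needing care is the abstraction-renaming rule $\rulefont{\lambda{\aeq}\lambda ab}$, where from $(b\ a)\act g\aeq h$ and $b\not\in\f{fa}(g)$ I must derive $\lam a(g\rho)\aeq\lam b(h\rho)$. Here I would use two auxiliary facts: (i) permutation commutes with renaming, $(b\ a)\act(g\rho)\equiv((b\ a)\act g)\rho$, which holds syntactically by a trivial induction, since permutations act only on atoms and renamings only on unknowns; and (ii) Lemma~\ref{lemm.fa.subst.g}, $\f{fa}(g\rho)=\f{fa}(g)$, which discharges the side-condition $b\not\in\f{fa}(g\rho)$. The inductive hypothesis applied to $(b\ a)\act g\aeq h$ gives $((b\ a)\act g)\rho\aeq h\rho$; combining this with (i) and (ii), the rule $\rulefont{\lambda{\aeq}\lambda ab}$ delivers the result.

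For the reverse direction, suppose $g\rho\aeq h\rho$. Since $\rho^\mone$ is itself a renaming, the forward direction yields $(g\rho)\rho^\mone\aeq(h\rho)\rho^\mone$. By Lemma~\ref{lemm.sigma.fcomp.g}, $(g\rho)\rho^\mone\aeq g(\rho\fcomp\rho^\mone)$, and $\rho\fcomp\rho^\mone=\id$ by Definition~\ref{defn.sigma.circ.g} (as $(\rho(X))\rho^\mone=X$), whence $g(\rho\fcomp\rho^\mone)\equiv g$, so $(g\rho)\rho^\mone\aeq g$; symmetrically $(h\rho)\rho^\mone\aeq h$. Transitivity and symmetry of $\aeq$ (Proposition~\ref{prop.aeq.transitive.g}) then give $g\aeq h$. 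The main obstacle is nothing deep but rather keeping the bookkeeping honest: the commuting fact (i) and the identity $\rho\fcomp\rho^\mone=\id$ are the load-bearing observations that make renamings behave reversibly, and both need to be checked carefully even though each is routine.
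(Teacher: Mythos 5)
Your forward direction is essentially the paper's own proof: the same induction on the derivation of $g \aeq h$, resting on the same two observations --- that permutations commute syntactically with renamings, and that renamings preserve free atoms (Lemma~\ref{lemm.fa.subst.g}). Where you genuinely diverge is the reverse direction. The paper proves it by a second, mirror-image induction, this time on the derivation of $g\rho \aeq h\rho$; in the key case \rulefont{{\lambda}{\aeq}{\lambda}ab} it again uses the commutation fact and Lemma~\ref{lemm.fa.subst.g} to pull a derivation of $g\rho\aeq h\rho$ back to one of $g \aeq h$. You instead exploit the group structure of renamings: apply the forward direction with $\rho^\mone$, collapse $(g\rho)\rho^\mone$ to $g$ via Lemma~\ref{lemm.sigma.fcomp.g} together with the observation that $\rho \fcomp \rho^\mone$ is the identity substitution (which acts as syntactic identity because $\f{fa}(X) = \varnothing$ for every unknown, so the capturing clause of Definition~\ref{defn.subst.g} never fires), and finish with symmetry and transitivity (Proposition~\ref{prop.aeq.transitive.g}). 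Both arguments are sound. The paper's route is more elementary, needing nothing beyond the derivation rules and Lemma~\ref{lemm.fa.subst.g}; yours eliminates the second induction at the price of invoking the composition lemma and the equivalence properties of $\aeq$, and it isolates the real reason the biconditional holds --- invertibility of renamings --- so it would transfer unchanged to any class of substitutions closed under inverses whose values have empty free atoms.
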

\begin{proof}
The left to right implication is by induction on the derivation of $g \aeq h$; right to left is by induction on the derivation of $g\rho \aeq h\rho$.
We consider one case:
\begin{squishitem}
\item
The case \rulefont{{\lambda}{\aeq}{\lambda}ab}.\quad
For the left to right implication, suppose $(b\ a)\act g\aeq h$ and $b\not\in\f{fa}(g)$.
By inductive hypothesis $((b\ a) \act g)\rho \aeq h\rho$.
By Lemma~\ref{lemm.fa.subst.g}, $b \not\in \f{fa}(g\rho)$.
It is a fact that $((b\ a) \act g)\rho\equiv (b\ a)\act(g\rho)$.
It follows that $(b\ a)\act(g\rho) \aeq h\rho$.
Using \rulefont{{\lambda}{\aeq}{\lambda}ab}, $\lam{a}(g\rho) \aeq \lam{b}(h\rho)$.
The result follows.

For the right to left implication, suppose $(b\ a) \act (g\rho) \aeq h\rho$ and $b \not\in \f{fa}(g\rho)$.
It is a fact that $(b\ a) \act (g\rho)\equiv ((b\ a)\act g)\rho$.
It follows by inductive hypothesis that $(b\ a)\act g \aeq h$.
By Lemma~\ref{lemm.fa.subst.g}, $b \not\in \f{fa}(g)$.
Using \rulefont{{\lambda}{\aeq}{\lambda}ab}, $\lam{a}g \aeq \lam{b}h$ as required.
\end{squishitem}
\end{proof}

\begin{defn}
\label{defn.pi.act.sigma}
Define the substitution $\pi\act\sigma$ by:\ $(\pi\act\sigma)(X)\equiv\pi\act\sigma(X)$.
\end{defn}

Note that $\pi \act \sigma$ is a substitution.
$g(\pi\act \sigma)$ is not a shorthand for $\pi \act (g\sigma)$.
$g(\pi \act \sigma) \aeq \pi \act (g\sigma)$ does not hold in general; for example: $a\equiv a((b\ a)\act\id)\not\equiv (b\ a)\act(a\,\id)$.
However:
\begin{lemm}
\label{lemm.dom.fa}
If $\f{nontriv}(\pi) \cap \f{fa}(g) = \varnothing$ then $g(\pi \act \sigma) \aeq \pi \act (g\sigma)$.
\end{lemm}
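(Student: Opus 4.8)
The plan is to argue by induction on $\f{size}(g)$ (Definition~\ref{defn.depth.g}); this measure is chosen because the abstraction case will replace $g$ by a swapped term $(c\ a)\act g$, and Lemma~\ref{lemm.pi.depth.invariant.g} guarantees $\f{size}((c\ a)\act g)=\f{size}(g)$, so the inductive hypothesis still applies to it. Throughout I write $F=\bigcup\{\f{fa}(\sigma(X))\mid X\in\f{fV}(g)\}$ for the atoms relevant to capture under $\sigma$; by Definition~\ref{defn.pi.act.sigma} and Lemma~\ref{lemm.pi.ftma.g} the free atoms of $(\pi\act\sigma)(X)$ are $\pi\act\f{fa}(\sigma(X))$, so the atoms relevant to capture under $\pi\act\sigma$ are exactly $\pi\act F$.

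The cases $X$ and $\tf f$, and the application case $g'g$, are immediate from Definitions~\ref{defn.perm.g}, \ref{defn.subst.g} and~\ref{defn.pi.act.sigma}: for $g'g$ the hypothesis is inherited since $\f{fa}(g'),\f{fa}(g)\subseteq\f{fa}(g'g)$, and the two inductive instances recombine via rule $\rulefont{\lambda{\aeq}p}$. The atom case $a$ is where the hypothesis first bites: $a(\pi\act\sigma)\equiv a$ while $\pi\act(a\sigma)\equiv\pi(a)$, and since $\f{fa}(a)=\{a\}$ is disjoint from $\f{nontriv}(\pi)$ we get $\pi(a)=a$, so the two sides are literally equal (this is precisely the point the counterexample $a((b\ a)\act\id)\not\equiv(b\ a)\act(a\,\id)$ makes when the hypothesis fails).

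The abstraction case $\lam{a}g$ is the crux. I would choose an atom $c$ fresh for everything in sight: $c\notin\f{fa}(g)$, $c\notin\f{nontriv}(\pi)$ (so $\pi(c)=c$, and hence $c\notin\pi\act F$ as well), $c\notin F$, and $c\neq a$; such $c$ exists since all these sets are finite. The idea is to $\alpha$-rename the bound atom to $c$ on both sides, so that neither the substitution by $\sigma$ nor the one by $\pi\act\sigma$ triggers its capture-avoiding clause. Concretely, $(\lam{a}g)(\pi\act\sigma)\aeq\lam{c}\bigl(((c\ a)\act g)(\pi\act\sigma)\bigr)$ and $(\lam{a}g)\sigma\aeq\lam{c}\bigl(((c\ a)\act g)\sigma\bigr)$; applying $\pi$ to the latter, using $\pi(c)=c$ and Definition~\ref{defn.perm.g}, gives $\pi\act((\lam{a}g)\sigma)\aeq\lam{c}\bigl(\pi\act(((c\ a)\act g)\sigma)\bigr)$. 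One then checks the hypothesis transfers to $(c\ a)\act g$: by Lemma~\ref{lemm.pi.ftma.g}, $\f{fa}((c\ a)\act g)=(c\ a)\act\f{fa}(g)\subseteq(\f{fa}(g)\setminus\{a\})\cup\{c\}$, which is disjoint from $\f{nontriv}(\pi)$. The inductive hypothesis then yields $((c\ a)\act g)(\pi\act\sigma)\aeq\pi\act(((c\ a)\act g)\sigma)$, and rule $\rulefont{\lambda{\aeq}\lambda aa}$ together with Proposition~\ref{prop.aeq.transitive.g} closes the case.

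I expect the main obstacle to be the two $\alpha$-renaming steps, that is, justifying $(\lam{a}g)\tau\aeq\lam{c}(((c\ a)\act g)\tau)$ for $\tau\in\{\sigma,\pi\act\sigma\}$. This is the statement that the substitution action of Definition~\ref{defn.subst.g} is independent, up to $\aeq$, of the particular fresh atom chosen in its capture-avoiding clause; it follows by unfolding that clause and comparing the fixed choice against $c$ (both fresh), but it is the one genuinely delicate piece of book-keeping. The observation that lets a single freshness choice serve both $\tau$ is again $\f{fa}((\pi\act\sigma)(X))=\pi\act\f{fa}(\sigma(X))$ (Lemma~\ref{lemm.pi.ftma.g}) combined with $c\notin\f{nontriv}(\pi)$, which ensures that an atom fresh for $\sigma$ is automatically fresh for $\pi\act\sigma$.
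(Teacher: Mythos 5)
Your skeleton is the same ``routine induction on $g$'' that the paper invokes (the paper records no details at all), and most of what you write is right: the atom case is exactly where the hypothesis bites, the observation that the capture set for $\pi\act\sigma$ is $\pi\act F$ (via Lemma~\ref{lemm.pi.ftma.g}) is correct, and the transfer of the hypothesis to $(c\ a)\act g$ --- which is what makes the renaming strategy cope with the possibility $a\in\f{nontriv}(\pi)$ --- is also correct.

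However, the two renaming steps $(\lam{a}g)\tau\aeq\lam{c}(((c\ a)\act g)\tau)$, which you flag but claim follow ``by unfolding that clause and comparing the fixed choice against $c$'', are a genuine gap: they follow neither from your inductive hypothesis nor from any lemma the paper provides. To compare $\lam{b}(((b\ a)\act g)\tau)$ (where $b$ is the fixed fresh choice of Definition~\ref{defn.subst.g}) with $\lam{c}(((c\ a)\act g)\tau)$, the only rule for abstractions with distinct binders is \rulefont{\lambda{\aeq}\lambda ab}, whose premise forces you to prove $(c\ b)\act(((b\ a)\act g)\tau)\aeq((c\ a)\act g)\tau$; likewise in the non-capturing sub-case you need $(c\ a)\act(g\tau)\aeq((c\ a)\act g)\tau$. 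In both, a permutation must be commuted past the substitution action \emph{while staying on the term} --- the mirror image of the statement you are proving, where the permutation moves onto $\sigma$. Your IH cannot supply this: its side condition $\f{nontriv}(\pi')\cap\f{fa}(k)=\varnothing$ fails for $\pi'\in\{(c\ a),(c\ b)\}$ precisely when $a\in\f{fa}(g)$, which is the only interesting case; and Lemmas~\ref{lemm.ds.pi.g}--\ref{lemm.pi.depth.invariant.g} and~\ref{lemm.sigma.fcomp.g} do not give it either. What is needed is a dual lemma --- if $\f{nontriv}(\pi')\cap\bigcup\{\f{fa}(\tau(X))\mid X\in\f{fV}(g)\}=\varnothing$ then $(\pi'\act g)\tau\aeq\pi'\act(g\tau)$ --- together with congruence of the action under $\aeq$ (that $g_1\aeq g_2$ implies $g_1\tau\aeq g_2\tau$), each proved by its own induction or all bundled into one strengthened simultaneous induction, since their abstraction cases hit the same renaming problem. (A smaller wrinkle: your measure does not strictly decrease from $g'g$ to $g'$ when $\f{size}(g)=0$, so the application case needs, e.g., a lexicographic size-then-structure measure.)
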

\begin{proof}
By a routine induction on $g$. 
\end{proof}

\begin{lemm}
\label{lemm.don't.worry.be.happy}
$\sigma$ solves $\inter{Pr}^D$ if and only if $\sigma \fcomp \rho$ does.

Suppose $\f{nontriv}(\pi) \cap (\f{fa}(r) \cup \f{fa}(s)) = \varnothing$ for every $r \ueq s\in Pr$.
Then $\sigma$ solves $\inter{Pr}^D$ if and only if $\pi \act \sigma$ does. 
\end{lemm}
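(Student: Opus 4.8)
The plan is to reduce both biconditionals to statements about a single equation of $\inter{Pr}^D$ (Definition~\ref{defn.pr.rc}). By definition a substitution solves $\inter{Pr}^D$ iff it solves every $g\ueq h$ in it, where $g\equiv\lam{d_1}\ldots\lam{d_n}\inter{r}^D$ and $h\equiv\lam{d_1}\ldots\lam{d_n}\inter{s}^D$ for $r\ueq s\in Pr$. So throughout I fix such $g,h$ and work with $g\sigma\abeq h\sigma$. The two facts I will need, neither of which appears as an earlier lemma, are stability of $\abeq$ under the two operations in play: (a) $k\abeq k'$ iff $k\rho\abeq k'\rho$, for a renaming $\rho$; and (b) $k\abeq k'$ iff $\pi\act k\abeq\pi\act k'$, for a permutation $\pi$.

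For the first claim, Lemma~\ref{lemm.sigma.fcomp.g} gives $g(\sigma\fcomp\rho)\aeq(g\sigma)\rho$ and $h(\sigma\fcomp\rho)\aeq(h\sigma)\rho$, and since $\aeq\subseteq\abeq$ and $\abeq$ is symmetric and transitive (Definition~\ref{defn.abeq}), ``$\sigma\fcomp\rho$ solves $g\ueq h$'' is equivalent to $(g\sigma)\rho\abeq(h\sigma)\rho$. Fact (a), applied to $k\equiv g\sigma$ and $k'\equiv h\sigma$, then makes this equivalent to $g\sigma\abeq h\sigma$, which is ``$\sigma$ solves $g\ueq h$''. This settles the first part.

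For the second claim, the first step is a freshness calculation. By Definition~\ref{defn.fa.g} the leading abstractions remove $d_1,\ldots,d_n$, and by Lemma~\ref{lemm.not.fa.fresh} $\f{fa}(\inter{r}^D)\subseteq\f{fa}(r)$; hence $\f{fa}(g)\subseteq\f{fa}(r)$ and $\f{fa}(h)\subseteq\f{fa}(s)$. The hypothesis $\f{nontriv}(\pi)\cap(\f{fa}(r)\cup\f{fa}(s))=\varnothing$ therefore yields $\f{nontriv}(\pi)\cap\f{fa}(g)=\varnothing$ and $\f{nontriv}(\pi)\cap\f{fa}(h)=\varnothing$, which are exactly the side-conditions of Lemma~\ref{lemm.dom.fa}. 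That lemma gives $g(\pi\act\sigma)\aeq\pi\act(g\sigma)$ and $h(\pi\act\sigma)\aeq\pi\act(h\sigma)$, so ``$\pi\act\sigma$ solves $g\ueq h$'' is equivalent to $\pi\act(g\sigma)\abeq\pi\act(h\sigma)$. Fact (b), applied to $k\equiv g\sigma$ and $k'\equiv h\sigma$, then makes this equivalent to $g\sigma\abeq h\sigma$, i.e.\ to ``$\sigma$ solves $g\ueq h$''.

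The main obstacle is establishing facts (a) and (b). Each is proved by induction on the generating clauses of $\abeq$ (Definition~\ref{defn.abeq}): the $\aeq$-rules are discharged by Lemma~\ref{lemm.aeq.rho.g} and Lemma~\ref{lemm.pi.aeq.g} respectively, so the only genuinely new obligation is that the operation commutes with the $\beta$-generator $(\lam{a}g)h\abeq g[h/a]$. For (a) this amounts to the routine identities $((\lam{a}g)h)\rho\equiv(\lam{a}(g\rho))(h\rho)$ and $(g[h/a])\rho\equiv(g\rho)[(h\rho)/a]$, valid because $\rho$ acts only on unknowns; for (b) it is the corresponding commutation of $\pi\act(-)$ with atom-for-atom substitution. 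The reflection (``only if'') halves of (a) and (b) follow from the forward halves applied with $\rho^\mone$ and $\pi^\mone$, using $k(\rho\fcomp\rho^\mone)\aeq k$ and, via Lemma~\ref{lemm.permutation.comp.g}, $\pi^\mone\act(\pi\act k)\equiv k$.
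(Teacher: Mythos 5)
Your proof is correct, and its skeleton matches the paper's: reduce to the individual equations of $\inter{Pr}^D$, handle renamings via Lemma~\ref{lemm.sigma.fcomp.g} together with stability of the equivalence under $\rho$, and handle permutations via freshness of $\f{nontriv}(\pi)$ for the translated terms together with Lemma~\ref{lemm.dom.fa}. Where you genuinely diverge is in which equivalence you push through the argument. The paper's proof manipulates $\aeq$ throughout (citing Lemmas~\ref{lemm.aeq.rho.g} and~\ref{lemm.pi.aeq.g}, which are stated only for $\aeq$), even though `solves' for $\lambda$-term problems is defined via $\abeq$; it thus silently assumes that $\aeq$-stability under renamings and permutations lifts to $\abeq$. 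You work with $\abeq$ from the start, observe that the required stability facts (your (a) and (b)) are not among the paper's lemmas, and sketch their proofs by induction on the generators of $\abeq$, the only new obligation being commutation of $\rho$ and of $\pi\act(-)$ with the $\beta$-generator $(\lam{a}g)h\abeq g[h/a]$, with the converse directions obtained by applying the forward directions to $\rho^\mone$ and $\pi^\mone$. This makes your proof longer but complete relative to the stated definitions, whereas the paper's is terser at the cost of that unstated lifting step. A second, smaller difference: for the freshness calculation you use Lemma~\ref{lemm.not.fa.fresh} ($\f{fa}(\inter{r}^D)\subseteq\f{fa}(r)$), which is what that step actually needs; the paper cites Lemma~\ref{lemm.commutation} there, which does not by itself yield the required disjointness, so your citation is the more accurate one.
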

\begin{proof}
For the first part, we have two cases:
\begin{squishitem}
\item
The case $\sigma$ solves $\inter{Pr}^D$ implies $\sigma \fcomp \rho$ solves $\inter{Pr}^D$.\quad
Suppose $g \ueq h \in \inter{Pr}^D$ and $\sigma$ solves $\inter{Pr}^D$.
Then $g\sigma \aeq h\sigma$.
By Lemma~\ref{lemm.aeq.rho.g}, $g\sigma\rho \aeq h\sigma\rho$.
By Lemma~\ref{lemm.sigma.fcomp.g}, $g(\sigma \fcomp \rho) \aeq h(\sigma \fcomp \rho)$, as required.
\item
The case $\sigma \fcomp \rho$ solves $\inter{Pr}^D$ implies $\sigma$ solves $\inter{Pr}^D$.\quad
Suppose $g \ueq h \in \inter{Pr}^D$ and $\sigma \fcomp \rho$ solves $\inter{Pr}^D$.
Then $g(\sigma \fcomp \rho) \aeq h(\sigma \fcomp \rho)$.
By Lemma~\ref{lemm.sigma.fcomp.g}, $g\sigma\rho \aeq h\sigma\rho$.
By Lemma~\ref{lemm.aeq.rho.g}, $g\sigma \aeq h\sigma$, as required.
\end{squishitem}
For the second part, suppose $\f{nontriv}(\pi) \cap (\f{fa}(r) \cup \f{fa}(s)) = \varnothing$ for every $r \ueq s\in Pr$ and $D = [d_1, \ldots, d_n]$.
Then $\inter{Pr}^D = \{ \lam{d_1}\ldots\lam{d_n}\inter{r}^D \ueq \lam{d_1}\ldots\lam{d_n}\inter{s}^D \mid r \ueq s \in Pr \}$.
By Lemma~\ref{lemm.commutation}, $\f{nontriv}(\pi) \cap (\f{fa}(\inter{r}^D) \cup \f{fa}(\inter{s}^D)) = \varnothing$.
The result follows from Lemma~\ref{lemm.dom.fa}, and using Proposition~\ref{prop.aeq.transitive.g} and Lemma~\ref{lemm.pi.aeq.g}.
\end{proof}

\begin{rmrk}
Lemma~\ref{lemm.don't.worry.be.happy} expresses an intuition that `names of atoms and unknowns on the right in a solution, do not matter', which also underlies the $\pi$ and $\rho$ in Theorem~\ref{thrm.strong}.
$\rho$ is the price we pay for using the same unknowns in Definitions~\ref{defn.terms.g} and~\ref{defn.terms}:
This design decision makes Definition~\ref{defn.nominal.trans.g} compact, but it causes technical problems in Lemma~\ref{qwerty}, because $\sigma(X)$ can introduce new unknowns over whose permission sets (back in the nominal world) we have no control. 
$\rho$ lets us rename those new unknowns as convenient.
As for $\pi$, we discuss it below.

Another design decision is to work with an untyped $\lambda$-terms.
This simplifies our presentation and makes our results slightly more powerful (because they apply to more substitutions), but we cannot be \emph{too} liberal: 
Suppose $\sigma$ solves $\inter{Pr}^D$.
Examining Definition~\ref{defn.nominal.trans.g}, if $X$ occurs in $\inter{Pr}^D$ then it is applied to a number of atoms equal to the length of $D \cap S$. 
So, we will only be interested in $\sigma$ that respect this fragment of typability ($\mathcal V$ will be $\f{fV}(Pr)$): 
\end{rmrk}

\begin{defn}
\label{defn.D-consistent}
Let $\mathcal V$ be a finite set of unknowns.
Call $\sigma$ \deffont{$D$-consistent on $\mathcal V$} when for every $X \in \mathcal V$,\ $\sigma(X) \aeq \lam{a_1}\ldots\lam{a_n}q$ where $n$ is the length of $D \cap S$.
(So $\sigma(X)$ starts with `at least' length-$D \cap S$-many $\lambda$-abstractions.)

Call $\sigma$ \deffont{strictly} $D$-consistent when also, for every $X \in\mathcal V$,\ $\f{fa}(\sigma(X)) \cap D = []$.
\end{defn}

\begin{rmrk}
\label{rmrk.a}
Strictness is motivated by the following examples:
Take $D=[a]$.  

Take ${Pr=\{X^S \ueq \tf f([a]Y^S,Y^S)\}}$ with ${S = \mathbb A^< }$.
Then the problem 
\begin{multline*}
\inter{Pr}^D=\{\lam{a}(X^S\,a) \ueq \lam{a}(\tf f\,(\lam{a}(Y^S a))\,(Y^S a))\} 
\\
\text{has the solution}
\quad
\sigma = [ X^S \ssm \lam{c}(\tf{f}\,(\lam{c}a)\,a),\ Y^S \ssm \lam{c}a ].
\end{multline*}
Now $(\sigma\fcomp\rho)(Y^S) \aeq \inter{\theta}_D^E(Y^S)$ is impossible for any $\rho$, since $\lam{c}a \aeq \lam{a}\inter{\theta(Y^S)}^E$ is impossible.

Take $Pr=\{X^S \ueq \tf f([a]Y^T, Y^T)\}$ with $S = \mathbb A^< $ and $T = \mathbb A^<  \setminus \{a\}$.
Then the problem
\begin{multline*}
\inter{Pr}^D = \{\lam{a}(X^S\,a) \ueq \lam{a}(\tf f\,(\lam{a}Y^T)\,Y^T)\}
\\
\text{has the solution}\quad
\sigma=[X^S \ssm \lam{c}(\tf f\,(\lam{c}a)\,a), Y^T \ssm a].
\end{multline*} 
Now $(\sigma\fcomp\rho)(Y^T) \aeq \inter{\theta}_D^E(Y^T)$ is impossible, since $a\in\f{fa}(a)$ whereas

$a\not\in\f{fa}(\inter{\theta(Y^T)}^E)$ by Lemma~\ref{lemm.not.fa.fresh}.

The $a$ in $\sigma(Y^T)$ for the two $\sigma$ considered above, has nothing to do with the $a$ in $D$. 
We can regard this as an unfortunate `name-clash' which Lemma~\ref{lemm.don't.worry.be.happy} allows us to eliminate with a permutation $\pi$.
\end{rmrk}

More on this in Theorem~\ref{thrm.strong}.
We continue with the proofs:

\begin{defn}
Define the \deffont{arguments of unknowns} in a pattern $q$ by: 
\begin{frameqn}
\begin{gathered}
\f{args}(a) = \varnothing 
\qquad
\f{args}(X)=\varnothing
\qquad
\f{args}(X a_1\ldots a_n) = \{a_1, \ldots, a_n\}
\\
\f{args}(\tf f q_1 \ldots q_n) = \bigcup_{1\leq i\leq n}\f{args}(q_i) 
\qquad
\f{args}(\lam{a}q) = \f{args}(q)
\end{gathered}
\end{frameqn}
\end{defn}
$q \aeq r$ does not imply $\f{args}(q) = \f{args}(r)$.
This is by design. 

\begin{defn} 
\label{defn.qinv}
Suppose $q$ is a $\phi$-pattern and $\f{args}(q) \subseteq E$.  

Define a nominal term $q\qinv$ by: 
\begin{frameqn}
\begin{array}{r@{\ }l@{\qquad}r@{\ }l}
a\qinv\equiv& a 
&
(X b_1\ldots b_{\phi(X)})\qinv\equiv& \pi {\act} X^S
\\
(\lam{a}q)\qinv\equiv& [a]q\qinv
&
(\tf f q_1\ldots q_n)\qinv\equiv& \tf f(q_1\qinv,\ldots,q_n\qinv)
\end{array}
\end{frameqn}
Here, for each $E$ and $X$,\ $\pi$ is a fixed but arbitrary choice of permutation of the atoms in $E$, mapping the $i^\text{th}$ element of $E \cap S$ 
(Definition~\ref{defn.sets.notation}) to $b_i$ for $1 \leq i \leq \phi(X)$. 
\end{defn}

\begin{lemm}
\label{star} 
$\f{args}(q) \subseteq E$ implies $\inter{q\qinv}^E \equiv q$. 
\end{lemm}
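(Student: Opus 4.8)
The plan is to induct on the structure of the pattern $q$, using Definitions~\ref{defn.qinv} and~\ref{defn.nominal.trans.g} to unfold $(\cdot)\qinv$ and then $\inter{\cdot}^E$. The cases $q\equiv a$, $q\equiv \tf f q_1\ldots q_n$, and $q\equiv\lam{a}q'$ are routine: each clause of both translations is structurally recursive, so unfolding $\inter{q\qinv}^E$ peels off one constructor and leaves sub-expressions of the form $\inter{q_i\qinv}^E$. Since $\f{args}$ is monotone on subterms (from its defining clauses, $\f{args}(q_i)\subseteq\f{args}(\tf f q_1\ldots q_n)$ and $\f{args}(q')=\f{args}(\lam{a}q')$), the hypothesis $\f{args}(q)\subseteq E$ is inherited by every subterm, so the inductive hypothesis applies and gives $\inter{q_i\qinv}^E\equiv q_i$; reassembling the constructor yields $q$.

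The only case with real content is $q\equiv X b_1\ldots b_{\phi(X)}$. Here Definition~\ref{defn.qinv} gives $q\qinv\equiv\pi\act X^S$, where $\pi$ is the chosen permutation of the atoms of $E$ sending the $i$-th element of $E\cap S$ to $b_i$. Writing $[e_1,\ldots,e_m]=E\cap S$, Definition~\ref{defn.nominal.trans.g} then gives $\inter{\pi\act X^S}^E\equiv X^S\pi(e_1)\ldots\pi(e_m)$. The argument closes once we observe two things: first, $m=\phi(X)$, so the argument list has the right length; and second, $\pi(e_i)=b_i$ for each $i$ by the very construction of $\pi$, so that $\inter{\pi\act X^S}^E\equiv X^S b_1\ldots b_{\phi(X)}$, which is $q$.

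I expect the main obstacle to be pinning down precisely where $\f{args}(q)\subseteq E$ is needed, namely in making sense of the permutation $\pi$ posited in Definition~\ref{defn.qinv}. For the partial map $e_i\mapsto b_i$ to extend to a genuine permutation of $E$ (identity off $E$), the targets $b_1,\ldots,b_{\phi(X)}$ must themselves lie in $E$; this is exactly what $\f{args}(q)=\{b_1,\ldots,b_{\phi(X)}\}\subseteq E$ supplies, and it is why the containment hypothesis cannot be dropped. I would also record as a standing consistency assumption that the arity matches the image vector, $\phi(X)=|E\cap S|$ (so that $m=\phi(X)$ above), since this is what aligns the forward and inverse translations. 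With both points in hand the two observations of the previous paragraph are immediate and the induction goes through.
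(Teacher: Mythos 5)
Your proof is correct and takes the same route as the paper, whose entire proof of Lemma~\ref{star} is ``By induction on $q$''; your expansion fills in exactly the details that induction needs, including the only substantive case $X b_1\ldots b_{\phi(X)}$, where $\pi(e_i)=b_i$ holds by the construction of $\pi$ in Definition~\ref{defn.qinv}. Your two side observations are also apt: $\f{args}(q)\subseteq E$ is what makes $\pi$ a legitimate permutation of the atoms in $E$, and the length agreement $\phi(X)=|E\cap S|$ is an implicit consistency condition of the setting (it is arranged explicitly in the proof of Lemma~\ref{qwerty}).
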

\begin{proof} 
By induction on $q$.
\end{proof}

\begin{lemm} 
\label{qwerty} 
Suppose $\mathcal V$ is a finite set of unknowns and $\sigma$ is a $\phi$-pattern substitution, strictly $D$-consistent on $\mathcal V$.

Then there exist $\rho$, $\theta$, and $E$, such that $D \subseteq E$, $\bigcup_{X \in \mathcal V}\f{capt}(\theta(X)) \subseteq E$, and $(\sigma \fcomp \rho)(X) \aeq \inter{\theta}_D^E(X)$ for every $X \in\mathcal V$. 
\end{lemm}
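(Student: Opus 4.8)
The plan is to run the translation $\inter{-}^{(-)}$ in reverse: given the pattern substitution $\sigma$, I would read off a permissive nominal substitution $\theta$ by applying the inverse translation $(-)\qinv$ of Definition~\ref{defn.qinv} to the bodies of the $\sigma(X^S)$, and use $\rho$ only to reconcile the plain $\lambda$-unknowns occurring in $\sigma$ with the superscripted permissive unknowns that $(-)\qinv$ introduces. First I would fix $E$: since $\mathcal V$ is finite and each $\sigma(X^S)$ is a finite $\lambda$-term, I would take $E$ to be a single finite vector listing $D$ together with every atom occurring in any $\sigma(X^S)$ for $X^S\in\mathcal V$ (enlarging if necessary). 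This makes $D\subseteq E$ immediate and guarantees $\f{args}(q)\subseteq E$ for every pattern body $q$ that arises, which is exactly the hypothesis needed to apply Lemma~\ref{star}.

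Next, for each $X^S\in\mathcal V$ I would use strict $D$-consistency (Definition~\ref{defn.D-consistent}) to rewrite $\sigma(X^S)\aeq\lam{d_1}\ldots\lam{d_n}q_X$ with $[d_1,\ldots,d_n]=D\cap S$ and $q_X$ a $\phi$-pattern; the clause $\f{fa}(\sigma(X^S))\cap D=[]$ of strictness is precisely what lets me $\alpha$-rename so that the outermost $n$ binders are literally the atoms of $D\cap S$, without capture. I would then choose the renaming $\rho$ to send each plain unknown $Y$ occurring in some body $q_X$ to a fresh permissive unknown $Y'^{T'}\notin\mathcal V$, where $T'$ is picked so that $|E\cap T'|=\phi(Y)$ and so that $T'$ is small enough for the reverse translation to keep free atoms inside the relevant $S$'s, and I would define $\theta(X^S):=(q_X\rho)\qinv$ (with $\theta$ the identity off $\mathcal V$).

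With these in hand the equivalence $(\sigma\fcomp\rho)(X^S)\aeq\inter{\theta}_D^E(X^S)$ falls out of the round-trip lemma: unfolding Definition~\ref{defn.thetaC} gives $\inter{\theta}_D^E(X^S)\equiv\lam{d_1}\ldots\lam{d_n}\inter{(q_X\rho)\qinv}^E$, Lemma~\ref{star} collapses $\inter{(q_X\rho)\qinv}^E\equiv q_X\rho$, and then $\lam{d_1}\ldots\lam{d_n}(q_X\rho)\equiv(\lam{d_1}\ldots\lam{d_n}q_X)\rho\aeq\sigma(X^S)\rho=(\sigma\fcomp\rho)(X^S)$, pushing the renaming through the $\alpha$-step with Lemma~\ref{lemm.aeq.rho.g}. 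The bound $\bigcup_{X\in\mathcal V}\f{capt}(\theta(X))\subseteq E$ I would get by inspecting Definition~\ref{defn.capturable} on a reverse-translated term: every permutation introduced by the inverse translation satisfies $\f{nontriv}(\pi)\subseteq E$, and every atom abstracted along the way lies in $E$ by the choice of $E$, so all capturable atoms land in $E$ (the inductive step is packaged by Lemma~\ref{lemm.capturable.theta.r}).

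The main obstacle is checking that $\theta$ really is a substitution in the sense of Definition~\ref{defn.subst}, i.e. that $\f{fa}(\theta(X^S))\subseteq S$ for each $X^S\in\mathcal V$. The free atoms of $(q_X\rho)\qinv$ come from two sources: the literal, non-argument atoms of $q_X$, which strictness and the identity $\f{fa}(\sigma(X^S))=\f{fa}(q_X)\setminus(D\cap S)$ must pin inside $S$; and the full permission sets $\pi\act T'$ contributed by the unknown leaves (via the clause for $\pi\act X^S$ in Definition~\ref{defn.fa}), which is precisely where the freedom in choosing each $T'$ through $\rho$ has to be spent so that every contributed $\pi\act T'$ lands in $S$. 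Making one global choice of $E$ and of the fresh unknowns that works simultaneously for all $X^S\in\mathcal V$ — keeping each $\theta(X^S)$ inside its own permission set while arranging $D\subseteq E$ and $\f{capt}\subseteq E$ — is the delicate bookkeeping at the heart of the argument; Lemma~\ref{lemm.uncapt.fa} and Lemma~\ref{lemm.not.fa.fresh} are the estimates I would lean on to keep these free-atom calculations under control.
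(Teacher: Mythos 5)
Your construction coincides with the paper's own proof: the same $E$ (all atoms of $D$ and of the $\sigma(X)$ for $X\in\mathcal V$), the same renaming $\rho$ sending each unknown $Y$ occurring in a body to a fresh $Y'$ with the length of $E\cap\f{fa}(Y')$ equal to $\phi(Y)$, the same definition $\theta(X)\equiv(q\rho)\qinv$, and the same round-trip calculation through Definition~\ref{defn.thetaC}, Lemma~\ref{star} and the fact that $\rho$ commutes with the outer abstractions. The one place you go further is the ``main obstacle'' you isolate --- that $\f{fa}(\theta(X^S))\subseteq S$ must be checked for $\theta$ to count as a substitution in the sense of Definition~\ref{defn.subst}; the paper's proof never discharges this obligation (it verifies only the $\f{capt}$ bound and the $\aeq$ chain), so on this point your attempt is more, not less, careful than the published argument. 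Be aware, though, that the half of this obligation you hope to extract from strictness does not actually follow: strict $D$-consistency keeps $\f{fa}(\sigma(X^S))$ disjoint from $D$, not inside $S$, so a literal free atom $a\in\f{fa}(\sigma(X^S))$ lying outside $S\cup D$ defeats both your sketch and the paper's proof (the conclusion then contradicts Lemmas~\ref{lemm.not.fa.fresh} and~\ref{lemm.aeq.fa.g}); only the permission-set half of the obligation can be secured by choosing the $T'$ small --- permission sets are closed under finite intersection, so one can put $T'\setminus E$ inside the intersection of the relevant $S$ --- exactly as you propose.
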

\begin{proof}

Take any $E = [e_1, ..., e_p]$ which includes all atoms in $D$ and in $\{\sigma(X) \mid X \in\mathcal V \}$.
Define $\mathcal V'=\bigcup_{X\in\mathcal V}\f{fV}(\sigma(X))$ (`the unknowns in $\sigma(X)$ for $X\in\mathcal V$').
For each $Y\in\mathcal V'$ choose a fresh $Y'$ such that the length of $E\cap\f{fa}(Y')$ is equal to $\phi(Y)$.
We do this injectively, so that for distinct $Y,Z\in\mathcal V'$, $Y'$ and $Z'$ are also distinct.
Let $\rho$ be any renaming such that $\rho(Y)\equiv Y'$ for all $Y\in\mathcal V'$.

By assumption $\sigma(X) \aeq \lam{a_1}\ldots\lam{a_n}q$ for a $\phi$-pattern $q$, where $[a_1,\dots,a_n]=D \cap S$.
Take $\theta(X) \equiv (q\rho)\qinv$.

We can verify that $\bigcup_{X\in\mathcal V}\f{capt}(\theta(X)) \subseteq E$.
We then reason as follows:
\begin{calcenv}
\inter{\theta}_D^E(X) & \equiv & \lam{a_1} \ldots \lam{a_n} \inter{(q\rho)\qinv}^E & \text{Definition~\ref{defn.thetaC}} \\
                      & \equiv & \lam{a_1} \ldots \lam{a_n} (q\rho) & \text{Lemma~\ref{star}} \\
                      & \equiv & (\lam{a_1} \ldots \lam{a_n} q)\rho & \text{Fact of $\lambda$-terms} \\
                      & \aeq & (\sigma \fcomp \rho)(X) & \text{By construction}
\end{calcenv}
\end{proof}

\begin{thrm}[Strong completeness]
\label{thrm.strong}
Suppose $\f{capt}(Pr) \subseteq D$.

For $\sigma$ strictly $D$-consistent on $\f{fV}(Pr)$ solving $\inter{Pr}^D$ there are $\rho$, $\theta$, and $E$, such that 
$$
\text{$(\sigma \fcomp \rho)(X) \aeq \inter{\theta}_D^E(X)$\ \ for all\ $X \in \f{fV}(Pr)$\ and\ \ \ $\theta$ solves $Pr$.}
$$
For $\sigma$ $D$-consistent on $\f{fV}(Pr)$ solving $\inter{Pr}^D$ there are $\pi$, $\rho$, $\theta$, and $E$, such that 
$$
\text{$\pi{\act} (\sigma \fcomp \rho)(X) \aeq \inter{\theta}_D^E(X)$\ \ for all\ $X\in\f{fV}(Pr)$\ and\ \ \ $\theta$ solves $Pr$.}
$$
\end{thrm}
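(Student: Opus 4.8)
The plan is to reduce both statements to the strict case handled by Lemma~\ref{qwerty}, and to bootstrap from the strict case to the $D$-consistent case by manufacturing a suitable permutation $\pi$. Throughout I treat $\sigma$ as a $\phi$-pattern substitution (which is needed to apply Lemma~\ref{qwerty}), and I use the fact, immediate from Definition~\ref{defn.nominal.trans.g}, that the free unknowns of $\inter{Pr}^D$ are exactly $\f{fV}(Pr)$.

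For the first (strict) part, I would take $\mathcal V=\f{fV}(Pr)$ and feed $\sigma$ directly into Lemma~\ref{qwerty}, obtaining $\rho$, $\theta$ and $E$ with $D\subseteq E$, $\bigcup_{X\in\f{fV}(Pr)}\f{capt}(\theta(X))\subseteq E$, and $(\sigma\fcomp\rho)(X)\aeq\inter{\theta}_D^E(X)$ for all $X\in\f{fV}(Pr)$. This is already the required relation between $\sigma$ and $\inter{\theta}_D^E$, so it only remains to check that $\theta$ solves $Pr$. Since $\sigma$ solves $\inter{Pr}^D$, so does $\sigma\fcomp\rho$ by Lemma~\ref{lemm.don't.worry.be.happy}; and because $\inter{\theta}_D^E$ agrees with $\sigma\fcomp\rho$ up to $\aeq$ on the free unknowns $\f{fV}(Pr)$ of $\inter{Pr}^D$ (solving a problem depends only on the values of a substitution, up to $\aeq$, on those unknowns), $\inter{\theta}_D^E$ also solves $\inter{Pr}^D$. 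The three side-conditions $\f{capt}(Pr)\subseteq D$, $\f{capt}(\theta(X))\subseteq E$ and $D\subseteq E$ then let me invoke the `if' direction of Theorem~\ref{thrm.sound} to conclude that $\theta$ solves $Pr$.

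For the second ($D$-consistent) part, the idea is to produce a permutation $\pi$ turning $\sigma$ into a \emph{strictly} $D$-consistent substitution and then rerun the first part on $\pi\act\sigma$. Let $B=\bigcup_{X\in\f{fV}(Pr)}(\f{fa}(\sigma(X))\cap D)$, a finite subset of $D$, and choose via Corollary~\ref{corr.always.fresh} distinct fresh atoms $e_d$ for $d\in B$, lying outside $D$, outside every $\f{fa}(\sigma(X))$, and outside every $\f{fa}(\inter{r}^D)$ and $\f{fa}(\inter{s}^D)$ for $r\ueq s\in Pr$ (these are all finite sets, since $\lambda$-terms are finite). Set $\pi=\prod_{d\in B}(d\ e_d)$. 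Using $\f{fa}(\pi\act\sigma(X))=\pi\act\f{fa}(\sigma(X))$ one checks $\f{fa}(\pi\act\sigma(X))\cap D=\varnothing$ for every $X\in\f{fV}(Pr)$, while $D$-consistency is preserved because $\pi$ leaves the number of leading abstractions unchanged; hence $\pi\act\sigma$ is strictly $D$-consistent (and still a $\phi$-pattern substitution). Applying Lemma~\ref{qwerty} to $\pi\act\sigma$ yields $\rho,\theta,E$ with $((\pi\act\sigma)\fcomp\rho)(X)\aeq\inter{\theta}_D^E(X)$, and commuting the atom-permutation past the unknown-renaming (a routine induction giving $(\pi\act\sigma(X))\rho\aeq\pi\act(\sigma(X)\rho)$) rewrites this as the desired $\pi\act(\sigma\fcomp\rho)(X)\aeq\inter{\theta}_D^E(X)$.

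The crux is then to show $\theta$ solves $Pr$, for which it again suffices that $\pi\act(\sigma\fcomp\rho)$ solves $\inter{Pr}^D$. Here I cannot appeal to the second half of Lemma~\ref{lemm.don't.worry.be.happy}, since $\f{nontriv}(\pi)$ meets $D$ and atoms of $D$ may be free in the terms of $Pr$; this is exactly the obstacle I expect to be delicate. The observation that rescues the argument is that every equation of $\inter{Pr}^D$ has the form $\lam{d_1}\ldots\lam{d_n}\inter{r}^D\ueq\lam{d_1}\ldots\lam{d_n}\inter{s}^D$, so that \emph{all} of $D$ is bound at the head: thus $\f{fa}(\lam{d_1}\ldots\lam{d_n}\inter{r}^D)=\f{fa}(\inter{r}^D)\setminus D$ is disjoint from $\f{nontriv}(\pi)=B\cup\{e_d\}$, since $B\subseteq D$ and the $e_d$ were chosen away from $\f{fa}(\inter{r}^D)$. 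Lemma~\ref{lemm.dom.fa} then gives $(\lam{d_1}\ldots\lam{d_n}\inter{r}^D)(\pi\act(\sigma\fcomp\rho))\aeq\pi\act((\lam{d_1}\ldots\lam{d_n}\inter{r}^D)(\sigma\fcomp\rho))$, and likewise for $\inter{s}^D$; applying $\pi$ (Lemma~\ref{lemm.pi.aeq.g}) to the solved equation for $\sigma\fcomp\rho$ and chaining with Proposition~\ref{prop.aeq.transitive.g} shows $\pi\act(\sigma\fcomp\rho)$ solves $\inter{Pr}^D$. Since $\inter{\theta}_D^E$ agrees with $\pi\act(\sigma\fcomp\rho)$ on $\f{fV}(Pr)$, it too solves $\inter{Pr}^D$, and Theorem~\ref{thrm.sound} finally delivers that $\theta$ solves $Pr$.
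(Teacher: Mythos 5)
Your proposal is correct, and at the top level it follows the same strategy as the paper: handle the strict case by feeding $\sigma$ into Lemma~\ref{qwerty} and then transferring the solution back through Theorem~\ref{thrm.sound}, and reduce the merely $D$-consistent case to the strict one by composing $\sigma$ with a permutation moving the offending atoms of $D$ to fresh atoms. The interesting divergence is in how that reduction is justified. The paper swaps \emph{all} of $D$ for a fresh vector $D'$ and then simply cites the first part together with Lemma~\ref{lemm.don't.worry.be.happy}; but, as you observe, the second half of Lemma~\ref{lemm.don't.worry.be.happy} requires $\f{nontriv}(\pi)\cap(\f{fa}(r)\cup\f{fa}(s))=\varnothing$ for the \emph{nominal} terms $r\ueq s\in Pr$, and this fails in general for any such $\pi$, since $\f{nontriv}(\pi)\supseteq D$ and atoms of $D$ may be free in $Pr$ (e.g.\ $Pr=\{X^S\ueq\tf f(Y^S,a,Z^S)\}$ with $S=\mathbb A^<\cup\{a,b\}$ and $D=[a]$, which satisfies $\f{capt}(Pr)\subseteq D$). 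Your fix --- arguing directly with Lemma~\ref{lemm.dom.fa}, using the fact that every equation of $\inter{Pr}^D$ binds all of $D$ at its head (Definition~\ref{defn.pr.rc}), so that the free atoms of the translated equations are disjoint from $\f{nontriv}(\pi)$ --- is exactly the argument inside the paper's own proof of Lemma~\ref{lemm.don't.worry.be.happy}, run under the weaker hypothesis that actually holds here; in effect you prove and use a corrected form of that lemma with hypothesis $\f{nontriv}(\pi)\cap(\f{fa}(r)\cup\f{fa}(s))\subseteq D$, which is what the paper's one-line appeal implicitly needs. Your remaining deviations are cosmetic: you swap only the atoms of $D$ occurring in some $\f{fa}(\sigma(X))$ rather than all of $D$, and you spell out two steps the paper leaves implicit (that $\sigma\fcomp\rho$ still solves $\inter{Pr}^D$ by the first half of Lemma~\ref{lemm.don't.worry.be.happy}, and that a substitution agreeing up to $\aeq$ with a solution on $\f{fV}(Pr)$ is itself a solution), plus the commutation $(\pi\act\sigma(X))\rho\aeq\pi\act(\sigma(X)\rho)$ needed to put the conclusion in the stated form. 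So your route is the same in outline but more careful at the one delicate point; what it buys is a proof of the second part that is actually licensed by the lemmas as stated, where the paper's citation, read literally, is not.
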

\begin{proof}
By Lemma~\ref{qwerty}, there are $\rho$, $\theta$, and $E$, such that $(\sigma \fcomp \rho)(X) \aeq \inter{\theta}_D^E(X)$ for all $X \in \f{fV}(Pr)$, $D{\subseteq} E$ and $\bigcup_{X \in \f{fV}(Pr)}\f{capt}(\theta(X)) \subseteq E$.
\ $\f{capt}(Pr) \subseteq D$ and $D \subseteq E$, so $\f{capt}(Pr) \subseteq E$. 
By Theorem~\ref{thrm.sound}, $\theta$ solves $Pr$.

For the second part, write $D = [d_1,\ldots,d_n]$, choose $D' = [d_1',\ldots,d_n']$ fresh (so $d_i'$ is not in $D$, $Pr$, or $\sigma(X)$ for any $X\in\f{fV}(Pr)$), and take $\pi = (d_1'\ d_1) \ldots (d_n'\ d_n)$.
$\pi \act \sigma$ is strictly $D$-consistent and the result follows from the first part and Lemma~\ref{lemm.don't.worry.be.happy}. 
\end{proof}

\section{Conclusions}
\label{sect.conclusion}

In this paper, we have presented a syntax which slightly generalises nominal terms and obtains significantly enhanced properties.
We gain `always fresh' and `always rename' properties (Corollaries~\ref{corr.always.fresh} and~\ref{corr.always.rename}) which are present in first- and higher-order syntax, absent in nominal terms, and regained in permissive nominal terms.

We do not claim a telling difference in expressivity in practice between nominal and permissive nominal terms.
It may indeed be that permissive nominal terms can express some things that nominal terms cannot\footnote{Permission sets may be larger than $\mathbb A^< $ as well as smaller, whereas intuitively in nominal terms freshness contexts can only make permission sets smaller.} but expressivity is not our main motivation in this paper.
The issues which motivate us are with the \emph{properties} of these syntaxes.
As we have seen in this paper, a significant new body of mathematics follows from these changes, which at first seem so innocuous.
It does not stop there; the interested reader can find more in \cite{gabbay:pernas}.

Permissive nominal terms do not obsolete nominal terms. 
To discuss `an arbitrary term', a nominal terms unknown $\dot X$ may be more directly useful than a permissive nominal terms unknown $X^{\mathbb A^< }$ (which means `an arbitrary term, mentioning atoms in $\mathbb A^< $').

We have leveraged the difference between nominal terms and permissive nominal terms to obtain a new unification algorithm which is more efficient in the sense that it is based just on substitutions, and in that sense is also more like the notion of solution familiar from first- and higher-order syntax.
Freshness problems are solved `all in one go' by a distinct algorithm.
We have interpreted nominal unification as a subsystem of permissive nominal unification (Section~\ref{sec.relation.to.nominal.terms}).

One nice way to view this interpretation is that $\mathbb A^< $ plays the role of `the atoms we had so far' and $\mathbb A^> $ that of `the atoms we will generate fresh in the future'.
Finally, we have exhibited permissive nominal unification as equivalent to higher-order pattern unification.

\subsection{Related work}

\subsubsection*{Infinite sets of atoms}

Permissive nominal terms are based on the idea of infinite and co-infinite sets of atoms $S$.
This is new, but it emerges from a literature rich in precedents.
As we noted in Remark~\ref{rmrk.fm.not.enough}, infinite and co-infinite sets break with the standard nominal sets semantics from \cite{gabbay:newaas-jv}, which does not admit them because they do not have finite support.
This is however not a serious mathematical problem: the idea of relaxing the `finite support' property of nominal sets to infinite generalisations is natural.

As far as we know this was first discussed in \cite{gabbay:fmhotn}, where the second author proposed to identify `small' sets of atoms not with cardinality but with \emph{well-orderability} of the atoms in the set --- so a set of atoms is small when it can be assigned a cardinal size, and large when it cannot be assigned a cardinal size (internally, within the model) --- but we do not commit ourselves to how large those cardinals can get, and in particular they could be infinite.
See \cite{gabbay:genmn} for a more extended treatment of the same ideas.

Pitts referred to the possibility of using nominal sets with infinite support in \cite{pitts:nomlfo-jv}, in order to obtain a complete semantics for nominal logic.  
This idea was taken up by Cheney in \cite{cheney:comhtn}.  
Thus, in \cite{cheney:comhtn} `small' sets of atoms are identified as elements of a \emph{support ideal} (Definition~4.1 of \cite{cheney:comhtn}), which are similar in spirit to the set of permission sets from Definition~\ref{defn.the.comb}.
In Definition~\ref{defn.the.comb} we give a concrete set of permission sets, but in a footnote to that definition we also identify some reasonable abstract conditions for the set of permission sets to satisfy such that the proofs in this paper still work. 
These conditions are extremely mild; the structure actually required of permission sets is much weaker than that provided by Definition~\ref{defn.the.comb}.

\subsubsection*{Namespaces}

Since this paper was written, the second author has prepared a manuscript treating \emph{permissive nominal algebra} \cite{gabbay:pernas} (a precursor is \cite{gabbay:semnt-ea}).
There, the set of all atoms is taken to be uncountable and permission sets are taken to be all countably infinite sets of atoms.
This setting is sufficiently general to accommodate many different notions of permission sets as subsystems.
In particular the permission sets of this paper feature as a `namespace' identified by $\mathbb A^< $; thus, the ideas in this paper slot quite nicely into a more abstract setting.

\subsubsection*{`Free atoms of' as distinct from `support of the denotation of'}

It may be useful to note some work to which this paper is not related.
One nice aspect of permissive nominal terms is that they give us a notion of `free atoms of a term' $\f{fa}(r)$.  
The judgement $\nabla\cent a\#r$ of nominal terms corresponds to the judgement $a\not\in\f{fa}(r)$ of permissive nominal terms (see Lemma~\ref{lem.interpretation.fresh.preserve.iff})
and both correspond to the informal judgement `$a$ is not free in $r$'.

Nominal sets have a native notion of semantic freshness; $a\#_{sem} r$ means `$a$ is not in the support of the denotation of $r$'.
Semantic freshness is a distinct concept.
The reader should not confuse semantic freshness with the intensional judgement `free atoms of' used in this paper.

Semantic freshness may be expressed using `free atoms of' and equality \cite{gabbay:forcie,gabbay:nomuae,gabbay:pernas}. 

\subsubsection*{Patterns}

Patterns emerged by studying Skolemisation of unification problems \cite{miller:uniump}; they proved useful in the unification of higher-order abstract syntax terms \cite{miller:logpll}.  

Cheney proposed a two-stage translation of higher-order pattern unification to nominal unification \cite{cheney:relnho}, first by exhibiting a translation of higher-order pattern unification to nominal pattern unification (where nominal patterns are a variant of nominal terms, with a \emph{concretion} operator, where unknowns have empty support), followed by a translation between nominal pattern unification and nominal unification.

Levy and Villaret related nominal unifiability with higher-order pattern unifiability in \cite{levy:nomufh}.
Our treatment translates \emph{solutions} as well, handles a more general class of higher-order patterns than considered in \cite{levy:nomufh}, and we prove our translation complete (Theorem~\ref{thrm.strong}) and optimal (Subsection~\ref{subsect.capt}).

Note that translating solutions really matters: it might have been, for example, that higher-order pattern unification and permissive nominal unification have the same notion of unifiability --- but very different notions of solution and sets of solutions.
For comparison, the $\lambda$-calculus and combinators express similar notions of computability, but have very different notions of reduction and computation.

The version of higher-order pattern unification which we examine is more general than usual, since we do not type our $\lambda$-terms.
We show how to retain enough of the properties of typing to avoid `silly' problems.
For example, we do not consider untyped higher-order pattern unification problems like $X \abeq \lam{a}(Xa)$, because this cannot be expressed as a unification of two $\phi$-patterns for any $\phi$ (Definition~\ref{defn.phi.patterns}) --- we impose a structural condition on our patterns that $X$ should be applied to a consistent, fixed number of arguments.

We hypothesise that the results in this paper would work in a typed setting; the conditions which our proofs depend on to work are just structural ones, which would also be guaranteed by types.
We have not investigated the effects of $\eta$-conversion; this is future work.

\subsubsection*{The broader literature}

Hamana's $\beta_0$ unification of $\lambda$-terms with holes adds a capturing substitution \cite{hamana:logplb}.
Level 2 variables (which are instantiated) are annotated with level 1 variable symbols that \emph{may} appear in them; permissive nominal terms move in this direction in the sense that permission sets also describe which level 1 variable symbols (we call them atoms in this paper) may appear in them, though with our permission sets there are infinitely many that may, and infinitely many that may not.
Another significant difference is that the treatment of $\alpha$-equivalence in Hamana's system is not nominal (not based on permutations) and unlike our systems, Hamana's does not have most general unifiers.

Similarly, Qu-Prolog \cite{nickolas:qupua} adds level 2 variables, but does not manage $\alpha$-conversion in nominal style, and, for better or for worse, the system is more ambitious in what it expresses, and thus loses mathematical properties (unification is semi-decidable, most general unifiers need not exist).


\subsection{Future work}

We propose permissive nominal terms as a syntax for designing logics and $\lambda$-calculi in the spirit of nominal terms. 

A first implementation of permissive nominal unification has been made \cite{mulligan:imppnt} by the third author.

We have begun to apply permissive nominal terms to construct novel logics and $\lambda$-calculi, taking advantage of their properties to simplify the theory.
It is simply very useful to reason on terms (without a freshness context), to have an inexhaustible supply of fresh names, and to be able to quotient by $\alpha$-equivalence.
We note in particular the papers \cite{gabbay:curhif-jv,gabbay:twollc,gabbay:oneaah-jv,gabbay:capasn-jv,gabbay:nomalc}, in which we have struggled with the theory of $\alpha$-equivalence given to us by nominal terms; these might benefit from the use of permissive nominal terms.

Permissive nominal terms syntax has two levels of variable, atoms $a$ and unknowns $X$.
There is no reason to stop there; we have already considered syntaxes with more than two levels of variable, for example \cite{gabbay:hienr,gabbay:lamcce}.
Again, we had difficulty with $\alpha$-conversion and fresh atoms.
It would be very interesting to revisit this material armed with the permissive ideas of this paper. 

Finally, it may be possible to extend the techniques of this paper to biject full higher-order unification with an enrichment of (permissive) nominal unification.

\hyphenation{Mathe-ma-ti-sche}

\appendix

\section{Supplementary proofs}
\label{sect.omitted.proofs}

Definition~\ref{defn.depth} is useful for Proposition~\ref{prop.aeq.transitive}. 
\begin{defn}
\label{defn.depth}
Define the \deffont{size} of a term $r$ by:
\begin{frameqn}
\begin{array}{r@{\ }l@{\qquad}r@{\ }l}
\f{size}(a) =& 0
&
\f{size}(\tf{f}(r_1, \ldots, r_n)) =& \sum_{1 \leq i \leq n}\f{size}(r_i)
\\
\f{size}([a]r) =& 1 {+} \f{size}(r)
&
\f{size}(\pi {\act} X^S) =& 0
\end{array}
\end{frameqn}
\end{defn}

\noindent
Proof of Proposition~\ref{prop.aeq.transitive}:

\begin{proof}
Reflexivity is shown by induction on terms.
Symmetry is shown by induction on derivations.
Transitivity is shown by induction on the size of a term.
We consider one case from the proof of symmetry and one from the proof of transitivity:
\begin{squishitem}
\item
\emph{$[a]r \aeq [b]s$ implies $[b]s\aeq [a]r$.}
\quad
Suppose $(b\ a) \act r \aeq s$ and $b \not\in \f{fa}(r)$.
By Lemma~\ref{lemm.pi.ftma} $a \not\in \f{fa}((a\ b) \act r)$.
From Lemmas~\ref{lemm.permutation.comp} and~\ref{lemm.equality.permutation},\ $r \aeq (a\ b) \act s$.
By Lemma~\ref{lemm.aeq.ftma.pres}, $a \not\in \f{fa}(s)$.
It is a fact that the size of terms is unaffected by permutation, so by inductive hypothesis $(a\ b) \act s \aeq r$.
Extending with \rulefont{{\aeq}[b]} we obtain $[b]s \aeq [a]r$ as required.
\item
\emph{$[a]r \aeq [b]s$ and $[b]s \aeq [c]t$ imply $[a]r \aeq [c]t$.}\quad
Suppose $(b\ a) \act r \aeq s$, $(c\ b) \act s \aeq t$, $b \not\in \f{fa}(r)$ and $c \not\in \f{fa}(s)$.
By Lemma~\ref{lemm.equality.permutation} $(c\ b) \act ((b\ a) \act r) \aeq (c\ b) \act s$.
It is a fact that the size of terms is unaffected by permutation, so by inductive hypothesis also $(c\ b) \act ((b\ a) \act r) \aeq t$, and by Lemmas~\ref{lemm.ds.pi} and~\ref{lemm.permutation.comp} $(c\ a) \act r \aeq t$.
By Lemma~\ref{lemm.aeq.ftma.pres} $c \not\in \f{fa}((b\ a) \act r)$, so by Lemma~\ref{lemm.pi.ftma} $c \not\in (b\ a) \act \f{fa}(r)$. 
Therefore, $c \not\in \f{fa}(r)$.
We use \rulefont{{\aeq}[b]} to derive $[a]r \aeq [c]t$ as required.
\qedhere
\end{squishitem}
\end{proof}

\noindent
Proof of Lemma~\ref{lemm.substitution}:

\begin{proof}
By induction on $r$.
\begin{squishitem}
\item
The cases $a$ and $\tf{f}(r_1, \ldots, r_n)$ are routine.
\item
The case $[a]r$.\quad
We reason as follows:
\begin{calcenv}
\f{fa}(([a]r)\theta) & \equiv & \f{fa}([a]r\theta) & \text{Definition~\ref{defn.subst.action}} \\
                     & =      & \f{fa}(r\theta) \setminus \{ a \} & \text{Definition~\ref{defn.fa}} \\
                     & \subseteq & \f{fa}(r) \setminus \{ a \}   & \text{Inductive hypothesis} \\
                     & =      & \f{fa}([a]r) & \text{Definition~\ref{defn.fa}}
\end{calcenv}
The result follows.
\item
The case $\pi \act X^S$.\quad
By Definition~\ref{defn.fa}, $\f{fa}(\pi\act X^S)=\pi \act S$.
By Definition~\ref{defn.subst}, $\f{fa}(\theta(X^S))\subseteq S$.
Using Lemma~\ref{lemm.pi.ftma}, it follows $\f{fa}(\pi\act \theta(X^S))\subseteq \pi\act S$. 
\end{squishitem}
\end{proof}

\noindent
Proof of Lemma~\ref{lem.interpretation.fresh.preserve.iff}:

\begin{proof}
We handle the two implications separately.
\begin{squishitem}
\item
The case $\iota(\dot{a}) \not\in \f{fa}(\interdelta{\dot{r}})$ implies $\Delta \cent \dot{a} \# \dot{r}$.\quad
We proceed by induction on $\dot r$.\quad
\begin{squishitem}
\item
The cases $\dot b$ and $\tf{f}(\dot r_1, \ldots, \dot r_n)$ are straightforward.
\item
The case $[\dot a]\dot r$.\quad
Since $\Delta \cent \dot a \# [\dot a]\dot r$ always, using \rulefont{{\#}[\dot a]}.
\item
The case $[\dot b]\dot r$.\quad
Suppose $\iota(\dot a) \not\in \f{fa}(\interdelta{[\iota(\dot b)]\dot r})$ and $\iota(\dot a) \not\in \f{fa}(\interdelta{\dot r}) \setminus \{ \iota(\dot b) \}$.
Then $\iota(\dot a) \not\in \f{fa}(\interdelta{\dot r})$, therefore $\Delta \cent \dot a \# \dot r$ by inductive hypothesis.
Using \rulefont{{\#}[\dot b]}, we have $\Delta \cent \dot a \# [\dot b]\dot r$, and the result follows.
\item
The case $\dot\pi \act \dot X$.\quad
Suppose $\iota(\dot a) \not\in \f{fa}(\interdelta{\dot\pi \act \dot X})$.
Then $\iota(\dot a) \not\in \inter{\dot\pi} \act S$, where $S = \mathbb A^<  \setminus \{ \iota(\dot a) \mid \dot a \# \dot X \in \Delta \}$.
But $\inter{\dot\pi} \act \mathbb A^<  \setminus \{ \iota(\dot a) \mid \dot a \# \dot X \in \Delta \}$ is the same as $\inter{\dot\pi} \act \mathbb A^<  \setminus \inter{\dot\pi} \act \{ \iota(\dot a) \mid \dot a \# \dot X \in \Delta \}$.
Then $\inter{\dot\pi} \act \{ \iota(\dot a) \mid \dot a \# \dot X \in \Delta \} = \{ \inter{\dot\pi} \act \iota(\dot a) \mid \dot a \# \dot X \in \Delta \}$.
Using Definition~\ref{defn.interpretation.nominal}, and the fact permutations are bijective, we have $\{ \inter{\dot\pi} \act \iota(\dot a) \mid a \# X \in \Delta \} = \{ \iota(\dot\pi^\mone \act \dot a) \mid \dot\pi^\mone \act \dot a \# X \in \Delta \}$.
We use \rulefont{{\#}\dot X} to obtain $\Delta \cent \dot a \# \dot X$, and we have the result.
\end{squishitem}
\item
The case $\Delta \cent \dot{a} \# \dot{r}$ implies $\iota(\dot{a}) \not\in \f{fa}(\interdelta{\dot{r}})$.\quad
We proceed by induction on the derivation of $\Delta \cent \dot{a} \# \dot{r}$.
\begin{squishitem}
\item
The cases \rulefont{{\#}\dot b} and \rulefont{{\#}\tf f} are routine.
\item
The case \rulefont{{\#}[\dot a]}.\quad
Suppose $\Delta \cent \dot a \# [\dot a]\dot r$ using \rulefont{{\#}[\dot a]}.
Then $\interdelta{[\dot a]\dot r} \equiv [\iota(\dot a)]\interdelta{\dot r}$.
Further, $\iota(\dot a) \not\in \f{fa}(\interdelta{\dot r}) \setminus \{ \iota(\dot a) \}$, and the result follows.
\item
The case \rulefont{{\#}[\dot b]}.\quad
Suppose $\Delta \cent \dot a \# \dot r$ and $\iota(\dot a) \not\in \f{fa}(\dot r)$ by assumption.
Then $\Delta \cent \dot a \# [\dot b]\dot r$ by \rulefont{{\#}[\dot b]}.
Further, $\f{fa}(\interdelta{[\dot b]\dot r}) = \f{fa}(\interdelta{\dot r}) \setminus \{ \iota(\dot b) \}$, and the result follows.
\item
The case \rulefont{{\#}\dot X}.\quad
Suppose $\dot\pi^\mone(\dot a) \# \dot X \in \Delta$, and $\Delta \cent \dot a \# \dot \pi \act \dot X$ by \rulefont{{\#}\dot X}.
Then $\interdelta{\dot \pi \act \dot X} = \inter{\dot\pi} \act X^S$ where $S = \mathbb A^<  \setminus \{ \iota(\dot a) \mid \dot a \# \dot X \in \Delta \}$.
Further, $\f{fa}(\inter{\dot\pi} \act X^S) = \inter{\dot\pi} \act S$.
The result follows from Definition~\ref{defn.interpretation.nominal}.
\end{squishitem}
\end{squishitem}
\end{proof}

\noindent
Proof of Theorem~\ref{thrm.interpretation.injective}:

\begin{proof}
We prove that $\interdelta{\dot{r}} \aeq \interdelta{\dot{s}}$ implies $\Delta \cent \dot{r} = \dot{s}$ by induction on the derivation of $\interdelta{\dot{r}} \aeq \interdelta{\dot{s}}$:
\begin{squishitem}
\item
The cases $\dot{a}$ and $\tf f(\dot r_1,\ldots,\dot r_n)$ are routine.
\item
The case \rulefont{{\aeq}[a]}.\quad
Suppose $\interdelta{\dot r} \aeq \interdelta{\dot s}$ and $\Delta \cent \dot r = \dot s$.
Then using \rulefont{{\aeq}[a]}, $[\iota(\dot a)]\interdelta{\dot r} \aeq [\iota(\dot a)]\interdelta{\dot s}$ and $\Delta \cent [\dot a]\dot r = [\dot a]\dot s$ also, using \rulefont{{=}[\dot a]}.
The result follows, as $[\iota(\dot a)]\interdelta{\dot r} = \interdelta{[\dot a]\dot r}$.
\item
The case \rulefont{{\aeq}[b]}.\quad
Suppose 
$(\iota(\dot{b})\ \iota(\dot{a})) \act \interdelta{\dot{r}} \aeq \interdelta{\dot{s}}$ and $\iota(\dot{b}) \not\in \f{fa}(\interdelta{\dot{r}})$.
By Lemmas~\ref{lem.nominal.pi.interpretation} and~\ref{lem.interpretation.fresh.preserve.iff} $\interdelta{(\dot{b}\ \dot{a})\act\dot{r}} \aeq \interdelta{\dot{s}}$\ and $\Delta \cent \dot{b} \# \dot{r}$.
By inductive hypothesis $\Delta \cent (\dot{b}\ \dot{a}) \act \dot{r} = \dot{s}$.
We use \rulefont{{=}[\dot b]}.\item
The case \rulefont{{\aeq}X}.\quad
Suppose $\inter{\dot\pi}|_S = \inter{\dot\pi'}|_S$ where $S = \mathbb A^<  {\setminus} \{ \iota(\dot{a}) \mid \dot{a} \# \dot{X} {\in} \Delta \}$.
$\iota$ is injective, so $a\#\dot X\in \Delta$ for all $\dot a$ such that $\dot\pi(\dot a)\neq\dot\pi'(\dot a)$. 
The result follows by \rulefont{{=}\dot X}.
\end{squishitem}
We prove that $\Delta \cent \dot{r} = \dot{s}$ implies $\interdelta{\dot r} \aeq \interdelta{\dot s}$ by induction on the derivation of $\Delta \cent \dot{r} = \dot{s}$:
\begin{squishitem}
\item
The cases \rulefont{{=}\dot a}, \rulefont{{=}\tf f } and \rulefont{{=}[\dot a]} are straightforward.
\item
The case \rulefont{{=}[\dot b]}.\quad
Suppose $\Delta \cent (\dot{b}\ \dot{a})\act\dot{r} = \dot{s}$ and $\Delta \cent \dot{b} \# \dot{r}$.
By inductive hypothesis and Lemma~\ref{lem.nominal.pi.interpretation}, $(\dot{b}\ \dot{a})\act\interdelta{\dot{r}} \aeq \interdelta{\dot{s}}$.
By Lemma~\ref{lem.interpretation.fresh.preserve.iff}, $\iota(\dot{b}) \not\in \f{fa}(\interdelta{\dot{r}})$.
The result follows by \rulefont{{\aeq}[b]}.
\item
The case \rulefont{{=}\dot X}.\quad
Recall that $\interdelta{\dot{\pi}\act\dot{X}} = \inter{\dot{\pi}} \act X^S$ and $\interdelta{\dot{\pi}'\act\dot{X}} = \inter{\dot{\pi}'} \act X^S$ where $S = \mathbb A^<  \setminus \{ \iota(\dot{a}) \mid \dot{a} \# \dot{X} \in \Delta \}$.
Suppose $\dot\pi(\dot a)\neq\dot\pi'(\dot a)$ implies $\Delta \cent \dot{a} \# \dot{X}$.
Using Lemma~\ref{lem.interpretation.fresh.preserve.iff}, $\inter{\dot\pi}(\iota(\dot a))\neq\inter{\dot \pi'}(\iota(\dot a))$ implies $\iota(\dot{a}) \not\in S$.
The result follows by \rulefont{{\aeq}X}.
\end{squishitem}
\end{proof}

\noindent
Proof of Lemma~\ref{lemm.substitution.commute.interp}:

\begin{proof}
By induction on $\dot r$.
We show two cases:
\begin{squishitem}
\item
The case $[\dot a]\dot r$.\quad
We reason as follows:
\begin{calcenv}
\interdelta{([\dot a] \dot r)\dot\theta} & \equiv & \interdelta{[\dot a]\dot r\dot\theta} & \text{Definition~\ref{defn.nominal.substitution.action}} \\
                                         & \equiv & [\iota(\dot a)]\interdelta{\dot r\dot\theta} & \text{Definition~\ref{defn.interpretation.nominal}} \\
                                         & \equiv & [\iota(\dot a)]\interdelta{\dot r}\inter{(\Delta, \dot\theta)} & \text{Inductive hypothesis} \\
                                         & \equiv & ([\iota(\dot a)]\interdelta{\dot r})\inter{(\Delta, \dot\theta)} & \text{Fact} \\
                                         & \equiv & \interdelta{[\dot a]\dot r}\inter{(\Delta, \dot\theta)} & \text{Definition~\ref{defn.interpretation.nominal}}
\end{calcenv}
The result follows.
\item
The case $\dot\pi \act \dot X$.\quad
We reason as follows:
\begin{calcenv}
\interdelta{(\dot\pi \act \dot X)\dot\theta} & \equiv & \interdelta{\dot\pi \act \dot\theta(\dot X)} & \text{Definition~\ref{defn.nominal.substitution.action}} \\
                                             & \equiv & \inter{\dot\pi} \act \interdelta{\dot\theta(\dot X)} & \text{Definition~\ref{defn.interpretation.nominal}} \\
                                             & \equiv & \inter{\dot\pi} \act \inter{\dot\theta}(\interdelta{\dot X}) &\text{Definition~\ref{defn.subst}} 
\end{calcenv}
The result follows.
\end{squishitem}
\end{proof}

\noindent
Proof of Lemma~\ref{lemm.basic}:

\begin{proof}
We prove by induction on $r$ that $a\not\in\f{fa}(r)$ implies $(b\ a)\act r\aeq r$:
\begin{squishitem}
\item
The cases $c$ and $\tf{f}(r_1, \ldots, r_n)$.\quad
Straightforward.
\item
The cases $[a]r$.\quad
We show $(b\ a) \act [a]r \aeq [a]r$ where $b \not\in \f{fa}([a]r)$, hence $b \not\in \f{fa}(r)$ and $a \not\in \f{fa}(r)$.
By Definition~\ref{defn.perm}, $(b\ a) \act [a]r \aeq [b](b\ a) \act r$.
By Definition~\ref{defn.aeq}, we must show $(a\ b) \act ((b\ a) \act r) \aeq r$ where $a \not\in \f{fa}((b\ a) \act r)$.
By Lemma~\ref{lemm.pi.ftma}, $b \not\in \f{fa}(r)$ implies $a \not\in \f{fa}((b\ a) \act r)$.
By Lemma~\ref{lemm.permutation.comp}, $(a\ b) \act ((b\ a) \act r) \aeq ((a\ b) \fcomp (b\ a)) \act r$.
As $\pi = \pi^\mone$, we have $r \aeq r$.
The result follows from Proposition~\ref{prop.aeq.transitive}.
\item
The case $[b]r$ is similar.
\item
The case $[c]r$.\quad
Suppose $b \not\in \f{fa}([c]r)$, $a \not\in \f{fa}([c]r)$ and $a,b \not\in \f{fa}(r)$.
We show $(b\ a) \act [c]r \aeq [c]r$.
By Definition~\ref{defn.perm}, $(b\ a) \act [c]r \equiv [c](b\ a) \act r$.
We use \rulefont{{\aeq}[a]} and the inductive hypothesis to obtain $(b\ a) \act r \aeq r$.
\item
The case $\pi \act X^S$.\quad
Suppose $b \not\in \f{fa}(\pi \act X^S)$, $a \not\in \f{fa}(\pi \act X^S)$ and $a, b \not\in \pi \act S$.
By Definition~\ref{defn.perm}, $(b\ a) \act (\pi \act X^S) \equiv ((b\ a) \fcomp \pi) \act X^S$.
Using \rulefont{{\aeq}X}, $((b\ a) \fcomp \pi) \act X^S \aeq \pi \act X^S$ whenever $((b\ a) \fcomp \pi)|_S = \pi|_S$.
As $a, b \not\in \pi \act S$, $((b\ a) \fcomp \pi)|_S = \pi|_S$.
The result follows.
\end{squishitem}
We prove by induction on $r$ that $(b\ a)\act r\aeq r$ implies $a\not\in\f{fa}(r)$:\quad
\begin{squishitem}
\item
The case $a$, $b$, $c$ and $\tf{f}(r_1, \ldots, r_n)$ are routine.
\item
The case $[a]r$.\quad
Suppose $(b\ a) \act [a]r \aeq [a]r$.
By Definition~\ref{defn.perm}, $(b\ a) \act [a]r \equiv [b](b\ a) \act r$.
By Definition~\ref{defn.aeq}, $[b](b\ a) \act r \aeq [a]r$ whenever $(a\ b) \act ((b\ a) \act r) \aeq r$ with $a \not\in \f{fa}((b\ a) \act r)$.
By Lemma~\ref{lemm.permutation.comp}, and the fact that swappings are self-inverse, $(a\ b) \act ((b\ a) \act r) \equiv r$.
By assumption, $b \not\in \f{fa}(r)$.
By Lemma~\ref{lemm.pi.ftma}, $a \not\in \f{fa}((b\ a) \act r)$.
The result follows.
\item
The case $[b]r$ is similar.
\item
The case $[c]r$.\quad
By inductive hypothesis $(b\ a) \act r \aeq r$ implies $a \not\in \f{fa}(r)$.
The result follows from $[c](b\ a) \act r \equiv (b\ a) \act [c]r$.
\item
The case $\pi \act X^S$.\quad
Suppose $(b\ a) \act \pi \act X^S \aeq \pi \act X^S$.
By Definition~\ref{defn.perm}, $(b\ a) \act \pi \act X^S \equiv ((b\ a) \fcomp \pi) \act X^S$.
Using \rulefont{{\aeq}X}, $((b\ a) \fcomp \pi) \act X^S \aeq \pi \act X^S$ whenever $(b\ a) \fcomp \pi|_S = \pi|_S$.
It is a fact that $(b\ a) \fcomp \pi|_S = \pi|_S$ only when $b, a \not\in \pi \act S$.
The result follows.
\end{squishitem}
\end{proof}

\noindent
The following definition is used in the proof of Proposition~\ref{prop.supp.reduct.strong.normalisation}:

\begin{defn}
\label{defn.supp.inc.size}
Define the \deffont{size} of a support inclusion problem $\f{size}(\f{Inc})$ to be a tuple $(T, A, P, S)$, where:
\begin{squishlist}
\item
$T$ is the number of term-formers appearing within terms in $\f{Inc}$,
\item
$A$ is the number of abstractions appearing within terms in $\f{Inc}$,
\item
$P$ is the number of permutations, distinct from the identity permutation, appearing within terms in $\f{Inc}$, and
\item
$S$ is the number of support inclusions within $\f{Inc}$.
\end{squishlist}
We order tuples lexicographically.
\end{defn}

\noindent
Proof of Proposition~\ref{prop.supp.reduct.strong.normalisation}:

\begin{proof}
By case analysis, checking all simplification rules reduce the measure defined in Definition~\ref{defn.supp.inc.size}.
\begin{squishitem}
\item
The case $a \sqsubseteq T, \f{Inc'}$.\quad
Suppose $a \in T$, $\f{size}(a \sqsubseteq T, \f{Inc'}) = (T, A, P, S)$, and $a \sqsubseteq T, \f{Inc'} \simpto{} \f{Inc'}$ by \rulefont{{\sqsubseteq}a}.
Then $\f{size}(\f{Inc'}) = (T, A, P, S-1)$.
Otherwise, suppose $a \not\in T$ so \rulefont{{\sqsubseteq}a} is not applicable.
No other rule is applicable by assumption, and the result follows.
\item
The case $\tf{f}(r_1, \ldots, r_n) \sqsubseteq T, \f{Inc'}$.\quad
Suppose $\f{size}(\tf{f}(r_1, \ldots, r_n) \sqsubseteq T, \f{Inc'}) = (T,A,P,S)$ and $\tf{f}(r_1, \ldots, r_n) \sqsubseteq T, \f{Inc'} \simpto{} \f{r_1} \sqsubseteq T, \ldots, \f{r_n} \sqsubseteq T, \f{Inc'}$ by \rulefont{{\sqsubseteq}\tf f}.
Then $\f{size}(\f{r_1} \sqsubseteq T, \ldots, \f{r_n} \sqsubseteq T, \f{Inc'}) = (T-1, A, P, S+n-1)$.
The result follows from the ordering.
\item
The case $[a]r \sqsubseteq T, \f{Inc'}$.\quad
Suppose $\f{size}([a]r \sqsubseteq T, \f{Inc'}) = (T,A,P,S)$ and $[a]r \sqsubseteq T, \f{Inc'} \simpto{} r \sqsubseteq T \cup \{ a \}, \f{Inc'}$ by \rulefont{{\sqsubseteq}[]}.
Then $\f{size}(r \sqsubseteq T \cup \{ a \}, \f{Inc'}) = (T,A-1,P,S)$ and the result follows.
\item
The case $\pi \act X^S \sqsubseteq T, \f{Inc'}$.\quad
Suppose $\f{size}(\pi \act X^S \sqsubseteq T, \f{Inc'}) = (T,A,P,S)$.
Then, if $S \subseteq \pi^\mone \act T$, we have $\pi \act X^S \sqsubseteq T, \f{Inc'} \simpto{} \f{Inc'}$ by \rulefont{{\sqsubseteq}X'}, with measure $(T,A,P,S-1)$.
Otherwise, if we have $S \not\subseteq \pi^\mone \act T$ and $\pi \not= \id$, we have $\pi \act X^S \sqsubseteq T, \f{Inc'} \simpto{} X^S \sqsubseteq \pi^\mone \act T, \f{Inc'}$ with measure $(T,A,P-1,S)$.
By assumption, no other rules are applicable.
\end{squishitem}
\end{proof}

\noindent
The following definition is used in the proof of Proposition~\ref{prop.unification.algorithm.strong.normalisation}:

\begin{defn}
\label{defn.size.unification.problem}
Define the \deffont{size} of a unification problem $\f{size}(Pr)$ to be a tuple $(E, T, A)$, where:
\begin{squishlist}
\item
$E$ is the number of equalities appearing in the unification problem,
\item
$T$ is the number of term-formers appearing within terms in the equalities of the unification problem,
\item
$A$ is the number of abstractions appearing within terms in the equalities of the unification problem.
\end{squishlist}
We order tuples lexicographically.
\end{defn}

\noindent
Proof of Proposition~\ref{prop.unification.algorithm.strong.normalisation}:

\begin{proof}
By case analysis, checking all simplification rules reduce the measure defined in Definition~\ref{defn.size.unification.problem}.
We consider three cases:
\begin{squishlist}
\item
The case \rulefont{{\ueq}[b]}.\quad
Suppose $b \not\in \f{fa}(r)$ and $\mathcal V; [a]r \ueq [b]s, Pr \simpto{} \mathcal V; (b\ a) \act r \ueq s, Pr$ by \rulefont{{\ueq}[b]}.
Suppose further that $\f{size}([a]r \ueq [b]s, Pr) = (E, T, A)$.
Then $\f{size}((b\ a) \act r \ueq s, Pr) = (E, T, A-1)$, and the result follows.
\item
The case \rulefont{I1}.\quad
Suppose $X^S \not\in \f{fV}(s)$ and $\f{fa}(S) \subseteq \pi \act S$, and $\mathcal V; \pi \act X^S \ueq s, Pr \simpto{[X^S \ssm \pi^\mone \act s]} \mathcal V; Pr[X^S \ssm \pi^\mone \act s]$ by \rulefont{I1}.
Suppose further that $\f{size}(\pi \act X^S \ueq s, Pr) = (E, T, A)$.
Then $\f{size}(Pr[X^S \ssm \pi^\mone \act s]) = (E-1, T, A)$, and the result follows.
\item
The case \rulefont{I3}.\quad
It is a fact that rewriting with \rulefont{I3} terminates, because of the condition that $Pr_{\sqsubseteq}$ is non-trivial.
\end{squishlist}
\end{proof}

\noindent
Proof of Lemma~\ref{lemm.preservation.of.solutions}:

\begin{proof}
The empty set cannot be simplified, so suppose $Pr = r \ueq s, Pr'$ where the simplification rule acts on $r\ueq s$.
We reason by cases: 
\begin{squishitem}
\item
The cases \rulefont{{\ueq}a}, \rulefont{{\ueq}\tf f} and \rulefont{{\ueq}X} are straightforward.
\item
The case \rulefont{{\ueq}[a]}.\quad
Suppose $Pr = [a]r \ueq [a]s, Pr'$ and $[a]r \ueq [a]s, Pr' \simpto{} r \ueq s, Pr'$ by \rulefont{{\ueq}[a]}.
Then:
\begin{squishitem}
\item
Suppose $([a]r)\theta \aeq ([a]s)\theta$.\quad
By Definition~\ref{defn.subst.action}, $[a](r\theta) \aeq [a](s\theta)$.
By the rules in Definition~\ref{defn.aeq}, $r\theta \aeq s\theta$.
The result follows.
\item
Suppose $r\theta \aeq s\theta$.\quad
By the rules in Definition~\ref{defn.subst.action}, $[a](r\theta) \aeq [a](s\theta)$.
By Definition~\ref{defn.subst.action}, $([a]r)\theta \aeq ([a]s)\theta$, as required.
\end{squishitem}
\item
The case \rulefont{{\ueq}[b]}.\quad
Suppose $Pr=[a]r\ueq [b]s,Pr'$, $b \not\in \f{fa}(r)$ and $Pr\simpto{}(b\ a)\act r\ueq s,Pr'$ with \rulefont{{\ueq}[b]}.
Then:
\begin{squishitem}
\item
Suppose $([a]r)\theta\aeq ([b]s)\theta$.
By Definition~\ref{defn.subst.action}, $[a](r\theta)\aeq [b](s\theta)$.
By the rules in Definition~\ref{defn.aeq},\ $(b\ a)\act (r\theta)\aeq s\theta$.
By Lemma~\ref{lemm.sub.perm} and Proposition~\ref{prop.aeq.transitive},\ $((b\ a)\act r)\theta\aeq s\theta$.
The result follows.
\item
Suppose $((b\ a)\act r)\theta\aeq s\theta$.
By Lemma~\ref{lemm.sub.perm} and Proposition~\ref{prop.aeq.transitive},\ $(b\ a)\act (r\theta)\aeq s\theta$.
By Lemma~\ref{lemm.substitution}, $b\not\in\f{fa}(r\theta)$.
Using \rulefont{{\aeq}[b]}, $[a](r\theta)\aeq [b](s\theta)$.
By Definition~\ref{defn.subst.action} $[a](r\theta)\aeq [b](s\theta)$, as required.
\end{squishitem}
\end{squishitem}
\end{proof}

\noindent
Proof of Lemma~\ref{lemm.simp.fv.pres}:

\begin{proof}
As the empty set cannot be simplified, it must be that $Pr = r \ueq s, Pr'$.
It therefore suffices to perform case analysis on the simplification of $r \ueq s$.
At each stage, without loss of generality, assume $Pr'$ has been simplified by non-instantiating rules as much as possible.
\begin{squishitem}
\item
The cases \rulefont{{\ueq}a}, \rulefont{{\ueq}\tf f} and \rulefont{{\aeq}X} are routine.
\item
The case \rulefont{{\ueq}[a]}.\quad
Suppose $\mathcal V; [a]r \ueq [a]s, Pr'$ and $\f{fV}([a]r \ueq [a]s, Pr') \subseteq \mathcal V$, then $\mathcal V; [a]r \ueq [a]s, Pr' \simpto{} \mathcal V; r \ueq s, Pr'$ by \rulefont{{\ueq}[a]}.
By Definitions~\ref{defn.fV} and~\ref{defn.fV.Pr}, $\f{fV}(r \ueq s, Pr') \subseteq \mathcal V$, and the result follows.
\item
The case \rulefont{{\ueq}[b]}.\quad
Suppose $\mathcal V; [a]r \ueq [b]s, Pr'$, $b \not\in \f{fa}(r)$ with $\f{fV}([a]r \ueq [b]s, Pr') \subseteq \mathcal V$, then $\mathcal V; [a]r \ueq [b]s, Pr' \simpto{} \mathcal V; (b\ a) \act r \ueq s, Pr'$ by \rulefont{{\ueq}[a]}.
By Definitions~\ref{defn.fV} and~\ref{defn.fV.Pr} it follows that $\f{fV}((b\ a) \act r) \subseteq \mathcal V$, and the result follows.
\end{squishitem}
\end{proof}

\noindent
Proof of Proposition~\ref{prop.aeq.transitive.g}:

\begin{proof}
We prove reflexivity by induction on terms; symmetry by induction on derivations; transitivity by induction on the size of a term.
We include one case from the proof of symmetry, and one from the proof of transitivity:
\begin{squishitem}
\item
$\lam{a}g \aeq \lam{b}h$ implies $\lam{b}h \aeq \lam{a}g$.\quad
Suppose $(b\ a) \act g \aeq h$ and $b \not\in \f{fa}(g)$.
By Lemma~\ref{lemm.pi.ftma.g}, $a \not\in \f{fa}((b\ a) \act h)$.
By Lemmas~\ref{lemm.pi.aeq.g} and~\ref{lemm.permutation.comp.g}, $g \aeq (b\ a) \act h$.
By Lemma~\ref{lemm.aeq.fa.g}, $a \not\in \f{fa}(h)$.
By inductive hypothesis $(b\ a) \act h \aeq g$.
Extending with \rulefont{{\lambda}{\aeq}{\lambda}[b]} we obtain $\lam{b}h \aeq \lam{a}g$, as required.
\item
$\lam{a}g \aeq \lam{b}h$ and $\lam{b}h \aeq \lam{c}k$ implies $\lam{a}g \aeq \lam{c}k$.\quad
Suppose $(b\ a) \act g \aeq h$, $(c\ b) \act h \aeq k$, $b \not\in \f{fa}(g)$ and $c \not\in \f{fa}(h)$.
By Lemma~\ref{lemm.permutation.comp.g}, $(c\ b) \act ((b\ a) \act g) \aeq (c\ b) \act h$.
By Lemmas~\ref{lemm.pi.depth.invariant.g},~\ref{lemm.permutation.comp.g} and~\ref{lemm.ds.pi.g} we have $(c\ a) \act g \aeq k$.
By Lemma~\ref{lemm.aeq.fa.g}, $c \not\in \f{fa}((b\ a) \act g)$. 
By Lemma~\ref{lemm.pi.aeq.g}, $c \not\in \f{fa}(g)$.
We use \rulefont{{\lambda}{\aeq}{\lambda}ab} to obtain $\lam{a}g \aeq \lam{c}k$, and the result follows.
\end{squishitem}
\end{proof}

\noindent
Proof of Lemma~\ref{lemm.capturable.dom.pi}:

\begin{proof}
By induction on $r$.
\begin{squishitem}
\item
The cases $a$ and $\tf{f}(r_1, \ldots, r_n)$ are routine.
\item
The case $[a]r$.\quad
We reason as follows:
\begin{calcenv}
\f{capt}_A(\pi \act [a]r) & = & \f{capt}_A([\pi(a)](\pi \act r)) & \text{Definition~\ref{defn.perm}} \\
                          & = & \f{capt}_{A \cup \{ \pi(a) \}}(\pi \act r) & \text{Definition~\ref{defn.capturable}}                     
\end{calcenv}
There are now two cases:
\begin{squishitem}
\item
The case $\pi(a) = a$.\quad
Then:
\begin{calcenv}
\f{capt}_{A \cup \{ \pi(a) \}}(\pi \act r) & = & \f{capt}_{A \cup \{ a \}}(\pi \act r) & \text{Assumption} \\
                                           & = & \f{capt}_{A \cup \{ a \}}(r) & \text{Inductive hypothesis} \\
                                           & = & \f{capt}_A([a]r) & \text{Definition~\ref{defn.capturable}}
\end{calcenv}
The result follows.
\item
The case $\pi(a) \not= a$.\quad
Then:
\begin{calcenv}
\f{capt}_{A \cup \{ \pi(a) \}}(\pi \act r) & = & \f{capt}_{A}(\pi \act r) & \text{Assumption, $\pi(a) \neq a$} \\
                                           & = & \f{capt}_{A}(r) & \text{Inductive hypothesis} \\
                                           & = & \f{capt}_{A}([a]r) & \text{Definition~\ref{defn.capturable}}
\end{calcenv}
The result follows.
\end{squishitem}
\item
The case $\pi' \act X^S$.\quad
We reason as follows:
\begin{calcenv}
\f{capt}(\pi \act (\pi' \act X^S)) & = & \f{capt}_A((\pi \fcomp \pi') \act X^S) & \text{Lemma~\ref{lemm.permutation.comp}} \\
                                   & = & (\f{nontriv}(\pi \fcomp \pi') \cup A) \cap S & \text{Definition~\ref{defn.capturable}} \\
                                   & = & (\f{nontriv}(\pi) \cup \f{nontriv}(\pi') \cup A) \cap S & \text{Fact} \\
                                   & = & (\f{nontriv}(\pi') \cup A) \cap S & \text{Assumption} \\
                                   & = & \f{capt}_A(\pi' \act X^S) & \text{Definition~\ref{defn.capturable}}
\end{calcenv}
The result follows.
\end{squishitem}
\end{proof}

\noindent
Proof of Corollary~\ref{corr.capturable.basic.alpha}:

\begin{proof}
We reason as follows:
\begin{calcenv}
\f{capt}_A([b]r) & = & \f{capt}_{A \cup \{ b \}}(r) & \text{Definition~\ref{defn.capturable}} \\
                 & = & \f{capt}_{A \cup \{ a, b \}}(r) & \text{Lemma~\ref{lemm.capturable.fresh.atom}, $a \not\in \f{fa}(r)$} \\
                 & = & \f{capt}_{A\cup\{ a, b \}}((b\ a)\act r) & \text{Lemma~\ref{lemm.capturable.dom.pi}} \\
                 & = & \f{capt}_{A\cup\{ a \}}((b\ a)\act r) & \text{Lemmas~\ref{lemm.capturable.fresh.atom} and~\ref{lemm.pi.ftma}} \\
                 & = & \f{capt}_{A}([a](b\ a)\act r) & \text{Definition~\ref{defn.capturable}}
\end{calcenv}
The result follows.
\end{proof}

\noindent
Proof of Lemma~\ref{lemm.somewhere.X}:

\begin{proof}
By induction on $r$.
\begin{squishitem}
\item
The cases $a$ and $\tf{f}(r_1, \ldots, r_n)$ are straightforward.
\item
The case $[a]r$.\quad
Suppose $a \in \f{capt}_A([a]r)$.
Then $a \in \f{capt}_{A \cup \{ a \}}(r)$, and by inductive hypothesis $X^S \in \f{fV}(r)$ exists such that $a \in S$.
As $\f{fV}([a]r) = \f{fV}(r)$, the result follows.
\item
The case $[b]r$.\quad
Suppose $a \in \f{capt}_A([b]r)$.
Then $a \in \f{capt}_{A \cup \{ b \}}(r)$, and by inductive hypothesis $X^S \in \f{fV}(r)$ exists such that $a \in S$.
As $\f{fV}([b]r) = \f{fV}(r)$, the result follows.
\item
The $\pi \act X^S$.\quad
The result follows immediately by Definition~\ref{defn.capturable}.
\qedhere
\end{squishitem}
\end{proof}

\noindent
Proof of Lemma~\ref{lemm.capturable.pi.r}:

\begin{proof}
By induction on $r$.
\begin{squishitem}
\item
The cases $a$ and $\tf{f}(r_1, \ldots, r_n)$.\quad
Routine.
\item
The case $[a]r$.\quad
By Definitions~\ref{defn.perm} and~\ref{defn.capturable} 
$$
\f{capt}_A(\pi\act[a]r)=\f{capt}_{A\cup\{\pi(a)\}}(\pi\act r).
$$
By inductive hypothesis 
$$
\f{capt}_{A\cup\{\pi(a)\}}(\pi\act r)\subseteq ((\f{nontriv}(\pi)\cup A\cup\{\pi(a)\})\cap\f{uncapt}(r))\cup\f{capt}(r).
$$
By Definition~\ref{defn.uncapt} ${\f{uncapt}([a]r)=\f{uncapt}(r)\setminus\{a\}}$,
so it suffices to show that 
\begin{multline*}
\bigl((\f{nontriv}(\pi)\cup A\cup\{\pi(a)\})\cap\f{uncapt}(r)\bigr)\cup\f{capt}(r)
\\
\subseteq \bigl((\f{nontriv}(\pi)\cup A)\cap(\f{uncapt}(r)\setminus\{a\})\bigr)\cup\f{capt}([a]r) .
\end{multline*}
By Lemma~\ref{lemm.capturable.A.B} ${\f{capt}(r)\subseteq\f{capt}([a]r)}$.
Therefore, we only need concern ourselves with the `extra $\pi(a)$' on the left, and the `missing $a$' on the right.

We consider cases for the `extra $\pi(a)$':
\begin{squishitem}
\item
Suppose ${\pi(a)\neq a}$.  Then ${\pi(a)\in\f{nontriv}(\pi)}$. 
\item
Suppose ${\pi(a)=a}$ and ${a\not\in\f{uncapt}(r)}$.  Then $\{\pi(a)\}\cap\f{uncapt}(r)=\varnothing$. 
\item
Suppose ${\pi(a)=a}$ and ${a\in\f{uncapt}(r)}$.
Then by Lemma~\ref{lemm.uncapt.to.capt} $a\in\f{capt}([a]r)$.
\end{squishitem}
We consider cases for the `missing $a$':
\begin{squishitem}
\item
Suppose $a\not\in\f{uncapt}(r)$.
Then $\f{uncapt}(r)\setminus\{a\}=\f{uncapt}(r)$.
\item
Suppose $a\in\f{uncapt}(r)$.
Then by Lemma~\ref{lemm.uncapt.to.capt} $a\in\f{capt}([a]r)$.
\end{squishitem}
In all cases, the result follows.
\item
The case $\pi' \act X^S$.\quad 
Then:
\begin{calcenv}
\hspace{-2em}
\f{capt}_A((\pi \fcomp \pi') \act X^S) & = & (\f{nontriv}(\pi \fcomp \pi') \cup A) \cap S & \\
                                       & \subseteq & (\f{nontriv}(\pi) \cup \f{nontriv}(\pi') \cup A) \cap S & \\
                                       & = & (((\f{nontriv}(\pi) \cup A) \setminus \f{nontriv}(\pi')) \cap S) \cup (\f{nontriv}(\pi') \cap S) & \\
                                       & = & (((\f{nontriv}(\pi) \cup A) \setminus \f{nontriv}(\pi')) \cap S) \cup \f{capt}(\pi' \act X^S) & \\
                                       & = & ((\f{nontriv}(\pi) \cup A) \cap (S\setminus\f{nontriv}(\pi'))) \cup \f{capt}(\pi' \act X^S) & \\
                                       & = & ((\f{nontriv}(\pi) \cup A) \cap \f{uncapt}(\pi'\act X^S)) \cup \f{capt}(\pi' \act X^S) &
\end{calcenv}
The result follows.
\end{squishitem}
\end{proof}

\end{document}